\DeclareRobustCommand{\SkipTocEntry}[5]{}
\newcommand{\mc}[1]{\mathcal{#1}}
\newcommand{\mf}[1]{\mathfrak{#1}}
\newcommand{\mb}[1]{\mathbb{#1}}
\newcommand{\bs}[1]{\boldsymbol{#1}}
\newcommand{\ol}[1]{\,\overline {\!#1\!}\,}
\newcommand{\id}{\mathbbm1}
\newcommand{\git}{/\!\!/}
\newcommand\rA{{\mathrm A}}
\newcommand\rB{{\mathrm B}}
\newcommand\rC{{\mathrm C}}
\newcommand\rD{{\mathrm D}}
\newcommand\rE{{\mathrm E}}
\newcommand\rF{{\mathrm F}}
\newcommand\rG{{\mathrm G}}
\renewcommand{\ker}{\Ker}
\DeclareMathOperator{\Mat}{Mat}
\DeclareMathOperator{\Hom}{Hom}
\DeclareMathOperator{\End}{End}
\DeclareMathOperator{\diag}{diag}
\DeclareMathOperator{\ad}{ad}
\DeclareMathOperator{\im}{Im}
\DeclareMathOperator{\Ker}{Ker}
\DeclareMathOperator{\Span}{Span}
\DeclareMathOperator{\st}{st}
\DeclareMathOperator{\spin}{spin}
\DeclareMathOperator{\Spin}{Spin}
\DeclareMathOperator{\Spec}{Spec}
\DeclareMathOperator{\ext}{\bigwedge}
\DeclareMathOperator{\sym}{\mathsf S}
\theoremstyle{plain}
\newtheorem{theorem}{Theorem}[section]
\newtheorem{lemma}[theorem]{Lemma}
\newtheorem{proposition}[theorem]{Proposition}
\newtheorem{corollary}[theorem]{Corollary}
\theoremstyle{definition}
\newtheorem{definition}[theorem]{Definition}
\newtheorem{example}[theorem]{Example}
\theoremstyle{remark}
\newtheorem{remark}[theorem]{Remark}
\numberwithin{equation}{section}
\definecolor{light}{gray}{.9}
\tikzset{%
    add/.style args={#1 and #2}{
        to path={%
 ($(\tikztostart)!-#1!(\tikztotarget)$)--($(\tikztotarget)!-#2!(\tikztostart)$)%
  \tikztonodes},add/.default={.2 and .2}}
}
\tikzset{every arrow subpath/.style={->, draw, thick}}
\newcommand{\foresix}[8]{
\begin{tikzpicture}[baseline=(a.base),scale=#1,every node/.style={scale=#2},inner sep=0pt,outer sep=0pt]
\node (a) at (1,0) {$#8$};
\node at (2,0) {$#7$};
\node at (3,0) {$#6$};
\node at (3,-1.5) {$#4$};
\node at (4,0) {$#5$};
\node at (5,0) {$#3$};
\end{tikzpicture}
}
\newcommand{\foreseven}[9]{
\begin{tikzpicture}[baseline=(a.base),scale=#1,every node/.style={scale=#2},inner sep=0pt,outer sep=0pt]
\node (a) at (1,0) {$#9$};
\node at (2,0) {$#8$};
\node at (3,0) {$#7$};
\node at (4,0) {$#6$};
\node at (4,-1.5) {$#4$};
\node at (5,0) {$#5$};
\node at (6,0) {$#3$};
\end{tikzpicture}
}
\newcommand{\foreight}[9]{
\begin{tikzpicture}[baseline=(a.base),scale=.1,every node/.style={scale=#1},inner sep=0pt,outer sep=0pt]
\node (a) at (1,0) {$#9$};
\node at (2,0) {$#8$};
\node at (3,0) {$#7$};
\node at (4,0) {$#6$};
\node at (5,0) {$#5$};
\node at (5,-1.5) {$#3$};
\node at (6,0) {$#4$};
\node at (7,0) {$#2$};
\end{tikzpicture}
}
\newlength{\halogap}\setlength{\halogap}{1.3em}
\newlength\triplesep
\newlength\triplelinewidth
\tikzset{triple/.style={
line width=\triplelinewidth,
black,
 preaction={
  preaction={
    draw,
    line width=2\triplesep+3\triplelinewidth,
    black
   },
   draw,
   line width=2\triplesep+\triplelinewidth,white
  }
 }
}
\tikzset{DynkinNode/.style={circle,minimum size=.7em,inner sep=0pt,font=\scriptsize}}
\newcommand{\GDynkin}[1]{
\begin{tikzpicture}[scale=.8,anchor=base,baseline]
\foreach\kthweight[count=\k] in {#1}{
\ifnum\k=1\node[DynkinNode] (1) at (-1,0) {$\scriptscriptstyle\kthweight$};\fi
\ifnum\k=2\node[DynkinNode] (2) at (-2,0) {$\scriptscriptstyle\kthweight$};\fi
}
\draw[-Implies,double distance=1pt] (2) -- (1);
\draw (2) -- (1);
\end{tikzpicture}
}
\newcommand{\FDynkin}[1]{
\begin{tikzpicture}[scale=.6,anchor=base,baseline]
\foreach\kthweight[count=\k] in {#1}{
\ifnum\k=1\node[DynkinNode] (1) at (-1,0) {$\scriptscriptstyle\kthweight$};\fi
\ifnum\k=2\node[DynkinNode] (4) at (-4,0) {$\scriptscriptstyle\kthweight$};\fi
\ifnum\k=3\node[DynkinNode] (2) at (-2,0) {$\scriptscriptstyle\kthweight$};\fi
\ifnum\k=4\node[DynkinNode] (3) at (-3,0) {$\scriptscriptstyle\kthweight$};\fi
}
\draw[thick] (1) -- (2);
\draw[-Implies,double,thick] (3) -- (2);
\draw[thick] (3) -- (4);
\end{tikzpicture}
}
\newcommand{\EDynkin}[2]{
\begin{tikzpicture}[scale=.4,anchor=base,baseline]
\foreach\kthweight[count=\k] in {#1}{
\pgfmathtruncatemacro{\prevnode}{\k-1}
\ifnum\k=1\node[DynkinNode] (\k) at (0,-1) {$\scriptscriptstyle\kthweight$};\fi 
\ifnum\k>1\node at (4-\k,0) {$\scriptscriptstyle\kthweight$};
\node[DynkinNode] (\k) at (4-\k,0) {$\scriptscriptstyle\kthweight$};
\ifnum\k>2\draw[thick] (\k) -- (\prevnode);\fi
\ifnum\k=4\draw[thick] (\k) -- (1);\fi
\fi
}
\foreach\Node[count=\i] in {#2}
{
\xdef\imax{\i}
\coordinate (AuxNode-\i) at ($(\Node)$);
}
\ifnum\imax=3%
\draw[red,rounded corners] ($(AuxNode-1)+(-\halogap,\halogap)$) -|  ($(AuxNode-2)+(-\halogap,\halogap)$)
    -- ($(AuxNode-2)+(\halogap,\halogap)$) |-
    ($(AuxNode-3)+(\halogap,\halogap)$)
    --($(AuxNode-3)+(\halogap,-\halogap)$) -- ($(AuxNode-1)+(-\halogap,-\halogap)$) -- cycle;
\else\ifnum\imax=2%
\draw[red,rounded corners] ($(AuxNode-1)+(-\halogap,\halogap)$) --  ($(AuxNode-2)+(\halogap,\halogap)$)
    -- ($(AuxNode-2)+(\halogap,-\halogap)$) -- ($(AuxNode-1)+(-\halogap,-\halogap)$)--
    cycle;
\fi
\fi
\end{tikzpicture}
}
\tikzset{every arrow subpath/.style={<-, draw, thick}}
\newcommand\bibit[1]{\bibitem[#1]{#1}}
\let\half1
\newcommand\qcss{{\mathbf S}}
\newcommand\kform[2]{\left(#1|#2\right)}
\newcommand\eps\varepsilon
\renewcommand\le\leqslant
\renewcommand\ge\geqslant
\newcommand\fs{f^{\mathrm s}}
\newcommand\fn{f^{\mathrm n}}
\newcommand\partition\bs
\newcommand\ph\varphi
\newcommand\ot\leftarrow
\newcommand\transpose{^\intercal}
\begin{document}

\setcounter{tocdepth}2

\title{Integrable triples in semisimple Lie algebras}

\author{Alberto De Sole}
\address{Dipartimento di Matematica, Sapienza Universit\`a di Roma,
P.le Aldo Moro 2, 00185 Rome, Italy}
\email{desole@mat.uniroma1.it}
\thanks{{\it Web addresses:} \texttt{http://www1.mat.uniroma1.it/\textasciitilde desole}}

\author{Mamuka Jibladze}
\address{Razmadze Mathematical Institute,
TSU, Tbilisi 0186, Georgia
}
\email{jib@rmi.ge}
\author{Victor G. Kac}
\address{Department of Mathematics, MIT,
77 Massachusetts Avenue, Cambridge, MA 02139, USA}
\email{kac@math.mit.edu}
\author{Daniele Valeri}
\address{School of Mathematics and Statistics, University of Glasgow, G12 8QQ Glasgow, UK}
\email{daniele.valeri@glasgow.ac.uk}
\begin{abstract}
We classify all integrable triples in simple Lie algebras, up to equivalence. The importance of this problem stems from the fact that for each such equivalence class one can construct the corresponding integrable hierarchy of bi-Hamiltonian PDE. The simplest integrable triple $(f,0,e)$ in $\mf{sl}_2$ corresponds to the KdV hierarchy, and the triple $(f,0,e_\theta)$, where $f$ is the sum of negative simple root vectors and $e_\theta$ is the highest root vector of a simple Lie algebra, corresponds to the Drinfeld-Sokolov hierarchy.

\

\begin{center}
\ \hfill\it\Large Dedicated to the memory of Boris Anatolievich Dubrovin
\end{center}
\end{abstract}

\maketitle

\tableofcontents

\section{Introduction}

The present paper is a continuation of our paper \cite{DSJKV20}, where we studied integrability of $W$-algebras.
Namely, we showed that, for the classical affine $W$-algebra $\mc W(\mf g,f)$ attached to a simple Lie algebra $\mf g$
and its non-zero nilpotent element $f$, the Lie algebra $\mc W(\mf g,f)/\partial\mc W(\mf g,f)$ contains an infinite-dimensional abelian subalgebra,
except, possibly, for the following $f$ (in the notation of \cite{CMG93}):
$4\rA_1$, $2\rA_2+2\rA_1$, $2\rA_3$, $\rA_4+\rA_3$ and $\rA_7$ in $\rE_8$; $\widetilde\rA_2+\rA_1$ in $\rF_4$; $\widetilde\rA_1$ in $\rG_2$.
Consequently, for all these $W$-algebras (with the seven exceptions above) one constructs an integrable hierarchy of bi-Hamiltonian PDE,
the simplest being the KdV hierarchy, constructed for $\mc W(\mf{sl}_2,f)$.

The proof of this theorem consists of the following ingredients.
First, it is the Drinfeld-Sokolov \cite{DS85} (abbreviated DS) method of constructing integrals of motion in involution
in the case when $f$ is a principal nilpotent element of $\mf g$,
which has been extended to the case when $f$ is a nilpotent element of semisimple type for $\mf g$ in \cite{DSKV13}.

Second, we showed in \cite{DSJKV20} that a simple modification of the DS method works also
for $f$ of non-nilpotent type, which covers all nilpotent elements of even depth.

Third, in the same paper we showed that for $f$ of nilpotent type this modification works as well,
except for the seven nilpotent conjugacy classes mentioned above.

Let us now introduce the relevant definitions.
By the Jacobson-Morozov theorem, any non-zero nilpotent $f$ of a simple Lie algebra $\mf g$
can be included in an $\mf{sl}_2$-triple $\mf s=\{e,f,h\}$.
This produces a $\mb Z$-grading of $\mf g$ by eigenspaces of $\ad h$:
\begin{equation}\label{eq:grading}
\mf g=\bigoplus_{j\in\mb Z}\mf g_j,
\end{equation}
which, up to conjugation, is independent of the choice of the $\mf{sl}_2$-triple.
The maximal $j>0$ for which $\mf g_j\ne0$ is denoted by $d$, and is called the \emph{depth} of $f$.

An element of the form $f+E$, where $E$ is a non-zero element of $\mf g_d$, is called a \emph{cyclic element} associated to $f$.
The key tool in the DS method is Kostant's theorem that for the principal nilpotent $f$ all the associated cyclic elements are semisimple.
The main result of \cite{DSKV13} extends the DS method in the framework of Poisson vertex algebras,
to arbitrary $f$, which admits an associated semisimple cyclic element.
Such $f$ is called a nilpotent element of \emph{semisimple type} \cite{EKV13}.

If all cyclic elements, associated to $f$, are nilpotent, then $f$ is called a nilpotent element of \emph{nilpotent type}.
It is proved in \cite[Theorem 1.1]{EKV13} that $f$ is of nilpotent type if and only if its depth is an odd integer.
As has been mentioned above, the DS method works for $f$ of non-nilpotent type.

In the case of $f$ of nilpotent type, i.~e. of odd depth $d$,
we introduced in \cite{DSJKV20} the notion of a \emph{quasi-cyclic} element $f+E$, associated to $f$.
It requires $E$ to be a non-zero element of $\mf g_{d-1}$ with the additional requirement that the centralizer of $E$ in $\mf g_1$
is coisotropic with respect to the symplectic form $\omega$ on $\mf g_1$, defined by
\begin{equation}\label{eq:skewform}
\omega(a,b)=(f\mid[a,b]),\quad a,b\in\mf g_1.
\end{equation}
Hereafter $(\cdot\mid\cdot)$ is a fixed non-degenerate symmetric invariant bilinear form on $\mf g$.

We show in \cite{DSJKV20} that the DS method works for $f$ of nilpotent type,
provided that there exists a non-nilpotent quasi-cyclic element, associated to $f$.
Moreover, such an element exists if and only if $f$ is not one of the seven nilpotent elements of odd depth mentioned above.
In the present paper we study quasi-cyclic elements, associated to arbitrary non-zero nilpotent elements in simple Lie algebras.

The main tool of the proof of the Integrability Theorem of \cite{DSJKV20}
is the notion of an \emph{integrable triple} associated to the nilpotent element $f$.
It is a triple of elements $(f_1,f_2,E)$, where $f_1,f_2\in\mf g_{-2}$, $E\in\mf g_j$ with $j\ge1$, such that
\begin{enumerate}[({I}1)]
\item\label{def:intrip1} $f=f_1+f_2$ and $[f_1,f_2]=0$;
\item\label{def:intrip2} $[E,\mf g_{\ge2}]=0$ and the centralizer $\mf l^\perp$ of $E$ in $\mf g_1$ is coisotropic with respect to the symplectic form $\omega$, defined by \eqref{eq:skewform};
\item\label{def:intrip3} $f_1+E$ is semisimple and $[f_2,E]=0$.
\end{enumerate}

The coisotropy condition (I\ref{def:intrip2}) is important for the construction of the corresponding to $f$ classical affine $W$-algebra, and condition (I\ref{def:intrip3}) is used for the construction of integrals of motion in involution.

Note that for an integrable triple $(f_1,f_2,E)$ the decomposition $f+E=(f_1+E)+f_2$ is the Jordan decomposition,
and that $E$ commutes with the subalgebra $\mf n:=\mf l^\perp+\mf g_{\ge2}$.

It turns out that for $j<d-1$ there are no integrable triples (Proposition \ref{prop:nonebelow}).
Obviously, for an integrable triple $(f_1,f_2,E)$, associated to $f$,
the element $f+E=f_1+f_2+E$ is a cyclic (resp. quasi-cyclic) element, associated to $f$, if $j=d$ (resp. $j=d-1$).
Such a cyclic (resp. quasi-cyclic) element is called \emph{integrable}.
In this case the element $E\in\mf g_j$ is called \emph{integrable} for $f$.
Conversely, given a cyclic or quasi-cyclic element $f+E$,
we can obtain an integrable triple associated to $f$, by taking the Jordan decomposition of $f+E$, provided that $E$ is integrable for $f$.
One of the main problems is when such an integrable triple exists.

In \cite{DSJKV20} we have established existence of integrable triples, associated to any nilpotent element $f$,
except for the seven cases mentioned above. In the present paper we give a complete solution to this problem.
Namely, we find all integrable cyclic and quasi-cyclic elements $f+E$, up to equivalence, for each non-zero nilpotent element $f$ in a simple Lie algebra $\mf g$. Two integrable cyclic or quasi-cyclic elements $f+E$ and $f+E'$ are called \emph{equivalent} if $E'$ is proportional to an element from the orbit $Z(\mf s)(E)$,
where $Z(\mf s)$ stands for the centralizer of the $\mf{sl}_2$-triple $\mf s$ in the adjoint group $G$ of $\mf g$.
The importance of this problem stems from the fact that, as established in \cite{DSJKV20},
for each equivalence class of integrable triples one can construct the associated integrable hierarchy of bi-Hamiltonian PDE.

A partial solution to this problem was given in \cite{EJK20}, where all semisimple cyclic elements have been classified
(the corresponding integrable triple has the form $(f,0,E)$).
The key observation there (checked by case-wise verification) was that the linear reductive group $Z(\mf s)|\mf g_d$ is \emph{polar}.
Polar linear groups were introduced in \cite{DK85} as reductive linear groups, having properties, similar to the adjoint group
(see Section \ref{sec:review} for the precise definition).
This observation allows one to reduce considerations to the case $E\in C$, a Cartan subspace of $\mf g_d$,
since it was proved in \cite{EKV13}, Proposition 2.2(a), that for semisimple cyclic element $f+E$, the orbit $Z(\mf s)(E)$ must be closed.

In the present paper we find that, remarkably, the linear reductive group $Z(\mf s)|\mf g_{d-1}$ is polar as well!
This is Theorem \ref{thm:allpolar}. Unfortunately we still need a case-wise analysis in its proof. However
Remark \ref{rem:ssproof} explains why $Z(\mf s)|\mf g_j$ should be polar for $j=d$ and $d-1$.
Note that $Z(\mf s)|\mf g_j$ is not polar in general for $1\le j<d-1$.

Our first main result is Theorem \ref{thm:nonniliffnonnil}, which states that a cyclic element $f+E$ is not nilpotent
if and only if the Zariski closure of the orbit $Z(\mf s)(E)$ does not contain $0$. Our second main result is Theorem \ref{thm:intiffss}, which states that a cyclic element $f+E$ is integrable if and only if the orbit $Z(\mf s)(E)$ is closed.
Such $E$ are classified by a Cartan subspace of $Z(\mf s)|\mf g_d$.
A similar result holds for a quasi-cyclic element $f+E$, where $E\in\mf g_{d-1}$ and $d$ is odd;
in the case of even $d$ we were able to prove only the ``only if'' part of this result (see Theorem \ref{thm:quasintiffss}).

These results allow us to classify completely all integrable cyclic and quasi-cyclic elements in all simple Lie algebras.
As has been mentioned above, this is equivalent to the classification of integrable triples.
For exceptional $\mf g$ this is done using the GAP package SLA by W. de Graaf \cite{SLA},
and the results are listed in Table \ref{tab:odd} and Tables \ref{tab:withinv}, \ref{tab:noinv},
which represent the cases of $d$ odd and even respectively.
For classical $\mf g$ this is done by explicit calculations in the standard representation of $\mf{sl}_N$, $\mf{sp}_N$ and $\mf{so}_N$. In particular, we have the following results on existence of integrable cyclic and quasi-cyclic elements in simple Lie algebras $\mf g$, associated to non-zero nilpotent elements $f\in\mf g$.

\begin{theorem}
\begin{enumerate}[(a)]
\item An integrable cyclic element, associated to $f$, exists if and only if its depth $d$ is even.
\item An integrable quasi-cyclic element, associated to $f$, exists in precisely the following cases:
\begin{enumerate}[(i)]
\item $f$ has odd depth and is of the following type in exceptional $\mf g$:
\begin{quote}
$3\rA_1$ in $\rE_6$ and $\rE_8$; $3\rA_1'$ in $\rE_7$; $2\rA_2+\rA_1$ in $\rE_6$, $\rE_7$, $\rE_8$; $4\rA_1$ in $\rE_7$; $\rA_1+\tilde\rA_1$ in $\rF_4$,
\end{quote}
\item $f$ has even depth and is of the following type in exceptional $\mf g$:
\begin{quote}
$\rA_1$ in all exceptional $\mf g$; $2\rA_1$ in $\rE_6$; $
\rA_2+\rA_1$ in $\rE_6$, $\rE_7$, $\rE_8$; $\rA_4+\rA_1$ in $\rE_6$, $\rE_7$, $\rE_8$; $\rA_4+2\rA_1$ in $\rE_8$; $\rD_7(a_2)$ in $\rE_8$,
\end{quote}
\item all $f$ of odd depth in classical $\mf g$, which happens only for $\mf g=\mf{so}_N$, and the partition, corresponding to $f$, has odd largest part $p_1$ of multiplicity $1$ and the next part $p_2=p_1-1$,
\item the following $f$ of even depth, corresponding to the partition $(p_1^{(r_1)},p_2^{(r_2)},...)$ in classical $\mf g$:
\begin{quote}
$\mf g=\mf{sl}_N$: $r_1\le r_2$;\qquad
 $\mf g=\mf{so}_N$: $p_1$ is even, $r_1=2$, $p_2=p_1-1$, $r_2\ge2$.
\end{quote}
\end{enumerate}
\end{enumerate}
\end{theorem}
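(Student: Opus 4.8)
The plan is to translate both existence problems into the orbit geometry of the reductive linear groups $Z(\mf s)|\mf g_d$ and $Z(\mf s)|\mf g_{d-1}$, using that both are polar (the first by \cite{EJK20}, the second by Theorem \ref{thm:allpolar}) together with the closedness criteria of Theorems \ref{thm:intiffss} and \ref{thm:quasintiffss}. By Proposition \ref{prop:nonebelow} no integrable triple has $E\in\mf g_j$ with $j<d-1$, so only the cyclic case $j=d$ and the quasi-cyclic case $j=d-1$ occur. I also use \cite[Theorem 1.1]{EKV13} in the form: $d$ is odd if and only if $f$ is of nilpotent type, i.e.\ every associated cyclic element is nilpotent.

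Part (a) follows quickly. If $f+E$ with $E\in\mf g_d$ is integrable, then by condition (I\ref{def:intrip3}) its Jordan decomposition is $(f_1+E)+f_2$ with $f_1+E$ semisimple; since $f_1\in\mf g_{-2}$ while $E\in\mf g_d$ with $d\ge1$, the semisimple part $f_1+E$ cannot vanish (otherwise $E=-f_1=0$), so $f+E$ is non-nilpotent, whence $f$ is not of nilpotent type and $d$ is even. Conversely, if $d$ is even then $f$ is not of nilpotent type, so some cyclic element $f+E_0$ is non-nilpotent; by Theorem \ref{thm:nonniliffnonnil} the closure $\overline{Z(\mf s)(E_0)}$ avoids $0$, hence contains a nonzero closed orbit, so the Cartan subspace of the polar representation $Z(\mf s)|\mf g_d$ is nonzero. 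A generic point $E$ of this Cartan subspace then has a closed orbit, and Theorem \ref{thm:intiffss} makes $f+E$ an integrable cyclic element.

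For part (b) the same reduction applies with $\mf g_d$ replaced by $\mf g_{d-1}$: by Theorem \ref{thm:allpolar} the group $Z(\mf s)|\mf g_{d-1}$ is polar with some Cartan subspace $C$, and by Theorem \ref{thm:quasintiffss} an integrable quasi-cyclic element forces a nonzero closed orbit in $\mf g_{d-1}$, which meets $C$; for $d$ odd the converse also holds. Thus, modulo the coisotropy requirement (I\ref{def:intrip2}) and the sufficiency gap for even $d$, existence of an integrable quasi-cyclic element is equivalent to $\dim C>0$ together with coisotropy of the centralizer in $\mf g_1$ of a suitable nonzero $E\in C$ under the form $\omega$ of \eqref{eq:skewform}. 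This turns the problem into the computation, for each nonzero nilpotent $f$, of the rank $\dim C$ of $Z(\mf s)|\mf g_{d-1}$ and of the coisotropy of a generic such $E$.

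Carrying out this computation is the bulk of the work and the main obstacle. For exceptional $\mf g$ I would run through every nonzero nilpotent orbit with the GAP package SLA \cite{SLA}, constructing the grading \eqref{eq:grading}, the centralizer $Z(\mf s)$ and its action on $\mf g_{d-1}$, extracting a Cartan subspace $C$, testing $\dim C>0$, and then verifying (I\ref{def:intrip2}); the orbits surviving both tests are exactly the finite lists (i) and (ii). For classical $\mf g$ I would work in the defining representation of $\mf{sl}_N$, $\mf{sp}_N$, $\mf{so}_N$, where the spaces $\mf g_{d-1}$, $\mf g_1$, the form $\omega$ and the $Z(\mf s)$-action are all expressed through the partition $(p_1^{(r_1)},p_2^{(r_2)},\dots)$ attached to $f$; a direct analysis of when a nonzero closed orbit exists and is coisotropic produces the partition criteria (iii) and (iv). The two delicate points are that, for $d$ even, Theorem \ref{thm:quasintiffss} gives only necessity, so each surviving even-depth case must be completed by exhibiting an explicit integrable triple $(f_1,f_2,E)$ and checking (I\ref{def:intrip1})--(I\ref{def:intrip3}) directly; and that the coisotropy test (I\ref{def:intrip2}) is precisely what eliminates the many nilpotents with $\dim C>0$ that do not appear in (i)--(iv), so it is where the real restriction, and the hardest verification, lies.
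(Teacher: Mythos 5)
Your proposal follows essentially the same route as the paper: part (a) via the characterization of odd depth as nilpotent type together with Theorems \ref{thm:nonniliffnonnil} and \ref{thm:intiffss} (noting that for $E\in\mf g_d$ condition (I\ref{def:intrip2}) is automatic), and part (b) via polarity of $Z(\mf s)|\mf g_{d-1}$, the semisimplicity criteria of Theorem \ref{thm:quasintiffss}, and a case-by-case coisotropy and Jordan-decomposition check --- GAP for exceptional $\mf g$, standard-representation computations for classical $\mf g$ --- which is exactly how the paper produces Tables \ref{tab:odd}, \ref{tab:withinv}, \ref{tab:noinv} and the partition criteria of Sections \ref{sec:glsl}, \ref{subs:so} and \ref{subs:sodd}. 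You correctly isolate the two genuine subtleties (the missing sufficiency in Theorem \ref{thm:quasintiffss}\,\eqref{thm:quasintiffsseven}, which the paper closes by exhibiting explicit integrable triples as in Subsection \ref{subsubs:A4+2A1} and Theorem \ref{integrable-so1}\,\eqref{it:intiffano0}, and coisotropy being the effective constraint), so the only reservation is that the explicit lists (i)--(iv) are the \emph{output} of the deferred computations rather than being derived in your text.
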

Many statements in the paper are established by a case-wise verification. It would be very interesting to find more conceptual proofs. Here are some of these statements.
\begin{enumerate}
    \item The linear groups $Z(\mf s)|\mf g_j$ for $j=d$ and $d-1$ are polar.
    \item There exists at most one, up to equivalence, quasi-cyclic element for each nilpotent element $f$ of even depth.
    \item There are no integrable triples $(f_1,f_2,E)$ with $E\in\mf g_j$, where $j<d-1$.
\end{enumerate}

Throughout the paper the base field $\mb F$ is an algebraically closed field of characteristic 0.
Though the theory of polar linear groups was developed in \cite{DK85} over $\mb C$, all results hold over $\mb F$ by the Lefschetz principle.

\subsubsection*{Acknowledgments}
The first author was partially supported by the national PRIN 2017 grant ``Moduli and Lie Theory'' number 2017YRA3LK\_001, and by the University 2019 grant  ``Vertex algebras and integrable Hamiltonian systems''.
The second author was partially supported by the grant FR-18-10849 of Shota Rustaveli
National Science Foundation of Georgia.
The third author was partially supported by the Bert and Ann Kostant fund, and by the Simons collaboration grant.

We would like to thank J.~Dadok, A.~Elashvili, P.~Etingof, and D.~Panyushev for very useful discussions. Extensive use of
the computer algebra system GAP, and the package SLA by Willem de Graaf in particular, is gratefully acknowledged.

\section{Polar linear groups and gradings for nilpotent elements}
\subsection{Review of polar linear groups}\label{sec:review}
Let $G$ be a reductive algebraic group, acting linearly and faithfully on a finite-dimensional vector space $V$, which will be denoted by $G|V$.
It is well known that the subalgebra $\mb F[V]^G$ of $G$-invariant polynomials is finitely generated,
hence the inclusion $\mb F[V]^G\to\mb F[V]$ induces the map of the corresponding affine varieties
\begin{equation}
\pi:V\to V\git G,
\end{equation}
where $V\git G:=\Spec\mb F[V]^G$. It is well known that the map $\pi$ is surjective
and that each of the fibers of $\pi$ contains a unique closed $G$-orbit (the orbit of minimal dimension in the fiber).
The fiber over $\pi(0)$, called the zero fiber, consists of elements $v\in V$, such that the Zariski closure of the orbit $G(v)$ of $v$ contains 0.
Such elements are called \emph{nilpotent} elements of the linear group $G|V$.
Elements $v\in V$, such that $G(v)$ is closed, are called \emph{semisimple} elements of $G|V$.
This terminology is motivated by the well-known fact that for the adjoint linear group $G|\mf g$ an element is semisimple (resp. nilpotent)
if and only if it is a semisimple (resp. nilpotent) element of the Lie algebra $\mf g$.

An efficient way of constructing semisimple elements for a reductive linear group $G$ is given by the following
\begin{proposition}[\cite{DK85}]\label{prop:dkss}
Let $P$ be a set of weights of the $\mf g$-module $V$ from its irreducible components with the following properties:
\begin{enumerate}[(i)]
\item $\lambda_i-\lambda_j$ is not a root of $\mf g$ if $i\ne j$ and $\lambda_i,\lambda_j\in P$ are weights of the same irreducible component of $V$;
\item zero is an interior point of the convex hull of $P$.
\end{enumerate}
Let $v_{\lambda_i}$, $\lambda_i\in P$, be linearly independent weight vectors from irreducible components of $V$.
Then the vector $\sum_iv_{\lambda_i}$ is semisimple.
\end{proposition}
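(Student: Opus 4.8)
The plan is to reduce the statement to the Kempf--Ness characterization of closed orbits and then to verify its hypotheses by making the two conditions (i) and (ii) do complementary jobs. By the Lefschetz principle we may work over $\mb C$. Recall that "$v$ is semisimple'' means that the orbit $G(v)$ is closed. Fix a maximal compact subgroup $K\subset G$ together with a $K$-invariant Hermitian inner product $\langle\cdot,\cdot\rangle$ on $V$, and write $\mf g=\mf k\oplus\mf p$ for the Cartan decomposition with $\mf p=i\mf k$ the space of self-adjoint operators. The Kempf--Ness theorem states that $G(v)$ is closed if and only if it contains a \emph{minimal vector}, i.e. a vector $w$ minimizing $\|\cdot\|$ on $G(v)$, and that $w$ is minimal precisely when the moment map vanishes at $w$, that is $\langle\eta w,w\rangle=0$ for every self-adjoint $\eta\in\mf p$. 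So the whole problem becomes: produce such a $w$ inside $G(v)$.

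First I would arrange the geometry compatibly with the weight data. Choosing the compact form so that $T\cap K$ is a maximal torus of $K$, I may take the inner product to make distinct weight spaces orthogonal and distinct irreducible components of $V$ orthogonal. Then $\mf p=\mf t_{\mb R}\oplus\bigoplus_{\alpha>0}\big(\mb R(e_\alpha+e_{-\alpha})\oplus\mb R\,i(e_\alpha-e_{-\alpha})\big)$, where $\mf t_{\mb R}$ is the real span on which all weights take real values. It therefore suffices to make $\langle\eta w,w\rangle$ vanish separately on $\mf t_{\mb R}$ and on each root direction.

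For the toral part I would use only the action of $\exp(\mf t_{\mb R})$ and minimize $f(\xi):=\|\exp(\xi)v\|^2=\sum_i e^{2\lambda_i(\xi)}\|v_{\lambda_i}\|^2$ over $\xi\in\mf t_{\mb R}$. This $f$ is convex, and it attains its minimum exactly when no nonzero $\xi$ satisfies $\lambda_i(\xi)\le0$ for all $i$ (working in the span of $P$); equivalently the conical hull of $P$ is all of $\mf t_{\mb R}^*$, which is precisely condition (ii) that $0$ be an interior point of the convex hull of $P$. Hence a minimizer $\xi_0$ exists; putting $w:=\exp(\xi_0)v$, the stationarity $\nabla f(\xi_0)=0$ reads $\sum_i\lambda_i\,\|w_{\lambda_i}\|^2=0$, which is exactly $\langle\eta w,w\rangle=0$ for all $\eta\in\mf t_{\mb R}$. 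For the root directions, condition (i) kills the corresponding components of the moment map automatically: such a component is a combination of terms $\langle w_{\lambda_j},e_{\pm\alpha}w_{\lambda_i}\rangle$, and since $e_\alpha w_{\lambda_i}$ lies in the weight space $\lambda_i+\alpha$ of the same irreducible component as $w_{\lambda_i}$, a term can be nonzero only if $\lambda_i$ and $\lambda_j$ lie in the same component with $\lambda_j-\lambda_i=\alpha$ a root; for $i\ne j$ this is forbidden by (i), and $i=j$ would force $\alpha=0$. Thus $\mu(w)=0$, so $w\in G(v)$ is a minimal vector, $G(v)$ is closed, and $v$ is semisimple.

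The hard part I expect is not any single calculation but the bookkeeping that makes the two hypotheses do exactly complementary jobs: condition (ii) is what guarantees the convex function $f$ attains its minimum, so that the Cartan part of the moment map can be killed inside the orbit, while condition (i) is what forces every off-diagonal, root-direction contribution to the moment map to drop out simultaneously. Setting up the compatible compact form and Hermitian inner product so that weight spaces and irreducible components are orthogonal, and invoking Kempf--Ness together with the Lefschetz reduction to $\mb C$, are the technical prerequisites that must be in place before these two observations can be combined.
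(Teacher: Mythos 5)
Your argument is correct. The paper itself gives no proof here --- it merely remarks that the statement is a slight strengthening of Proposition 1.2 of \cite{DK85} with the same proof --- and what you have written is precisely that proof: the Kempf--Ness minimal-vector criterion, with condition (ii) guaranteeing (via convexity and coercivity of $\xi\mapsto\sum_i e^{2\lambda_i(\xi)}\|v_{\lambda_i}\|^2$ on $\mf t_{\mb R}$) that the toral part of the moment map can be made to vanish at some point of the orbit, and condition (i) together with the orthogonality of distinct weight spaces and of distinct irreducible components killing all the root-direction components; that last orthogonality observation is exactly what absorbs the ``slightly stronger'' multi-component formulation adopted in the paper, where the non-root condition is imposed only on pairs of weights lying in the same irreducible constituent.
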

\begin{proof}
The proposition is slightly stronger than Proposition 1.2 from \cite{DK85}, but its proof is the same.
\end{proof}
\begin{corollary}[Kostant theorem]\label{cor:Kostant}
Any vector from the zero weight space of $V$ is semisimple.
\end{corollary}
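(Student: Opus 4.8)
The plan is to obtain Corollary \ref{cor:Kostant} as the degenerate case $P=\{0\}$ of Proposition \ref{prop:dkss}. Let $v$ be a vector in the zero weight space, and decompose $V$ into its irreducible $\mf g$-components, $V=\bigoplus_k V^{(k)}$, so that $v=\sum_k v^{(k)}$ with each $v^{(k)}$ the projection of $v$ onto $V^{(k)}$. Since $v$ has weight $0$, each $v^{(k)}$ is a weight vector of weight $0$ lying in $V^{(k)}$. After discarding the vanishing summands, the remaining nonzero $v^{(k)}$ are linearly independent weight vectors from pairwise distinct irreducible components, all of weight $0$.

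First I would take $P=\{0\}$ and verify the two hypotheses of Proposition \ref{prop:dkss}. Condition (i) is vacuous: the retained $v^{(k)}$ come from distinct irreducible components, so no two of the chosen weights are weights of the same component (and in any event $\lambda_i-\lambda_j=0$ is not a root of $\mf g$). For condition (ii), the convex hull of $P=\{0\}$ is the single point $\{0\}$, whose relative interior is $\{0\}$ itself; equivalently, $0$ is the strictly positive convex combination $0=1\cdot0$ of the weights in $P$. Thus both hypotheses hold, and the proposition yields that $v=\sum_k v^{(k)}$ is semisimple.

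The only delicate point is the reading of ``zero is an interior point of the convex hull of $P$'' in the degenerate situation $P=\{0\}$, where the hull has empty interior in the usual ambient sense. I expect this to be the main — indeed the only — thing to pin down: one must interpret ``interior'' as \emph{relative interior}, or, what amounts to the same, require only that $0$ be expressible as a convex combination of the weights of $P$ with strictly positive coefficients. This is precisely the convention under which Proposition \ref{prop:dkss} specializes to Kostant's theorem, and once it is fixed the corollary needs no further argument.
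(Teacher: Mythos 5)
Your proof is correct and is exactly the argument the paper intends: the corollary is stated immediately after Proposition \ref{prop:dkss} precisely because it is the degenerate case $P=\{0\}$, with one zero-weight vector taken from each irreducible component, condition (i) being vacuous and condition (ii) read in the relative sense. Your identification of the relative-interior convention as the one point needing care is apt, and no further argument is required.
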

Now we turn to the discussion of the especially nice class of reductive linear groups $G|V$,
called polar linear groups, which were introduced in \cite{DK85}.
Let $v\in V$ be a semisimple element, and let
\begin{equation}
C_v=\{x\in V\mid\mf g(x)\subseteq\mf g(v)\}.
\end{equation}
Then \cite{DK85}
\begin{equation}\label{eq:lepolar}
\dim C_v\le\dim V\git G.
\end{equation}
The reductive linear group $G|V$ is called \emph{polar} if there exists a semisimple $v\in V$, such that
\begin{equation}\label{eq:polar}
\dim C_v=\dim V\git G,
\end{equation}
and in this case $C_v$ is called a \emph{Cartan subspace} of $V$.

Note that the Cartan subalgebra $\mf h$ of the Lie algebra $\mf g$ of $G$ is a Cartan subspace for the adjoint linear group $G|\mf g$,
since for a regular $v\in\mf h$ its $G$-orbit is closed and \eqref{eq:polar} holds because
$\dim C_v=\dim V\git G=\operatorname{rank}\mf g$.

More generally, we have the following
\begin{proposition}[\cite{EJK20}]
Let $G|V$ be a reductive linear group and let $C$ be its zero weight space. Then
\begin{equation}\label{eq:ledim}
\dim C\le\dim V\git G,
\end{equation}
and in the case of equality the linear group $G|V$ is polar.
\end{proposition}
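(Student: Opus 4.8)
The plan is to exploit Corollary \ref{cor:Kostant}: every element of the zero weight space $C=V_0$ is semisimple, so its $G$-orbit is closed. Consequently two points of $C$ have the same image under $\pi$ if and only if they lie in a single $G$-orbit, and the fiber of the restriction $\pi|_C\colon C\to V\git G$ over $\pi(v)$ is exactly $G(v)\cap C$. First I would show this intersection is finite. At any point $w\in G(v)\cap C$ the (reduced) tangent space to $G(v)\cap C$ is contained in $T_wG(v)\cap T_wC=\mf g(w)\cap C$. But $w\in C=V_0$ means $\mf h\cdot w=0$, so $\mf g(w)=\sum_{\alpha}\mf g_\alpha w\subseteq\bigoplus_{\mu\ne0}V_\mu$, whence $\mf g(w)\cap V_0=0$. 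Thus every point of $G(v)\cap C$ is isolated, the intersection is finite, and $\pi|_C$ is quasi-finite. Therefore $\dim\pi(C)=\dim C$, and since $\pi(C)\subseteq V\git G$ this gives the inequality \eqref{eq:ledim}.

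For the equality statement, suppose $\dim C=\dim V\git G$. Since $V$ is irreducible and $\pi$ is surjective, $V\git G$ is irreducible; as $\overline{\pi(C)}$ is an irreducible subvariety of the same dimension, $\pi|_C$ is dominant. For generic $\xi\in V\git G$ the fiber $\pi^{-1}(\xi)$ is a single closed orbit of the maximal dimension $M:=\dim V-\dim V\git G$; dominance together with Corollary \ref{cor:Kostant} produces a point of $C$ in this orbit, so the generic orbit of $V$ meets $C$. By lower semicontinuity of the orbit dimension restricted to $C$, a generic $v\in C$ then satisfies $\dim\mf g(v)=M$.

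To finish I would combine this with the weight decomposition. Since $v\in V_0$ we have $\mf g(v)\subseteq\bigoplus_{\mu\ne0}V_\mu$, and the latter has dimension $\dim V-\dim V_0=\dim V-\dim C=M$; comparing dimensions forces $\mf g(v)=\bigoplus_{\mu\ne0}V_\mu$. Now for every $x\in C$ one has $\mf g(x)=\sum_\alpha\mf g_\alpha x\subseteq\bigoplus_{\mu\ne0}V_\mu=\mf g(v)$, i.e. $x\in C_v$; hence $C\subseteq C_v$. Combining $\dim V\git G=\dim C\le\dim C_v$ with the Dadok--Kac bound \eqref{eq:lepolar}, namely $\dim C_v\le\dim V\git G$, yields $\dim C_v=\dim V\git G$, so $G|V$ is polar (with $C$ sitting inside the Cartan subspace $C_v$).

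The slick tangent-space computation makes the inequality almost immediate; the real work, and the main obstacle, is the equality case. The delicate points there are justifying that $\pi|_C$ dominant forces the generic orbit of $V$ (not merely some orbit) to meet $C$ and to attain the maximal dimension $M$ at a generic point of $C$---this is where the interplay between the closedness supplied by Corollary \ref{cor:Kostant}, the structure of the generic fibers of $\pi$, and the semicontinuity of orbit dimension must be handled with care---after which the identification $\mf g(v)=\bigoplus_{\mu\ne0}V_\mu$ and the inclusion $C\subseteq C_v$ follow cleanly.
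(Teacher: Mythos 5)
Your proof of the inequality \eqref{eq:ledim} is correct, and it takes a genuinely different route from the paper's. You show that $\pi|_C$ is quasi-finite: by Corollary \ref{cor:Kostant} two points of $C$ in the same fiber of $\pi$ lie on the same closed orbit, and $G(v)\cap C$ is finite because its tangent space at any $w$ is contained in $\mf g(w)\cap V_0=0$; hence $\dim C=\dim\overline{\pi(C)}\le\dim V\git G$. The paper instead chooses $v\in C$ with minimal isotropy algebra, argues that $C\subseteq C_v$, and quotes \eqref{eq:lepolar}. Your argument for this half is self-contained and clean.

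The equality case, however, has a genuine gap exactly at the point you flag as delicate. You assert that for generic $\xi\in V\git G$ the fiber $\pi^{-1}(\xi)$ is a single closed orbit of dimension $M=\dim V-\dim V\git G$; that assertion is precisely the statement that $G|V$ is \emph{stable}, which is neither a hypothesis nor a consequence of $\dim C=\dim V\git G$. Concretely, let $G=\mb F^\times$ act on $V=\mb F^2$ with weights $1$ and $0$. Then $C=V_0$ is the second coordinate line and $\mb F[V]^G$ is generated by the second coordinate, so $\dim C=\dim V\git G=1$ and the equality hypothesis holds; yet every generic fiber of $\pi$ is a whole affine line whose unique closed orbit is a single point, of dimension $0\ne M=1$. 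In this example the subsequent steps fail as stated: a generic $v\in C$ has $\dim\mf g(v)=0\ne M$ and $\mf g(v)=0\ne\bigoplus_{\mu\ne0}V_\mu=V_1$, so the identity $\mf g(v)=\bigoplus_{\mu\ne0}V_\mu$, on which your inclusion $C\subseteq C_v$ rests, is unavailable (the proposition still holds there, since $C_v=C$ directly, but not by your chain of deductions). What your argument actually needs is a closed orbit of dimension exactly $\dim V-\dim C$ meeting $C$; dominance and quasi-finiteness of $\pi|_C$ only supply closed orbits through $C$ of dimension at most $\dim V-\dim C$, with equality being equivalent to stability. To close the gap you must either prove stability under the equality hypothesis, or obtain the inclusion $\mf g(x)\subseteq\mf g(v)$ for all $x\in C$ by some other means --- for instance, as the paper does, from a choice of $v\in C$ whose isotropy algebra equals $\{g\in\mf g\mid g(C)=0\}$ --- without identifying $\mf g(v)$ with all of $\bigoplus_{\mu\ne0}V_\mu$.
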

\begin{proof}
By Corollary \ref{cor:Kostant}, any element $v\in C$ is semisimple. Let $\mf g^\circ=\{g\in\mf g\mid g(C)=0\}$.
Then there exists $v\in C$, such that $\{g\in\mf g\mid g(v)=0\}=\mf g^\circ$, and hence we have $\mf g(v)=\mf g(C)$.
It follows that $\{x\in C\mid\mf g(x)\subseteq\mf g(v)\}=C$, hence, if equality holds in \eqref{eq:ledim}, $C$ is a Cartan subspace.
By the same argument, we have inequality \eqref{eq:ledim}, due to the inequality \eqref{eq:lepolar}.
\end{proof}
\begin{remark}
By the definition, $G|V$ is polar if $\dim V\git G=1$, or $\dim V\git G=0$.
Note also that the direct sum $(G_1\times G_2)|(V_1\oplus V_2)$ of polar linear groups $G_i|V_i$, $i=1,2$, is polar.
\end{remark}
The following theorem shows that a Cartan subspace of a polar linear group $G|V$ has the same basic properties as a Cartan subalgebra of $\mf g$.
\begin{theorem}\label{thm:Cartanstuff}
Let $G|V$ be a polar linear group and let $C\subset V$ be a Cartan subspace.
\begin{enumerate}[(a)]
\item\label{smallWeyl}$($\cite[Theorem 2.8]{DK85}$)$.
Let
\[
N=\{g\in G\mid g(C)=C\},\quad Z=\{g\in N\mid\text{$g(c)=c$ for all $c\in C$}\},\quad W=N/Z.
\]
Then $G(C)$ consists of all semisimple elements of $V$, and for any semisimple $v\in V$, the orbit $G(v)$ intersects $C$ by a (non-empty) orbit of $W$.
\item$($\cite[Theorem 2.3]{DK85}$)$.
If $C'\subset V$ is another Cartan subspace, then $g(C')=C$ for some $g\in G$.
\item$($\cite[Theorem 2.10]{DK85}$)$. If $G$ is connected, then $V\git G$ is an affine space (of dimension $\dim C$).
\end{enumerate}
\end{theorem}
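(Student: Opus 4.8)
These three statements are the precise analogues, for a polar linear group, of the classical facts that the Cartan subalgebras of $\mf g$ are all $G$-conjugate and that $\mf g\git G\cong\mf h/W$ is an affine space; the plan is to prove them together. The two principal engines will be Luna's \'etale slice theorem and the Chevalley-type restriction homomorphism
\[
\operatorname{res}\colon\mb F[V]^G\longrightarrow\mb F[C]^W,\qquad p\mapsto p|_C,
\]
whose bijectivity is the technical heart of the matter: once $\operatorname{res}$ is an isomorphism one knows that $\pi|_C\colon C\to V\git G$ is the categorical quotient $C\to C/W$, so its fibres are exactly the $W$-orbits and $W$ is finite, and parts (a), (b) follow formally.

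First I would show that every element of $C$ is semisimple and that $\operatorname{res}$ is injective. Writing $C=C_v$ for the semisimple $v$ in the definition \eqref{eq:polar}, one checks that for generic $c\in C$ the orbit $G(c)$ is closed of the maximal dimension $\dim V-\dim V\git G$; combined with $\dim C=\dim V\git G$ this forces $G(C)$ to be dense in $V$, which is exactly the statement that $\operatorname{res}$ is injective. Upgrading ``generic $c$ is semisimple'' to ``every $c\in C$ is semisimple'' is then a standard limit/closed-orbit argument using that $\mf g(c)\subseteq\mf g(v)$ for all $c\in C$.

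The crux is surjectivity of $\operatorname{res}$, which I would obtain by induction on $\dim V$ via the slice theorem. For $c\in C$ the stabilizer $H=G_c$ is reductive (the orbit being closed) and one forms the slice representation $H|N_c$ on the normal space $N_c$ to the orbit at $c$. The key structural claim — and the main obstacle — is that $H|N_c$ is again polar, of the \emph{same} quotient dimension, with Cartan subspace the image of $C$; granting this, Luna's theorem identifies a neighbourhood of $\pi(c)$ in $V\git G$ with a neighbourhood in $N_c\git H$, reducing the computation of $\mb F[V]^G$ near $\pi(c)$ to the isotropy representation, where the inductive hypothesis applies. Pushing the induction down yields the isomorphism $\mb F[V]^G\cong\mb F[C]^W$. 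Part (a) then follows because $\pi$ separates closed orbits and $\pi|_C$ is the surjection $C\to C/W$, so $G(C)$ is precisely the semisimple locus and $C\cap G(v)$ is a single $W$-orbit; part (b) follows because any second Cartan subspace $C'$ satisfies $\pi(C')=V\git G$ and meets every closed orbit, forcing $C'\subseteq G(C)$ and hence $C'=g(C)$ for some $g\in G$.

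For part (c) I would invoke the Shephard--Todd--Chevalley theorem: since $\mb F[V]^G\cong\mb F[C]^W$ with $W\subset GL(C)$ finite, the quotient $V\git G\cong C/W$ is an affine space if and only if $W$ is generated by pseudo-reflections. I expect the delicate point here to be the reflection-group structure of $W$: one analyses the codimension-one branch locus of the finite map $\pi|_C$ and uses flatness/purity of the ramification divisor to show that the elements of $W$ fixing a hyperplane generate the whole group. This is exactly the step where connectedness of $G$ is indispensable, since without it one obtains only the quotient by a finite group rather than a genuine affine space.
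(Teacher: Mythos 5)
The paper offers no proof of this theorem: each of the three parts is quoted verbatim from Dadok--Kac \cite{DK85} (Theorems 2.8, 2.3 and 2.10 there), so your proposal is in effect a sketch of a proof of the results of \cite{DK85} rather than something to be compared with an argument in the paper. Your outline does track the actual strategy of that reference (a Chevalley-type restriction isomorphism $\mb F[V]^G\cong\mb F[C]^W$, induction through Luna slices, and the reflection-group structure of $W$), but as written it contains a step that fails and it leaves the real content undischarged.

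The step that fails is the injectivity argument: you claim that for generic $c\in C$ the orbit $G(c)$ is closed of dimension $\dim V-\dim V\git G$, and that together with $\dim C=\dim V\git G$ this forces $G(C)$ to be dense in $V$. That is precisely the assertion that $G|V$ is \emph{stable}, and polar linear groups need not be stable: as the paper observes just before Theorem \ref{thm:Cartanstuff}, any $G|V$ with $\dim V\git G=0$ is polar by definition, e.g.\ $SL_2$ acting on $\mb F^2$, where $C=\{0\}$, $G(C)=\{0\}$ is nowhere dense, and the unique closed orbit has dimension $0\ne\dim V-\dim V\git G=2$. Injectivity of the restriction map must therefore be obtained differently (for instance by showing directly that $\pi|_C$ has dense image, which already needs finiteness of the generic fibres of $\pi|_C$, i.e.\ part of what is being proved). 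Beyond this, the two ingredients you yourself flag as ``the main obstacle'' and ``the delicate point'' --- that slice representations of polar representations are again polar of the same rank with Cartan subspace the image of $C$, and that $W$ is finite and generated by pseudo-reflections when $G$ is connected --- are exactly Theorems 2.4 and 2.9 of \cite{DK85} and carry essentially all of the content of parts (a)--(c). Naming them as obstacles does not discharge them, so the proposal is a reasonable roadmap to the literature but not a proof.
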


Recall that a linear reductive group $G|V$ is called \emph{stable} if $V$ has a non-empty Zariski open subset, consisting of closed $G$-orbits.
The following proposition is very useful in verifying that a stable linear reductive group $G|V$ is polar.
\begin{proposition}[\cite{DKII}]
Let $G|V$ be a stable reductive linear group. Let $C\subset V$ be a subspace, such that
\begin{equation}\label{eq:C+gC}
V=\mf g(C)\oplus C
\end{equation}
and
\begin{equation}\label{eq:fulldimC}
\dim C=\dim V\git G.
\end{equation}
Then $G|V$ is polar and $C$ is a Cartan subspace.
\end{proposition}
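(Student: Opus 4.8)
The plan is to reduce the entire statement to the production of a single semisimple vector $v\in C$ with $\mf g(v)=\mf g(C)$. Indeed, if such a $v$ exists, then $\mf g(c)\subseteq\mf g(C)=\mf g(v)$ for every $c\in C$, so $C\subseteq C_v$ and hence $\dim C_v\ge\dim C$. Combining this with \eqref{eq:fulldimC} and the general inequality \eqref{eq:lepolar} gives
\[
\dim V\git G=\dim C\le\dim C_v\le\dim V\git G,
\]
so all three quantities coincide. Thus \eqref{eq:polar} holds for $v$, the group $G|V$ is polar, and $C\subseteq C_v$ with $\dim C=\dim C_v$ forces $C=C_v$; that is, $C$ is a Cartan subspace. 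The whole content of the proposition is therefore the existence of such a $v$, which I would seek among generic points of $C$.

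Two preliminary observations guide the search. First, \eqref{eq:C+gC} gives $\mf g(c)\subseteq\mf g(C)$ and $\mf g(C)\cap C=0$ for every $c\in C$, whence $\mf g(c)\cap C=0$ and $\dim\bigl(\mf g(c)+C\bigr)=\dim\mf g(c)+\dim C$. Second, writing $M:=\dim\mf g(C)$, conditions \eqref{eq:C+gC} and \eqref{eq:fulldimC} give $M=\dim V-\dim C=\dim V-\dim V\git G$, which is exactly the dimension of a generic $G$-orbit in $V$. So it suffices to find $v\in C$ lying on a \emph{closed} orbit of this maximal dimension $M$: for such $v$ one has $\mf g(v)\subseteq\mf g(C)$ with $\dim\mf g(v)=M=\dim\mf g(C)$, i.e. $\mf g(v)=\mf g(C)$. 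The obstacle is that a generic point of the subspace $C$ need not be a generic point of $V$, so stability cannot be invoked on $C$ directly; this is the crux of the argument.

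To circumvent this I would study the sweep $G(C):=\bigcup_{c\in C}G(c)$. By the first observation the differential of the action map $G\times C\to V$ at $(e,c)$ has image $\mf g(c)+C$ of dimension $\dim\mf g(c)+\dim C$, so $\dim\overline{G(C)}=\dim C+m$, where $m:=\max_{c\in C}\dim\mf g(c)\le M$. Since the orbit dimension is lower semicontinuous, the locus $\{\,y\in V:\dim G(y)\le m\,\}$ is closed and contains $G(C)$, hence its closure; therefore the maximal orbit dimension on the affine $G$-variety $\overline{G(C)}$ is exactly $m$, and dividing by $G$ gives $\dim\bigl(\overline{G(C)}\git G\bigr)=\dim\overline{G(C)}-m=\dim C=\dim V\git G$. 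As $\overline{G(C)}\git G\hookrightarrow V\git G$ is a closed embedding, this forces the restriction $\pi|_C\colon C\to V\git G$ to be dominant. This dimension count, which holds with no further hypotheses, is the key technical point.

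Finally I would bring in stability. By stability there is a dense open $P\subseteq V\git G$ over which each fibre of $\pi$ is a single closed orbit of dimension $M$. Dominance of $\pi|_C$ ensures $\overline{G(C)}$ meets $\pi^{-1}(\xi)$ for generic $\xi\in P$; since $\overline{G(C)}$ is $G$-invariant and $\pi^{-1}(\xi)$ is a single orbit, the intersection is all of $\pi^{-1}(\xi)$. Hence the generic fibre of $\pi|_{\overline{G(C)}}$ has dimension $M$, giving $\dim\overline{G(C)}=\dim V\git G+M=\dim C+M=\dim V$, so $\overline{G(C)}=V$ and $m=M$. Consequently a generic $c\in C$ satisfies $\mf g(c)=\mf g(C)$, and, again by dominance, $\pi(c)\in P$ for generic $c$, so $c$ lies on a closed orbit and is semisimple. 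Any such $c$ is the desired $v$, and the reduction of the first paragraph completes the proof. The only genuinely nontrivial steps are the dimension computation on $\overline{G(C)}$ that yields dominance of $\pi|_C$, and its upgrade to $\overline{G(C)}=V$ via stability.
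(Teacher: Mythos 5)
Your reduction to producing a single semisimple $v\in C$ with $\mf g(v)=\mf g(C)$ is exactly right and is also how the paper argues (it takes $v\in C\cap\Omega$ and runs the same tangent-space and dimension comparison), and your final paragraph correctly derives everything from dominance of $\pi|_C$ together with stability. The gap is precisely at the step you yourself single out as the key technical point: the identity $\dim\bigl(\ol{G(C)}\git G\bigr)=\dim\ol{G(C)}-m$. For an irreducible affine $G$-variety $X$ with maximal orbit dimension $m_X$ one only has the inequality $\dim(X\git G)\le\dim X-m_X$: by Rosenlicht's theorem the right-hand side equals $\operatorname{trdeg}\,\mb F(X)^G$, and $\Frac(\mb F[X]^G)$ is merely a subfield of $\mb F(X)^G$. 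Equality forces the generic fibre of $X\to X\git G$ to have dimension $m_X$, which is essentially stability of the action on $X$ --- and stability of $G|V$ is \emph{not} inherited by closed $G$-stable subvarieties. For instance, for $\mb G_m$ acting on $\mb F^4$ with weights $(1,1,-1,-1)$ (a stable action) and $X$ the coordinate plane on which $\mb G_m$ acts with weight $1$, one has $\dim X=2$ and $m_X=1$, yet $X\git G$ is a point. Since the inequality actually available reads $\dim\bigl(\ol{G(C)}\git G\bigr)\le\dim C=\dim V\git G$, it points in the wrong direction and cannot force $\ol{G(C)}\git G=V\git G$; dominance of $\pi|_C$ is therefore not established, and the argument collapses at its declared crux.

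To be fair, this is exactly the point the paper's own proof elides: it asserts that \eqref{eq:C+gC} alone makes $G(C)$ contain a Zariski open subset of $V$, which cannot be taken literally --- in Dadok's example from the remark immediately following the proposition, condition \eqref{eq:C+gC} holds while $G(C)$ lies inside the commuting variety of $\mf g\oplus\mf g$, so it is not dense --- and the complete argument is deferred to the unpublished \cite{DKII}. What is genuinely needed, and what neither your dimension count nor that sentence supplies, is a proof that a generic $c\in C$ satisfies $\mf g(c)=\mf g(C)$ rather than merely $\mf g(c)\subseteq\mf g(C)$ (equivalently, that $\ol{G(C)}=V$, equivalently that $C$ meets the open stratum of closed orbits of maximal dimension); both hypotheses \eqref{eq:C+gC} and \eqref{eq:fulldimC}, not \eqref{eq:C+gC} alone, must enter at this stage. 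Once that single fact is secured, the remainder of your write-up --- the use of stability to see that such a $c$ is semisimple, and the comparison of $\dim C_v$ with \eqref{eq:lepolar} --- is correct and completes the proof.
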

\begin{proof}
Due to \eqref{eq:C+gC}, $G(C)$ contains a Zariski open subset, and, due to stability,
it contains a Zariski open subset $\Omega$ consisting of closed $G$-orbits of maximal dimension.
Let $C^\circ=C\cap\Omega$, and let $v\in C^\circ$.
Then the tangent space $T_v$ to $G(v)$ at $v$ lies in $\mf g(C)$, and, due to \eqref{eq:C+gC} and \eqref{eq:fulldimC}, actually $T_v=\mf g(C)$.
Since this holds for all $v\in C^\circ$, we obtain that $C_v=C$. Hence, by \eqref{eq:fulldimC}, $G|V$ is polar and $C$ is a Cartan subspace.
\end{proof}
\begin{remark}
(a) It follows from Theorem \ref{thm:Cartanstuff} \eqref{smallWeyl} that conditions \eqref{eq:C+gC} and \eqref{eq:fulldimC} are also necessary
for a reductive linear group to be stable polar.

(b) By Popov's stability theorem \cite{Po70}, for a semisimple algebraic group $G$ stability of a linear group $G|V$ is equivalent to the condition
that there exists $v\in V$ whose stabilizer $G_v$ is reductive and
\begin{equation}\label{eq:dimdiff}
\dim V-\dim G(v)=\dim V\git G.
\end{equation}
(In general, LHS\eqref{eq:dimdiff} $\ge$ RHS\eqref{eq:dimdiff}.)

(c) The following example, provided by J.~Dadok, shows that condition \eqref{eq:C+gC} alone is not sufficient.
Let $\mf g$ be a simple Lie algebra and $G|(\mf g\oplus\mf g)$ the action of the adjoint group on the direct sum of 2 copies of $\mf g$.
Let $C$ be the sum of Cartan subalgebras in each copy.
Then \eqref{eq:C+gC} holds, but this linear group is not polar.
\end{remark}

\subsection{Nilpotent elements and polar linear groups}
Let $f$ be a non-zero element of a simple Lie algebra, included in an $\mf{sl}_2$-triple $\mf s$,
and let \eqref{eq:grading} be the corresponding $\mb Z$-grading of $\mf g$. Let $d$ be the depth of this grading.
Then each $\mf g_j$, $|j|\le d$, is a $\mf g_0$-module.
Since $\mf z(\mf s)$, the centralizer of $\mf s$ in $\mf g$, is a subalgebra of $\mf g_0$, each $\mf g_j$ is a $\mf z(\mf s)$-module.
Since $\mf z(\mf s)$ is a reductive subalgebra of $\mf g$, we obtain a reductive linear Lie algebra $\mf z(\mf s)|\mf g_j$
by taking the image of $\mf z(\mf s)$ in $\End\mf g_j$.
Recall that $Z(\mf s)$ is the centralizer of $\mf s$ in the adjoint group $G$ of $\mf g$.
\begin{theorem}\label{thm:allpolar}
All linear groups $Z(\mf s)|\mf g_j$ for $j=d$ or $d-1$ are polar.
\end{theorem}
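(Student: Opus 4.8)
The plan is to handle $j=d$ and $j=d-1$ by reducing polarity to a single dimension equality. Since $\mf g_{d+1}=0$, both $\mf g_d$ and $\mf g_{d-1}$ are annihilated by $\ad e$, so under the $\mf s\times Z(\mf s)$-decomposition $\mf g=\bigoplus_m V_m\otimes R_m$ (with $V_m$ the $(m+1)$-dimensional irreducible $\mf s$-module and $R_m=\ker(\ad e)\cap\mf g_m$ the multiplicity space) one has $\mf g_d=R_d$ and $\mf g_{d-1}=R_{d-1}$ as $Z(\mf s)$-modules; these are exactly the two top multiplicity spaces. The case $j=d$ is already established in \cite{EJK20}, so the genuinely new content is $j=d-1$. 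For both I would invoke the zero-weight-space criterion \eqref{eq:ledim}: fixing a Cartan subalgebra $\mf t\subset\mf z(\mf s)$ and setting $C_0=\mf g_j^{\mf t}$, Corollary \ref{cor:Kostant} guarantees that every vector of $C_0$ is semisimple, so it suffices to prove the equality $\dim C_0=\dim\mf g_j\git Z(\mf s)$, which forces $Z(\mf s)|\mf g_j$ to be polar with Cartan subspace $C_0$. A number of cases are then free: for odd depth every cyclic element is nilpotent, so $\mf g_d\git Z(\mf s)$ is a point and the remark that $\dim V\git G\le1$ implies polarity finishes $j=d$ at once, and the same remark disposes of any $\mf g_{d-1}$ whose quotient is at most one-dimensional.

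To set up the equality I would compute $\dim\mf g_j\git Z(\mf s)$ as $\dim\mf g_j$ minus the dimension of a generic $Z(\mf s)$-orbit, using the main structural theorems to pin down the semisimple locus. By Theorem \ref{thm:nonniliffnonnil} an element $E$ is nilpotent for $Z(\mf s)|\mf g_j$ precisely when $f+E$ is a nilpotent (quasi-)cyclic element, and by Theorems \ref{thm:intiffss} and \ref{thm:quasintiffss} its semisimple elements are identified with the integrable (quasi-)cyclic elements $f+E$ (completely for $j=d$ and for $j=d-1$ when $d$ is odd). Thus the $Z(\mf s)$-invariant map $E\mapsto\chi(f+E)\in\mf g\git G$, where $\chi$ is the adjoint quotient, realizes $\mf g_j\git Z(\mf s)$ as the restriction of the adjoint quotient to the slice $f+\mf g_j$; this is precisely the conceptual reason, recorded in Remark \ref{rem:ssproof}, why the polar datum of $G|\mf g$ should descend to $Z(\mf s)|\mf g_j$ for the two top pieces while failing for $1\le j<d-1$. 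I would further exploit the nondegenerate $Z(\mf s)$-invariant form induced on each $R_m$ by $(\cdot\mid\cdot)$ — symmetric for $m$ even and skew for $m$ odd — which embeds $Z(\mf s)$ into an orthogonal or symplectic group and constrains both $\dim C_0$ and the generic stabilizer; in the stable cases one may alternatively check \eqref{eq:C+gC} and \eqref{eq:fulldimC}, but \eqref{eq:ledim} has the advantage of applying uniformly, including the non-stable ones.

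The hard part is that I do not expect a type-independent proof of $\dim C_0=\dim\mf g_{d-1}\git Z(\mf s)$, so, exactly as the authors concede for Theorem \ref{thm:allpolar}, the verification must descend to a case analysis. For exceptional $\mf g$ I would list, for each nilpotent orbit, the group $Z(\mf s)$ and the module $\mf g_{d-1}$, compute $\dim\mf g_{d-1}^{\mf t}$ and $\dim\mf g_{d-1}\git Z(\mf s)$ with the GAP package SLA \cite{SLA}, and confirm equality orbit by orbit. For classical $\mf g=\mf{sl}_N,\mf{sp}_N,\mf{so}_N$ I would work in the standard representation, encode $f$ by its partition $(p_1^{(r_1)},p_2^{(r_2)},\dots)$, write $Z(\mf s)$ and $\mf g_{d-1}$ explicitly as matrices, and reduce the required equality to a combinatorial identity in the $p_i$ and $r_i$. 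Using that polarity is preserved under direct sums of polar linear groups, together with the frequent splitting of $\mf g_{d-1}$ as a $Z(\mf s)$-module, would trim the list of essentially distinct cases; but the absence of a uniform argument for the dimension equality remains the principal obstacle.
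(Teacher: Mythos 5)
Your proposal contains two genuine gaps. First, a circularity: you propose to compute $\dim\mf g_j\git Z(\mf s)$ by identifying the semisimple and nilpotent loci of $Z(\mf s)|\mf g_j$ via Theorems \ref{thm:nonniliffnonnil}, \ref{thm:intiffss} and \ref{thm:quasintiffss}. But in the paper those results are proved \emph{after} and \emph{by means of} Theorem \ref{thm:allpolar}: their proofs replace $E$ by an element $E_0$ of a Cartan subspace using Theorem \ref{thm:Cartanstuff}~\eqref{smallWeyl}, which presupposes that $Z(\mf s)|\mf g_j$ is polar. So they cannot be fed back into the proof of polarity without independent justification.

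Second, and more decisively, the zero-weight-space criterion \eqref{eq:ledim} is only sufficient, and it fails for precisely the hardest cases that occur. For $\Spin_7\otimes\st(SO_2)$ and $\Spin_9\otimes\st(SL_2)$ (which arise as $Z(\mf s)|\mf g_{d-1}$ for $2\rA_1$ in $\rE_6$ and $\rE_7$) the weights are $\tfrac12(\pm\eps_1\pm\cdots\pm\eps_k)\pm\delta$, so the zero weight space $C_0=\mf g_{d-1}^{\mf t}$ is $\{0\}$ while $\dim\mf g_{d-1}\git Z(\mf s)=2$; similarly for $\rA_1+\widetilde\rA_1$ in $\rF_4$ one has $\mf z(\mf s)|\mf g_{d-1}=\sym^2\st(\mf{so}_3)$ with $\dim C_0=2<3=\operatorname{rank}$, and the paper notes explicitly that there the Cartan subspace is \emph{not} the zero weight space. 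In these cases one must instead produce semisimple vectors that are sums of several nonzero-weight vectors, which is what Proposition \ref{prop:dkss} is for and what Examples \ref{ex:spin7stso2} and \ref{ex:spin9stsl2} carry out. The paper's actual route is different in kind: it reads off the list of groups $Z(\mf s)|\mf g_{d-1}$ from the tables, observes that almost all of them are theta groups or appear in the classification of polar representations in Table~1 of \cite{DK85}, and then settles the two leftover cases by exhibiting explicit Cartan subspaces via Proposition \ref{prop:dkss}. Your plan, as stated, has no mechanism to close exactly those cases.
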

\begin{proof}
We do not know a proof of this remarkable fact without the case-wise verification.
For $j=d$ it was pointed out in \cite{EJK20} that this fact follows from Tables there for $d$ even and Remark 2.1 for $d$ odd.
For $j=d-1$, when $d$ is odd, this follows from Table 1 of \cite{DSJKV20}, where all these groups are listed,
and from Table 1 of \cite{DK85}, where all polar linear groups of simple Lie algebras are listed.
Finally, for $j=d-1$, where $d$ is even, this fact follows from Tables \ref{tab:zsdm1}, \ref{tab:withinv} and \ref{tab:noinv} of this paper.
Indeed, looking at Table 1 of \cite{DK85},  one can see that, apart from theta groups, all examples that occur in these tables are polar,
except, possibly, for the linear reductive groups $\Spin_7\otimes\st(SO_2)$ and $\Spin_9\otimes\st(SL_2)$.  (Hereafter Spin (resp.  st) denotes the spinor (resp.  standard) representation.)
For both of them $\dim V\git G = 2$.  It is shown in the examples below that both are polar.
\end{proof}
\begin{remark}\label{rem:ssproof}
We tried to prove polarity of $Z(\mf s)|\mf g_j$ for $j=d$ or $d-1$ along the following lines.

Let $\eps=e^{\frac{2\pi i}{j+2}}$, where $j=d$ or $d-1$, and consider the automorphism $\sigma$ of $\mf g$,
defined by $\sigma|_{\mf g_s}=\eps^sI$, $s\in\mb Z$.
Then the fixed point set of $\sigma$ is $\mf g_0$ and the $\eps^{-2}$-eigenspace of $\sigma$ is $V:=\mf g_{-2}+\mf g_j$.
Let $G^0$ be the algebraic subgroup of the adjoint group $G$, corresponding to the subalgebra $\mf g_0$ of $\mf g$.
Then we get a theta group $G^0|V$. By \cite{DK85}, \cite[Theorem 6(c)]{EJK20} this representation is polar.

Let $SG^0=G^0\cap SL(\mf g_{-2})$, and consider the linear group $SG^0|\mf g_{-2}$.
By Proposition 1.1 from \cite{K80} there exists an $SG^0$-invariant polynomial $P$ on $\mf g_{-2}$,
such that $P(a)\ne0$ for $a\ne0$ if and only if the orbit $SG^0(a)$ is closed;
moreover such an orbit contains a non-zero multiple of $f$.
Since $\mf z(\mf s)=\mf g_0\cap\mf g^f$ and the group $Z(\mf s)|\mf g_{-2}$ is orthogonal,
we obtain that $Z(\mf s)\subset SG^0$ and $Z(\mf s)=(SG_0)^f$.
Let $\mf{sg}_0=\mf g_0\cap\mf{sl}(\mf g_{-2})$ and consider the slice representation $Z(\mf s)|N_f$, where $N_f=V/[\mf{sg}_0,f]$.
Then by the Luna slice theorem \cite{L72}
\[
\dim(V\git SG^0)=\dim(\mf g_j\git Z(\mf s))+1.
\]
Therefore, since $\dim(V\git G^0)\ge\dim(V\git SG_0)+1$, we obtain that
\begin{equation}\label{eq:dimGgeZ}
\dim(V\git G^0)\ge\dim(\mf g_j\git Z(\mf s)).
\end{equation}
Let $v\in V$ be such that $G^0(v)$ is a closed orbit of maximal dimension.
Since $G^0|V$ is polar,
\[
C_v:=\{v'\in V\mid\mf g_0(v')\subseteq\mf g_0(v)\}
\]
is a Cartan subspace. Let $\pi:V\to\mf g_j$ be the projection.
Since no nonzero $G^0$-orbit in $\mf g_{-2}$ is closed, we conclude that $\pi$ is injective on $C_v$.
Moreover we think that $\pi(C_v)\subseteq C_{\pi(v)}$ but we were unable to prove this.

But if this is true, then
\begin{equation}\label{eq:dimvlepi}
\dim C_v\le\dim C_{\pi(v)}.
\end{equation}
Comparing \eqref{eq:dimGgeZ} and \eqref{eq:dimvlepi}, we conclude that $\dim C_{\pi(v)}\ge\dim(\mf g_j\git Z(\mf s))$.
But by \eqref{eq:lepolar}, the LHS cannot be greater than the RHS, hence $\dim C_{\pi(v)}=\dim(\mf g_j\git Z(\mf s))$
and the linear group $Z(\mf s)|\mf g_j$ is polar.
\end{remark}
\begin{example}\label{ex:spin7stso2}
$G|V=\Spin_7\otimes\st(SO_2)$. This is a direct sum of two irreducible modules $V=V^+\oplus V^-$, which are 8-dimensional spin modules for $SO_7$;
$SO_2$ acts on $V^\pm$ by multiplication by $t^{\pm1}$, $t\in\mb F^\times$.
The set of weights of the $\mf g$-module $V$ is
\[
\frac12(\pm\eps_1\pm\eps_2\pm\eps_3)\pm\delta,
\]
the highest weights of $V^\pm$ being $\Lambda^\pm=\frac12(\eps_1+\eps_2+\eps_3)\pm\delta$, and lowest weights being $-\Lambda^\pm$.
Let $v^\pm$ be the corresponding highest weight vectors and $v_\pm$ lowest weight vectors. Let
\[
v_1=v^++v_-,\quad v_2=v^-+v_+.
\]
Then, by Proposition \ref{prop:dkss}, all vectors from $C:=\operatorname{span}\{v_1,v_2\}$ are semisimple.
It is easy to see that $\mf g(C)$ is the tangent space to the orbit $G(v)$ at a generic point $v\in C$.
It follows that $C$ is a Cartan subspace.
\end{example}
\begin{example}\label{ex:spin9stsl2}
$G|V=\Spin_9\otimes\st(SL_2)$. It is an irreducible 32-dimensional $\mf g$-module with the set of weights
\[
\frac12(\pm\eps_1\pm\eps_2\pm\eps_3\pm\eps_4)\pm\delta,
\]
the highest weight being $\Lambda=\frac12(\eps_1+\eps_2+\eps_3+\eps_4)+\delta$.
We let
\[
v_1=v_\Lambda+v_{-\Lambda},\quad v_2=e_{-\eps_1}e_{-\eps_2-\eps_3}v_\Lambda+e_{\eps_1}e_{\eps_2+\eps_3}v_{-\Lambda}.
\]
As in Example \ref{ex:spin7stso2}, it is easy to check that $C=\operatorname{span}\{v_1,v_2\}$ is a Cartan subspace.
\end{example}
\begin{theorem}\label{thm:Zgj}\
\begin{enumerate}[(a)]
\item\label{thm:Zgjforms} All linear groups $Z(\mf s)|\mf g_j$ for $j$ even (resp. odd) preserve a symmetric (resp. skew-symmetric) non-degenerate invariant bilinear form.
\item\label{thm:Zgjstable} All linear groups $Z(\mf s)|\mf g_j$ for $j=d$ or $d-1$ are stable,
except for all cases when $j=d$ is odd, and also when $j=d-1$ is odd and $Z(\mf s)|\mf g_{d-1}$ either is isomorphic to $\st(\mf{so}_n)\otimes\st(\mf{sp}_m)$
with $n$ odd, $n<m$, or has finitely many orbits.
\end{enumerate}
\end{theorem}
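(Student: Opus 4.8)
The plan is to prove the two parts by different means. Part (a) has a uniform conceptual proof, while part (b) reduces to a property of Cartan subspaces that, in the end, I expect to establish by a case-wise inspection built on the polarity of Theorem~\ref{thm:allpolar}.

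For part (a) I would write down the invariant form on $\mf g_j$ explicitly. Because $(\cdot\mid\cdot)$ is invariant under $\ad h$, the spaces $\mf g_i$ and $\mf g_{i'}$ are orthogonal unless $i+i'=0$, so $(\cdot\mid\cdot)$ restricts to a non-degenerate pairing between $\mf g_j$ and $\mf g_{-j}$. By $\mf{sl}_2$-representation theory the map $\ad(f)^j\colon\mf g_j\to\mf g_{-j}$ is an isomorphism, since in each irreducible $\mf s$-submodule it carries the weight-$j$ space isomorphically onto the weight-$(-j)$ space. I therefore set $\langle a,b\rangle:=\bigl(a\mid\ad(f)^jb\bigr)$ for $a,b\in\mf g_j$; this is non-degenerate by the two facts just noted. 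It is $Z(\mf s)$-invariant because every $g\in Z(\mf s)$ preserves $\mf g_j$, commutes with $\ad f$, and leaves $(\cdot\mid\cdot)$ invariant, so $\operatorname{Ad}g$ intertwines $\ad(f)^j$. Finally, $\ad f$ is skew-adjoint for $(\cdot\mid\cdot)$, hence $\ad(f)^j$ has adjoint $(-1)^j\ad(f)^j$, and symmetry of $(\cdot\mid\cdot)$ yields $\langle a,b\rangle=(-1)^j\langle b,a\rangle$. Thus the form is symmetric when $j$ is even and skew-symmetric when $j$ is odd, which is part (a).

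For part (b) I would first convert stability into a statement about the Cartan subspace. Since $Z(\mf s)|\mf g_j$ is polar for $j=d,d-1$ (Theorem~\ref{thm:allpolar}), Theorem~\ref{thm:Cartanstuff}\eqref{smallWeyl} identifies the set of semisimple elements with $Z(\mf s)(C)$, where $C$ is a Cartan subspace. As $\mf g_j$ is irreducible as a variety, the group is stable exactly when $Z(\mf s)(C)$ is dense, i.e. when the generic element of $\mf g_j$ is semisimple; equivalently, by the necessity of \eqref{eq:C+gC} and \eqref{eq:fulldimC} for stable polar groups, when $\mf g_j=\mf z(\mf s)(C)\oplus C$, i.e. when $\dim\mf g_j-\dim Z(\mf s)(c)=\dim\mf g_j\git Z(\mf s)$ for generic $c\in C$, cf.\ \eqref{eq:dimdiff}. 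The whole problem thus becomes: for which $(f,j)$ is the generic element of $\mf g_j$ semisimple for $Z(\mf s)$?

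For $j=d$ this is governed by the parity of $d$. When $d$ is odd, $f$ is of nilpotent type \cite{EKV13}, so every cyclic element $f+E$ is nilpotent, and by the correspondence between semisimple elements of $Z(\mf s)|\mf g_d$ and semisimple cyclic elements (\cite{EKV13,EJK20}) no nonzero $E\in\mf g_d$ is semisimple for $Z(\mf s)$; hence the Cartan subspace is $C=0$, the only closed orbit is $\{0\}$, and $Z(\mf s)|\mf g_d$ is not stable. When $d$ is even, $f$ is of non-nilpotent type and the generic cyclic element is non-nilpotent; using the same correspondence and the tables of \cite{EJK20} that already gave polarity in Theorem~\ref{thm:allpolar}, I would verify that the generic $E\in\mf g_d$ is semisimple, so the group is stable. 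For $j=d-1$ I would run through the explicit groups $Z(\mf s)|\mf g_{d-1}$: when $d$ is odd these are listed in \cite{DSJKV20}, and when $d$ is even they appear in Tables~\ref{tab:zsdm1},~\ref{tab:withinv},~\ref{tab:noinv}. Comparing each with the list of polar linear groups in \cite[Table~1]{DK85} and applying the criterion above, one finds stability in every case except the two families named in the statement: the tensor representation $\st(\mf{so}_n)\otimes\st(\mf{sp}_m)$ with $n$ odd and $n<m$ (polar but with non-dense semisimple locus), and the cases with only finitely many orbits (where $\dim\mf g_{d-1}\git Z(\mf s)=0$, so again only $\{0\}$ is closed). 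The two borderline groups $\Spin_7\otimes\st(SO_2)$ and $\Spin_9\otimes\st(SL_2)$ are precisely those handled in Examples~\ref{ex:spin7stso2} and~\ref{ex:spin9stsl2}, where explicit Cartan subspaces confirm stability. The main obstacle, as in Theorem~\ref{thm:allpolar}, is exactly this case-wise verification: there is no uniform argument, and the delicate point is to recognize which table entries give polar but non-stable groups — above all the occurrence of $\st(\mf{so}_n)\otimes\st(\mf{sp}_m)$ with $n$ odd $<m$, whose behaviour differs from the superficially similar stable tensor representations.
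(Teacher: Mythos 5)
Your proposal is correct, and it overlaps with the paper's proof only in part: the case-wise verification for $j=d-1$ odd via the criterion $\mf g_j=\mf z(\mf s)(C)\oplus C$ is exactly what the paper does there. The differences are these. For part (a) you give a complete explicit construction of the form $\langle a,b\rangle=(a\mid\ad(f)^jb)$ together with the verification of non-degeneracy, $Z(\mf s)$-invariance and the parity of its symmetry; the paper simply cites \cite{Pa99} and \cite{EJK20}, so your argument is self-contained where the paper is not. For part (b) with $j$ even, however, you miss the paper's uniform shortcut: once part (a) provides a non-degenerate \emph{symmetric} invariant form, stability is automatic by the result of Luna \cite{L72} on orthogonal representations of reductive groups, with no tables needed; you instead propose to re-derive stability by inspecting the density of the semisimple locus case by case, which is more labor and, for $j=d$ even, leans again on the tables of \cite{EJK20}. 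Your reduction of stability (for a polar group) to density of $Z(\mf s)(C)$, equivalently to condition \eqref{eq:C+gC}, is sound and matches the paper's working criterion $[\mf z(\mf s),C]=\mf g_j$. Finally, your argument that $Z(\mf s)|\mf g_d$ is never stable for $d$ odd --- all cyclic elements are nilpotent by \cite{EKV13}, so by the correspondence with closed $Z(\mf s)$-orbits no nonzero $E\in\mf g_d$ is semisimple and only $\{0\}$ is a closed orbit --- is cleaner and more explicit than the paper, which leaves this exceptional case unjustified in the proof (it is implicit in the later observation that $Z(\mf s)|\mf g_d$ is then the full symplectic group on its standard module). In short: same skeleton for the genuinely case-wise portion, a more explicit part (a), but a less economical treatment of the even-$j$ case where the orthogonal form plus Luna's theorem does all the work at once.
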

\begin{proof}
\eqref{thm:Zgjforms} can be found in \cite{Pa99} or \cite{EJK20}. By \eqref{thm:Zgjforms}, the stability follows when $j$ is even from \cite{L72}.
When $d$ is even and $j=d-1$, the stability is established by a case-wise verification,
by checking that $[\mf z(\mf s),C]=\mf g_j$ where $C$ is a Cartan subspace (which exists by Theorem \ref{thm:allpolar}).
\end{proof}
\begin{remark}
It follows from Tables \ref{tab:zsdm1}, \ref{tab:withinv}, \ref{tab:noinv} that for an odd nilpotent the Dynkin characteristic contains 2 only if $Z(\mf s)|\mf g_{d-1}$ has finitely many orbits.
Moreover, the Dynkin characteristic contains 2 if $\mf g_j=0$ for some $1\le j\le d-1$; this follows from the observation that when the Dynkin characteristic contains only $0$ or $1$,
the $\mf g_0$-module $\mf g_j$ is generated by $\mf g_1^j$.
\end{remark}
\begin{remark}
For all nilpotent elements of odd depth in Table \ref{tab:odd} the linear group $Z(\mf s)|\mf g_1$ is not polar, except for the last one.
\end{remark}

\section{Generalized cyclic elements and integrable triples}

In this section $\mf g$ is a semisimple Lie algebra. The following notion generalizes both notions of cyclic and quasi-cyclic elements, introduced in the Introduction, for $j=d$ and $j=d-1$ respectively.

\begin{definition}\label{def:gencyc}
A \emph{generalized cyclic element}, attached to a nilpotent element $f$ of $\mf g$ is an element of the form $f+E$,
where $E\in\mf g_j$ with $j\ge1$, $E\ne0$, $[E,g_{\ge2}]=0$
and the centralizer of $E$ in $\mf g_1$ is coisotropic with respect to the symplectic form $\omega$,
defined by \eqref{eq:skewform}. Two generalized cyclic elements $f+E$ and $f+E'$ are called \emph{equivalent} if $E'$ is proportional to an element from the orbit $Z(\mf s)(E)$.
\end{definition}

Recall the notion of an integrable triple, associated to $f$,
which is defined by properties (I\ref{def:intrip1}) -- (I\ref{def:intrip3}) in the Introduction.

\begin{proposition}\
\begin{enumerate}[(a)]
\item\label{prop:intripgcyc} If $(f_1,f_2,E)$ is an integrable triple, associated to $f$, then the element
\begin{equation}\label{eq:genc}
a=f_1+f_2+E
\end{equation}
is a non-nilpotent generalized cyclic element, associated to $f$.
\item\label{prop:gcycintrip} The element $a$, defined by \eqref{eq:genc}, determines the integrable triple uniquely.
\end{enumerate}
\end{proposition}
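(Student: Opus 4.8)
The plan is to read off everything from the additive Jordan decomposition of $a=f+E$ together with the $\mb Z$-grading \eqref{eq:grading}; the two parts then become, respectively, a semisimplicity computation and an injectivity statement.

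For part \eqref{prop:intripgcyc}, I would first rewrite $a=f_1+f_2+E=f+E$ using $f=f_1+f_2$ from (I\ref{def:intrip1}). Since $E\in\mf g_j$ with $j\ge1$ is nonzero, and since condition (I\ref{def:intrip2}) states exactly that $[E,\mf g_{\ge2}]=0$ and that the centralizer of $E$ in $\mf g_1$ is coisotropic, the element $a=f+E$ satisfies all the defining requirements of a generalized cyclic element attached to the nilpotent $f$. It then remains to prove that $a$ is non-nilpotent, and for this I would verify that $a=(f_1+E)+f_2$ is the Jordan decomposition of $a$. Indeed, $f_1+E$ is semisimple by (I\ref{def:intrip3}); the operator $\ad f_2$ lowers the grading by $2$ and is therefore nilpotent on the finite grading \eqref{eq:grading}, so $f_2$ is a nilpotent element of $\mf g$; and $[f_1+E,f_2]=[f_1,f_2]+[E,f_2]=0$ by (I\ref{def:intrip1}) and (I\ref{def:intrip3}). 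By uniqueness of the Jordan decomposition, $a_s=f_1+E$ and $a_n=f_2$. Because $f_1\in\mf g_{-2}$ and $E\in\mf g_j$ with $j\ge1$ lie in distinct graded components, the relation $f_1+E=0$ would force $E=0$; as $E\ne0$ we conclude $a_s\ne0$, hence $a$ is not nilpotent.

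For part \eqref{prop:gcycintrip}, the observation is that the decomposition used above is intrinsic to $a$, so I would prove injectivity of the assignment $(f_1,f_2,E)\mapsto a$. Suppose two integrable triples $(f_1,f_2,E)$ and $(f_1',f_2',E')$ associated to $f$ satisfy $f_1+f_2+E=f_1'+f_2'+E'=a$. By the computation in part \eqref{prop:intripgcyc}, both $(f_1+E)+f_2$ and $(f_1'+E')+f_2'$ are Jordan decompositions of $a$, so uniqueness gives $f_2=f_2'=a_n$ and $f_1+E=f_1'+E'=a_s$. Projecting the latter identity onto $\mf g_{-2}$ and using that $E,E'\in\mf g_{\ge1}$ have vanishing $\mf g_{-2}$-component, I get $f_1=f_1'$, whence $E=E'$. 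Thus the triple is recovered canonically from $a$ by $f_2=a_n$, $f_1=(a_s)_{-2}$ and $E=a_s-f_1$, which proves uniqueness.

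The argument is structural rather than computational, so I do not expect a genuine obstacle; the one point that needs care is the verification that $a=(f_1+E)+f_2$ really is the Jordan decomposition. This is exactly where (I\ref{def:intrip1}) and (I\ref{def:intrip3}) enter, through the commutativity $[f_1+E,f_2]=0$ of the prospective semisimple and nilpotent parts, while the grading \eqref{eq:grading} is what cleanly separates $f_1\in\mf g_{-2}$ from $E\in\mf g_{\ge1}$ in both parts. I would also keep track of the nondegeneracy $E\ne0$: without it $a$ could collapse to the nilpotent element $f$ (with $f_1=0$ forced by semisimplicity in (I\ref{def:intrip3})), and neither non-nilpotency nor the generalized-cyclic property would hold.
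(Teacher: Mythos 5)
Your proof is correct and follows essentially the same route as the paper: both rest on the observation that $a=(f_1+E)+f_2$ is the Jordan decomposition of $a$, from which non-nilpotency follows since $f_1+E\ne0$, and uniqueness of the Jordan decomposition recovers the triple. The paper recovers $E$ and $f_1$ slightly more directly via $E=a-f$ and $f_1=f-f_2$ (using that $f$ is fixed), whereas you project $a_s$ onto the graded components, but this is an immaterial difference.
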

\begin{proof}
By the definition, $a=(f_1+E)+f_2$ is the Jordan decomposition of $a$, where
$f_1+E$ and $f_2$ are its semisimple and nilpotent parts.
Since $f_1+E$ is a non-zero semisimple element, we conclude that $a$ is not nilpotent.
By the definition, $a$ is a generalized cyclic element, proving \eqref{prop:intripgcyc}.

Due to uniqueness of the Jordan decomposition, $a$ determines $f_1+E$ and $f_2$,
hence it determines $f_1=f-f_2$ and $E=a-f$, proving \eqref{prop:gcycintrip}.
\end{proof}

\begin{definition}
A generalized cyclic element $f+E$ is called \emph{integrable} if it is obtained from an integrable triple as in \eqref{eq:genc}.
In this case the element $E\in\mf g_j$ is called \emph{integrable for} $f$.
\end{definition}

Note that a generalized cyclic element is integrable if and only if the nilpotent part of its Jordan decomposition lies in $\mf g_{-2}$, and also that a generalised cyclic element cannot be nilpotent.
Also, of course, any semisimple generalized cyclic element is integrable.

\begin{lemma}\label{lem:nilthennil}
If $E\in\mf g_j$, $j\ge1$, is a nilpotent element of the $Z(\mf s)$-module $\mf g_j$,
then the element $f+E$ is a nilpotent element of $\mf g$.
\end{lemma}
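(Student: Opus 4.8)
The plan is to exploit the single structural feature that distinguishes the $Z(\mf s)$-action from an arbitrary linear action: every element of $Z(\mf s)$ fixes $f$. Indeed, $Z(\mf s)$ centralizes the whole $\mf{sl}_2$-triple $\mf s=\{e,f,h\}$, so for any $g\in Z(\mf s)$ one has $g\cdot f=f$ and hence
\[
g\cdot(f+E)=f+g\cdot E.
\]
Thus conjugating $f+E\in\mf g$ by elements of $Z(\mf s)\subset G$ moves only the $\mf g_j$-component $E$, along its $Z(\mf s)$-orbit. This is precisely what links nilpotency of $E$ in the linear group $Z(\mf s)|\mf g_j$ to nilpotency of $f+E$ in $\mf g$.

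First I would translate both notions of nilpotency into the language of invariant polynomials. By Kostant's characterization of the nilpotent cone, an element $x\in\mf g$ is nilpotent if and only if $p(x)=0$ for every homogeneous $G$-invariant polynomial $p$ on $\mf g$ of positive degree; equivalently (as recalled in Section \ref{sec:review}) $x$ is a nilpotent element of the adjoint linear group $G|\mf g$, i.e. $0\in\overline{G(x)}$. Likewise, $E$ being a nilpotent element of $Z(\mf s)|\mf g_j$ means $0\in\overline{Z(\mf s)(E)}$, so that every $Z(\mf s)$-invariant polynomial on $\mf g_j$ takes the same value at $E$ as at $0$, since invariants are constant on orbit closures (a fiber of a polynomial containing an orbit is closed, hence contains its closure).

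The key step is then the following. Fix a homogeneous $G$-invariant polynomial $p$ on $\mf g$ of positive degree and define $q(v):=p(f+v)$ for $v\in\mf g_j$; this is a (generally inhomogeneous) polynomial on $\mf g_j$, being the composition of $p$ with the affine map $v\mapsto f+v$. Using $g\cdot f=f$ and the $G$-invariance of $p$, for every $g\in Z(\mf s)$ one computes
\[
q(g\cdot v)=p(f+g\cdot v)=p\bigl(g\cdot(f+v)\bigr)=p(f+v)=q(v),
\]
so $q$ is $Z(\mf s)$-invariant. Since $0\in\overline{Z(\mf s)(E)}$, invariance forces $q(E)=q(0)$, that is $p(f+E)=p(f)$. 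But $f$ is nilpotent, so $p(f)=0$. Hence $p(f+E)=0$ for every homogeneous positive-degree $G$-invariant $p$, and therefore $f+E$ is nilpotent, as required.

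Alternatively one may argue geometrically: by the Hilbert--Mumford criterion (applicable because $Z(\mf s)$ is reductive) there is a one-parameter subgroup $\lambda$ of $Z(\mf s)$ with $\lim_{t\to0}\lambda(t)\cdot E=0$; since $\lambda(t)$ fixes $f$, the curve $\lambda(t)\cdot(f+E)=f+\lambda(t)\cdot E$ lies in the orbit $G(f+E)$ and tends to $f$, whence $f\in\overline{G(f+E)}$ and so $\overline{G(f)}\subseteq\overline{G(f+E)}$; as $0\in\overline{G(f)}$ this gives $0\in\overline{G(f+E)}$. I expect no serious obstacle: the entire content is the observation that $Z(\mf s)$ fixes $f$, and the only points needing care are the standard facts that invariant polynomials are constant on orbit closures and that the nilpotent cone is cut out by the positive-degree $G$-invariants.
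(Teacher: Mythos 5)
Your proposal is correct, and your ``alternative'' geometric argument at the end is exactly the paper's own one-line proof: since $Z(\mf s)$ fixes $f$, the closure of $Z(\mf s)(f+E)$ contains $f$, hence $\ol{G(f+E)}\supset\ol{G(f)}\ni0$. The invariant-polynomial version is just a reformulation of the same key observation (and the appeal to Hilbert--Mumford is unnecessary, since $\ol{f+Z(\mf s)(E)}=f+\ol{Z(\mf s)(E)}$ already gives $f\in\ol{Z(\mf s)(f+E)}$ directly).
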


\begin{proof}
Clearly the Zariski closure of $Z(\mf s)(f+E)$ contains $f$.
Hence $\ol{G(f+E)}\supset\ol{Gf}\ni0$ since $f$ is a nilpotent element of $\mf g$.
\end{proof}

\begin{proposition}\label{prop:nonebelow}
An integrable generalized cyclic element $f+E$, associated to a nilpotent element $f$ of depth $d$, where $E\in\mf g_j$,
exists only when $j=d-1$ or $d$.
\end{proposition}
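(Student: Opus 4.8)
The plan is to read off, from integrability, the Jordan decomposition $f+E=(f_1+E)+f_2$ with $s:=f_1+E$ semisimple and $f_1,f_2\in\mf g_{-2}$ (this is exactly the content of properties (I\ref{def:intrip1})--(I\ref{def:intrip3}), cf. Proposition \ref{prop:intripgcyc}), and then to prove the lower bound $j\ge d-1$; the upper bound $j\le d$ is automatic since $\mf g_j=0$ for $j>d$ while $E\ne0$. Throughout I use the defining property $[E,\mf g_{\ge2}]=0$ of a generalized cyclic element, which by the grading means $\ad E$ vanishes on $\mf g_{\ge2}$ and raises degree by $j$.

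Assume for contradiction that $1\le j\le d-2$. The first step is to localise the image of $\ad s$ at the top of the grading. Since $\ad E$ kills $\mf g_{\ge2}$ and is nonzero only on $\mf g_{\le1}$, its image lies in $\mf g_{\le 1+j}\subseteq\mf g_{\le d-1}$ (using $j\le d-2$), while $\ad f_1$ maps each $\mf g_m$ into $\mf g_{m-2}$, so $\im(\ad f_1)\subseteq\mf g_{\le d-2}$. Hence $\im(\ad s)\subseteq\mf g_{\le d-1}$. Now $\ad s$ is skew-adjoint for the invariant form, so $\im(\ad s)^\perp=\ker(\ad s)$; since the form pairs $\mf g_m$ with $\mf g_{-m}$ one checks $(\mf g_{\le d-1})^\perp=\mf g_{-d}$, and dualising the inclusion gives $\mf g_{-d}\subseteq\ker(\ad s)$. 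For $w\in\mf g_{-d}$ we have $[f_1,w]\in\mf g_{-d-2}=0$, so $[s,w]=0$ forces $[E,w]=0$; thus $[E,\mf g_{-d}]=0$. (When $j\le d-3$ the same computation applied at degree $d-1$ yields in addition $[E,\mf g_{-(d-1)}]=0$, a useful extra vanishing.)

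At this stage $E$ centralizes both $\mf g_{\ge2}$ and $\mf g_{-d}$, hence by the Jacobi identity it centralizes the graded subalgebra $\mf m:=\langle\mf g_{\ge2},\mf g_{-d}\rangle$ that they generate. The conclusion I want is that $\mf m=\mf g$, equivalently $\mf z(\mf m)\cap\mf g_j=0$: then $E$ is central in the semisimple $\mf g$, so $E=0$, contradicting $E\ne0$. This is also where the boundary is seen to be sharp: $[\mf g_{\ge d-1},\mf g_{\ge2}]\subseteq\mf g_{\ge d+1}=0$, so $\mf g_{\ge d-1}\subseteq\mf z(\mf g_{\ge2})$ and for $j=d-1$ or $d$ the hypothesis $[E,\mf g_{\ge2}]=0$ imposes nothing, which is precisely why the argument cannot and should not produce a contradiction there.

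The main obstacle is the identity $\mf m=\mf g$ (most delicate for $j=d-2$, where only the single bottom anchor $\mf g_{-d}$ is available). I would first attempt a uniform structural proof: $\mf m$ contains the parabolic piece $\mf g_{\ge2}$ together with the extreme space $\mf g_{-d}$, and the brackets $[\mf g_{-d},\mf g_m]\subseteq\mf g_{m-d}$ for $2\le m\le d$ together with $[\mf g_{-d},\mf g_d]\subseteq\mf g_0$ sweep out the nonpositive degrees, so one expects $\mf m$ to recover all of $\mf g$ whenever the grading does not split off an isolated summand. I expect this to be the step that resists a clean treatment in full generality; failing a conceptual argument, it is verified from the explicit description of the spaces $\mf g_j$, $\mf g_{-d}$ for each nilpotent orbit (the tables), which accounts for the case-wise character of the statement.
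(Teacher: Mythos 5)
Your deduction that integrability forces $[E,\mf g_{-d}]=0$ (via $\im\ad(f_1+E)\subseteq\mf g_{\le d-1}$ and Lemma \ref{lem:linalg}) is correct, and it is a genuinely conceptual step that the paper does not have. But the crux of your argument --- that no nonzero $E\in\mf g_j$ with $1\le j\le d-2$ can centralize both $\mf g_{\ge2}$ and $\mf g_{-d}$ --- is false, and it fails exactly in the cases that make the proposition nontrivial. Take $\mf g=\mf{sl}_N$ with partition $(p_1,p_1-1,\dots)$, $p_1\ge3$: by Proposition \ref{lemma2} the centralizer of $\mf g_{\ge2}$ contains the nonzero subspace $W=\Hom(V[-D+1],V[D-1])\subset\mf g_{d-2}$, and $[W,\mf g_{-d}]=0$ as well, since both compositions of $T\in W$ with $S\in\mf g_{-d}=\Hom(V[D],V[-D])$ vanish ($S$ lands in $V[-D]$, which $T$ kills, and $T$ lands in $V[D-1]$, which $S$ kills). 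Concretely, for the partition $(3,2)$ of $5$ the element $E_{45}\in\mf g_{d-2}=\mf g_2$ centralizes $\mf g_{\ge2}$ and $\mf g_{-4}$, so the subalgebra $\mf m$ they generate is proper, $\mf z(\mf m)\cap\mf g_{d-2}\ne0$, and the intended contradiction $E=0$ is unavailable. The same phenomenon occurs for $\mf{sp}_N$, for $\mf{so}_N$, and for the nilpotents of type $\rA_2+\rA_1$ in $\rE_6$, $\rE_7$, $\rE_8$, which the paper singles out for precisely this reason.

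The missing ingredient is the coisotropy condition (I\ref{def:intrip2}), which your proof never uses after the setup. The paper's argument is: (i) compute $\mf z(\mf g_{\ge2})\cap\mf g_{\ge1}$ case by case and observe that it lies in $\mf g_{d-2}\oplus\mf g_{d-1}\oplus\mf g_d$; (ii) for the residual elements $E\in\mf g_{d-2}$, show that coisotropy forces $c^2=0$ in the parametrization $E=c(f\transpose)^{D-1}$ (Propositions \ref{prop4b} and \ref{prop4-sp}) and hence that $f+E$ is nilpotent (Proposition \ref{cor:nointinW-b}), which is incompatible with integrability. No centralizer argument of the kind you propose can close this gap, because the offending elements genuinely satisfy both of your vanishing conditions; one must bring in the symplectic condition on $\mf g_1^E$ (or some further consequence of integrability beyond $[E,\mf g_{-d}]=0$), and this is where the case-wise verification in the paper actually lives.
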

\begin{proof}
In most of the cases the center of the subalgebra $\mf m:=\bigoplus_{k\ge2}\mf g_k$ lies in $\mf g_{d-1}+\mf g_d$.
In a few cases this center has non-zero elements $E$ in $\mf g_{d-2}$, but it turns out that if the coisotropy condition in (I\ref{def:intrip2}) is satisfied, then $f+E$ is nilpotent, thus not integrable.
For exceptional $\mf g$ the only $f$ for which the center of $\mf m$ is larger than $\mf g_{d-1}+\mf g_d$ are of type $\rA_2+\rA_1$ in $\rE_6$, $\rE_7$, $\rE_8$ of depth $4$, but the elements from $\mf m\setminus(\mf g_{d-1}+\mf g_d)$ are not integrable.
This is checked by a case-wise verification, with the aid of computer. The proposition is proved by direct computations in the standard representation for classical $\mf g$.
It would be interesting to find a general proof.
\end{proof}

The following theorem classifies the non-nilpotent cyclic elements.

\begin{theorem}\label{thm:nonniliffnonnil}
Let $f$ be a non-zero nilpotent element of $\mf g$ and let $E\in\mf g_d$.
Then the cyclic element $f+E$ is not nilpotent if and only if
the element $E$ of the $Z(\mf s)$-module $\mf g_d$ is not nilpotent.
\end{theorem}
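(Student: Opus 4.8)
The statement is an equivalence, and one implication is free: by Lemma~\ref{lem:nilthennil} (applied with $j=d$), if $E$ is nilpotent for $Z(\mf s)|\mf g_d$ then $f+E$ is nilpotent in $\mf g$, so contrapositively, if $f+E$ is not nilpotent then $E$ is not nilpotent. The plan is therefore to establish the opposite implication: if $E\in\mf g_d$ is \emph{not} nilpotent for $Z(\mf s)|\mf g_d$, then $f+E$ is not nilpotent in $\mf g$. My main tool is invariant theory. Fix homogeneous generators $p_1,\dots,p_\ell$ of $\mb F[\mf g]^G$ of degrees $d_i\ge2$; recall $x\in\mf g$ is nilpotent iff $p_i(x)=0$ for all $i$. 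Set $q_i(E):=p_i(f+E)$, a function on $\mf g_d$. Since $Z(\mf s)$ fixes $f$ and preserves $\mf g_d$, one has $g\cdot(f+E)=f+gE$ for $g\in Z(\mf s)$, so $G$-invariance of $p_i$ gives $q_i\in\mb F[\mf g_d]^{Z(\mf s)}$; and using the grading cocharacter $\gamma$ (acting as $t^s$ on $\mf g_s$) together with homogeneity, $p_i(f+E)=p_i(\gamma(t)(f+E))=p_i(t^{-2}f+t^dE)$ rescales to $q_i(t^{d+2}E)=t^{2d_i}q_i(E)$, so each $q_i$ is homogeneous of positive degree $2d_i/(d+2)$ (and vanishes unless this is an integer). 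In particular $q_i(0)=0$.

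The key reduction is that the $q_i$, being invariants, are constant on the fibres of $\pi:\mf g_d\to\mf g_d\git Z(\mf s)$. The fibre of a non-nilpotent $E$ contains a nonzero semisimple element $E_s$ (the unique closed orbit in $\overline{Z(\mf s)(E)}$), and $q_i(E)=q_i(E_s)$ for all $i$. Hence $f+E$ is non-nilpotent iff $f+E_s$ is, and the whole theorem reduces to the claim: \emph{for every nonzero semisimple $E\in\mf g_d$, the element $f+E$ is not nilpotent in $\mf g$}. By Theorem~\ref{thm:allpolar} and Theorem~\ref{thm:Cartanstuff} such $E$ may, moreover, be taken inside a Cartan subspace $C\subset\mf g_d$.

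To prove the reduced claim I would pass to the $\theta$-group of Remark~\ref{rem:ssproof}: with $\eps=e^{2\pi i/(d+2)}$ and $\sigma|_{\mf g_s}=\eps^sI$, the space $V=\mf g_{-2}\oplus\mf g_d$ is an eigenspace of $\sigma$ and $G^0|V$ is polar. Since $f+E\in V$, and since a vector of $V$ is nilpotent for the $\theta$-group $G^0|V$ exactly when it is nilpotent in $\mf g$, it suffices to show $f+E$ is not $G^0$-nilpotent. If it were, Hilbert--Mumford would give a cocharacter $\lambda$ of $G^0$ with $\lim_{t\to0}\lambda(t)(f+E)=0$; as $\mf g_{-2}$ and $\mf g_d$ are complementary and $G^0$-stable, this forces $\lim_{t\to0}\lambda(t)f=0$ and $\lim_{t\to0}\lambda(t)E=0$ simultaneously. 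Writing $\lambda=\nu\,\gamma^{c}$ with $\nu$ a rational cocharacter of $SG^0=G^0\cap SL(\mf g_{-2})$ and $c\in\mb Q$ (legitimate since $\gamma$ is central in $G^0$ and $\mf g_0=\mf{sg}_0\oplus\mb F h$), the weight bookkeeping gives $\min_\nu(f)>2c$ and $\min_\nu(E)>-dc$, where $\min_\nu$ denotes the smallest $\nu$-weight occurring. But $SG^0(f)$ is closed (Remark~\ref{rem:ssproof}), so $\min_\nu(f)\le0$ for every $\nu$; hence $c<0$, and then $\min_\nu(E)>-dc>0$, i.e.\ $\nu$ destabilizes $E$ inside $SG^0$.

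The hard part is the final step: the destabilizing $\nu$ lives in $SG^0$ and need not fix $f$, whereas $E$ is only known to be semisimple for the smaller group $Z(\mf s)=(SG^0)^f$. Closing the argument amounts to showing that $\nu$ can be chosen in $X_*(Z(\mf s))$, which would contradict $Z(\mf s)$-semisimplicity of $E$; equivalently, that the null-cone of $Z(\mf s)|\mf g_d$ matches that of $G^0|V$ through the Luna slice at $f$ from Remark~\ref{rem:ssproof}. I expect this reconciliation to be the main obstacle — it is the same subtlety that blocks the conceptual proof of polarity in Remark~\ref{rem:ssproof} — and it may in the end require the case-wise verification the authors resort to elsewhere.
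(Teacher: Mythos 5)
Your first direction and your reduction are sound: Lemma~\ref{lem:nilthennil} gives the easy implication, and your observation that the functions $q_i(E)=p_i(f+E)$ are $Z(\mf s)$-invariant (hence constant on fibres of $\pi:\mf g_d\to\mf g_d\git Z(\mf s)$) correctly reduces the converse to the claim that $f+E_0$ is non-nilpotent for every nonzero $Z(\mf s)$-semisimple $E_0$, which one may take in a Cartan subspace. This is exactly the paper's first step, phrased via invariants rather than via $\ol{G(f+E_0)}\subseteq\ol{G(f+E)}$.

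The genuine gap is in your proof of the reduced claim, and you have named it yourself: the Hilbert--Mumford argument inside the theta group $G^0|(\mf g_{-2}\oplus\mf g_d)$ produces a one-parameter subgroup $\nu$ of $SG^0$ that destabilizes $E$, but semisimplicity of $E$ is only known for the smaller group $Z(\mf s)=(SG^0)^f$, and nothing forces $\nu$ (or a conjugate of it) to fix $f$. Without that, there is no contradiction, since $SG^0(E)$ may well fail to be closed even when $Z(\mf s)(E)$ is. Reconciling the null-cone of $Z(\mf s)|\mf g_d$ with that of $SG^0|V$ through the Luna slice at $f$ is precisely the step the authors say they could not carry out in Remark~\ref{rem:ssproof}, so the argument does not close as written. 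The paper's actual route to the reduced claim is quite different: using \cite{EKV13} it suffices to treat $\fs+E_0$ where $\fs$ is the semisimple-type nilpotent in the bush of $f$, and by \cite{EJK20} one may further assume $\fs$ irreducible, in which case $\mf g_d=C$ and the singular set is an explicit finite union of hyperplanes; the rank-one case is then immediate, the finitely many exceptional cases are verified by computer, and the single infinite classical family ($\mf{so}_{4k}$ with partition $(2k+1,2k-1)$) is handled by the explicit Jordan decomposition of Example~\ref{ex:D2k}. If you want to complete your proposal you must either supply the missing slice-theoretic argument (which would be a genuinely new contribution, as it would also yield the polarity of Theorem~\ref{thm:allpolar} conceptually) or fall back on the case-wise verification.
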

\begin{proof}
It follows from Lemma \ref{lem:nilthennil} that $f+E$ is nilpotent if $E$ is a nilpotent element of the $Z(\mf s)$-module $\mf g_d$.

Conversely, suppose $E$ is not a nilpotent element of the $Z(\mf s)$-module $\mf g_d$.
Then $\ol{Z(\mf s)(E)}$ contains a non-zero semisimple element $E_0$ of the $Z(\mf s)$-module $\mf g_d$,
and by Theorem \ref{thm:Cartanstuff} \eqref{smallWeyl} we may assume that $E_0\in C$, a Cartan subspace.
Hence, we have to prove that $f+E_0$ is not a nilpotent element of $\mf g$.
For that, by results of \cite{EKV13}, it suffices to prove that $\fs+E_0$ is not a nilpotent element of $\mf g$.
By the results of \cite{EJK20}, we may assume that $\fs$ is an irreducible nilpotent element of $\mf g$,
in which case $\mf g_d=C$, and the set $\qcss_{\mf g}(f)=\{E\in C\mid\text{$f+E$ is semisimple}\}$
is a complement in $C$ of a union of $\frac{m(m+1)}2$ hyperplanes, where $m=\dim\mf g_d$.
In the case when $m=1$ this means that $\qcss_{\mf g}(f)=C\setminus\{0\}$, and we are done.

The remaining cases are the following ones from \cite[Table 1]{EJK20}:
$4_{k-1}$, $m=2$; $7$, $m=2$; $11$, $m=3$; $16$, $m=2$; $17$, $m=2$; $18$, $m=4$.
For all but the first one, $\mf g$ is an exceptional Lie algebra, and the case can be checked on a computer.
In \cite{EJK20}, singular sets (complements to $\mathbf{S}_{\mf g}(f)$) for these cases have been described as unions of certain subspaces.
Using GAP package SLA, one checks that $f+E$ is not nilpotent for generic elements $E$ in all possible intersections of these subspaces.
In the first case $\mf g=\mf{so}_{4k}$ and $f$ corresponds to the partition $(2k+1,2k-1)$. This case is treated in Example \ref{ex:D2k} below.
\end{proof}

\begin{example}\label{ex:D2k}

Let $f$ be a nilpotent in a simple Lie algebra of type $\rD_{2k}$ corresponding to the partition $(2k+1,2k-1)$. By definition, this means that the standard representation of $\mf{so}_{4k}$ has a basis $x_{-k},x_{-(k-1)},...,x_{-1}$, $x_0$, $x_1,...,x_{k-1}$, $x_k$, $y_{-(k-1)},...,y_{-1}$, $y_0$, $y_1,...,y_{k-1}$, with $h$ acting by $hx_j=2jx_j$, $hy_j=2jy_j$, and $f$ acting by
\[
\begin{tikzpicture}
\node (x0) at (0,0) {$x_0$};
\node (x1) [right=1em of x0] {$x_1$};
\node (xp) [right=1em of x1] {$\cdots$};
\node (xkm1) [right=1em of xp] {$x_{k-1}$};
\node (xk) [right=1em of xkm1] {$x_k$};
\node (xm1) [left=1em of x0] {$x_{-1}$};
\node (xm) [left=1em of xm1] {$\cdots$};
\node (x1mk) [left=1em of xm] {$x_{-(k-1)}$};
\node (xmk) [left=1em of x1mk] {$x_{-k}$};
\node (y0) [below=1em of x0] {$y_0$};
\node (y1) [below=1em of x1] {$y_1$};
\node (yp) [right=1em of y1] {$\cdots$};
\node (ykm1) [below=1em of xkm1] {$y_{k-1}$};
\node (ym1) [below=1em of xm1] {$y_{-1}$};
\node (ym) [left=1em of ym1] {$\cdots$};
\node (y1mk) [below=1em of x1mk] {$y_{-(k-1)}$};
\draw[<-|] (xmk) -- (x1mk);
\draw[<-|] (x1mk) -- (xm);
\draw[<-|] (xm) -- (xm1);
\draw[<-|] (xm1) -- (x0);
\draw[<-|] (x0) -- (x1);
\draw[<-|] (x1) -- (xp);
\draw[<-|] (xp) -- (xkm1);
\draw[<-|] (xkm1) -- (xk);
\draw[<-|] (y1mk) -- (ym);
\draw[<-|] (ym) -- (ym1);
\draw[<-|] (ym1) -- (y0);
\draw[<-|] (y0) -- (y1);
\draw[<-|] (y1) -- (yp);
\draw[<-|] (yp) -- (ykm1);
\end{tikzpicture}
\]
We recall from \cite[Case $4_k$]{EJK20} (reversing everything there) that there is a basis $(E_1,E_2)$ of $\mf g_d$ with
$$
E_1x_{-(k-1)}=x_k,\ E_1x_{-k}=x_{k-1},\ E_2y_{-(k-1)}=-x_k,\ E_2x_{-k}=y_{k-1},
$$
and all other actions of $E_1$, $E_2$ are zero. (Explicitly, one can take $E_1=\frac{e_{\eps_1+\eps_2}+e_{\eps_1+\eps_3}}2$, $E_2=\frac{e_{\eps_1+\eps_2}-e_{\eps_1+\eps_3}}2$.)

Pictorially,
\begin{center}
\resizebox{.04\textwidth}{!}{
\begin{tikzpicture}[>=stealth,->,xscale=.05]
\useasboundingbox (-5,-5) rectangle (5,5);
\node (x0) at (0,0) {$x_0$};
\node (x1) [right=of x0] {$x_1$};
\node (xp) [right=of x1] {$\cdots$};
\node (xkm1) [right=of xp] {$x_{k-1}$};
\node (xk) [right=of xkm1] {$x_k$};
\node (xm1) [left=of x0] {$x_{-1}$};
\node (xm) [left=of xm1] {$\cdots$};
\node (x1mk) [left=of xm] {$x_{-(k-1)}$};
\node (xmk) [left=of x1mk] {$x_{-k}$};
\node (y0) [below=10em of x0] {$y_0$};
\node (y1) [below=10em of x1] {$y_1$};
\node (yp) [right=of y1] {$\cdots$};
\node (ykm1) [below=10em of xkm1] {$y_{k-1}$};
\node (ym1) [below=10em of xm1] {$y_{-1}$};
\node (ym) [left=of ym1] {$\cdots$};
\node (y1mk) [below=9.8em of x1mk] {$y_{-(k-1)}$};
\path[decoration=arrows, decorate] (xmk) -- node [above] {$f$} (x1mk) -- node [above] {$f$} (xm) -- node [above] {$f$} (xm1) -- node [above] {$f$} (x0) -- node [above] {$f$} (x1) -- node [above] {$f$} (xp) -- node [above] {$f$} (xkm1) -- node [above] {$f$} (xk);
\path[decoration=arrows, decorate] (y1mk) -- node [below] {$f$} (ym) -- node [below] {$f$} (ym1) -- node [below] {$f$} (y0) -- node [below] {$f$} (y1) -- node [below] {$f$} (yp) -- node [below] {$f$} (ykm1);
\draw[green,thick] (xmk) .. controls +(up:20em) and +(up:10em) .. node[above,green] {$E_1$} (xkm1);
\draw[green,thick] (x1mk) .. controls +(up:10em) and +(up:20em) .. node[above,green] {$E_1$} (xk);
\draw[red,thick] (y1mk) .. controls +(left:400em) and +(right:400em) .. node[above,red] {$-E_2$} (xk);
\draw[red,thick] (xmk) .. controls +(left:400em) and +(right:400em) .. node[above,red] {$E_2$} (ykm1);
\end{tikzpicture}
}
\end{center}

As calculated in \cite[Case $4_k$]{EJK20}, the cyclic element $f+\lambda_1E_1+\lambda_2E_2$ is not semisimple only when either $\lambda_2=0$ or $\lambda_2=\pm\lambda_1$.

Consider first the case $\lambda_2=0$, so that $E=\lambda E_1$. We then define $\fs$, $\fn$ by declaring that $\fs$ acts only on the basis elements $x_i$ and $\fn$ only on the basis elements $y_j$. Thus the corresponding picture is then
\begin{center}
\resizebox{.04\textwidth}{!}{
\begin{tikzpicture}[>=stealth,->,xscale=.05]
\useasboundingbox (-5,-3) rectangle (5,5);
\node (x0) at (0,0) {$x_0$};
\node (x1) [right=of x0] {$x_1$};
\node (xp) [right=of x1] {$\cdots$};
\node (xkm1) [right=of xp] {$x_{k-1}$};
\node (xk) [right=of xkm1] {$x_k$};
\node (xm1) [left=of x0] {$x_{-1}$};
\node (xm) [left=of xm1] {$\cdots$};
\node (x1mk) [left=of xm] {$x_{-(k-1)}$};
\node (xmk) [left=of x1mk] {$x_{-k}$};
\node (y0) [below=3em of x0] {$y_0$};
\node (y1) [below=3em of x1] {$y_1$};
\node (yp) [right=of y1] {$\cdots$};
\node (ykm1) [below=3em of xkm1] {$y_{k-1}$};
\node (ym1) [below=3em of xm1] {$y_{-1}$};
\node (ym) [left=of ym1] {$\cdots$};
\node (y1mk) [below=2.8em of x1mk] {$y_{-(k-1)}$};
\path[decoration=arrows, decorate] (xmk) -- node [above] {$\fs$} (x1mk) -- node [above] {$\fs$} (xm) -- node [above] {$\fs$} (xm1) -- node [above] {$\fs$} (x0) -- node [above] {$\fs$} (x1) -- node [above] {$\fs$} (xp) -- node [above] {$\fs$} (xkm1) -- node [above] {$\fs$} (xk);
\path[decoration=arrows, decorate] (y1mk) -- node [below] {$\fn$} (ym) -- node [below] {$\fn$} (ym1) -- node [below] {$\fn$} (y0) -- node [below] {$\fn$} (y1) -- node [below] {$\fn$} (yp) -- node [below] {$\fn$} (ykm1);
\draw[green,thick] (xmk) .. controls +(up:20em) and +(up:10em) .. node[above,green] {$E$} (xkm1);
\draw[green,thick] (x1mk) .. controls +(up:10em) and +(up:20em) .. node[above,green] {$E$} (xk);
\end{tikzpicture}
}
\end{center}
Then clearly $\fs+\fn=f$ and $\fn$ commutes both with $\fs$ and with $E$. Moreover both $\fs$ and $\fn$ are homogeneous of degree $-2$ since they both act on the standard representation lowering degree by $2$. Moreover $\fs+E$ is semisimple since the characteristic polynomial of its action in the standard representation is $t(t^{2k}-\lambda)$.

Next consider the case $\lambda_2=\pm\lambda_1$, so that $E=\lambda(E_1\pm E_2)$. Switch to the following basis:
\[
z_k=x_k, z_j=\frac{(k+j)x_j\mp(k-j)y_j}{2k}, t_j=\frac{x_j\pm y_j}{2k}\ (-k<j<k),\ t_{-k}=\frac{x_{-k}}{2k}.
\]
In this basis, the non-zero actions of $f$ and $E$ are as follows:
\[
\begin{aligned}
ft_j=t_{j-1}&\ (-(k-1)\le j<k),\\
fz_j=z_{j-1}+t_{j-1}&\ (-(k-1)<j\le k),\\
fz_{-(k-1)}=t_{-k}
\end{aligned}
\]
and
\[
Et_{-k}=\lambda t_{k-1},\ Ez_{-(k-1)}=\lambda z_k.
\]
In this basis, let us define $\fs$ and $\fn$ by the non-zero actions
\[
\begin{aligned}
\fs z_j=z_{j-1}&\ (-(k-1)<j\le k),\\
\fs t_j=t_{j-1}&\ (-(k-1)\le j<k),\\
\fn z_j=t_{j-1}&\ (-(k-1)\le j\le k).
\end{aligned}
\]
Then again both $\fs$ and $\fn$ are in $\mf g_{-2}$, $\fs+\fn=f$, $\fn$ commutes with both $\fs$ and $E$, and $\fs+E$ is semisimple. Note also that $\fs+E$ is not regular, having double eigenvalues. Its characteristic polynomial in the standard representation is $(t^{2k}-\lambda)^2$ (the minimal polynomial being $t^{2k}-\lambda$).

All this is completely obvious from the following diagram of actions:

\begin{center}
\resizebox{!}{30ex}{
\begin{tikzpicture}[>=stealth,->,xscale=.05,node distance=4ex and 4em]
\node (zmk) at (0,0) {$t_{-k}$};
\node (zmkp1) [right=of zmk] {$t_{-(k-1)}$};
\node (zp) [right=of zmkp1] {$\dots$};
\node (zkm2) [right=of zp] {$t_{k-2}$};
\node (zkm1) [right=of zkm2] {$t_{k-1}$};
\node (tmkp1) [below=2.8em of zmkp1] {$z_{-(k-1)}$};
\node (tmkp2) [below=3.25em of zp] {$z_{-(k-2)}$};
\node (tp) [below=3.3em of zkm2] {$\dots$};
\node (tkm1) [below=3em of zkm1] {$z_{k-1}$};
\node (tk) [right=of tkm1] {$z_k$};
\path[decoration=arrows, decorate] (zmk) -- node [above] {$\fs$} (zmkp1) -- node [above] {$\fs$} (zp) -- node [above] {$\fs$} (zkm2) -- node [above] {$\fs$} (zkm1);
\path[decoration=arrows, decorate] (tmkp1) -- node [above] {$\fs$} (tmkp2) -- node [above] {$\fs$} (tp) -- node [above] {$\fs$} (tkm1) -- node [above] {$\fs$} (tk);
\path[decoration=arrows, decorate] (zmk) -- node [right] {$\ \ \fn$} (tmkp1);
\path[decoration=arrows, decorate] (zmkp1) -- node [right] {$\ \ \fn$} (tmkp2);
\path[decoration=arrows, decorate] (zkm2) -- node [right] {$\ \ \fn$} (tkm1);
\path[decoration=arrows, decorate] (zkm1) -- node [right] {$\ \ \fn$} (tk);
\draw[blue,thick] (zmk) .. controls +(up:5em) and +(up:5em) .. node[above,blue] {$E$} (zkm1);
\draw[blue,thick] (tmkp1) .. controls +(down:5em) and +(down:5em) .. node[below,blue] {$E$} (tk);
\end{tikzpicture}
}
\end{center}
Thus in both cases $(f_1,f_2,E)$ with $f_1=\fs$ and $f_2=\fn$ are integrable triples.
\end{example}

\begin{example}\label{ex:rank2}
According to \cite[Table 1]{EJK20}, there are three irreducible nilpotent orbits of rank 2 in exceptional Lie algebras:
the one with label $\rF_4(a_2)$ in $\rF_4$ and the orbits with labels $\rE_8(a_5)$ and $\rE_8(a_6)$ in $\rE_8$.
For all of them there are exactly three lines in $\mf g_d$ such that $f+E$ is semisimple if and only if $E\in\mf g_d$
does not belong to any of these lines. Direct calculations with the GAP package SLA show that $E$ remains integrable
along these lines too. We provide more detailed description using specific choices of representatives for these orbits as follows.

In $\rF_4$, take for the representative of $\rF_4(a_2)$ the element
\[
f=f_{1100}+f_{0120}+f_{0011}+f_{0001}.
\]

In $\rE_8$, take for the representative of $\rE_8(a_5)$ the element
\[
f=
f_{\foreight{.5}10000000}+
f_{\foreight{.5}10100000}+
f_{\foreight{.5}01110000}+
f_{\foreight{.5}01011000}+
f_{\foreight{.5}00111000}+
f_{\foreight{.5}00011100}+
f_{\foreight{.5}00001110}+
f_{\foreight{.5}00000011}
\]
and for the representative of $\rE_8(a_6)$ the element
\[
f=
f_{\foreight{.5}10110000}+
f_{\foreight{.5}11110000}+
f_{\foreight{.5}01111100}+
f_{\foreight{.5}00010000}+
f_{\foreight{.5}00011000}+
f_{\foreight{.5}00011100}+
f_{\foreight{.5}00001110}+
f_{\foreight{.5}00000011}.
\]

Then, for these choices of $f$, $\mf g_d$ has basis $E_1$, $E_2$ consisting of the highest root vector and the next to the highest root vector.
The cyclic element $f+x_1E_1+x_2E_2$ is semisimple except when $x_1=0$, $x_2=0$ or $x_1=x_2$.
In these three singular cases, Jordan decomposition of $f+E$ is $(\fs+E)+\fn$ with $\fn\in\mf g_{-2}$,
where

\noindent for $E=x_1E_1$ or $E=x_2E_2$, we have the following cases:

if $f$ is $\rF_4(a_2)$, then $\fs$ has label $\rB_3$ and $\fn$ has label $\widetilde\rA_2$,

if $f$ is $\rE_8(a_5)$, then $\fs$ has label $\rE_6$ and $\fn$ has label $\rD_4$,

\noindent while for $E=x(E_1+E_2)$,

if $f$ is $\rF_4(a_2)$, then $\fs$ has label $\rC_3$ and $\fn$ has label $\rA_1$,

if $f$ is $\rE_8(a_5)$, then $\fs$ has label $\rD_7$ and $\fn$ has label $2\rA_1$,

\noindent and if $f$ is $\rE_8(a_6)$, then for $E$ on any of the three singular lines $\fs$ has label $\rD_6$ and $\fn$ has label $\rA_3$.

Note also that the subalgebra $\mf a$ generated by $f$ and $E$, which is the whole $\mf g$ if $E$ does not lie in the union of singular lines,
is the direct sum of the semisimple $[\mf a,\mf a]$ and the center spanned by $\fn$,
where $[\mf a,\mf a]$ has the following type:

\noindent for $E=x_1E_1$ or $E=x_2E_2$, it is

$\rG_2$, when $f$ is $\rF_4(a_2)$,

$\rF_4$, when $f$ is $\rE_8(a_5)$,

\noindent and for $E=x(E_1+E_2)$, it is

$\rC_3$, when $f$ is $\rF_4(a_2)$,

$\rB_6$, when $f$ is $\rE_8(a_5)$,

\noindent while if $f$ is $\rE_8(a_6)$, then for $E$ from any of the three singular lines, this subalgebra has type $\rB_5$.

\end{example}

\begin{example}\label{ex:rank3}
For the nilpotent with label $\rE_7(a_5)$ in $\rE_7$, depth is $10$, with $\mf g_{10}$ 3-dimensional.
We take
\[
f=f_{\foreseven{.1}{.5}0000001}+f_{\foreseven{.1}{.5}0000011}+f_{\foreseven{.1}{.5}0000111}+f_{\foreseven{.1}{.5}0101110}+f_{\foreseven{.1}{.5}0011100}+f_{\foreseven{.1}{.5}0111100}+f_{\foreseven{.1}{.5}1011000};
\]
then $\mf g_{10}$ is spanned by $E_1=e_{\foreseven{.1}{.5}1224321}$, $E_2=e_{\foreseven{.1}{.5}1234321}$ and $E_3=e_{\foreseven{.1}{.5}2234321}$.
The singular set, i.~e. the subset of those $E\in\mf g_{10}$ with $f+E$ not semisimple, is the union of six 2-dimensional subspaces: $\langle E_i,E_j\rangle$ and $\langle E_i,E_j+E_k\rangle$, with $\{i,j,k\}=\{1,2,3\}$. Their pairwise intersections produce the following seven 1-dimensional subspaces: $\langle E_i\rangle$, $\langle E_i+E_j\rangle$ and $\langle E_1+E_2+E_3\rangle$.

It is also possible to describe this set without mentioning semisimplicity of $f+E$: it is the set of all those vectors which have nontrivial stabilizer with respect to the action $Z(\mf s)|\mf g_{10}$ which is the permutation representation of the symmetric group on three letters.

All these subspaces are exactly all those subspaces $V$ of $\mf g_{10}$ with the following property:
the Lie subalgebra $\mf a$ generated by $V$ and $f$ is the direct sum of the semisimple subalgebra $\mf q=[\mf a,\mf a]$ and the $1$-dimensional center $\mf z(\mf a)\subseteq\mf g_{-2}$; $\mf z(\mf a)$ is spanned by an element $\fn$ such that $\fs=f-\fn$ is of semisimple type in $\mf q$; $\fs$ has the same depth $10$ in $\mf q$ as $f$ in $\mf g$; and $\mf q_{10}=V$.

In all these cases, taking any $E\in\mf q_{10}=V$ which does not lie in a smaller subspace from the singular set, $\fs+E$ is semisimple, so that one obtains an integrable triple $(f_1,f_2,E)=(\fs,\fn,E)$.

The corresponding subspaces and nilpotent types are as follows:

\begin{center}
\resizebox{.6\textwidth}{!}{
\begin{tabular}{ccccc}
$V$&$\fs$&$\fn$&$\mf q$&$\fs$ in $\mf q$\\
\hline
$\langle E_i,E_j+E_k\rangle$&$\rD_6(a_2)$&$\rA_1$&$\rD_6$&$(7,5)$\\
$\langle E_i,E_j\rangle$&$\rE_6(a_3)$&$3\rA_1''$&$\rF_4$&$\rF_4(a_2)$\\
$\langle E_i\rangle$&$\rD_4$&$(\rA_3+\rA_1)''$&$\rG_2$&principal\\
$\langle E_i+E_j\rangle$&$\rA_5'$&$4\rA_1$&$\rC_3$&principal\\
$\langle E_1+E_2+E_3\rangle$&$\rA_5''$&$\rA_2$&$\rC_3$&principal
\end{tabular}
}
\end{center}
\end{example}

\begin{example}\label{ex:rank4}
For the nilpotent with label $\rE_8(a_7)$ in $\rE_8$, depth is $10$, with $\mf g_{10}$ 4-di\-men\-si\-o\-nal.
We take
\[
f=f_{\foreight{.5}00001110}+f_{\foreight{.5}00011100}+f_{\foreight{.5}01011100}+f_{\foreight{.5}00111111}+f_{\foreight{.5}01121000}+f_{\foreight{.5}10111100}
+f_{\foreight{.5}11121000}+f_{\foreight{.5}11111100}+f_{\foreight{.5}11111110}.
\]
Then $\mf g_{10}$ is spanned by $E_1=e_{\foreight{.5}23465321}$, $E_2=e_{\foreight{.5}23465421}$, $E_3=e_{\foreight{.5}23465431}$ and $E_4=e_{\foreight{.5}23465432}$.

The singular set in this case is the union of ten 3-dimensional subspaces $\langle E_i,E_j,E_k\rangle$ and $\langle E_i,E_j,E_k+E_\ell\rangle$, $\{i,j,k,\ell\}=\{1,2,3,4\}$, forming a single orbit under the action $Z(\mf s)|\mf g_{10}$, which is the action of the component group $S_5$ on its 4-dimensional irreducible representation.

As in the Example \ref{ex:rank3}, this set consists precisely of those vectors which have nontrivial stabilizer with respect to the action of the component group of $Z(\mf s)|\mf g_{10}$.

All of the possible intersections of these 3-dimensional subspaces produce twenty five 2-dimensional subspaces forming two orbits, one containing all $\langle E_i,E_j\rangle$ and all $\langle E_i,E_j+E_k+E_\ell\rangle$ and another containing all $\langle E_i, E_j+E_k\rangle$ and all $\langle E_i+E_j,E_k+E_\ell\rangle$, and fifteen 1-dimensional subspaces forming two orbits, one containing all $\langle E_i\rangle$ and $\langle E_1+E_2+E_3+E_4\rangle$, and another containing all $\langle E_i+E_j\rangle$ and all $\langle E_i+E_j+E_k\rangle$.

All these $10+25+15=50$ subspaces are also exactly all those subspaces $V$ of $\mf g_{10}$ with the following property:
the Lie subalgebra $\mf a$ generated by $V$ and $f$ is the direct sum of the semisimple (in fact, simple) subalgebra $\mf q=[\mf a,\mf a]$ and the $1$-dimensional center $\mf z(\mf a)\subseteq\mf g_{-2}$; $\mf z(\mf a)$ is spanned by an element $\fn$ such that $\fs=f-\fn$ is of semisimple type (in fact, irreducible) in $\mf q$;  $\fs$ has the same depth $10$ in $\mf q$ as $f$ in $\mf g$; and $\mf q_{10}=V$.

In all these cases, taking any $E\in\mf q_{10}=V$ which does not lie in a smaller subspace from the singular set, $\fs+E$ is semisimple, so that one obtains an integrable triple $(f_1,f_2,E)=(\fs,\fn,E)$.

The subspaces, their generic stabilizers in $S_5$, the corresponding subalgebras and nilpotent orbit labels are as follows:

\begin{center}
\resizebox{.9\textwidth}{!}{
\begin{tabular}{cccccc}
$V$&generic stabilizer&$\fs$&$\fn$&$\mf q$&$\fs$ in $\mf q$\\
\hline
$\langle E_i,E_j,E_k\rangle$, $\langle E_i,E_j,E_k+E_\ell\rangle$&$S_2$&$\rE_7(a_5)$&$\rA_1$&$\rE_7$&$\rE_7(a_5)$\\
$\langle E_i,E_j+E_k\rangle$, $\langle E_i+E_j,E_k+E_\ell\rangle$&$S_2\times S_2$&$\rD_6(a_2)$&$2\rA_1$&$\rD_6$&$(7,5)$\\
$\langle E_i,E_j\rangle$, $\langle E_i,E_j+E_k+E_\ell\rangle$&$S_3$&$\rE_6(a_3)$&$\rA_2$&$\rF_4$&$\rF_4(a_2)$\\
$\langle E_i+E_j\rangle$, $\langle E_i+E_j+E_k\rangle$&$S_2\times S_3$&$\rA_5$&$\rA_2+\rA_1$&$\rC_3$&principal\\
$\langle E_i\rangle$, $\langle E_1+E_2+E_3+E_4\rangle$&$S_4$&$\rD_4$&$\rD_4(a_1)$&$\rG_2$&principal
\end{tabular}
}
\end{center}
\end{example}

The following lemma helps to establish non-existence of non-nilpotent quasi-cyclic elements.

\begin{lemma}\label{lem:nononnil}
Let $C\subset\mf g_{d-1}$ be a Cartan subspace of the $Z(\mf s)$-module $\mf g_{d-1}$,
and suppose that $C$ contains no elements $E$, for which $f+E$ is quasi-cyclic.
Then $\mf g_{d-1}$ contains no elements $E$, for which $f+E$ is non-nilpotent quasi-cyclic.
\end{lemma}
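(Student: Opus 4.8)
The plan is to use the polarity of $Z(\mf s)|\mf g_{d-1}$ (Theorem~\ref{thm:allpolar}) to degenerate an arbitrary non-nilpotent quasi-cyclic $E$ into the Cartan subspace $C$, checking that the quasi-cyclic condition survives in the limit. So I would argue by contradiction: assume some $E\in\mf g_{d-1}$ makes $f+E$ a non-nilpotent quasi-cyclic element. Since $f+E$ is not nilpotent in $\mf g$, the contrapositive of Lemma~\ref{lem:nilthennil} shows that $E$ is not a nilpotent element of the $Z(\mf s)$-module $\mf g_{d-1}$; equivalently $0\notin\ol{Z(\mf s)(E)}$. By the basic properties of the quotient map recalled in Section~\ref{sec:review}, the closure $\ol{Z(\mf s)(E)}$ contains the unique closed orbit of the fibre through $E$, say $Z(\mf s)(E_0)$ with $E_0$ a \emph{nonzero} semisimple element of the module. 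Since $Z(\mf s)|\mf g_{d-1}$ is polar, Theorem~\ref{thm:Cartanstuff}\eqref{smallWeyl} lets me replace $E_0$ by a $Z(\mf s)$-translate lying in $C$; as both $\ol{Z(\mf s)(E)}$ and the set of semisimple elements are $Z(\mf s)$-stable, I may assume $E_0\in C$ from the outset.

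The crux is to show that being quasi-cyclic is preserved when we pass from $E$ to its semisimple degeneration $E_0$. First I would note that for $E\in\mf g_{d-1}$ the condition $[E,\mf g_{\ge2}]=0$ of Definition~\ref{def:gencyc} is automatic, since $[E,\mf g_k]\subseteq\mf g_{d-1+k}=0$ for $k\ge2$; thus $f+E$ is quasi-cyclic precisely when $E\ne0$ and the centralizer $\mf g_1^E=\ker(\ad E|_{\mf g_1})$ is coisotropic for $\omega$ (see \eqref{eq:skewform}). The key claim is then that the \emph{coisotropy locus}
\[
\mc Q=\{E\in\mf g_{d-1}\mid\mf g_1^E\text{ is coisotropic in }(\mf g_1,\omega)\}
\]
is Zariski closed and $Z(\mf s)$-invariant. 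Invariance is immediate: each $g\in Z(\mf s)$ fixes $f$ and preserves $(\cdot\mid\cdot)$, hence acts symplectically on $\mf g_1$ and carries $\mf g_1^E$ onto $\mf g_1^{gE}$. For closedness I would set $T_E=\ad E|_{\mf g_1}\colon\mf g_1\to\mf g_d$ and let $\Phi\colon\mf g_1\to\mf g_1^*$ be the isomorphism induced by the nondegenerate form $\omega$. A short linear-algebra computation gives $(\mf g_1^E)^{\perp_\omega}=\Phi^{-1}(\im T_E\transpose)$, so coisotropy $(\mf g_1^E)^{\perp_\omega}\subseteq\mf g_1^E=\ker T_E$ is equivalent to the single identity $T_E\,\Phi^{-1}\,T_E\transpose=0$. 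As $T_E$ depends linearly on $E$, the entries of $T_E\Phi^{-1}T_E\transpose$ are quadratic polynomials in $E$, whence $\mc Q$ is closed.

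Granting the claim, the argument closes at once. The chosen $E$ is quasi-cyclic, so $E\in\mc Q$, hence $Z(\mf s)(E)\subseteq\mc Q$, and therefore $\ol{Z(\mf s)(E)}\subseteq\mc Q$ because $\mc Q$ is closed; in particular $E_0\in\mc Q$, i.e.\ $\mf g_1^{E_0}$ is coisotropic. Since $E_0\ne0$, the element $f+E_0$ is quasi-cyclic, and $E_0\in C$, contradicting the hypothesis that $C$ contains no $E$ with $f+E$ quasi-cyclic. This proves the lemma.

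The step I expect to be the main obstacle is the closedness of $\mc Q$. Coisotropy of $\mf g_1^E$ is not merely a dimension condition: along a degeneration both $\dim\mf g_1^E$ and the rank of $\omega|_{\mf g_1^E}$ can jump, so it is not a priori clear that it is a closed condition. What makes it transparent is the identity $T_E\Phi^{-1}T_E\transpose=0$, and isolating this reformulation is the heart of the proof. A secondary point to verify carefully is that moving $E_0$ into $C$ by a $Z(\mf s)$-translate keeps it inside the $Z(\mf s)$-stable sets $\ol{Z(\mf s)(E)}$ and $\mc Q$, so that no information is lost in the reduction.
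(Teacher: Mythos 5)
Your proof is correct and follows the same route as the paper's: use Lemma \ref{lem:nilthennil} to see that $E$ is not a nilpotent element of the $Z(\mf s)$-module $\mf g_{d-1}$, degenerate to a nonzero semisimple $E_0$ in the orbit closure, move it into $C$ by Theorem \ref{thm:Cartanstuff}\eqref{smallWeyl}, and conclude that $f+E_0$ is still quasi-cyclic. The one point you elaborate that the paper leaves implicit is why coisotropy survives the degeneration; your identity $T_E\,\Phi^{-1}\,T_E\transpose=0$ is precisely the quadratic condition of Proposition \ref{prop:linalg} (with $j=d-1$), so the coisotropy locus is indeed Zariski closed and $Z(\mf s)$-stable, and your argument is sound.
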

\begin{proof}
If $f+E$ is not nilpotent quasi-cyclic, then, by Lemma \ref{lem:nilthennil},
$E$ is not a nilpotent element for the $Z(\mf s)$-module $\mf g_{d-1}$.
Then, as in the proof of Theorem \ref{thm:nonniliffnonnil},
$\ol{Z(\mf s)E}\cap C$ contains an element $E_0$, such that $f+E_0$ is quasi-cyclic.
\end{proof}

In order to prove the next proposition, we will need the following simple lemma.
\begin{lemma}\label{lem:linalg}
Let $a\in\mf g$. Then $(\ker\ad a)^{\perp}=\im \ad a$, where the orthogonal is with respect to the bilinear form $(\cdot\mid\cdot)$.
In particular, if $a\in\mf g_k$, then $(\ker\ad a|_{\mf g_\ell})^{\perp}=\im \ad a|_{\mf g_{-\ell-k}}$.
\end{lemma}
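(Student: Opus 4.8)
The plan is to reduce both statements to elementary linear algebra, exploiting the single structural input that the form $(\cdot\mid\cdot)$ is invariant. Invariance says precisely that for every $a\in\mf g$ the operator $\ad a$ is skew-adjoint:
\[
(\ad a(x)\mid y)=-(x\mid\ad a(y)),\qquad x,y\in\mf g.
\]
Thus the adjoint of $\ad a$ with respect to $(\cdot\mid\cdot)$ is $-\ad a$, and the statement $(\ker\ad a)^\perp=\im\ad a$ is just the general fact that, for a linear operator $T$ on a finite-dimensional space carrying a non-degenerate bilinear form, $(\ker T)^\perp=\im T^{*}$, applied to $T=\ad a$ (for which $\im T^{*}=\im(-\ad a)=\im\ad a$).

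First I would prove this general fact in the form I need. The dual identity $(\im\ad a)^\perp=\ker\ad a$ is immediate: $y\perp\im\ad a$ means $(\ad a(x)\mid y)=0$ for all $x$, equivalently $(x\mid\ad a(y))=0$ for all $x$, equivalently $\ad a(y)=0$ by non-degeneracy of the form. Taking orthogonal complements of both sides and using that $U\mapsto U^\perp$ is an involution on subspaces of a finite-dimensional space equipped with a non-degenerate form, I obtain $\im\ad a=(\ker\ad a)^\perp$, which is the first assertion.

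For the graded refinement I would split everything into homogeneous pieces. Since $a\in\mf g_k$, the operator $\ad a$ is homogeneous of degree $k$, so both $\ker\ad a=\bigoplus_\ell\ker\ad a|_{\mf g_\ell}$ and $\im\ad a=\bigoplus_m\im\ad a|_{\mf g_{m}}$ are graded, with $\im\ad a\cap\mf g_{n}=\im\ad a|_{\mf g_{n-k}}$. Invariance also forces $(\mf g_\ell\mid\mf g_{m})=0$ unless $\ell+m=0$, so $(\cdot\mid\cdot)$ restricts to a non-degenerate pairing $\mf g_\ell\times\mf g_{-\ell}\to\mb F$; consequently, for a graded subspace $U=\bigoplus_\ell U_\ell$ one has $U^\perp=\bigoplus_m(U_{-m})^{\perp}$, where the inner orthogonal is taken inside $\mf g_m$ against $\mf g_{-m}$. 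Applying this to $U=\ker\ad a$ and matching the degree-$m$ component of $(\ker\ad a)^\perp=\im\ad a$ gives $(\ker\ad a|_{\mf g_{-m}})^\perp=\im\ad a|_{\mf g_{m-k}}$; substituting $\ell=-m$ yields exactly $(\ker\ad a|_{\mf g_\ell})^\perp=\im\ad a|_{\mf g_{-\ell-k}}$.

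There is no serious obstacle here; the statement is a packaging of skew-adjointness and the involutivity of $\perp$. The only point requiring care is the bookkeeping of degrees in the last paragraph: one must remember that the orthogonal complement of a subspace of $\mf g_\ell$ naturally lives in $\mf g_{-\ell}$, and check that the image $\im\ad a|_{\mf g_{-\ell-k}}$ indeed lands in $\mf g_{-\ell}$, so that both sides of the final identity are subspaces of the same component $\mf g_{-\ell}$. This is precisely what the degree matching guarantees.
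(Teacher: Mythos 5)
Your proof is correct and follows essentially the same route as the paper: invariance of the form supplies the key identity relating $\ker\ad a$ and $\im\ad a$, and non-degeneracy in finite dimensions (packaged by you as involutivity of $\perp$, by the paper as a direct dimension count) upgrades it to an equality, with the graded refinement coming from $(\mf g_k\mid\mf g_\ell)=0$ for $k\ne-\ell$ in both cases. No gaps; your degree bookkeeping in the last paragraph is accurate.
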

\begin{proof}
The inclusion $\im\ad a\subset(\ker\ad a)^\perp$ is immediate by the invariance of the bilinear form.
On the other hand
$$
\dim(\im\ad a)=\dim(\mf g)-\dim(\ker\ad a)=\dim\left((\ker\ad a)^\perp\right)
\,,
$$
since the bilinear form is non-degenerate.
Hence, $(\ker\ad a)^{\perp}=\im \ad a$. The second part of the lemma follows from the fact that
$(\mf g_k\,|\,\mf g_\ell)=0$ if $k\ne-\ell$.
\end{proof}
\begin{proposition}\label{prop:linalg}
Let $E\in\mf g_j$, $j\ge0$. The following statements are equivalent:
\begin{enumerate}[(a)]
\item the centralizer of $E$ in $\mf g_{1}$ is coisotropic with respect to the bilinear form \eqref{eq:skewform};
\item the map $\ad E\circ(\ad f)^{-1}\circ\ad E|_{\mf g_{-j-1}}$ is zero.
\end{enumerate}
\end{proposition}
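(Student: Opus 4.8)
The plan is to convert the coisotropy condition, which lives in the symplectic space $(\mf g_1,\omega)$, into the vanishing of the composite in (b) by systematically trading $\omega$ for the invariant form $(\cdot\mid\cdot)$ together with the operator $\ad f$. First I would record the two facts that make the statement meaningful. By $\mf{sl}_2$-representation theory $\ad f\colon\mf g_1\to\mf g_{-1}$ is injective (it is injective on $\mf g_k$ for every $k\ge1$), and since $\dim\mf g_1=\dim\mf g_{-1}$ it is in fact an isomorphism; this is exactly what is needed for $(\ad f)^{-1}$ in (b) to be well defined on $\mf g_{-1}$. Moreover $(\cdot\mid\cdot)$ restricts to a perfect pairing $\mf g_1\times\mf g_{-1}\to\mb F$, so the identity $\omega(a,b)=(\ad f(a)\mid b)$ shows at once that $\omega$ is nondegenerate, i.e.\ genuinely symplectic. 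Using invariance I would rewrite, for $a,b\in\mf g_1$,
\[
\omega(a,b)=(f\mid[a,b])=-(a\mid[f,b])=-(a\mid\ad f(b)).
\]

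Write $K:=\ker(\ad E|_{\mf g_1})$ for the centralizer of $E$ in $\mf g_1$. The heart of the argument is to compute the $\omega$-orthogonal complement $K^{\perp_\omega}$. By the rewriting above, $b\in K^{\perp_\omega}$ if and only if $(a\mid\ad f(b))=0$ for all $a\in K$, that is, if and only if $\ad f(b)$ lies in the $(\cdot\mid\cdot)$-orthogonal of $K$ taken inside $\mf g_{-1}$. At this point I would apply Lemma \ref{lem:linalg} with $a=E\in\mf g_j$ and $\ell=1$, which identifies precisely that orthogonal with $\im(\ad E|_{\mf g_{-j-1}})\subseteq\mf g_{-1}$. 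Since $\ad f$ is invertible, this produces the clean description
\[
K^{\perp_\omega}=(\ad f)^{-1}\bigl(\ad E(\mf g_{-j-1})\bigr).
\]

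It then remains only to unwind the coisotropy condition $K^{\perp_\omega}\subseteq K$. By definition of $K$, this containment says exactly that $\ad E$ annihilates $K^{\perp_\omega}=(\ad f)^{-1}(\ad E(\mf g_{-j-1}))$, which is literally the assertion that $\ad E\circ(\ad f)^{-1}\circ\ad E|_{\mf g_{-j-1}}$ is zero; reading the chain of equivalences in reverse gives the converse implication, establishing that (a) $\Leftrightarrow$ (b). The only genuine obstacle is bookkeeping: one must keep careful track of which graded piece each orthogonal complement inhabits and apply Lemma \ref{lem:linalg} to the correct pair $\mf g_1,\mf g_{-1}$. Once $\omega$ is expressed through $\ad f$ and the invariant form, everything reduces to formal linear algebra and no case analysis is required.
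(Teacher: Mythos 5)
Your argument is correct and is essentially the paper's own proof: both rewrite $\omega(a,b)$ as $\pm(a\mid[f,b])$ using invariance of the form, invoke Lemma \ref{lem:linalg} to identify $(\ker\ad E|_{\mf g_1})^{\perp}\cap\mf g_{-1}$ with $\im\ad E|_{\mf g_{-j-1}}$, and then unwind the coisotropy condition into the vanishing of $\ad E\circ(\ad f)^{-1}\circ\ad E|_{\mf g_{-j-1}}$. The only cosmetic difference is that you package the computation as the explicit formula $K^{\perp_\omega}=(\ad f)^{-1}\bigl(\ad E(\mf g_{-j-1})\bigr)$, whereas the paper phrases the same content as a chain of equivalent conditions (a$'$), (b$'$), (b$''$).
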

\begin{proof}
The coisotropy condition (a) can be rephrased by saying that the orthogonal complement of $\mf g_1^E$ with
respect to $\omega$ is contained in $\mf g_1^E$. This, by definition \eqref{eq:skewform}
of $\omega$, is equivalent to the following condition:
\begin{itemize}
\item[(a$'$)]
if $a\in\mf g_1$ is such that $[f,a]\in(\ker\ad E|_{\mf g_1})^{\perp}$, then $a\in\ker\ad E|_{\mf g_1}$.
\end{itemize}
(Here the orthocomplement is with respect to the bilinear form $(\cdot\mid\cdot)$ of $\mf g$.)
On the other hand, the statement in b) can be equivalently rephrased as follows:
\begin{itemize}
    \item[(b$'$)] if $x\in\mf g_{-k-1}$ is such that
$[E,x]=[f,a]$, for some $a\in\mf g_1$, then $a\in\ker\ad E|_{\mf g_1}$,
\end{itemize}
or, equivalently,
\begin{itemize}
\item[(b$''$)] if $a\in\mf g_1$ is such that $[f,a]\in\im\ad E|_{\mf g_{-j-1}}$, then $a\in\ker\ad E|_{\mf g_1}$.
\end{itemize}
By Lemma \ref{lem:linalg},
$(\ker\ad E|_{\mf g_1})^{\perp}=\im \ad E|_{\mf g_{-j-1}}$, thus (a') and (b'') are equivalent.
\end{proof}

An important problem is when a (non-nilpotent) generalized cyclic element $f+E$ of a semisimple Lie algebra $\mf g$ is integrable.
For the solution of this problem the following lemma is important.
\begin{lemma}\label{lem:int=>ss}
Let $f$ be a non-zero nilpotent of $\mf g$ and let \eqref{eq:grading} be the corresponding $\mb Z$-grading of depth $d$. Let $j=d$ or $d-1$
and let $E\in\mf g_j$ be a non-zero element. If the nilpotent part of the Jordan decomposition of $f+E$ lies in $\mf g_{-2}$,
then $E$ is a semisimple element of the $Z(\mf s)$-module $\mf g_j$.
\end{lemma}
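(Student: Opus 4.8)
The plan is to translate the hypothesis into a statement purely about the semisimple part of $f+E$, and then to recognize $E$ as living inside a $\theta$-representation in which Lie-theoretic semisimplicity and orbit-closedness coincide. First I would unwind the Jordan decomposition. Write $f+E=a_s+a_n$ with $a_s$ semisimple, $a_n$ nilpotent and $[a_s,a_n]=0$. The hypothesis $a_n\in\mf g_{-2}$ forces $a_s=f_1+E$ with $f_1:=f-a_n\in\mf g_{-2}$, so that $a_s$ is a semisimple element of $\mf g$ lying in $\mf g_{-2}\oplus\mf g_j$. (Splitting $[a_s,a_n]=0$ into graded components, which live in $\mf g_{-4}$ and $\mf g_{j-2}$ and cannot coincide since $j\ge1$, recovers $[f_1,a_n]=0$ and $[E,a_n]=0$; but for the conclusion only the semisimplicity of $a_s=f_1+E$ is needed.) Thus the lemma reduces to: if there is $f_1\in\mf g_{-2}$ with $f_1+E$ semisimple in $\mf g$, then the $Z(\mf s)$-orbit of $E$ in $\mf g_j$ is closed.

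Next I would bring in the $\theta$-group of Remark \ref{rem:ssproof}. With $\eps=e^{2\pi i/(j+2)}$ and $\sigma|_{\mf g_s}=\eps^sI$, the space $V=\mf g_{-2}\oplus\mf g_j$ is exactly the $\eps^{-2}$-eigenspace of $\sigma$ (because $j\equiv-2\pmod{j+2}$ and, for $j=d$ or $d-1$, no other $\mf g_s$ with $|s|\le d$ is congruent to $-2$), and $G^0|V$ is a polar $\theta$-group. By the general theory of $\theta$-groups, an element of $V$ has a closed $G^0$-orbit if and only if it is a semisimple element of $\mf g$. Applying this to $a_s=f_1+E\in V$ shows that $G^0(a_s)$ is closed in $V$.

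Finally I must descend this closedness from $G^0|V$ to $Z(\mf s)|\mf g_j$. As in Remark \ref{rem:ssproof} I would pass to $SG^0=G^0\cap SL(\mf g_{-2})$, for which $f$ has a closed orbit in $\mf g_{-2}$ with stabilizer $Z(\mf s)=(SG^0)^f$, and invoke the Luna slice theorem \cite{L72} at $f$. The normal space $N_f=V/[\mf{sg}_0,f]$ is $\mf g_j$ together with the one-dimensional trivial summand $\mf g_{-2}/[\mf{sg}_0,f]$ spanned by $f$, so the slice representation is $Z(\mf s)|\mf g_j$ up to a trivial factor; hence closed $SG^0$-orbits near $SG^0(f)$ correspond to closed $Z(\mf s)$-orbits in $\mf g_j$. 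Using the grading cocharacter $\exp((\log t)\ad h)\in G^0$, which scales $\mf g_{-2}$- and $\mf g_j$-components separately and sends $a_s=f_1+E$ to $f_1+t^{j+2}E$, one degenerates the closed orbit $G^0(a_s)$ toward $f_1$ and matches it against a slice point of the form $f+E'$ with $E'\in Z(\mf s)(E)$. Establishing this matching — equivalently, that the projection $V\to\mf g_j$ carries a Cartan subspace of $G^0|V$ into one of $Z(\mf s)|\mf g_j$ — is the main obstacle; it is exactly the inclusion the authors could verify only partially in Remark \ref{rem:ssproof}. For $j=d$ and $d-1$ I expect to close it using the polarity of Theorem \ref{thm:allpolar} together with the transitivity of $W$ on Cartan subspaces in Theorem \ref{thm:Cartanstuff}\eqref{smallWeyl}, or, lacking a uniform argument, by the case-wise tables.

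As a more self-contained alternative I would argue by contradiction with the Hilbert--Mumford criterion: if $Z(\mf s)(E)$ were not closed, pick $\xi\in\mf z(\mf s)$ with $E_1:=\lim_{t\to0}\exp((\log t)\ad\xi)E$ lying in a strictly smaller closed orbit. Since $\exp((\log t)\ad\xi)\in Z(\mf s)$ fixes $f$ and preserves the grading, $f+E_1$ lies in $\ol{Z(\mf s)(f+E)}$, and invariance of the characteristic polynomial of $\ad(\cdot)$ on this closure shows $f+E_1$ has the same nonzero $\ad$-eigenvalues as $a_s$, so it is not nilpotent. The difficulty here is that the semisimple part $\exp((\log t)\ad\xi)a_s$ need not converge (the component $f_1$ may carry negative $\ad\xi$-weights), so turning ``not nilpotent'' into a genuine contradiction still seems to require the $\theta$-group comparison above. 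This is why I regard the descent step as the real heart of the proof.
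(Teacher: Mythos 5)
Your opening reduction is exactly the paper's: from ``the nilpotent part of $f+E$ lies in $\mf g_{-2}$'' you extract $f_1\in\mf g_{-2}$ with $f_1+E$ semisimple, and you correctly identify the theta group $G_0|(\mf g_{-2}\oplus\mf g_j)$ of Remark \ref{rem:ssproof} as the right ambient object, together with the fact that $f_1+E$ semisimple in $\mf g$ is equivalent to the orbit $G_0(f_1+E)$ being closed. Up to that point you are reproducing the paper's argument.

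The genuine gap is the descent from closedness of $G_0(f_1+E)$ in $\mf g_{-2}\oplus\mf g_j$ to closedness of $Z(\mf s)(E)$ in $\mf g_j$, which you rightly call the heart of the proof but do not close: both of your proposed routes (Luna's slice theorem combined with the projection of Cartan subspaces, and the Hilbert--Mumford degeneration) are left with acknowledged holes, and the first one in fact hinges on the very inclusion $\pi(C_v)\subseteq C_{\pi(v)}$ that the authors state in Remark \ref{rem:ssproof} they were unable to prove. The paper's descent uses none of this machinery --- no polarity of $Z(\mf s)|\mf g_j$, no Cartan subspaces, no case-wise tables. It rests on the set-theoretic identity \eqref{eq:fromG0toZs}, namely $f+Z(\mf s)(E)=G_0(f+E)\cap(f+\mf g_j)$, taken from the proof of Proposition 2.2(a) in \cite{EKV13}: since $G_0$ preserves each graded piece of $\mf g$, it acts on $\mf g_{-2}\oplus\mf g_j$ componentwise, so $g(f+E)\in f+\mf g_j$ forces $g(f)=f$, and the stabilizer of $f$ in $SG^0$ is precisely $Z(\mf s)$ (because $\mf g_0^f=\mf z(\mf s)$ by $\mf{sl}_2$-theory). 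After noting that this identity persists with $f$ replaced by $f_1$ (using $G_0(f_2)\subset\mf g_{-2}$), one simply intersects the closed orbit $G_0(f_1+E)$ with the closed affine subspace $f_1+\mf g_j$ to conclude that $f_1+Z(\mf s)(E)$, and hence $Z(\mf s)(E)$, is closed. Without this identity (or an equivalent substitute) your proposal does not constitute a proof.
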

\begin{proof}
Consider the theta group $G_0|(\mf g_{-2}+\mf g_j)$, constructed in Remark \ref{rem:ssproof}. Then the element $f+E\in\mf g_{-2}+\mf g_j$ is a semisimple element of $\mf g$
if and only if the orbit $G_0(f+E)$ is closed. We have (cf. \cite{EKV13}, proof of Proposition 2.2(a)):
\begin{equation}\label{eq:fromG0toZs}
f+Z(\mf s)(E)=G_0(f+E)\cap(f+\mf g_j).
\end{equation}
If the nilpotent part of the Jordan decomposition of $f+E$ lies in $\mf g_{-2}$, we have: $f=f_1+f_2$, $f_1+E$ is semisimple, $f_2\in\mf g_{-2}$, and $[f_2,f_1+E]=0$.
Since $G_0(f_2)\subset\mf g_{-2}$, we deduce from \eqref{eq:fromG0toZs} that it still holds if $f$ is replaced by $f_1$.
Since $f_1+E$ is semisimple, the orbit $G_0(f_1+E)$ is closed, hence $f_1+Z(\mf s)E$ is a closed subset, and $E$ is a semisimple element of $Z(\mf s)|\mf g_j$.
\end{proof}

The following theorem describes all integrable cyclic elements, up to conjugation by $Z(\mf s)$.
\begin{theorem}\label{thm:intiffss}
Let $f$ be a non-zero nilpotent element of $\mf g$ of even depth $d$, and let $E\in\mf g_d$ be a non-zero element.
Then the cyclic element $f+E$ is integrable if and only if $E$ is a semisimple element of the $Z(\mf s)$-module $\mf g_d$.
\end{theorem}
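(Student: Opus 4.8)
I would prove the two implications separately; the forward one is immediate, and the converse carries all the difficulty.

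For the forward implication, recall that a generalized cyclic element is integrable if and only if the nilpotent part of its Jordan decomposition lies in $\mf g_{-2}$. Hence, if $f+E$ is integrable, its nilpotent part lies in $\mf g_{-2}$, and Lemma \ref{lem:int=>ss} (with $j=d$) yields at once that $E$ is a semisimple element of the $Z(\mf s)$-module $\mf g_d$. This uses nothing beyond Lemma \ref{lem:int=>ss}.

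For the converse, assume $E$ is semisimple in $Z(\mf s)|\mf g_d$. Since $Z(\mf s)|\mf g_d$ is polar (Theorem \ref{thm:allpolar}) and integrability of $f+E$ depends only on the $Z(\mf s)$-orbit of $E$ and is unchanged under rescaling, Theorem \ref{thm:Cartanstuff} \eqref{smallWeyl} lets me assume that $E$ lies in a fixed Cartan subspace $C\subseteq\mf g_d$. It will be convenient to reformulate the target condition: since $f+E$ is an eigenvector of the order-$(d+2)$ automorphism $\sigma$ of Remark \ref{rem:ssproof}, both its semisimple and its nilpotent part lie in the same $\sigma$-eigenspace $\mf g_{-2}+\mf g_d$; writing the nilpotent part as $n=n_{-2}+n_d$ with $n_{-2}\in\mf g_{-2}$ and $n_d\in\mf g_d$, integrability of $f+E$ is exactly the condition $n_d=0$. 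So I must show that for $E\in C$ the $\mf g_d$-component of the nilpotent part of $f+E$ vanishes.

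To organize this I would reduce to the case of an irreducible nilpotent $f$, following the reduction used in the proof of Theorem \ref{thm:nonniliffnonnil} (via \cite{EJK20} and \cite{EKV13}). For an irreducible $f$ one has $\mf g_d=C$, so $Z(\mf s)$ acts on $\mf g_d$ through the finite group $W=N/Z$ of Theorem \ref{thm:Cartanstuff} \eqref{smallWeyl}; in particular every $E\in\mf g_d$ is semisimple, and the claim becomes that the cyclic element $f+E$ is integrable for every $E\in\mf g_d$. Concretely, I would exhibit the integrable triple directly, as in Examples \ref{ex:rank2}, \ref{ex:rank3} and \ref{ex:rank4}: for each such $E$ the subalgebra $\mf a$ generated by $f$ and $E$ splits as $\mf a=[\mf a,\mf a]\oplus\mf z(\mf a)$ with $\mf z(\mf a)\subseteq\mf g_{-2}$ spanned by a single element $\fn$, and then $\fs:=f-\fn$ satisfies $[\fs,\fn]=0$ and $\fs+E$ semisimple, so that $(\fs,\fn,E)$ is the desired integrable triple and $n=\fn\in\mf g_{-2}$.

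The main obstacle will be precisely this last step. I do not expect a uniform conceptual argument that $\mf a$ splits in the stated way (equivalently, that $n_d=0$), so the verification has to be carried out case by case. On the dense open subset of $C$ where $f+E$ is already semisimple in $\mf g$ the assertion is trivial, and the genuine difficulty is confined to the singular locus, where the Jordan decomposition must be computed explicitly. For the irreducible nilpotents of exceptional $\mf g$ (of rank at most $4$) this is the content of Examples \ref{ex:rank2}--\ref{ex:rank4}, checked with the GAP package SLA, while for classical $\mf g$ it is handled by direct computation in the standard representation, the representative family of type $\rD_{2k}$ being treated in Example \ref{ex:D2k}.
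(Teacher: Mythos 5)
Your proposal is correct and follows essentially the same route as the paper: the forward implication via Lemma \ref{lem:int=>ss}, and the converse by conjugating $E$ into a Cartan subspace, reducing to an irreducible nilpotent $f$ as in the proof of Theorem \ref{thm:nonniliffnonnil}, and then verifying the singular locus case by case through Examples \ref{ex:rank2}--\ref{ex:rank4} and \ref{ex:D2k}. The reformulation of integrability as the vanishing of the $\mf g_d$-component of the nilpotent part is a harmless (and correct) addition not made explicit in the paper.
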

\begin{proof}
If $f+E$ is integrable, then $E$ is a semisimple element of $Z(\mf s)|\mf g_d$ by Lemma \ref{lem:int=>ss}.

Conversely, let $E$ be a non-zero semisimple element of $Z(\mf s)|\mf g_d$.
Then the argument as in the proof of Theorem \ref{thm:nonniliffnonnil} reduces the proof to the case
when $f$ is an irreducible nilpotent element with $\dim\mf g_d>1$ and $E\notin\mathbf S_{\mf g}(f)$.
Again, the cases when $\mf g$ is an exceptional Lie algebra are checked on the computer using GAP \cite{SLA},
by computing Jordan decompositions of $f+E$ for $E$ generic in all possible nonzero intersections
of subspaces constituting the complement of $\mathbf S_{\mf g}(f)$ as described in \cite{EJK20}  (see Examples \ref{ex:rank2}, \ref{ex:rank3} and \ref{ex:rank4}),
while the case $\mf g=\mf{so}_{4k}$, with $f$ corresponding to the partition $(2k+1,2k-1)$ is treated as in Example \ref{ex:D2k}.
\end{proof}
\begin{theorem}\label{thm:quasintiffss}\
\begin{enumerate}[(a)]
\item\label{thm:quasintiffssodd} Theorem \ref{thm:intiffss} holds for $d$ odd and for $E\in\mf g_{d-1}$ such that $f+E$ is a quasi-cyclic element.
\item\label{thm:quasintiffsseven} The ``only if'' part of Theorem \ref{thm:intiffss} holds for $d$ even and for $E\in\mf g_{d-1}$ such that $f+E$ is a quasi-cyclic element.
\end{enumerate}
\end{theorem}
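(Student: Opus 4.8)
The plan is to separate the two implications, since the ``only if'' direction is common to both parts and essentially immediate, whereas the ``if'' direction of (a) carries all the weight. Recall that, as noted just after the definition of integrability, a generalized cyclic element $f+E$ is integrable exactly when the nilpotent part of its Jordan decomposition lies in $\mf g_{-2}$. For $j=d-1$ this is precisely the hypothesis of Lemma \ref{lem:int=>ss}, whose conclusion is that $E$ is a semisimple element of the $Z(\mf s)$-module $\mf g_{d-1}$. Thus the implication ``$f+E$ integrable $\Rightarrow$ $E$ semisimple in $Z(\mf s)|\mf g_{d-1}$'' holds regardless of the parity of $d$. This already proves part (b) in full, and supplies the ``only if'' half of part (a).

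For the ``if'' direction of (a) I would follow the proof of Theorem \ref{thm:intiffss}. Assume $d$ is odd and let $E\in\mf g_{d-1}$ be a nonzero semisimple element of $Z(\mf s)|\mf g_{d-1}$ with $f+E$ quasi-cyclic; the goal is to show that the nilpotent part of the Jordan decomposition of $f+E$ lies in $\mf g_{-2}$. Both the coisotropy condition defining quasi-cyclicity and the position of the nilpotent part are invariant under the $Z(\mf s)$-action, because $Z(\mf s)$ fixes $f$, preserves the grading \eqref{eq:grading} and the form $(\cdot\mid\cdot)$, and acts by automorphisms. Since $Z(\mf s)|\mf g_{d-1}$ is polar (Theorem \ref{thm:allpolar}), Theorem \ref{thm:Cartanstuff}\eqref{smallWeyl} lets me replace $E$ by a $Z(\mf s)$-conjugate lying in a fixed Cartan subspace $C\subset\mf g_{d-1}$. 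If $f+E$ is semisimple it is automatically integrable, so only the finitely many ``singular'' subspaces of $C$ on which $f+E$ fails to be semisimple remain to be examined. On these I would compute the Jordan decomposition of $f+E$ explicitly and check that its nilpotent part lands in $\mf g_{-2}$, exactly in the spirit of Examples \ref{ex:rank2}--\ref{ex:rank4} and \ref{ex:D2k}; the relevant groups $Z(\mf s)|\mf g_{d-1}$ for $f$ of odd depth are listed in Table 1 of \cite{DSJKV20}. For exceptional $\mf g$ the Jordan decompositions along each intersection of singular subspaces can be produced with the GAP package SLA \cite{SLA}, while for classical $\mf g$ one argues directly in the standard representation of $\mf{sl}_N$, $\mf{sp}_N$ and $\mf{so}_N$, as in Example \ref{ex:D2k}.

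The main obstacle is that, as with Theorem \ref{thm:intiffss}, I do not expect a uniform conceptual proof of the ``if'' direction: the verification on the singular locus is genuinely case-wise. The natural attempt at a conceptual argument uses the theta group $G_0|(\mf g_{-2}\oplus\mf g_{d-1})$ of Remark \ref{rem:ssproof}, which for $j=d-1$ really is a single eigenspace of $\sigma$. Since $\sigma(f+E)=\eps^{-2}(f+E)$, uniqueness of the Jordan decomposition places both the semisimple part $s$ and the nilpotent part of $f+E$ in $\mf g_{-2}\oplus\mf g_{d-1}$; moreover the projection $\pi\colon\mf g_{-2}\oplus\mf g_{d-1}\to\mf g_{d-1}$ is $G_0$-equivariant, so the closed $G_0$-orbit $G_0(s)$ inside $\overline{G_0(f+E)}$ maps into $\overline{G_0(E)}$. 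Turning this into the required statement that the $\mf g_{d-1}$-component of $s$ equals $E$ forces one to compare the $G_0$-action with the $Z(\mf s)$-action on $\mf g_{d-1}$ — exactly the inclusion $\pi(C_v)\subseteq C_{\pi(v)}$ that Remark \ref{rem:ssproof} leaves unproven. This is also why part (b) stops at ``only if'': for $d$ even there seems to be no way to control the $\mf g_{d-1}$-component of the nilpotent part, and the ``if'' direction remains open.
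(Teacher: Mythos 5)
Your treatment of part (b) and of the ``only if'' half of (a) is exactly the paper's: both are immediate from Lemma \ref{lem:int=>ss}, which applies for $j=d-1$ regardless of the parity of $d$. For the ``if'' half of (a), however, you take a genuinely different and heavier route than the paper. The paper's entire argument is a one-line reduction: since $d$ is odd, replace $\mf g$ by the even part $\mf g_{\text{even}}=\bigoplus_j\mf g_{2j}$ of the grading. The triple $\mf s$ lies in $\mf g_{\text{even}}$, $f$ has even depth $d-1$ there, and $E\in\mf g_{d-1}$ sits in the \emph{top} graded piece of $\mf g_{\text{even}}$, so the quasi-cyclic element $f+E$ becomes an honest cyclic element of $\mf g_{\text{even}}$ (for which the commutation and coisotropy conditions are automatic), and Theorem \ref{thm:intiffss} applies verbatim; the notions of Jordan decomposition, of ``nilpotent part in $\mf g_{-2}$'', and of $Z(\mf s)$-semisimplicity are all unchanged under this passage. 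Your proposal instead re-runs the case-wise analysis of Theorem \ref{thm:intiffss} directly on $Z(\mf s)|\mf g_{d-1}$: reduce to a Cartan subspace, dispose of the semisimple locus, and check Jordan decompositions on the singular subspaces. This can be made to work, but as stated it quietly assumes inputs that \cite{EJK20} only supplies for $\mf g_d$ (the reduction to irreducible nilpotents and the explicit descriptions of the singular sets as unions of subspaces); Table 1 of \cite{DSJKV20} gives you the groups $Z(\mf s)|\mf g_{d-1}$ but not these singular-set decompositions. The cleanest way to obtain them is precisely the $\mf g_{\text{even}}$ reduction, after which your outline collapses into the paper's proof. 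Your closing discussion of why (b) stops at ``only if'' is a reasonable gloss, consistent with the paper, which likewise proves nothing beyond Lemma \ref{lem:int=>ss} in the even-depth case.
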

\begin{proof}
\eqref{thm:quasintiffssodd} Replacing $\mf g$ with $\mf g_{\text{even}}=\bigoplus_{j\in\mb Z}\mf g_{2j}$, the proof is the same as of Theorem \ref{thm:intiffss}.
\eqref{thm:quasintiffsseven} follows from Lemma \ref{lem:int=>ss}.
\end{proof}
\begin{example}\label{ex:noint}
A non-nilpotent quasi-cyclic element does not necessarily give rise to an integrable triple.
For $f$ with label $2\rA_2+\rA_1$ in $\rE_6$, take
\[
f=2f_{\foresix{.1}{.5}111100}+2f_{\foresix{.1}{.5}101110}+2f_{\foresix{.1}{.5}010111}+2f_{\foresix{.1}{.5}001111}+f_{\foresix{.1}{.5}011210}.
\]
The depth of $f$ is $5$ and $\mf g_4$ is $4$-dimensional,
spanned by $e_{\foresix{.1}{.5}111211}$, $e_{\foresix{.1}{.5}112211}$, $e_{\foresix{.1}{.5}111221}$ and $e_{\foresix{.1}{.5}112221}$.

The coisotropy condition on a generic element $x_1e_{\foresix{.1}{.5}111211}+x_2e_{\foresix{.1}{.5}112211}+x_3e_{\foresix{.1}{.5}111221}+x_4e_{\foresix{.1}{.5}112221}$ of $\mf g_4$ is
\[
4x_1x_4=(x_2+x_3)^2.
\]
It follows that for
\[
E=e_{\foresix{.1}{.5}111211}+e_{\foresix{.1}{.5}112211}-e_{\foresix{.1}{.5}111221}
\]
$f+E$ is a quasi-cyclic element.
Its Jordan decomposition has semisimple part
\begin{align*}
f_1+E=2f_{\foresix{.1}{.5}111100}+2f_{\foresix{.1}{.5}101110}+2f_{\foresix{.1}{.5}010111}+2f_{\foresix{.1}{.5}001111}&+\frac23(f_{\foresix{.1}{.5}111110}-f_{\foresix{.1}{.5}011111})+\frac23e_{\foresix{.1}{.5}111211}+e_{\foresix{.1}{.5}112211}-e_{\foresix{.1}{.5}111221}\\
\intertext{and nilpotent part}
f_2=f_{\foresix{.1}{.5}011210}&-\frac23(f_{\foresix{.1}{.5}111110}-f_{\foresix{.1}{.5}011111})+\frac13e_{\foresix{.1}{.5}111211},
\end{align*}
This is not an integrable triple because $f_2\notin\mf g_{-2}$.
\end{example}

\section{Integrable cyclic and quasi-cyclic elements associated to nilpotent elements of even depth}

\subsection{Integrable cyclic elements for nilpotent elements of even depth}

Let $f$ be a non-zero nilpotent element of even depth $d$ in a simple Lie algebra $\mf g$.
Recall that $f$ is included in an $\mf{sl}_2$-triple $\mf s$, and that, by Theorem \ref{thm:allpolar}, the linear group $Z(\mf s)|\mf g_d$ is polar.
All these linear groups are listed in \cite{EKV13}. Let $C\subseteq\mf g_d$ be a Cartan subspace.
By Theorem \ref{thm:intiffss}, any cyclic element $f+E$, where $E\in C$ is non-zero, is integrable.
Hence, up to conjugation by $Z(\mf s)$,
the integrable cyclic elements are classified by non-zero elements of $C$, up to conjugation by its Weyl group, and rescaling.

Recall also by \cite{EKV13} that the set of non-zero nilpotent elements in $\mf g$ (up to conjugation) is partitioned in bushes,
such that each bush contains a unique nilpotent element $\fs$ of semisimple type,
and all other nilpotent elements in the same bush have the same depth $d$ and the same Cartan subspace.

Below we give a more explicit description of $Z(\mf s)|\mf g_d$ (rather their unity components) for all classical simple Lie algebras $\mf g$.
As in \cite{DSJKV20}, throughout the paper, we use the following notation: $\st(\mf a)$ denotes the standard representation of the Lie algebra $\mf a$,
$\id$ stands for the trivial 1-dimensional representation,
$\oplus$ stands for the direct sum of linear reductive groups, $\operatorname{rank}=\dim\mf g_d\git Z(\mf s)$.

\subsubsection{$\mf g=\mf{sl}_N$, $N\ge2$}\label{sec:zsgl}
Non-zero nilpotent elements $f$, up to conjugation, are pa\-ra\-met\-ri\-zed by partitions
\begin{equation}\label{eq:partition}
\partition p=(p_1^{(r_1)},p_2^{(r_2)},\dots, p_s^{(r_s)}),\qquad N=\sum_ir_ip_i,
\end{equation}
where the $p_i$ are distinct and have multiplicities $r_i$: $p_1>\dots>p_s\ge1,\quad p_1>1$.
Then the associated to a partition $\partition p$ nilpotent element $f=f_{\partition p}$ is of semisimple type if and only if
\begin{equation}\label{eq:ass}
\partition p=(p_1^{(r_1)},1^{(r_2)}).
\end{equation}
The bush containing this partition consists of all partitions with the same $p_1$ and $r_1$.
All these partitions have the same depth $d=2p_1-2$,
and the same $\mf g_d=\Mat_{r_1\times r_1}$, and the action of $Z(\mf s)|\mf g_d$ is the action of $SL_{r_1}$
on $\Mat_{r_1\times r_1}$ by conjugation.
A Cartan subspace $C$ is the subspace of all diagonal matrices.

\subsubsection{$\mf g=\mf{sp}_N$, $N\ge2$ even}
Non-zero nilpotent elements $f_{\partition p}$, up to conjugation, are parametrized by partitions $\partition p$,
whose odd parts have even multiplicity. Then again $f=f_{\partition p}$ is of semisimple type if and only if \eqref{eq:ass} holds.
The bush containing $f_{\partition p}$ consists of all partitions (whose odd parts have even multiplicities) with the same $p_1$ and $r_1$ as ${\partition p}$.
All have the same depth $d=2p_1-2$, and the same linear group $Z(\mf s)|\mf g_d$, which depends on whether $p_1$ is even or odd:

$Z(\mf s)|\mf g_d=\sym^2\st(SO_{r_1})$ if $p_1$ is even,
$Z(\mf s)|\mf g_d=\sym^2\st(Sp_{r_1})=\ad(Sp_{r_1})$ if $p_1$ is odd (then $r_1$ is even).

For the bilinear form with matrix $I$, defining $SO_{r_1}$,
$\sym^2\st(SO_{r_1})$
is identified with the space of all symmetric matrices,
and we can choose for $C$ the subspace, consisting of diagonal matrices,
while $\sym^2\st(Sp_{r_1})$ is the adjoint representation of $Sp_{r_1}$, so that $C$ is any Cartan subalgebra.

\subsubsection{$\mf g=\mf{so}_N$, $N\ge3$, $N\ne4$}\label{subs:socases}
Non-zero nilpotent elements $f_{\partition p}$, up to conjugation, are parametrized by partitions $\partition p$, whose even parts have even multiplicity.
There are five types of elements $f_{\partition p}$ of semisimple type:
\begin{enumerate}[(a)]
\item\label{en:so31} $\partition p=(3,1^{(r_2)})$; $d=2$;
\item\label{en:soo1} $\partition p=(p_1,1^{(r_2)})$, $p_1\ge5$ odd; $d=2p_1-4$;
\item\label{en:soo2} $\partition p=(p_1,p_1-2,1^{(r_3)})$, $p_1\ge5$ odd; $d=2p_1-4$;
\item\label{en:soee} $\partition p=(p_1^{(r_1)},1^{(r_2)})$, $p_1\ge2$ even, $r_1\ge2$ even; $d=2p_1-2$;
\item\label{en:soeo} $\partition p=(p_1^{(r_1)},1^{(r_2)})$, $p_1\ge3$ odd, $r_1\ge2$ even; $d=2p_1-2$.
\end{enumerate}
The linear groups $Z(\mf s)|\mf g_d$ for the types \eqref{en:so31} -- \eqref{en:soeo} are as follows:
\begin{enumerate}[(a)]
\item $\st(SO_{r_2})\oplus\id$;
\item $\id$;
\item $\id\oplus\id$;
\item $\ext^2\st(Sp_{r_1})$;
\item $\ext^2\st(SO_{r_1})=\ad(SO_{r_1})$.
\end{enumerate}

Cartan subspaces are as follows:

\begin{tabular}{ll}
\eqref{en:so31} &$C=\mb Fv\oplus\mb F$, where $\kform vv=1$;\\
\eqref{en:soo1} and \eqref{en:soo2} &$C=\mf g_d$;\\
\eqref{en:soee} &$C=$ the subspace of diagonal matrices in $\mf g_d$\\&\qquad\ if the bilinear form, defining $Sp_{r_1}$, is $\left(\begin{smallmatrix}0&1\\-1&0\end{smallmatrix}\right)$;\\
\eqref{en:soeo}&$C=$ the Cartan subalgebra.
\end{tabular}

Bushes containing these $f$ of semisimple type correspond to the following partitions (with all even parts having even multiplicities):

\begin{tabular}{ll}
\eqref{en:so31} &partition itself;\\
\eqref{en:soo1} &all partitions with the same $p_1$ and $r_1=1$ satisfying $p_2<p_1-2$;\\
\eqref{en:soo2} &all partitions with the same $p_1$, $r_1=1$ and $p_2=p_1-2$;\\
\eqref{en:soee} and \eqref{en:soeo} &all partitions with the same $p_1$ with multiplicity $r_1$ or $r_1+1$.
\end{tabular}

The group $Z(\mf s)|\mf g_d$ is the same for the nilpotent elements from the bush, except for the following two cases:
\begin{itemize}
\item[\eqref{en:soo1}] partitions $(p_1^{(r_1)},...)$, where $r_1\ge1$ is odd, in which case $Z(\mf s)|\mf g_d=\ad(SO_{r_1})$;
\item[\eqref{en:soo2}] partitions $(p_1,(p_1-2)^{(r_2)},...)$, where $p_1\ge3$ is odd, in which case $Z(\mf s)|\mf g_d=\st(SO_{r_2})\oplus\id$.
\end{itemize}

The information about Lie algebra actions of centralizers of $\mf{sl}_2$-triples can be summarized in the following table.

\begin{table}[H]
\resizebox{.8\textwidth}{!}{
\begin{tabular}{c|l|c|c|c}
$\mf g$&\qquad\qquad partition&$d$&rank&$\mf z(\mf s)|\mf g_d$\\
\hline\hline
$\mf{sl}_N$\\
\hline
&$(p_1^{(r_1)},...)$&$2p_1-2$&$r_1$&$\ad(\mf{sl}_{r_1})\oplus\id$\\
\hline\hline
$\mf{sp}_N$\\
\hline
&$(p_1^{(r_1)},...)$, \hfill$p_1$ even&$2p_1-2$&$r_1$&$\sym^2\st(\mf{so}_{r_1})$\\
&$(p_1^{(r_1)},...)$, \hfill$p_1$ odd&$2p_1-2$&$\frac{r_1}2$&$\ad(\mf{sp}_{r_1})$\\
\hline\hline
$\mf{so}_N$\\
\hline
&$(p_1^{(r_1)},...)$,\hfill$p_1$ even&$2p_1-2$&$\frac{r_1}2$&$\ext^2\st(\mf{sp}_{r_1})$\bigstrut[t]\\
&$(p_1^{(r_1)},...)$, \hfill$r_1>1$, $p_1$ odd&$2p_1-2$&$\left[\frac{r_1}2\right]$&$\ad(\mf{so}_{r_1})$\\
&$(p_1,(p_1-1)^{(r_2)},...)$, \hfill$p_1$ odd&$2p_1-3$&$0$&$\st(\mf{sp}_{r_2})$\\
&$(p_1,(p_1-2)^{(r_2)},...)$, \hfill$p_1$ odd&$2p_1-4$&$2$&$\st(\mf{so}_{r_2})\oplus\id$\\
&$(p_1,(p_1-m)^{(r_2)},...)$, \hfill$m>2$, $p_1$ odd &$2p_1-4$&1&$\id$
\end{tabular}
}
\caption{Actions of centralizers of $\mf{sl}_2$-triples of nilpotent elements in simple Lie algebras $\mf g$ of classical types on $\mf g_d$}\label{tab:zsd}
\end{table}

\subsection{Integrable triples in \texorpdfstring{$\mf g=\mf{gl}_N$ and $\mf{sl}_N$}{glNandslN}}\label{sec:glsl}
%

\subsubsection{Setup and preliminary results}\label{sec:setup}

Let $\mf g=\mf{gl}_N$ or $\mf{sl}_N$.
Let $f$ be a nilpotent element of $\mf g$ in Jordan form and let $\partition p$ be the associated partition \eqref{eq:partition} of $N$,
given by the sizes of the blocks of the Jordan form.
We associate to $\partition p$ a symmetric (with respect to the $y$-axis) pyramid,
with boxes of size $2\times2$ indexed by the set $I=\{1,2,\dots,N\}$ (say starting from right to left
and bottom to top).
For example, for the partition $(9,7,4^{(2)})$ of $24$, we have the pyramid in Figure \ref{fig:pyramid}.

\begin{figure}[H]
\setlength{\unitlength}{0.14in}
\centering
\begin{picture}(30,12)
\put(5,4){\framebox(2,2){9}}
\put(7,4){\framebox(2,2){8}}
\put(9,4){\framebox(2,2){7}}
\put(11,4){\framebox(2,2){6}}
\put(13,4){\framebox(2,2){5}}
\put(15,4){\framebox(2,2){4}}
\put(17,4){\framebox(2,2){3}}
\put(19,4){\framebox(2,2){2}}
\put(21,4){\framebox(2,2){1}}

\put(7,6){\framebox(2,2){16}}
\put(9,6){\framebox(2,2){15}}
\put(11,6){\framebox(2,2){14}}
\put(13,6){\framebox(2,2){13}}
\put(15,6){\framebox(2,2){12}}
\put(17,6){\framebox(2,2){11}}
\put(19,6){\framebox(2,2){10}}

\put(10,8){\framebox(2,2){20}}
\put(12,8){\framebox(2,2){19}}
\put(14,8){\framebox(2,2){18}}
\put(16,8){\framebox(2,2){17}}


\put(10,10){\framebox(2,2){24}}
\put(12,10){\framebox(2,2){23}}
\put(14,10){\framebox(2,2){22}}
\put(16,10){\framebox(2,2){21}}

\put(4,2){\vector(1,0){22}}
\put(26,1){$x$}

\put(6,1.6){\line(0,1){0.8}}
\put(7,1.8){\line(0,1){0.4}}
\put(8,1.6){\line(0,1){0.8}}
\put(9,1.8){\line(0,1){0.4}}
\put(10,1.6){\line(0,1){0.8}}
\put(11,1.8){\line(0,1){0.4}}
\put(12,1.6){\line(0,1){0.8}}
\put(13,1.8){\line(0,1){0.4}}
\put(14,1.6){\line(0,1){0.8}}
\put(15,1.8){\line(0,1){0.4}}
\put(16,1.6){\line(0,1){0.8}}
\put(17,1.8){\line(0,1){0.4}}
\put(18,1.6){\line(0,1){0.8}}
\put(19,1.8){\line(0,1){0.4}}
\put(20,1.6){\line(0,1){0.8}}
\put(21,1.8){\line(0,1){0.4}}
\put(22,1.6){\line(0,1){0.8}}

\put(13.8,0.6){0}
\put(15.8,0.6){2}
\put(17.8,0.6){4}
\put(19.8,0.6){6}
\put(21.8,0.6){8}
\put(11.6,0.6){-2}
\put(9.6,0.6){-4}
\put(7.6,0.6){-6}
\put(5.6,0.6){-8}

\end{picture}
\caption{Symmetric pyramid for the partition $(9,7,4^{(2)})$ of $24$} 
\label{fig:pyramid}
\end{figure}
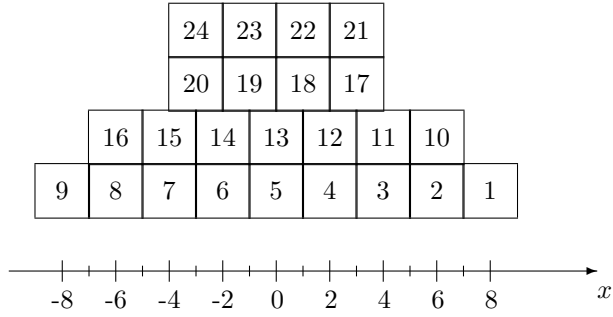
Let $V$ be the $N$-dimensional vector space over $\mb F$ with basis
$\{e_{\alpha}\}_{\alpha\in I}$.
The Lie algebra $\mf g\cong\mf{gl}(V)$ has a basis
consisting of the elementary matrices $E_{\alpha,\beta}$, $\alpha,\beta\in I$.
The elementary matrix $E_{\alpha,\beta}$ in $\mf g$ can be depicted by an arrow
going from the center of the box $\beta$ to the center of the box $\alpha$.
In particular, $f$ is the ``shift to the left'' operator. It is depicted as the sum of all the arrows pointing
from each box to the next one on the left
\begin{equation}\label{eq:f}
f=\sum_{\alpha\leftarrow\beta} E_{\alpha,\beta}
\,,
\end{equation}
where the sum is over all adjacent boxes (on the same row) $\alpha,\beta\in I$.
Let us also denote by $f\transpose$ the transpose of the matrix $f$ defined in \eqref{eq:f}. It is the ``shift to the right'' operator.

Let $h\in\mf g$ be the diagonal endomorphism of $V$
whose eigenvalue on $e_{\alpha}$ is the $x$-coordinate of the center of the box labeled by $\alpha$ (see Figure \ref{fig:pyramid})
which we denote by $x_{\alpha}$.
We then have the corresponding $h$-eigenspace decomposition of $V$
\begin{equation}\label{eq:dech}
V=\bigoplus_{k=-D}^DV[k]
\,,
\quad V[k]=\{v\in V\mid h(v)=kv\}
\,,
\end{equation}
where $D=p_1-1$ is the maximal eigenvalue of $h$.

We note that the elements $f$ and $h$ belong to a $\mf{sl}_2$-triple $\mf s=\{e,h,f\}$,
where $e=\sum_{\alpha\leftarrow\beta} c_\beta E_{\beta,\alpha}$, with $c_\beta=\sum x_\gamma$, where the sum is over all boxes $\gamma$ at the right and in the same row of the box $\beta$, including it.

The elementary matrices $E_{\alpha,\beta}$ are eigenvectors
with respect to the adjoint action of $h$:
$$
(\ad h)E_{\alpha,\beta}=(x_\alpha-x_\beta)E_{\alpha,\beta}
\,.
$$
This defines a $\mb Z$-grading of $\mf g$,
given by the $\ad h$-eigenspaces as in \eqref{eq:grading}:
\begin{equation}\label{eq:adx2}
\mf g_k=\Span_{\mb F}\{E_{\alpha,\beta}\mid x_{\alpha}-x_\beta=k\}\,,\,k\in\mb Z.
\end{equation}
The depth of this grading is $d=2D=2p_1-2$.

Next, consider the subspaces $V_-=\Ker f$ and $V_+=\Ker f\transpose$ of $V$.
We thus have the direct sums decompositions
\begin{equation}\label{eq:complementary}
V=V_-\oplus f\transpose V=V_+\oplus fV
\,.
\end{equation}
Let $D_i=p_i-1$, for $i=1,\dots,s$ (in particular $D_1=D$). Throughout the paper we will use the decompositions
$V_{\pm}=\bigoplus_{i=1}^sV_{\pm,i}$,
where
\begin{equation}\label{eq:vpmi}
V_{-,i}
=
V_-\cap f^{D_i}V_+
\,\,,\,\,\,\,
V_{+,i}
=
V_+\cap (f\transpose)^{D_i}V_-
\,\,,\,\,\,\,
i=1,\dots,s
\,,
\end{equation}
and
$V=\bigoplus_{i=1}^sV_{i}$,
where
\begin{equation}\label{eq:vi}
V_{i}
=
\bigoplus_{k=0}^{D_i}f^kV_{+,i}
\,\,,\,\,\,\,
i=1,\dots,s
\,.
\end{equation}
Representing the basis elements of $V$ as boxes of the pyramid as in Figure \ref{fig:pyramid}, $V_i$ corresponds to the $i$-th rectangle counting from the bottom, and $V_{\pm,i}$ correspond to the
right/left most boxes of the $i$-th rectangle.
With a picture:
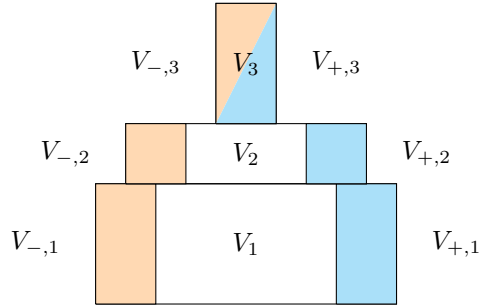
\begin{figure}[H]
\begin{tikzpicture}[scale=0.4]
\draw (0,0)--(10,0)--(10,4)--(0,4)--(0,0);
\draw [fill=orange!30] (0,0)--(2,0)--(2,4)--(0,4)--(0,0);
\draw [fill=cyan!30] (8,0)--(10,0)--(10,4)--(8,4)--(8,0);
\node at (-2,2) {$V_{-,1}$};
\node at (5,2) {$V_1$};
\node at (12,2) {$V_{+,1}$};
\draw (1,4)--(9,4)--(9,6)--(1,6)--(1,4);
\draw [fill=orange!30] (1,4)--(3,4)--(3,6)--(1,6)--(1,4);
\draw [fill=cyan!30] (7,4)--(9,4)--(9,6)--(7,6)--(7,4);
\node at (-1,5) {$V_{-,2}$};
\node at (5,5) {$V_2$};
\node at (11,5) {$V_{+,2}$};
\draw (4,6)--(6,6)--(6,10)--(4,10)--(4,6);
\draw [fill=orange!30] (4,6)--(4,10)--(6,10);
\draw [fill=cyan!30] (4,6)--(6,6)--(6,10);
\node at (2,8) {$V_{-,3}$};
\node at (5,8) {$V_3$};
\node at (8,8) {$V_{+,3}$};
\end{tikzpicture}
\caption{The spaces $V_i$ and $V_{\pm,i}$}
\label{fig:Vspaces}
\end{figure}
For the pyramid in Figure \ref{fig:Vspaces} the subspaces $V_-$ and $V_+$ correspond, respectively, to the boxes colored in orange and blue (note that they may have nontrivial intersection); the subspaces $V_i$, $i=1,2,3$, correspond to the rectangles of the pyramid, and $V_{\pm,i}$ is the intersection of the rectangle $V_i$ with $V_{\pm}$.

Throughout the paper, given a subspace $U\subset V$,
together with a ``natural'' splitting $V=U\oplus W$
(usually associated with the grading of $V$),
we shall denote, with a slight abuse of notation,
by $\id_U$
both the identity map $U\stackrel{\sim}{\longrightarrow}U$,
the inclusion map $U\hookrightarrow V$,
and the projection map (with kernel $W$) $V\twoheadrightarrow U$;
the correct meaning of $\id_U$ should then be clear from the context.

Using the above notation, we clearly have (recall the splitting \eqref{eq:complementary})
\begin{equation}\label{20180219:eq3}
ff\transpose=\id_V-\id_{V_+}=\id_{fV}
\,\,\text{ and }\,\,
f\transpose f=\id_V-\id_{V_-}=\id_{f\transpose V}
\,.
\end{equation}
The following result will be used in the sequel.
\begin{lemma}\label{prop:linalg2}
Let $x\in\mf g_{-2}^f$, the centralizer of $f$ in $\mf g_{-2}$. For every $i=1,\dots,s$ there exists $t_i\in\End(V_{+,1})$ such that
$$
\id_{V_i}x\id_{V_i}=\sum_{k=0}^{D_i-1}f^{k+1}t_i(f\transpose)^k
\,.
$$
\end{lemma}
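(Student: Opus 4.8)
The plan is to work one rectangle at a time and to reduce everything to the $f$-module structure of $V_i$. First I would record two facts about the projection $\id_{V_i}$: since $f$ and $f\transpose$ preserve each row of the pyramid, they preserve $V_i$ together with its complement $\bigoplus_{j\ne i}V_j$, so $\id_{V_i}$ commutes with $f$ and with $f\transpose$. Setting $x_{ii}:=\id_{V_i}x\id_{V_i}$, the hypotheses $[f,x]=0$ and $[f,\id_{V_i}]=0$ give $x_{ii}f=\id_{V_i}xf\id_{V_i}=\id_{V_i}fx\id_{V_i}=fx_{ii}$, so $x_{ii}$ is an endomorphism of $V_i$ commuting with $f|_{V_i}$; moreover $x_{ii}$ lowers the $\ad h$-degree by $2$, because $x\in\mf g_{-2}$.

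Next I would exploit the decomposition $V_i=\bigoplus_{k=0}^{D_i}f^kV_{+,i}$ of \eqref{eq:vi}, which exhibits $V_i$ as a free $\mb F[f]/(f^{D_i+1})$-module generated by $V_{+,i}$. Because $x_{ii}$ commutes with $f$, it is determined by its restriction to the generating space $V_{+,i}$ via $x_{ii}(f^kw)=f^kx_{ii}(w)$. The grading then pins this restriction down: all of $V_{+,i}$ sits in degree $D_i$ while $x_{ii}$ drops degrees by $2$, so $x_{ii}(V_{+,i})\subseteq V_i[D_i-2]=fV_{+,i}$. For $p_i\ge2$ the map $f\colon V_{+,i}\to fV_{+,i}$ is an isomorphism, its kernel $V_{+,i}\cap V_-=V_{+,i}\cap V_{-,i}$ being zero since the rightmost and leftmost columns of the rectangle are distinct; hence I may set $t_i:=(f|_{V_{+,i}})^{-1}x_{ii}|_{V_{+,i}}\in\End(V_{+,i})$, which yields $x_{ii}(f^kw)=f^{k+1}t_i(w)$ for all $w\in V_{+,i}$ and $0\le k\le D_i$. (When $p_i=1$ one has $V_i=V_{+,i}$ concentrated in degree $0$, so $x_{ii}=0$ and the asserted sum is empty.)

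Finally I would verify that $\sum_{k=0}^{D_i-1}f^{k+1}t_i(f\transpose)^k$, with $t_i$ extended to $V$ by $\id_{V_{+,i}}t_i\id_{V_{+,i}}$ in accordance with the convention for $\id_U$, agrees with $x_{ii}$ on each summand $f^mV_{+,i}$. The input here is that, by \eqref{20180219:eq3} restricted to $V_i$, the operator $f\transpose$ carries $f^mV_{+,i}$ isomorphically onto $f^{m-1}V_{+,i}$ for $1\le m\le D_i$ and annihilates $V_{+,i}$; iterating, $(f\transpose)^k$ sends $f^mV_{+,i}$ into $f^{m-k}V_{+,i}$, which meets $V_{+,i}$ only when $m=k$, in which case it inverts $f^k|_{V_{+,i}}$. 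Consequently, applied to $f^mw$ with $w\in V_{+,i}$, only the term $k=m$ survives and produces $f^{m+1}t_i(w)=x_{ii}(f^mw)$; the value $m=D_i$ corresponds to the excluded index $k=D_i$, consistently, since $f^{D_i+1}$ vanishes on $V_i$ and $x_{ii}(f^{D_i}w)=0$ as well. This is exactly the claimed identity. The one place demanding care—and the main (if modest) obstacle—is this last bookkeeping, namely checking that on each graded piece precisely one summand contributes; this is where the pyramid combinatorics encoded in \eqref{eq:vi} and \eqref{20180219:eq3} are genuinely used, while the shape $\sum f^{k+1}t_i(f\transpose)^k$ itself is forced by commutation with $f$ together with the degree-lowering property.
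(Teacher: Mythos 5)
Your proof is correct and follows essentially the same route as the paper's: both arguments reduce $\id_{V_i}x\id_{V_i}$ to its action on the top graded piece $V_{+,i}$ using $[x,f]=0$, and both use the identities \eqref{20180219:eq3} to invert $f$ there (your $t_i=(f|_{V_{+,i}})^{-1}x_{ii}|_{V_{+,i}}$ coincides with the paper's $t_i=f\transpose x_0$). The only difference is presentational — the paper solves the recursion $x_{k+1}=fx_kf\transpose$ on the components, while you invoke the free $\mb F[f]$-module structure and then verify the closed formula directly; you are also somewhat more careful about the degenerate case $p_i=1$ and the extension convention for $t_i$.
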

\begin{proof}
By the direct sum \eqref{eq:vi} we can write $\id_{V_i}x\id_{V_i}=\sum_{k=0}^{D_i-1}x_k$, where $x_k\in\Hom(f^k V_{+,i},f^{k+1}V_{+,i})$.
The condition $[x,f]=0$ gives
\begin{equation}\label{eq:A}
fx_k=x_{k+1}f\in\Hom(f^kV_{+,i},f^{k+2}V_{+,i})\,,
\end{equation}
for every $k=0,\dots,D_i-2$.
Multiplying both sides of \eqref{eq:A} on the right by $f\transpose$ and using \eqref{20180219:eq3}
we get
$$fx_kf\transpose=x_{k+1}\id_{fV}=x_{k+1}\,,$$
for $k=0,\dots,D_i-2$. A recursive solution to these equations is
$$
x_k=f^kx_0(f\transpose)^k\,,
\qquad
k=1,\dots,D_i-1\,.
$$
Letting $t_i=f\transpose x_0\in\End(V_{+,i})$ we get the claim.
\end{proof}

In order to apply, in the following sections, Proposition \ref{prop:linalg} we need the following result.
\begin{lemma}\label{prop:adfinv}
Let $x\in\mf g_{-1}$ and assume that $0\neq x\in\Hom(V_j,V_i)$. Then
\begin{equation}\label{eq:adfinv}
(\ad f)^{-1}x=\left\{
\begin{array}{ll}
\displaystyle{\sum_{k\in\mb Z_{+}}(f\transpose)^{k+1}xf^k\,,
}
&
\displaystyle{i<j\,,}
\\
\displaystyle{-\sum_{k\in\mb Z_{+}}f^{k}x(f\transpose)^{k+1}\,,}
&
\displaystyle{i>j\,.}
\end{array}\right.
\end{equation}
(Note that, since $x\in\mf g_{-1}$, it must be $i\neq j$.)
\end{lemma}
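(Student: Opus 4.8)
The plan is to reduce everything to the $\mf{sl}_2$-representation theory of the grading, together with the ``shift operator'' identities \eqref{20180219:eq3}. First I would record that $(\ad f)^{-1}$ is genuinely well defined on $\mf g_{-1}$: by the representation theory of the triple $\mf s$, the map $\ad f\colon\mf g_k\to\mf g_{k-2}$ is injective for $k\ge1$ and surjective for $k\le1$, hence $\ad f\colon\mf g_1\to\mf g_{-1}$ is a bijection. Thus, to prove \eqref{eq:adfinv} it suffices to check that the claimed right-hand side, which I call $y$, lies in $\mf g_1$ and satisfies $[f,y]=x$. That $y\in\mf g_1\cap\Hom(V_j,V_i)$ is immediate from a degree count: each summand $(f\transpose)^{k+1}xf^k$ (resp. $f^kx(f\transpose)^{k+1}$) raises the $\ad h$-degree of $x$ by $2$, and the sum is finite because $f^k$ and $(f\transpose)^k$ vanish on $V_j$ once $k>D_j$, by \eqref{eq:vi}.

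The main computational device is to track, using the eigenspace decomposition \eqref{eq:vi}, how $x$ moves the ``columns'' $f^mV_{+,j}$. Since $x\in\mf g_{-1}$, comparing $h$-eigenvalues shows that $x$ carries $f^mV_{+,j}$ into $f^{m+\delta}V_{+,i}$ for the fixed shift $\delta=\frac{D_i-D_j+1}{2}$ (an integer, since $x\neq0$ forces $p_i,p_j$ to have opposite parity). When $i<j$ we have $D_i>D_j$, so $\delta\ge1$. For this case I would compute $[f,y]=fy-yf$ directly: reindexing gives $yf=\sum_{k\ge1}(f\transpose)^kxf^k$, while $fy=\sum_{k\ge0}f(f\transpose)^{k+1}xf^k$. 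The key step is the collapse $f(f\transpose)^{k+1}xf^k=(f\transpose)^kxf^k$, which holds because $(f\transpose)^kxf^k$ sends $f^mV_{+,j}$ into $f^{m+\delta}V_{+,i}$ with $m+\delta\ge1$, so its image lies in $fV$, on which $ff\transpose=\id$ by \eqref{20180219:eq3}. The two sums then telescope to $[f,y]=(f\transpose)^0xf^0=x$.

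The case $i>j$ is the mirror image, now with $\delta\le0$ and using $f\transpose f=\id$ on $f\transpose V$ in place of $ff\transpose=\id$ on $fV$. Here I would reindex $fy=-\sum_{k\ge1}f^kx(f\transpose)^k$ and simplify $yf$ through $(f\transpose)^{k+1}f=(f\transpose)^k(f\transpose f)$, again aiming for a telescoping sum. The point I expect to be the main obstacle, and the only delicate one, is the boundary behaviour: the replacement $(f\transpose)^{k+1}f=(f\transpose)^k$ is valid only on $f\transpose V$, so one must control the correction term $(f\transpose)^k\id_{V_-}$ supported on $V_-=\Ker f$. The grading resolves this: for $i>j$ the bottom column $V_{-,j}=f^{D_j}V_{+,j}$ would be sent by $x$ into $f^{D_j+\delta}V_{+,i}$ with $D_j+\delta>D_i$, a column that does not exist, so $x$ annihilates $V_{-,j}$ and the correction vanishes.

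Finally, the same degree bookkeeping disposes of every other potential boundary term: whenever a power $f^k$ is applied past the top column of a block the result is forced out of range and hence zero, so no extra contributions survive the telescoping. With these vanishing statements in hand, both computations collapse to $[f,y]=x$, and by the bijectivity of $\ad f\colon\mf g_1\to\mf g_{-1}$ this identifies $y=(\ad f)^{-1}x$, proving \eqref{eq:adfinv}.
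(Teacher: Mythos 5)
Your proposal is correct and follows essentially the same route as the paper: both verify that $\ad f$ applied to the claimed series returns $x$, using $ff\transpose=\id_{fV}$ and $f\transpose f=\id_{f\transpose V}$ together with column bookkeeping (your shift $\delta$ plays the role of the paper's comparison of $k_i$ with $k_j$, after the paper reduces without loss of generality to $x$ supported on a single pair $\Hom(f^{k_j}V_{+,j},f^{k_i}V_{+,i})$) to kill the boundary terms. One small precision for the case $i>j$: the correction terms $f^kx(f\transpose)^k\id_{V_-}$ with $k\ge1$ vanish because the outer $f^k$ pushes the image into the nonexistent column $f^{D_j+\delta}V_{+,i}$ (exactly the general principle in your closing paragraph), not literally because $x$ annihilates $V_{-,j}$ --- that reason only covers the $k=0$ term.
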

\begin{proof}
Without loss of generality, let us assume that
$x\in\Hom(f^{k_j}V_{+,j},f^{k_i}V_{+,i})$.
Note that, since $x\in\mf{g}_{-1}$, we have
\begin{equation}\label{eq:B}
D_i-2k_i=D_j-2k_j-1\,.
\end{equation}
In particular, if $i<j$, then $D_i>D_j$ and therefore $k_i>k_j$. Conversely, if $i>j$, then
$D_i<D_j$ and $k_i<k_j$. Moreover, $D_j-k_j>D_i-k_i$.

First, let us assume that $i<j$. Applying $\ad f$ to the RHS of equation \eqref{eq:adfinv}
we get
\begin{align*}
&\sum_{k\in\mb Z_{+}}f(f\transpose)^{k+1}xf^k-\sum_{k\in\mb Z_{+}}(f\transpose )^{k+1}xf^{k}f
\\
&=x-\id_{V_+}\sum_{k\in\mb Z_{+}}(f\transpose)^{k}xf^k
=x-(f\transpose)^{k_i}xf^{k_i}\,.
\end{align*}
In the first equality we used the first equation in \eqref{20180219:eq3},
while, for the second equality, the operator $\id_{V_+}$ on the left forces $k=k_i$.
Since $i<j$, by the observation after equation \eqref{eq:B} we have $k_i>k_j$. Hence, $x f^{k_i}=0$ thus proving \eqref{eq:adfinv} in this case.

Next, let us assume that $i>j$ and let us apply $\ad f$ to the RHS of equation \eqref{eq:adfinv}:
\begin{align*}
&-\sum_{k\in\mb Z_{+}}ff^{k}x(f\transpose)^{k+1}
+\sum_{k\in\mb Z_{+}}f^{k}x(f\transpose)^{k+1} f
\\
&=x-\sum_{k\in\mb Z_{+}}f^{k}x(f\transpose)^k\id_{V_-}
=x-f^{D_j-k_j}x(f\transpose)^{D_j-k_j}\,.
\end{align*}
In the first equality we used the second equation in \eqref{20180219:eq3},
while, for the second equality, the operator $\id_{V_-}$ on the right forces $k=D_j-k_j$.
Since $i>j$, by the observation after equation \eqref{eq:B} we have $D_j-k_j>D_i-k_i$. Hence, $f^{D_j-k_j}x=0$. This completes the proof.
\end{proof}
The next result will be used in Sections \ref{sec:5.2.3} and \ref{sec:5.4.2}.
\begin{lemma}\label{prop:20201029a}
Let $U\in\Hom(V[D],V[-D])$ and $E=A+B$, where $A\in\Hom(V[-D+1],V[D])$ and
$B\in\Hom(V[-D],V[D-1])$. Then, $\ad E\circ(\ad f)^{-1}\circ\ad E(U)=0$ if and only if
$$
[Af^{D-1}Bf^D,(f\transpose)^{D}U]=0.
$$
\end{lemma}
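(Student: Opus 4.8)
The plan is to compute $\ad E\circ(\ad f)^{-1}\circ\ad E(U)$ completely explicitly in the standard representation of $\mf g=\mf{gl}_N$ and to recognize the outcome as (the negative of) the commutator on the right-hand side. First I would record the degrees: since $U\in\Hom(V[D],V[-D])$ it lies in $\mf g_{-2D}=\mf g_{-d}$, while $A,B\in\mf g_{d-1}$, so that $\ad E(U)\in\mf g_{-1}$, then $(\ad f)^{-1}\ad E(U)\in\mf g_1$, and finally $\ad E\circ(\ad f)^{-1}\circ\ad E(U)\in\mf g_d=\Hom(V[-D],V[D])$. The key preliminary identifications are $V[D]=V_{+,1}$ and $V[-D]=V_{-,1}=f^DV_{+,1}$; moreover, when $p_1-1$ is a part of $\partition p$, say $p_{i_0}=p_1-1$ (so $D_{i_0}=D-1$), one has $V[D-1]=V_{+,i_0}$ and $V[-D+1]=V_{-,i_0}=f^{D-1}V_{+,i_0}$. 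If $p_1-1$ is not a part then $V[D-1]=V[-D+1]=0$, which forces $A=B=0$ and makes both sides of the asserted equivalence trivially true. I would also note the two mutually inverse isomorphisms $(f\transpose)^Df^D=\id_{V[D]}$ and $f^D(f\transpose)^D=\id_{V[-D]}$, which follow from \eqref{20180219:eq3}.

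Next I would compute $\ad E(U)=[A+B,U]$. Since the image of $U$ lies in $V[-D]$ while $A$ is supported on $V[-D+1]$, we get $AU=0$; likewise $UB=0$, as the image of $B$ lies in $V[D-1]$ while $U$ is supported on $V[D]$. Hence $\ad E(U)=-UA+BU$, where $-UA\in\Hom(V_{-,i_0},V_{-,1})$ and $BU\in\Hom(V_{+,1},V_{+,i_0})$ are homogeneous elements of $\mf g_{-1}$ lying in the single blocks $\Hom(V_{i_0},V_1)$ and $\Hom(V_1,V_{i_0})$ respectively. Thus Lemma \ref{prop:adfinv} applies directly: since $D_1>D_{i_0}$, the element $-UA$ falls in the case $i<j$ and $BU$ in the case $i>j$, giving the finite sums (finite because $f$ is nilpotent)
\[
(\ad f)^{-1}\ad E(U)=-\sum_{k\ge0}(f\transpose)^{k+1}UAf^k-\sum_{k\ge0}f^kBU(f\transpose)^{k+1}=:Y_1+Y_2.
\]

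I would then apply $\ad E=\ad(A+B)$ to $Y=Y_1+Y_2$ and show that almost everything collapses. Since $Y\in\mf g_1$ raises the $h$-eigenvalue by $1$ and $V[D+1]=V[-D-1]=0$, the products $YA$ and $BY$ vanish, so $\ad E(Y)=AY-YB$. Tracking $h$-eigenspaces through the two sums shows $AY_1=0$ and $Y_2B=0$, and that only the single index $k=D-1$ survives in $AY_2$ and in $Y_1B$, whence
\[
\ad E(Y)=-Af^{D-1}BU(f\transpose)^D+(f\transpose)^DUAf^{D-1}B\in\Hom(V[-D],V[D]).
\]
Finally I would pre-compose with the isomorphism $f^D\colon V[D]\to V[-D]$: using $(f\transpose)^Df^D=\id_{V[D]}$ and $f^D(f\transpose)^D=\id_{V[-D]}$, the two terms collapse to $-PQ+QP$, where $P=Af^{D-1}Bf^D$ and $Q=(f\transpose)^DU$ are exactly the endomorphisms of $V[D]$ in the statement. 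Since $f^D$ is invertible, $\ad E(Y)=0$ if and only if $\ad E(Y)\circ f^D=-[P,Q]=0$, i.e. $[Af^{D-1}Bf^D,(f\transpose)^DU]=0$, as claimed.

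The main obstacle, and the step requiring the most care, is the bookkeeping of the last two paragraphs: correctly matching the eigenspaces $V[\pm D]$ and $V[D-1],V[-D+1]$ with the rectangle-column subspaces $V_{\pm,1}$ and $V_{\pm,i_0}$ so that Lemma \ref{prop:adfinv} applies verbatim, selecting the correct branch ($i<j$ versus $i>j$) for each of $-UA$ and $BU$, and verifying that after the second application of $\ad E$ precisely one summand ($k=D-1$) survives in each surviving term. Everything else is a routine, if slightly lengthy, manipulation with $f$, $f\transpose$ and the identities \eqref{20180219:eq3}.
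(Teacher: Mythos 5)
Your proposal is correct and follows essentially the same route as the paper's proof: compute $[E,U]=BU-UA$ (the paper writes the opposite sign, which is immaterial), observe these lie in single blocks $\Hom(V_1,V_2)$ and $\Hom(V_2,V_1)$ so that Lemma \ref{prop:adfinv} applies, apply $\ad E$ again so that only the $k=D-1$ terms survive, and conclude via the isomorphism $(f\transpose)^D:V[-D]\to V[D]$. Your write-up merely adds the (trivial) degenerate case $V[D-1]=0$ and spells out the eigenspace bookkeeping that the paper leaves implicit.
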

\begin{proof}
We have that $[E,U]=UA-BU$. Since $UA\in\Hom(V_2,V_1)$ and $BU\in\Hom(V_1,V_2)$,
using Lemma \ref{prop:adfinv} we have that
\begin{align*}
&\left(\ad E\circ(\ad f)^{-1}\circ\ad E\right)(U)=\sum_{k\in\mb Z_+}[E,
(f\transpose)^{k+1}UAf^{k}+f^{k}BU(f\transpose)^{k+1}]
\\
&=Af^{D-1}BU(f\transpose)^{D}-(f\transpose)^{D}UAf^{D-1}B
\,.
\end{align*}
The claim follows from the fact that $(f\transpose)^{D}:V[-D]\to V[D]$ is an isomorphism with inverse $f^{D}$.
\end{proof}
The next result will be used in Sections \ref{sec:5.2.3} and \ref{subs:sodd}.
\begin{lemma}\label{prop:20201029b}
Let $U=X+Y$, where $X\in\Hom(V[D],V[-D+1])$, $Y\in\Hom(V[D-1],V[-D])$, and let $E=A+B+C$,
where $A\in\Hom(V[-D],V[D-2])$, $B\in\Hom(V[-D+2],V[D])$ and $C\in\Hom(V[-D+1],V[D-1])$. Then, $\ad E\circ(\ad f)^{-1}\circ\ad E(U)=0$ if and only if
\begin{equation}\label{20201029:eq1}
\begin{split}
&(f\transpose)^DYCf^{D-1}C=(f\transpose\id_{fV}A+B\id_{f\transpose V}f\transpose)YC+B\id_{V_-}f^{D-2}\id_{V_+}AY\,,
\\
&Cf^{D-1}CX(f\transpose )^D=CX(f\transpose\id_{fV}A+B\id_{f\transpose V}f\transpose)+(f\transpose)^{D-1}XB\id_{V_-}f^{D-2}\id_{V_+}A\,.
\end{split}
\end{equation}
\end{lemma}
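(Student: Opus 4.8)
The plan is to evaluate the operator $\ad E\circ(\ad f)^{-1}\circ\ad E$ on $U$ explicitly, following the same scheme as the proof of Lemma~\ref{prop:20201029a}, and to recover \eqref{20201029:eq1} as the vanishing of the two homogeneous components of the answer that are allowed to be nonzero. Throughout I would track the $\ad h$-grading, i.e. the decomposition of every operator into its $\Hom(V[k],V[\ell])$ pieces.

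First I would compute the inner bracket $[E,U]$ by degree bookkeeping alone. Writing $E=A+B+C$ and $U=X+Y$, and using that each summand is supported on a single pair of eigenspaces, most of the resulting products vanish for eigenvalue reasons, leaving
\[
[E,U]=AY-XB+CX-YC\in\mf g_{-1},
\]
with $AY\in\Hom(V[D-1],V[D-2])$, $XB\in\Hom(V[-D+2],V[-D+1])$, $CX\in\Hom(V[D],V[D-1])$ and $YC\in\Hom(V[-D+1],V[-D])$.

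Next I would apply $(\ad f)^{-1}$ to each of these four terms via Lemma~\ref{prop:adfinv}. This is the delicate step. Each term must be split into its components $\Hom(V_j,V_i)$ along the rectangles of the pyramid, and for each the correct branch of \eqref{eq:adfinv} — that is, whether the powers of $f,f\transpose$ sit on the left or on the right, and with which sign — is governed by whether $i<j$ or $i>j$. Since the eigenspaces $V[-D+\cdots]$ and $V[D-\cdots]$ typically meet more than one rectangle, a single term such as $XB$ contributes through \emph{both} branches; the two geometric series then truncate at the values of $k$ imposed by the eigenvalue floor/ceiling and by \eqref{20180219:eq3}. It is exactly these truncations and branch selections that produce the projections $\id_{V_+},\id_{V_-}$ (which select rectangle components and cannot be dropped) and $\id_{fV},\id_{f\transpose V}$ (the images of $ff\transpose$ and $f\transpose f$) appearing in \eqref{20201029:eq1}. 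Getting this bookkeeping right — in particular keeping the cross-branch contribution $(f\transpose)^{D-1}XB\,\id_{V_-}f^{D-2}\id_{V_+}A$, which a naive single-branch computation would miss — is the main obstacle.

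Finally I would apply $\ad E$ once more to $W:=(\ad f)^{-1}[E,U]\in\mf g_1$. The result lies in $\mf g_{d-1}=\mf g_{2D-1}$, and since every $\ad h$-eigenvalue lies in $[-D,D]$ the only components that can survive are $\Hom(V[-D],V[D-1])$ and $\Hom(V[-D+1],V[D])$. In the first, the outer $E$ can act only by $C$ on the left or by $A$ on the right; in the second only by $B$ on the left or by $C$ on the right. Collecting the surviving terms in each component, and simplifying the occurring products of $f$ and $f\transpose$ by \eqref{20180219:eq3} (using also that $(f\transpose)^{D}\colon V[-D]\to V[D]$ is an isomorphism, as in Lemma~\ref{prop:20201029a}), turns the vanishing of the two components into precisely the two identities of \eqref{20201029:eq1}. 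Hence $\ad E\circ(\ad f)^{-1}\circ\ad E(U)=0$ if and only if \eqref{20201029:eq1} holds, as claimed.
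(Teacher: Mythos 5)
Your plan is correct and is essentially the paper's own proof: compute $[E,U]=AY+CX-XB-YC$, split $A$ and $B$ into their $V_+/fV$ and $V_-/f^\intercal V$ components so that each piece of $[E,U]$ lands in a single $\Hom(V_j,V_i)$ and the appropriate branch of Lemma \ref{prop:adfinv} applies, then apply $\ad E$ again and read off the vanishing of the only two admissible components $\Hom(V[-D],V[D-1])$ and $\Hom(V[-D+1],V[D])$ of $\mf g_{2D-1}$. You correctly single out the genuinely delicate points (the rectangle-wise branch selection, the truncation of the geometric series, and the cross term $(f^\intercal)^{D-1}XB\id_{V_-}f^{D-2}\id_{V_+}A$), so the only thing separating your write-up from the paper's is actually carrying out the displayed computation.
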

\begin{proof}
Writing $A=\id_{V_+}A+\id_{fV}A$ and
$B=B_{\id_{V_-}}+B\id_{f\transpose V}$,
we have that $[E,U]=\id_{V_+}AY+BX-XB\id_{f\transpose V}+\id_{fV}AY-YB-XB\id_{V_-}$.
Since $\id_{V_+}AY+BX-XB\id_{f\transpose V}\in\oplus_{i=1,2}\Hom(V_{i},V_{i+1})$ and $\id_{fV}AY-YB-XB\id_{V_-}\in\oplus_{i=1,2}\Hom(V_{i+1},V_i)$,
using Lem\-ma \ref{prop:adfinv} we have that
\begin{equation}\label{20201102:eq1}
\begin{split}
&\left((\ad f)^{-1}\circ\ad E\right)(U)=\sum_{k\in\mb Z_+}(f\transpose)^{k+1}(\id_{V_+}AY+CX-XB\id_{f\transpose V})f^k
\\
&-\sum_{k\in\mb Z_+}f^k(\id_{fV}AY-YC-XB\id_{V_-})(f\transpose)^{k+1}
\,.
\end{split}
\end{equation}
Finally, applying $\ad E$ to both side of equation \eqref{20201102:eq1} we get
\begin{equation}\label{20201102:eq2}
\begin{split}
&\left(\ad E\circ (\ad f)^{-1}\circ\ad E\right)(U)=-B\id_{f\transpose V}f\transpose YC-B\id_{V_-}f^{D-1}\id_{V_+}AY(f\transpose)^{D-1}
\\
&-Cf^{D-1}CX(f\transpose)^D+CXB\id_{f\transpose V}f\transpose+(f\transpose)^{D-1}XB\id_{V_-}f^{D-2}\id_{V_+}A
\\
&-f\transpose\id_{fV}AYC+(f\transpose)^DYCf^{D-1}C+CXf\transpose\id_{fV}A
\,.
\end{split}
\end{equation}
Note that
$$
\begin{array}{l}
\displaystyle{
B\id_{f\transpose V}f\transpose YB+B\id_{V_-}f^{D-1}\id_{V_+}AY(f\transpose)^{D-1}
}
\\
\displaystyle{
+f\transpose\id_{fV}AYC-(f\transpose)^DYCf^{D-1}C\in\Hom(V[-D+1],V[D])
}
\end{array}
$$
and
$$
\begin{array}{l}
\displaystyle
{Cf^{D-1}CX(f\transpose)^D-CXB\id_{f\transpose V}f\transpose-(f\transpose)^{D-1}XB\id_{V_-}f^{D-2}\id_{V_+}A
}
\\
\displaystyle{
-CXf\transpose\in\Hom(V[-D],V[D-1])\,.
}
\end{array}
$$
Hence, the RHS of \eqref{20201102:eq2} vanishes if and only if equation \eqref{20201029:eq1} holds.
\end{proof}
Finally, the last result of this section will be used in Section \ref{sec:5.5.2}..
\begin{lemma}\label{prop:20201029c}
Let $U=X+Y$, where $X\in\Hom(V[D],V[-D+2])$,  $Y\in\Hom(V[D-2],V[-D])$, and let $E=A+B$,
where $A\in\Hom(V[-D],V[D-3])$, $B\in\Hom(V[-D+3],V[D])$.
Let us assume that
\begin{equation}\label{20201117:eq2}
\id_{fV}A=0\,,\qquad B\id_{f\transpose V}=0\,.
\end{equation}
Then, $\ad E\circ(\ad f)^{-1}\circ\ad E(U)=0$ if and only if
$$
(f\transpose)^{D-2}XBf^{D-3}A=Bf^{D-3}AY(f\transpose)^{D-2}=0\,.
$$
\end{lemma}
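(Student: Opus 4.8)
The plan is to follow verbatim the computational scheme of the proofs of Lemmas \ref{prop:20201029a} and \ref{prop:20201029b}: first compute $[E,U]$, then invert $\ad f$ on it via Lemma \ref{prop:adfinv}, then apply $\ad E$ a second time and read off when the result vanishes. First I would compute $[E,U]=[A+B,X+Y]$. Comparing the $h$-eigenvalues of the source and target of each of the products $AX,XA,AY,YA,BX,XB,BY,YB$, one sees that all of them vanish on eigenvalue grounds except $AY$ and $XB$, so that $[E,U]=AY-XB\in\mf g_{-1}$. Here the hypotheses \eqref{20201117:eq2} enter decisively: $\id_{fV}A=0$ forces $\im A\subseteq V_+$, hence into the tops $\bigoplus_{D_i=D-3}V_{+,i}$ of the rectangles with $D_i=D-3$, while $B\id_{f\transpose V}=0$ forces $B$ to be supported on $V_-$, i.e. on the bottoms $\bigoplus_{D_i=D-3}V_{-,i}$ of those same rectangles. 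Since the extreme eigenvalues $V[\pm D]$ are attained only by the longest rectangle $V_1$, the map $AY$ sends $V_1$ into the shorter rectangles and $XB$ sends them back into $V_1$; in the notation of Lemma \ref{prop:adfinv} the former is of type $\Hom(V_j,V_i)$ with $i>j$ and the latter with $i<j$.

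Next I would apply Lemma \ref{prop:adfinv} to each summand, obtaining
\[
(\ad f)^{-1}[E,U]=-\sum_{k\in\mb Z_+}f^k(AY)(f\transpose)^{k+1}-\sum_{k\in\mb Z_+}(f\transpose)^{k+1}(XB)f^k,
\]
whose two sums I denote by $P$ and $Q$. Applying $\ad E=\ad A+\ad B$ to $-P-Q$ produces the eight products $AP,PA,BP,PB,AQ,QA,BQ,QB$, and a term-by-term eigenvalue and support analysis is expected to kill six of them: $A$ acts only on $V_{-,1}$ and $B$ has image only in $V_{+,1}$, whereas $\im P$ and the domain of $Q$ lie in the shorter rectangles, so $AP=PA=PB=AQ=BQ=QB=0$; moreover in $BP$ (resp. $QA$) only the summand with $k=D-3$ survives, since $B$ (resp. $A$) can act only on the bottom (resp. top) of a rectangle with $D_i=D-3$. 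This should leave exactly
\[
\ad E\circ(\ad f)^{-1}\circ\ad E(U)=(f\transpose)^{D-2}XBf^{D-3}A-Bf^{D-3}AY(f\transpose)^{D-2}.
\]

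Finally, I would observe that the two surviving terms lie respectively in $\Hom(V[-D],V[D-2])$ and $\Hom(V[-D+2],V[D])$, two subspaces with different source eigenspaces which therefore intersect trivially. Hence their difference vanishes if and only if each term vanishes separately, which is precisely the asserted pair of equations $(f\transpose)^{D-2}XBf^{D-3}A=0$ and $Bf^{D-3}AY(f\transpose)^{D-2}=0$.

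The step I expect to be the main obstacle is the bookkeeping in the second display: carefully justifying, in the presence of possibly several rectangles with $D_i=D-3$ (and of rectangles with $D_i=D-2$ into which $X$ and $Y$ may spread), that exactly the two displayed terms survive and that no cross-terms between distinct rectangles spoil the clean form. The parity of $h$-eigenvalues together with the two support conditions \eqref{20201117:eq2} are what make the vanishings robust, but organizing this into a clean uniform argument, rather than a rectangle-by-rectangle check, is the delicate point.
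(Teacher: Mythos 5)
Your proposal is correct and follows essentially the same route as the paper's proof: compute $[E,U]=AY-XB$, invert $\ad f$ via Lemma \ref{prop:adfinv} using the support conditions \eqref{20201117:eq2}, apply $\ad E$ again to isolate the two surviving terms $(f\transpose)^{D-2}XBf^{D-3}A$ and $-Bf^{D-3}AY(f\transpose)^{D-2}$, and conclude because they lie in the distinct spaces $\Hom(V[-D],V[D-2])$ and $\Hom(V[-D+2],V[D])$. The paper compresses the term-by-term elimination into ``a straightforward computation,'' so your more explicit bookkeeping is just a fuller rendering of the same argument.
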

\begin{proof}
Note that $[E,U]=AY-XB$. Using the first equation in \eqref{20201117:eq2} and \eqref{eq:adfinv}
we have that
$$
(\ad f)^{-1}(AY)=-\sum_{k\in\mb Z_+}f^kAY(f\transpose)^{k+1}\,,
$$
while using the second equation in \eqref{20201117:eq2} and \eqref{eq:adfinv} we have that
$$
(\ad f)^{-1}(XB)=-\sum_{k\in\mb Z_+}(f\transpose)^{k+1}XBf^{k}\,.
$$
Hence, by a straightforward computation we get
$$
\ad E\circ(\ad f)^{-1}\ad E(U)=-Bf^{D-3}AY(f\transpose)^{D-2}+(f\transpose)^{D-2}XBf^{D-3}A
\,.
$$
Equation \eqref{20201117:eq2} follows since
$Bf^{D-3}AY(f\transpose)^{D-2}\in\Hom(V[-D+2],V[D])$ and
$(f\transpose)^{D-2}XBf^{D-3}A\in\Hom(V[-D],V[D-2])$.
\end{proof}

\subsubsection{The centralizer $\mf z(\mf g_{\ge2})$}
From \eqref{eq:adx2} we have that a homogeneous element $E\in\mf g_k$, $k\in\mb Z$, has the form
\begin{equation}\label{eq:E}
E=\sum_{x_\alpha-x_\beta=k}c_{\alpha\beta}E_{\alpha\beta}\,,\quad c_{\alpha\beta}\in\mb F
\,.
\end{equation}
The goal of this section is to describe the centralizer $\mf z(\mf g_{\ge2})$ of
$\mf g_{\ge2}$ in $\mf g$.
\begin{lemma}\label{lemma1}
If $[E,\mf g_{\ge2}]=0$, then $E\in \mf g_d\oplus\mf g_{d-1}\oplus W\oplus\mb F\id_{V}$, where
$$
W=\Hom(V[-D+1],V[D-1])\subset\mf g_{d-2}\,.
$$
\end{lemma}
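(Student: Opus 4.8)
The plan is to argue directly in the elementary-matrix basis. By \eqref{eq:adx2} the subalgebra $\mf g_{\ge2}$ is spanned by the matrix units $E_{\alpha,\beta}$ with $x_\alpha-x_\beta\ge2$, so the condition $[E,\mf g_{\ge2}]=0$ is equivalent to $[E,E_{\alpha,\beta}]=0$ for every such pair. Writing $E=\sum_{\gamma,\delta}c_{\gamma\delta}E_{\gamma,\delta}$ and using $[E,E_{\alpha,\beta}]=\sum_\gamma c_{\gamma\alpha}E_{\gamma,\beta}-\sum_\delta c_{\beta\delta}E_{\alpha,\delta}$, I would read off, for each fixed $(\alpha,\beta)$ with $x_\alpha-x_\beta\ge2$, three families of scalar conditions: $c_{\gamma\alpha}=0$ for all $\gamma\ne\alpha$, $c_{\beta\delta}=0$ for all $\delta\ne\beta$, and $c_{\alpha\alpha}=c_{\beta\beta}$.

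First I would treat the off-diagonal entries. Because row $1$ of the pyramid contains boxes at the extreme coordinates $x=-D$ and $x=D$, every box $\alpha$ with $x_\alpha\ge-D+2$ admits some box $\beta$ with $x_\alpha-x_\beta\ge2$ (take $\beta$ at $x=-D$), so the first family kills its entire column off the diagonal; dually, every box $\beta$ with $x_\beta\le D-2$ admits a box $\alpha$ with $x_\alpha-x_\beta\ge2$ (take $\alpha$ at $x=D$), and the second family kills its entire row off the diagonal. Consequently a surviving off-diagonal entry $c_{\gamma\delta}$ must have source $x_\delta\in\{-D,-D+1\}$ and target $x_\gamma\in\{D-1,D\}$, i.e. the off-diagonal part of $E$ lies in $\Hom(V[-D]\oplus V[-D+1],\,V[D-1]\oplus V[D])$. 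Splitting this by degree gives exactly $\mf g_d=\Hom(V[-D],V[D])$, the two pieces $\Hom(V[-D],V[D-1])$ and $\Hom(V[-D+1],V[D])$ of $\mf g_{d-1}$, and $W=\Hom(V[-D+1],V[D-1])\subset\mf g_{d-2}$.

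It then remains to show the diagonal of $E$ is scalar. The third family gives $c_{\alpha\alpha}=c_{\beta\beta}$ whenever $|x_\alpha-x_\beta|\ge2$, which I would interpret as connectivity of the graph on the boxes having an edge between any two boxes whose coordinates differ by at least $2$. Using the two hubs at $x=\pm D$ (joined to each other since $2D\ge2$) and the fact that for $D\ge2$ every coordinate in $[-D,D]$ lies at distance $\ge2$ from at least one hub, the graph is connected, all $c_{\alpha\alpha}$ agree, and the diagonal equals $\lambda\,\id_V$. Combining the two parts yields $E\in\mb F\id_V\oplus\mf g_d\oplus\mf g_{d-1}\oplus W$; the case $\mf g=\mf{sl}_N$ then follows by intersecting with $\mf{sl}_N$.

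The computation itself is routine; the one delicate point is this connectivity/diagonal step, together with the bookkeeping of the boundary coordinate sets $\{-D,-D+1\}$ and $\{D-1,D\}$, which are disjoint precisely when $D\ge2$—this disjointness is what produces the clean direct-sum description and is where the depth hypothesis enters. The single degenerate case $D=1$ (depth $d=2$) should be noted separately: there the coordinate $0$ is isolated in the graph, so the diagonal on $V[0]$ is unconstrained, but since then $d-2=0$ one has $W=\Hom(V[0],V[0])$, which absorbs exactly this freedom, so the stated inclusion still holds.
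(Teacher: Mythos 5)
Your argument is correct and is essentially the paper's proof: both expand $E$ in the matrix-unit basis and test against commutators with elements $E_{\alpha,\beta}\in\mf g_{\ge2}$ having one index at an extreme coordinate $x=\pm D$, concluding that the off-diagonal part is supported on $\Hom(V[-D]\oplus V[-D+1],\,V[D-1]\oplus V[D])=\mf g_d\oplus\mf g_{d-1}\oplus W$ and that the diagonal part is scalar. The differences are purely organizational — you work with arbitrary rather than homogeneous $E$, phrase the scalar-diagonal step as graph connectivity, and spell out the degenerate case $D=1$ which the paper glosses over — and none of them changes the substance.
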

\begin{proof}
Since the adjoint action of $e\in\mf g_2$ is injective on $\mf g_{<0}$, we obviously have
$\mf z(\mf g_{\ge2})\subset\mf g_{\ge0}$. Let $\tilde\alpha,\tilde\beta\in I$ be such that $x_{\tilde\alpha}=D$ and
$x_{\tilde\beta}\le D-2$ (i.~e. the box $\tilde\beta$ is completely at the left of the box $\tilde\alpha$).
Then $E_{\tilde\alpha\tilde\beta}\in\mf g_{\ge2}$. Hence, letting $E\in\mf g_k$ be as in equation \eqref{eq:E}, we have
\begin{equation}\label{eq1}
0=[E_{\tilde\alpha\tilde\beta},E]=\sum_{x_{\alpha}-x_{\beta}=k}c_{\alpha\beta}[E_{\tilde\alpha\tilde\beta},E_{\alpha\beta}]
=\sum_{x_{\tilde\beta}-x_{\beta}=k}c_{\tilde\beta\beta}E_{\tilde\alpha\beta}
-\sum_{x_{\alpha}-D=k}c_{\alpha\tilde\alpha}E_{\alpha\tilde\beta}\,.
\end{equation}
If $k\ge1$, then the condition $x_{\alpha}-D=k\ge1$ implies that $x_{\alpha}\ge1+D$ thus the second sum
in \eqref{eq1} is empty. Hence, from equation \eqref{eq1} we have that $c_{\alpha\beta}=0$ if $x_{\alpha}\le D-2$.
If $k=0$, then equation \eqref{eq1} becomes
$$
0=\sum_{x_{\tilde\beta}=x_{\beta}}c_{\tilde\beta\beta}E_{\tilde\alpha\beta}
-\sum_{x_{\alpha}=D}c_{\alpha\tilde\alpha}E_{\alpha\tilde\beta}\,,
$$
from which follows that, for $x_\alpha,x_\beta\neq D-1$, we have $c_{\alpha,\beta}=\delta_{\alpha\beta}\lambda$, for some
$\lambda\in\mb F$.
Similarly, letting $\tilde\alpha,\tilde\beta\in I$ be such that $x_{\tilde \alpha}=-D$ and $x_{\tilde\beta}\ge-D+2$
(i.~e. the box $\tilde\beta$ is completely at the right of the box $\tilde\alpha$), the
condition $[E,E_{\tilde\alpha\tilde\beta}]=0$ implies, for $k\ge1$, that $c_{\alpha\beta}=0$ if $x_{\beta}\ge -D+2$
and, for $k=0$, that $c_{\alpha,\beta}=\delta_{\alpha\beta}\lambda$ for $x_\alpha,x_\beta\neq -D+1$.
This proves that
$$
E\in\Span_{\mb F}\{\id_V,E_{\alpha\beta}\mid x_{\alpha}\ge D-1,x_{\beta}\le-D+1\}=\mf g_d\oplus\mf g_{d-1}\oplus W
\oplus\mb F \id_V
\,.
$$
\end{proof}
\begin{proposition}\label{lemma2}
The centralizer of $\mf g_{\ge2}$ in $\mf g$ is $\mf z(\mf g_{\ge2})=\mf g_d\oplus\mf g_{d-1}\oplus W\oplus\mb F\id_{V}$.
\end{proposition}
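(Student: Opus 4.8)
The plan is to prove the two inclusions separately. The inclusion $\mf z(\mf g_{\ge2})\subseteq\mf g_d\oplus\mf g_{d-1}\oplus W\oplus\mb F\id_V$ is exactly the content of Lemma \ref{lemma1}, which I would simply invoke. It therefore remains to establish the reverse inclusion, namely that every element of $\mf g_d\oplus\mf g_{d-1}\oplus W\oplus\mb F\id_V$ actually commutes with all of $\mf g_{\ge2}$.

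Since $\id_V$ is central, I would reduce to checking that the spanning elementary matrices commute with $\mf g_{\ge2}$. By the span description obtained at the end of the proof of Lemma \ref{lemma1}, the subspace $\mf g_d\oplus\mf g_{d-1}\oplus W$ is spanned by the $E_{\alpha\beta}$ with $x_\alpha\ge D-1$ and $x_\beta\le-D+1$, while $\mf g_{\ge2}$ is spanned by the $E_{\gamma\delta}$ with $x_\gamma-x_\delta\ge2$. Using the standard commutator formula $[E_{\alpha\beta},E_{\gamma\delta}]=\delta_{\beta\gamma}E_{\alpha\delta}-\delta_{\delta\alpha}E_{\gamma\beta}$, I would analyze the two terms. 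For the first term to survive one needs $\beta=\gamma$, which forces $x_\delta=x_\gamma-(x_\gamma-x_\delta)=x_\beta-k\le(-D+1)-2=-D-1$, below the minimal $h$-eigenvalue $-D$ and hence impossible. For the second term one needs $\delta=\alpha$, which forces $x_\gamma=x_\alpha+k\ge(D-1)+2=D+1$, above the maximal eigenvalue $D$, again impossible. Thus both terms vanish and each such $E_{\alpha\beta}$ centralizes $\mf g_{\ge2}$.

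The key input in this computation is that the $h$-eigenvalues $x_\alpha$ attached to the boxes of the pyramid range only over the interval $[-D,D]$, so that applying a raising operator of degree $\ge2$ to a box already at height $\ge D-1$, or a lowering operator of degree $\ge2$ to a box at height $\le-D+1$, necessarily leaves the pyramid and produces zero. I do not expect a genuine obstacle here: the hard half of the statement is the containment already established in Lemma \ref{lemma1}, and the present step is a short and direct verification from the commutation relations combined with the extremal-weight constraint. Combining the two inclusions then yields the claimed equality $\mf z(\mf g_{\ge2})=\mf g_d\oplus\mf g_{d-1}\oplus W\oplus\mb F\id_V$.
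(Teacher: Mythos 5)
Your proposal is correct and follows essentially the same route as the paper: both cite Lemma \ref{lemma1} for the inclusion $\mf z(\mf g_{\ge2})\subseteq\mf g_d\oplus\mf g_{d-1}\oplus W\oplus\mb F\id_V$ and then verify the reverse inclusion directly, the only difference being that the paper disposes of $\mf g_d\oplus\mf g_{d-1}\oplus\mb F\id_V$ by a degree count and treats $W$ by showing $TE_{\alpha\beta}=E_{\alpha\beta}T=0$, whereas you run one uniform elementary-matrix computation using the extremal-weight bounds $-D\le x_\gamma\le D$. Both verifications are correct and amount to the same observation.
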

\begin{proof}
By degree consideration and the fact that the identity is a central element we clearly have
$\mf g_d\oplus\mf g_{d-1}\oplus\mb F\id_V\subset\mf z(\mf g_{\ge2})$.
Let $T\in\Hom(V[-D+1],V[D-1])$, and let $E_{\alpha\beta}\in\mf g_{\ge2}$, with $x_\alpha-x_\beta\ge2$.
In particular, $x_\alpha\ge x_\beta+2>-D+1$, and $x_\beta\le x_\alpha-2<D-1$.
Hence, $\im(E_{\alpha\beta})\cap V[-D+1]=0$, so that $TE_{\alpha\beta}=0$. Similarly,
$\im(T)\subset\ker(E_{\alpha\beta})$, so that $E_{\alpha\beta}T=0$. Hence, $T\in\mf z(\mf g_{\ge2})$.
This, combined to Lemma \ref{lemma1}, completes the proof.
\end{proof}
Recalling the definition of integrable triples given in the Introduction, if
$E$ is an integrable element for $f$, then
$E\in\left(W\oplus\mf g_{d-1}\oplus\mf g_d\right)\cap\mf g_{\ge1}$ (note that $\mf g_{d-1}\oplus\mf g_d\subset\mf g_{\ge1}$ for $p_1\ge2$, and $W\subset\mf g_{\ge1}$ for $p_1\ge3$).

\subsubsection{The coisotropy condition}\label{sec:5.2.3}

An element $E\in\mf g_{d-1}$ can be uniquely decomposed as $E=a(f\transpose)^{D-1}+b(f\transpose)^D$, where
\begin{equation}\label{eq:aandb}
a\in\Hom(V[D-1],V[D])\quad\text{ and }\quad b\in\Hom(V[D],V[D-1]).
\end{equation}
\begin{proposition}\label{prop3}
Let $E=a(f\transpose)^{D-1}+b(f\transpose)^D\in\mf g_{d-1}$. The subspace $\mf g_1^E$ is
co\-i\-so\-tro\-pic with respect to the bilinear form \eqref{eq:skewform} if and only if
\begin{equation}
ab=\lambda\id_{V[D]}
\,,
\quad\lambda\in\mb F
\,.
\end{equation}
\end{proposition}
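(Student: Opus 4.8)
The plan is to turn the coisotropy condition into a purely linear-algebraic identity on $\End(V[D])$, using Proposition \ref{prop:linalg} to get rid of the symplectic form and then Lemma \ref{prop:20201029a} to compute the resulting operator explicitly. First I would invoke Proposition \ref{prop:linalg} with $j=d-1$: it says that $\mf g_1^E$ is coisotropic with respect to \eqref{eq:skewform} if and only if the map $\ad E\circ(\ad f)^{-1}\circ\ad E$ vanishes on $\mf g_{-j-1}=\mf g_{-d}$. By \eqref{eq:adx2} the only homogeneous component of degree $-d=-2D$ is $\Hom(V[D],V[-D])$, so $\mf g_{-d}=\Hom(V[D],V[-D])$ and the condition reads: $\ad E\circ(\ad f)^{-1}\circ\ad E(U)=0$ for every $U\in\Hom(V[D],V[-D])$.

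Next I would fit $E$ into the hypotheses of Lemma \ref{prop:20201029a}. Setting $A=a(f\transpose)^{D-1}$ and $B=b(f\transpose)^D$, a degree count shows $A\in\Hom(V[-D+1],V[D])$ and $B\in\Hom(V[-D],V[D-1])$, and since $E\in\mf g_{d-1}$ can only have components out of $V[-D]$ and $V[-D+1]$, we have $E=A+B$. Lemma \ref{prop:20201029a} then rephrases the vanishing, for fixed $U$, as $[Af^{D-1}Bf^D,(f\transpose)^DU]=0$. As $(f\transpose)^D\colon V[-D]\to V[D]$ is an isomorphism, when $U$ runs over $\Hom(V[D],V[-D])$ the element $(f\transpose)^DU$ runs over all of $\End(V[D])$; hence the coisotropy condition is equivalent to $Af^{D-1}Bf^D$ being central in $\End(V[D])$, i.e. to $Af^{D-1}Bf^D=\lambda\id_{V[D]}$ for some $\lambda\in\mb F$.

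Finally I would simplify $Af^{D-1}Bf^D=a\,(f\transpose)^{D-1}f^{D-1}\,b\,(f\transpose)^Df^D$. Here $V[D]$ consists of the right-most boxes of the length-$p_1$ rows and $V[D-1]$ of the right-most boxes of the length-$(p_1-1)$ rows; since $f^{D}$ (resp. $f^{D-1}$) carries such a box to the left end of its row and $(f\transpose)^{D}$ (resp. $(f\transpose)^{D-1}$) carries it back, the two ``return trips'' collapse to $(f\transpose)^Df^D=\id_{V[D]}$ and $(f\transpose)^{D-1}f^{D-1}=\id_{V[D-1]}$ (these can equally be obtained by iterating \eqref{20180219:eq3}). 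Thus $Af^{D-1}Bf^D=ab$ as an endomorphism of $V[D]$, and the condition becomes $ab=\lambda\id_{V[D]}$, which is exactly the assertion.

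The one point requiring genuine care is the bookkeeping in the last paragraph: one must correctly identify $V[D]$ and $V[D-1]$ in the pyramid and, crucially, check that $f^{D-1}$ does not prematurely annihilate $V[D-1]$, so that $(f\transpose)^{D-1}f^{D-1}$ really is $\id_{V[D-1]}$ rather than a proper projection. (This is where the hypothesis that $V[D-1]$ sits at the right edge of the length-$(p_1-1)$ rows is used.) Apart from this identification the argument is a straightforward chain of equivalences, the substantive algebra having already been packaged into Lemma \ref{prop:20201029a}.
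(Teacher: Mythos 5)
Your proposal is correct and follows essentially the same route as the paper: Proposition \ref{prop:linalg} to reduce coisotropy to the vanishing of $\ad E\circ(\ad f)^{-1}\circ\ad E$ on $\mf g_{-d}$, Lemma \ref{prop:20201029a} with $A=a(f\transpose)^{D-1}$, $B=b(f\transpose)^D$, and the observation that $(f\transpose)^D\mf g_{-d}=\End(V[D])$ forces $Af^{D-1}Bf^D=ab$ to be scalar. The only difference is that you spell out the collapse $(f\transpose)^Df^D=\id_{V[D]}$, $(f\transpose)^{D-1}f^{D-1}=\id_{V[D-1]}$ explicitly, which the paper leaves implicit.
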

\begin{proof}
By Lemma \ref{prop:20201029a} with $A=a(f\transpose)^{D-1}$ and $B=b(f\transpose)^D$ we have that
$\ad E\circ(\ad f)^{-1}\circ\ad E|_{\mf g_{-d}}=0$ if and only if
\begin{equation}\label{20200717:eq2}
[ab,(f\transpose)^{D}U]=0\,,
\end{equation}
for every $U\in\mf g_{-d}$. Note that $(f\transpose)^{D}\mf g_{-d}\cong\End(V[D])$. Hence, equation \eqref{20200717:eq2}
implies that $ab$ is a scalar. The claim follows from Proposition \ref{prop:linalg}.
\end{proof}
%

An element $E\in W\subset\mf g_{d-2}$ can be uniquely written as $E=c(f\transpose)^{D-1}$, where
\begin{equation}\label{eq:c}
c\in\End(V[D-1]).
\end{equation}
\begin{proposition}\label{prop4b}
Let $E=c(f\transpose)^{D-1}\in W$. The subspace $\mf g_1^E$ is
coisotropic with respect to the bilinear form \eqref{eq:skewform} if and only if
$c^2=0$.
\end{proposition}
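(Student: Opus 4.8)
The plan is to translate the coisotropy condition into a vanishing statement via Proposition \ref{prop:linalg} and then evaluate it explicitly using Lemma \ref{prop:20201029b}. Since $E=c(f\transpose)^{D-1}\in W\subset\mf g_{d-2}$, we are in the case $j=d-2$, so $-j-1=-d+1$, and Proposition \ref{prop:linalg} says that $\mf g_1^E$ is coisotropic if and only if $\ad E\circ(\ad f)^{-1}\circ\ad E|_{\mf g_{-d+1}}=0$. To apply Lemma \ref{prop:20201029b} I would take an arbitrary $U=X+Y\in\mf g_{-d+1}$ (every element of $\mf g_{-d+1}$ splits this way, as $\mf g_{-d+1}=\Hom(V[D],V[-D+1])\oplus\Hom(V[D-1],V[-D])$) and set $A=B=0$, $C=E=c(f\transpose)^{D-1}$, since $E\in\Hom(V[-D+1],V[D-1])$ is precisely the $C$-component. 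With $A=B=0$ the right-hand sides of both equations in \eqref{20201029:eq1} vanish, so the coisotropy condition collapses to $(f\transpose)^DY\,Cf^{D-1}C=0$ and $Cf^{D-1}C\,X(f\transpose)^D=0$ holding for all $X$ and $Y$.

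The heart of the argument is the identity $Cf^{D-1}C=c^2(f\transpose)^{D-1}$. To establish it I would first observe that $V[D-1]$ consists precisely of the rightmost boxes of the rows of length $D=p_1-1$: rows of length $p_1$ only occupy $h$-weights congruent to $D$ modulo $2$, and no row is longer, so no box of any length-$p_1$ row sits at weight $D-1$, and rows shorter than $D$ never reach that weight. In particular $V[D-1]\subset V_+$. For such a rightmost box $v$, applying $f^{D-1}$ shifts it to the leftmost box of its length-$D$ row without annihilating it, and $(f\transpose)^{D-1}$ shifts it back, so that $(f\transpose)^{D-1}f^{D-1}=\id_{V[D-1]}$; concretely this follows by iterating $f\transpose f=\id_{f\transpose V}$ from \eqref{20180219:eq3} along each length-$D$ row. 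Substituting $C=c(f\transpose)^{D-1}$ then gives $Cf^{D-1}C=c\,(f\transpose)^{D-1}f^{D-1}\,c\,(f\transpose)^{D-1}=c^2(f\transpose)^{D-1}$.

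Finally I would feed this back into the two equations. Both $(f\transpose)^{D-1}\colon V[-D+1]\to V[D-1]$ and $(f\transpose)^D\colon V[-D]\to V[D]$ are isomorphisms, with inverses the corresponding powers of $f$ again by \eqref{20180219:eq3}. Hence, as $X$ and $Y$ range over all of $\Hom(V[D],V[-D+1])$ and $\Hom(V[D-1],V[-D])$, the composites $(f\transpose)^{D-1}X(f\transpose)^D$ and $(f\transpose)^DY$ range over all of $\Hom(V[-D],V[D-1])$ and $\Hom(V[D-1],V[D])$ respectively. Since $V[-D]\ne0$ (it contains the leftmost box of a longest row), each of the reduced equations $c^2(f\transpose)^{D-1}X(f\transpose)^D=0$ and $(f\transpose)^DY\,c^2(f\transpose)^{D-1}=0$ holds for all $X,Y$ if and only if $c^2=0$, which is exactly the assertion. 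I expect the main obstacle to be the identity $(f\transpose)^{D-1}f^{D-1}=\id_{V[D-1]}$: everything hinges on correctly identifying $V[D-1]$ with the rightmost boxes of the length-$D$ rows and checking that the left and right shifts neither fall off the pyramid nor leave that weight space.
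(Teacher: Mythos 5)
Your proposal is correct and follows essentially the same route as the paper: Proposition \ref{prop:linalg} to translate coisotropy into the vanishing of $\ad E\circ(\ad f)^{-1}\circ\ad E$ on $\mf g_{-d+1}$, then Lemma \ref{prop:20201029b} with $A=B=0$ and $C=c(f\transpose)^{D-1}$, reduction to $c^2(f\transpose)^{D-1}X(f\transpose)^D=0$ via $(f\transpose)^{D-1}f^{D-1}=\id_{V[D-1]}$, and the isomorphism properties of the powers of $f\transpose$ to conclude $c^2=0$. You merely supply more detail than the paper (the identification of $V[D-1]$ with $V_{+,2}$ and the second, adjoint equation in \eqref{20201029:eq1}), which the paper leaves implicit.
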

\begin{proof}
By Lemma \ref{prop:20201029b} with $A=B=0$ and $C=c(f\transpose)^{D-1}$ we have that
$\ad E\circ(\ad f)^{-1}\circ\ad E|_{\mf g_{-d+1}}=0$ if and only if $c^2(f\transpose)^{D-1}X(f\transpose)^D=0$, for every $X\in\Hom(V[D],V[-D+1])$. Since $(f\transpose)^D:V[-D]\to V[D]$
and
$(f\transpose)^{D-1}:V[-D+1]\to V[D-1]$ are isomorphisms, this condition is the same as $c^2=0$. The claim follows from Proposition \ref{prop:linalg}.
\end{proof}
%

\subsubsection{Integrable $E\in\mf g_d$}\label{sec:gd-gln}
Let $E\in\mf g_d$.
In this section we will use the decomposition (see \eqref{eq:vi})
$$
V=V_{1}\oplus V_{\ge2}\,,
\quad\text{where }V_{\ge2}=\bigoplus_{i=2}^sV_{i}\,.
$$

Note that $(f+E)\id_{V_{\ge2}}=f\id_{V_{\ge2}}$. Hence, $(f+E)\id_{V_{\ge2}}$ is nilpotent.
Furthermore, note that $E$ can be uniquely written as $E=u(f\transpose)^D$, where $u\in\End(V[D])$.
Let
$$
U=\bigoplus_{k=0}^{\infty}f^k\im u\subset V_1\,,\quad
\tilde U=\bigoplus_{k=0}^{\infty}f^k\Ker u
\subset V_1\,.
$$
\begin{lemma}\label{prop1}
Let $E=u(f\transpose)^D\in\mf g_d$. If $u$ is semisimple, then the nilpotent part of $f+E$ is
$(f+E)_n=f\id_{\widetilde U\oplus V_{\ge2}}$.
\end{lemma}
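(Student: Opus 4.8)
The plan is to analyze $f+E$ separately on the two invariant summands of $V=V_1\oplus V_{\ge2}$. On $V_{\ge2}$ we already observed that $(f+E)\id_{V_{\ge2}}=f\id_{V_{\ge2}}$ is nilpotent, so all the work is on $V_1$. First I would record the ``string'' structure of $V_1=\bigoplus_{k=0}^Df^kV[D]$: for $v\in V[D]$ the vectors $v,fv,\dots,f^Dv$ form a Jordan string of length $p_1=D+1$ for $f$, with $f^{D+1}v=0$ since $f^Dv\in V[-D]\subseteq V_-=\Ker f$. The crucial point is to identify how $E=u(f\transpose)^D$ acts on such a string. Using $(f\transpose)^Df^D=\id_{V[D]}$ (a consequence of \eqref{20180219:eq3}, valid because $f^kV[D]\not\subseteq V_-$ for $k<D$) together with the degree reasons that force $(f\transpose)^D$ to vanish on $f^kV[D]$ for $k<D$, one gets $E(f^kv)=0$ for $k<D$ and $E(f^Dv)=u(v)$. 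Hence on the string attached to $v$ the operator $f+E$ acts cyclically: $v\mapsto fv\mapsto\cdots\mapsto f^Dv\mapsto u(v)$.

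Next I would use that $u$ is semisimple to diagonalize. Choosing a basis of $V[D]$ consisting of eigenvectors of $u$, the corresponding strings decompose $V_1$ into $(f+E)$-invariant subspaces, one for each eigenvector $v$ with $uv=\lambda v$. On such a block $f+E$ is the scaled cyclic shift above, and the key computation is that $(f+E)^{p_1}=\lambda\,\id$ on it (travelling once around the cycle of length $p_1$ multiplies by $\lambda$). Therefore, when $\lambda\neq0$ the minimal polynomial of $f+E$ on this block divides $t^{p_1}-\lambda$, which has distinct roots in characteristic $0$, so $f+E$ is semisimple (and invertible) there; when $\lambda=0$ the block lies in $\Ker u$ and $f+E=f$ is nilpotent. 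Collecting the $\lambda\neq0$ blocks yields exactly $U=\bigoplus_kf^k\im u$ (on which $f+E$ is semisimple) and the $\lambda=0$ blocks yield $\widetilde U=\bigoplus_kf^k\Ker u$ (on which $f+E=f$), using that $\im u$ and $\Ker u$ are the sums of the nonzero and zero eigenspaces of the semisimple $u$.

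Finally I would assemble the global Jordan decomposition. The subspaces $U$ and $\widetilde U\oplus V_{\ge2}$ are complementary and $(f+E)$-invariant; $f+E$ is semisimple on the first and equals $f$ (hence nilpotent) on the second, while $E$ vanishes on $\widetilde U\oplus V_{\ge2}$. Writing the two pieces as $(f+E)\id_U$ and $f\id_{\widetilde U\oplus V_{\ge2}}$, these commute (being supported on complementary invariant subspaces) and sum to $f+E$, so by uniqueness of the Jordan decomposition the nilpotent part is $(f+E)_n=f\id_{\widetilde U\oplus V_{\ge2}}$, as claimed. I expect the only genuinely delicate step to be the identification of the action of $E$ on the strings --- in particular checking that $(f\transpose)^D$ restricts to the inverse of $f^D\colon V[D]\to V[-D]$ and annihilates the lower layers --- after which the cyclic-shift computation and the appeal to uniqueness are routine.
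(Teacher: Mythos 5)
Your proof is correct and follows essentially the same route as the paper's: both decompose $V_1=U\oplus\widetilde U$ via $V[D]=\im u\oplus\Ker u$, verify that $(f+E)^{p_1}$ acts as $u$ along the Jordan strings (so that the minimal polynomial on $U$ divides $q(t^{p_1})$, which has distinct nonzero roots), and conclude by uniqueness of the Jordan decomposition. The only cosmetic difference is that you fully diagonalize $u$ into eigenblocks, whereas the paper argues directly with the minimal polynomial of $u\id_{\im u}$.
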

\begin{proof}
Recall that $V_{+,1}=V[D]$ (see \eqref{eq:vpmi} and Figure \ref{fig:Vspaces}).
Since $u$ is semisimple, we have that $V[D]=\Ker u\oplus \im u$. This implies, by \eqref{eq:vi},
$V_{1}=U\oplus\tilde U$.

Clearly, $(f+E)\id_{\tilde U\oplus V_{\ge2}}=f\id_{\tilde U\oplus V_{\ge2}}$
and $(f+E)U\subset U$.
Denote $A=(f+E)\id_U$. Clearly, Since $A$ commutes with $f\id_{\tilde U\oplus V_{\ge2}}$
and $f+E=A+f\id_{\tilde U\oplus V_{\ge2}}$.
We claim that $A$ is semisimple so that the semisimple part of $f+E$ is $(f+E)_s=A$ and
$(f+E)_n=f\id_{\widetilde U\oplus V_{\ge2}}$.
Let $q(x)$ be the minimal polynomial of $u\id_{\im u}$ which has
distinct non-zero roots in $\mb F$, since $u$ is semisimple.
Let $\tilde q(x)=q(x^{p_1})$ which has also distinct roots.
Note that, if $v\in\im u$, then $A^{kp_1}f^hv=f^hu^kv$, for every $h,k\in\mb Z_+$.
Hence, $\tilde q(A)f^hv=q(A^{p_1})f^hv=f^hq(u)v=0$. This implies that the minimal polynomial of $A$ divides
$\tilde q(x)$. Since $\tilde q(x)$ has distinct roots, $A$ is semisimple, as claimed.
\end{proof}
\begin{lemma}\label{prop2}
If $E=u(f\transpose)^D\in\mf g_d$ is integrable for $f$, then $u$ is semisimple.
\end{lemma}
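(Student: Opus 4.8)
The plan is to deduce semisimplicity of $u$ from the already-established correspondence between integrability and semisimplicity in the linear group $Z(\mf s)|\mf g_d$, and then to read off what this means concretely in terms of $u$. Since $E$ is integrable for $f$, the nilpotent part of the Jordan decomposition of $f+E$ lies in $\mf g_{-2}$; hence Lemma \ref{lem:int=>ss}, applied with $j=d$, tells us that $E$ is a semisimple element of the $Z(\mf s)$-module $\mf g_d$. (If $E=0$ there is nothing to prove, so we may assume $E\neq0$.) It therefore remains only to translate the statement ``$E$ is semisimple in $Z(\mf s)|\mf g_d$'' into the statement ``$u\in\End(V[D])$ is a semisimple endomorphism.''

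For this translation I would first recall that a degree-$d$ element must send $V[-D]$ to $V[D]$ and annihilate every other graded piece, so that $\mf g_d=\Hom(V[-D],V[D])$; composing with the isomorphism $(f\transpose)^D\colon V[-D]\xrightarrow{\sim}V[D]$ identifies $\mf g_d$ with $\End(V[D])$ precisely via $E=u(f\transpose)^D\leftrightarrow u$. The key point is that this identification intertwines the $Z(\mf s)$-action on $\mf g_d$ with the conjugation action on $\End(V[D])$: indeed every $g\in Z(\mf s)$ commutes with $f$, hence with $f^D$, which is the inverse of $(f\transpose)^D$ between $V[D]$ and $V[-D]$, and $g$ preserves $V[D]$; writing $u=(Ef^D)|_{V[D]}$ one checks that $gEg^{-1}$ corresponds to $(g|_{V[D]})\,u\,(g|_{V[D]})^{-1}$. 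As recalled in Section \ref{sec:zsgl}, for $\mf g=\mf{sl}_N$ the image of $Z(\mf s)$ in $\End(V[D])\cong\Mat_{r_1}$, where $r_1=\dim V[D]$ is the multiplicity of the largest part $p_1$, is $SL_{r_1}$ acting by conjugation.

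Finally I would invoke the classical fact that for the conjugation action of $SL_{r_1}$ (equivalently $GL_{r_1}$) on $\Mat_{r_1}$ the closed orbits are exactly those of diagonalizable matrices, so that the semisimple elements of this linear group are precisely the semisimple (diagonalizable) matrices. Since $E$ is semisimple in $Z(\mf s)|\mf g_d$, the corresponding $u$ is a semisimple endomorphism of $V[D]$, as claimed. The only non-formal step is the verification of the intertwining in the second paragraph; everything else is either a direct citation (Lemma \ref{lem:int=>ss}, Section \ref{sec:zsgl}) or the standard description of closed conjugacy classes. One can alternatively argue more explicitly inside the standard representation, realizing $(f+E)|_{V_1}$ on $V_1\cong V[D]\otimes\mb F^{p_1}$ and computing $(f+E)^{p_1}|_{V_1}=u\otimes\id$; but extracting semisimplicity of $u$ from the degree constraint $(f+E)_n\in\mf g_{-2}$ along this route is noticeably more delicate, so I would prefer the reduction above.
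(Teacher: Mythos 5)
Your proof is correct, but it takes a genuinely different route from the paper's. The paper argues entirely inside the standard representation: it writes the nilpotent part of $f+E$ restricted to $V_1$ as $\tilde f_2=\sum_{k=0}^{D-1}f^{k+1}t(f\transpose)^k$ via Lemma \ref{prop:linalg2}, extracts $ut=0$ from $[E,\tilde f_2]=0$, and then computes that the semisimple part $A=f_1+E$ satisfies $A^{p_1}|_{V[D]}=u$, so that $u$ inherits semisimplicity from $A$. You instead quote Lemma \ref{lem:int=>ss} to get that $E$ is a semisimple element of $Z(\mf s)|\mf g_d$, check that $E=u(f\transpose)^D\mapsto u$ intertwines this action with conjugation on $\End(V[D])$ (your verification via $u=(Ef^D)|_{V[D]}$ and $g$ commuting with $f^D$ is the right one), and invoke the classical description of closed conjugacy classes of $\Mat_{r_1\times r_1}$. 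This is logically sound and non-circular: Lemma \ref{lem:int=>ss} and the identification of $Z(\mf s)|\mf g_d$ in Section \ref{sec:zsgl} are established before and independently of Lemma \ref{prop2}. What your route costs is the point of Remark \ref{cor:clifffss}: the paper's computational proof makes Proposition \ref{thm:glintiffss} an \emph{independent} verification, in type $\rA$, of the general statement ``integrable iff semisimple in $Z(\mf s)|\mf g_d$'' (Theorem \ref{thm:intiffss}), whereas deriving Lemma \ref{prop2} from Lemma \ref{lem:int=>ss} turns that consistency check into a tautology. What it buys is brevity, and it sidesteps the somewhat delicate bookkeeping with $t$ that you correctly anticipate would be needed in the explicit approach.
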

\begin{proof}
Note that $f\id_{V_{\ge2}}$ is nilpotent and commutes with $f\id_{V_1}+E=(f+E)\id_{V_1}$. Hence,
for the Jordan decomposition of $f+E$ we have
$$
(f+E)_s=(f\id_{V_1}+E)_s\quad
\text{and}\quad
(f+E)_n=(f\id_{V_1}+E)_n+f\id_{V_{\ge2}}\,.
$$
Since, by assumptions, $f+E$ is integrable, we then have
$$
(f\id_{V_1}+E)_s=f_1+E\quad\text{and}\quad
(f\id_{V_1}+E)_n=\tilde f_2\in\mf g_{-2}^f\cap\End(V_1)\,,
$$
with $f_1+\tilde f_2=f\id_{V_1}$.
By Lemma \ref{prop:linalg2}, we have that
\begin{equation}\label{20200810:eq1}
\tilde f_2=\sum_{k=0}^{D-1}f^{k+1}t(f\transpose)^k\,,
\end{equation}
for some $t\in\End(V[D])$.
Moreover, by Definition (I\ref{def:intrip3}) we have that $[E,\tilde f_2]=E\tilde f_2-\tilde f_2E=0$. Since $E\tilde f_2\in\Hom(V[-D+2],V[D])$
and $\tilde f_2E\in\Hom(V[-D],V[D-2])$, we have that $E\tilde f_2=\tilde f_2E=0$. Explicitly, using equation \eqref{20200810:eq1} we get
$E\tilde f_2=Ef^{D}t(f\transpose)^{D-1}=ut(f\transpose)^{D-1}=0$,
which implies
\begin{equation}\label{20200810:eq2}
ut=0
\,.
\end{equation}
Let  $A=f_1+E$ and let $v\in V[D]$.
We have, by equation \eqref{20200810:eq1}, $Av=(f+E-\tilde f_2)v=fv-\tilde f_2v=f(1-t)v$,
and repeating the same computation $k$ times,
\begin{equation}\label{20200810:eq4}
A^kv=f^k(1-t)^kv\,,\quad0\le k\le D\,.
\end{equation}
Letting $k=D=p_1-1$ in \eqref{20200810:eq4}, and applying $A$ one more time, we get
\begin{equation}\label{20200810:eq5}
A^{p_1}v=(f_1+E)f^{D}(1-t)^{D}v=u(1-t)^{D}v=uv
\,.
\end{equation}
For the last equality we used \eqref{20200810:eq2}.
Since $A$ is semisimple, $A^{p_1}$ is semisimple as well. Moreover, from equation
\eqref{20200810:eq5} we have that $A^{p_1}V[D]\subset V[D]$ and $A^{p_1}\id_{V[D]}=u$. As a consequence,
$u$ is semisimple, proving the claim.
\end{proof}
Combining Lemmas \ref{prop1} and \ref{prop2} we get the following result.
\begin{proposition}\label{thm:glintiffss}
$E=u(f\transpose)^D\in\mf g_{d}$ is integrable for $f$ if and only if $u$ is semisimple.
\end{proposition}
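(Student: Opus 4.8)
The plan is to prove the two implications of the biconditional separately, each reducing to one of the two preceding lemmas, after first recording that $f+E$ is genuinely a cyclic element for any nonzero $E\in\mf g_d$. I would begin by observing that the defining conditions of a generalized cyclic element hold automatically in this degree: $[E,\mf g_{\ge2}]=0$ because $\mf g_{\ge d+2}=0$, and the coisotropy of $\mf g_1^E$ holds because, by Proposition \ref{prop:linalg}, it is governed by the map $\ad E\circ(\ad f)^{-1}\circ\ad E$ on $\mf g_{-d-1}=0$. Thus $f+E$ is a generalized cyclic element, and (as noted after the definition of integrability) it is integrable if and only if the nilpotent part of its Jordan decomposition lies in $\mf g_{-2}$. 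This reformulation is the bridge to the two lemmas.

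For the ``if'' direction I would assume $u$ semisimple and invoke Lemma \ref{prop1}, which gives $(f+E)_n=f\id_{\widetilde U\oplus V_{\ge2}}$. The one point needing verification is that this operator lies in $\mf g_{-2}$. Since $\widetilde U=\bigoplus_k f^k\Ker u$ and $V_{\ge2}=\bigoplus_{i\ge2}V_i$ are both sums of $\ad h$-eigenspaces, the subspace $\widetilde U\oplus V_{\ge2}$ is graded, with complementary graded subspace $U=\bigoplus_k f^k\im u$; hence the projection $\id_{\widetilde U\oplus V_{\ge2}}$ has degree $0$ and the composite $f\id_{\widetilde U\oplus V_{\ge2}}$ has degree $-2$. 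So the nilpotent part lies in $\mf g_{-2}$, and by the reformulation above $E$ is integrable for $f$.

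The ``only if'' direction is precisely Lemma \ref{prop2}: if $E$ is integrable then $u$ is semisimple. Combining the two directions yields the proposition.

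Granting the two lemmas, the only genuinely new step is the degree bookkeeping showing $f\id_{\widetilde U\oplus V_{\ge2}}\in\mf g_{-2}$, which is routine once one notes that $\widetilde U\oplus V_{\ge2}$ is $\ad h$-invariant. The real difficulty of the statement is hidden inside Lemma \ref{prop2}, whose proof tracks the semisimple and nilpotent parts of $f\id_{V_1}+E$ through the identity $A^{p_1}v=uv$ for $v\in V[D]$ (using $ut=0$), so that semisimplicity of $A$ forces semisimplicity of $u$. That recursion, rather than the gluing above, is where I expect the main effort to sit.
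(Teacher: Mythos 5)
Your proposal is correct and follows the paper's own route exactly: the proposition is obtained by combining Lemma \ref{prop1} (semisimple $u$ gives nilpotent part $f\id_{\widetilde U\oplus V_{\ge2}}$) with Lemma \ref{prop2} (integrability forces $u$ semisimple). The extra bookkeeping you supply — that the coisotropy condition is vacuous for $E\in\mf g_d$ since $\mf g_{-d-1}=0$, and that $f\id_{\widetilde U\oplus V_{\ge2}}\in\mf g_{-2}$ because $\widetilde U\oplus V_{\ge2}$ is $\ad h$-graded — is accurate and merely makes explicit what the paper leaves implicit.
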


\begin{remark}\label{cor:clifffss}
Proposition \ref{thm:glintiffss} is in accordance with Theorem \ref{thm:intiffss} and the results of Section
\ref{sec:zsgl}: it is well-known that the closed orbits for the action of $SL_{r_1}$ on $\Mat_{r_1\times r_1}$
by conjugation are indeed the semisimple elements of $\Mat_{r_1\times r_1}$.
\end{remark}

\subsubsection{Integrable $E\in\mf g_{d-1}$}
In this subsection we will use the decomposition (see \eqref{eq:vi}) $V=V_{\le2}\oplus V_{\ge3}$,
where $V_{\le2}=V_1\oplus V_2$ and $V_{\ge3}=\bigoplus_{i=3}^s V_i$.

Let, as in \eqref{eq:aandb}, $E=a(f\transpose)^{D-1}+b(f\transpose)^D\in\mf g_{d-1}$, where
$a\in\Hom(V[D-1],V[D])$ and $b\in\Hom(V[D],V[D-1])$.
Since $V[D]\subset V_1$ and $V[D-1]\subset V_2$, we have $(f+E)\id_{V_{\ge3}}=f\id_{V_{\ge3}}$, which is
nilpotent.
Let us also denote by
$$
U=V_1\oplus\left(\bigoplus_{k=0}^{\infty}f^k\im b \right)\subset V_{\le2}
\quad\text{and}\quad
\tilde U=\bigoplus_{k=0}^{\infty}f^k\Ker a\subset V_2
\,.
$$
\begin{proposition}\label{prop4}
An element $E=a(f\transpose)^{D-1}+b(f\transpose)^D\in\mf g_{d-1}$ is
integrable for $f$ if and only if $ab=\lambda\id_{V[D]}$, $\lambda\neq 0$.
\end{proposition}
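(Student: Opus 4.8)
The plan is to run the argument for $E\in\mf g_d$ (Lemmas \ref{prop1} and \ref{prop2}, assembled in Proposition \ref{thm:glintiffss}) with the single operator $u$ replaced by the pair $(a,b)$ and its composite $ab$. Two preliminary observations localize everything on $V_{\le2}$. Since $\im a\subseteq V[D]\subseteq V_1$ and $\im b\subseteq V[D-1]\subseteq V_2$, the operator $E$ lies in $\End(V_{\le2})$ and kills $V_{\ge3}$; hence $f+E$ respects the splitting $V=V_{\le2}\oplus V_{\ge3}$ and acts as the nilpotent $f$ on $V_{\ge3}$, so that its Jordan decomposition is obtained from that of $(f+E)\id_{V_{\le2}}=f\id_{V_{\le2}}+E$ by adjoining $f\id_{V_{\ge3}}$. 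The second observation is the \emph{return map} computation: for $v\in V[D]$ one has $Ev=0$, and chasing $v$ down the top block, across to $V_2$ via $b$ and back up via $a$ gives
\[
(f+E)^{p_1+p_2}v=ab(v),\qquad v\in V[D],
\]
the exact analogue of $A^{p_1}v=uv$ in \eqref{20200810:eq5}. Carrying out this chase, checking at each degree that no spurious contribution of $E$ intervenes, is the main bookkeeping step; it is routine but is the technical heart of the proof.

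Because coisotropy of $\mf g_1^E$ is part of being a generalized cyclic element, Proposition \ref{prop3} already forces $ab=\lambda\id_{V[D]}$ for some $\lambda\in\mb F$; the whole content of the proposition is therefore to show that integrability is equivalent to $\lambda\neq0$.

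For the ``only if'' part I would assume $E$ integrable. Then $f+E$ is a generalized cyclic element, so $ab=\lambda\id_{V[D]}$, and moreover $f+E$ is non-nilpotent by Proposition \ref{prop:intripgcyc}. If $\lambda=0$, the return map shows that $(f+E)^{p_1+p_2}$ annihilates $V[D]$; the $(f+E)$-cyclic subspace generated by $V[D]$ is then $V_1\oplus\bigoplus_k f^k\im b$, on which $f+E$ is nilpotent, while on the complementary quotient of $V_{\le2}$ the $a$-term of $E$ lands inside this cyclic subspace, so $f+E$ acts there as $f$, again nilpotent. Thus $f+E$ would be nilpotent, a contradiction; hence $\lambda\neq0$.

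For the ``if'' part I would assume $ab=\lambda\id_{V[D]}$ with $\lambda\neq0$, which makes $b$ injective, $a$ surjective, and yields the graded splitting $V[D-1]=\im b\oplus\Ker a$. Setting $U=V_1\oplus\bigoplus_k f^k\im b$ and $\tilde U=\bigoplus_k f^k\Ker a$, one checks (using $ab=\lambda\id$) that $V_{\le2}=U\oplus\tilde U$, that $U$ is $(f+E)$-invariant with $E\id_U=E$, and that $E$ annihilates $\tilde U$, so $(f+E)\id_{\tilde U\oplus V_{\ge3}}=f\id_{\tilde U\oplus V_{\ge3}}$ is nilpotent and lies in $\mf g_{-2}$ (the subspace being graded and $f$-stable). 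The return map gives $\big((f+E)\id_U\big)^{p_1+p_2}\id_{V[D]}=\lambda\id$, and, as in Lemma \ref{prop1}, the polynomial $x^{p_1+p_2}-\lambda$ has distinct roots since $\lambda\neq0$, so $(f+E)\id_U$ is semisimple. Hence the Jordan decomposition of $f+E$ has semisimple part $f_1+E:=(f+E)\id_U=f\id_U+E$ and nilpotent part $f_2:=f\id_{\tilde U\oplus V_{\ge3}}\in\mf g_{-2}$; since $f_1=f\id_U$ and $f_2$ commute, $f_1+f_2=f$, and $[f_2,E]=0$, the triple $(f_1,f_2,E)$ satisfies (I\ref{def:intrip1})--(I\ref{def:intrip3}), so $E$ is integrable. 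The delicate point, where $\lambda\neq0$ is indispensable, is precisely the splitting $V[D-1]=\im b\oplus\Ker a$ underlying the decomposition $V_{\le2}=U\oplus\tilde U$.
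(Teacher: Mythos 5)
Your proposal is correct and follows essentially the same route as the paper's proof: the same splitting $V_{\le2}=U\oplus\tilde U$ with $U=V_1\oplus\bigoplus_kf^k\im b$ and $\tilde U=\bigoplus_kf^k\Ker a$ (valid exactly when $\lambda\ne0$), the same return-map computation $(f+E)^{2D+1}v=abv$ on $V[D]$ (your exponent $p_1+p_2$ equals $2D+1$ since $p_2=p_1-1$ here), and the same use of Proposition \ref{prop3} plus non-nilpotency to force $\lambda\ne0$ in the converse direction. No gaps.
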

\begin{proof}
First, assume that $ab=\lambda\id_{V[D]}$, with $\lambda\neq0$.
Let $u\in V[D-1]=V_{+,2}$ (see \eqref{eq:vpmi} and Figure \ref{fig:Vspaces}). Clearly, $u=\frac{1}{\lambda}bau+(u-\frac{1}{\lambda}bau)\in\im b+\Ker a$.
Moreover, if $u\in\im b\cap\ker a$, then $au$=0 and $u=b w$, for some $w\in V[D]$, so that $0=au=abw=\lambda w$
which implies $w=0$, and hence $u=0$. This shows that $V[D-1]=\im b\oplus\Ker a$ and, by \eqref{eq:vi}, that
$V_{\le2}=U\oplus\tilde U$.
Clearly, $f+E=(f+E)\id_U+(f+E)\id_{\tilde U\oplus V_{\ge3}}$.
If $v\in V[D]\subset U$, then
\begin{align*}
&(f+E)^Dv=f^Dv\,,\quad (f+E)^{D+1}v=Ef^Dv=b(f\transpose)^Df^Dv=bv\,,
\\
&(f+E)^{2D}v=f^{D-1}bv\,,\quad (f+E)^{2D+1}v=Ef^{D-1}bv=a(f\transpose)^{D-1}f^{D-1}v=abv=\lambda v\,.
\end{align*}
Similarly, one can check that for every $v\in U$ we have $(f+E)^{2D+1}v=\lambda v$.
Since $\lambda\neq0$, $(f+E)\id_U=f\id_U+E$ is semisimple. Furthermore, we clearly have
$$(f+E)\id_{\tilde U\oplus V_{\ge3}}=f\id_{\tilde U\oplus V_{\ge3}}\,,$$
which is nilpotent and commutes with $(f+E)\id_U$.
Hence, the Jordan decomposition of $f+E$ is
$$
(f+E)_s=f\id_U+E\,,\quad
(f+E)_n=f\id_{\tilde U\oplus V_{\ge3}}
\,,
$$
which omplies that $f+E$ is integrable.

Conversely, if $f+E$ is integrable, in particular $\mf g_1^E$ must be coisotropic. Hence, by Proposition
\ref{prop3}, $ab=\lambda\id_{V[D]}$, for some $\lambda \in\mb F$. On the other hand, it must be $\lambda\neq0$
otherwise, as one can easily check, $f+E$ is nilpotent.
\end{proof}
The next result will be used in Section \ref{sec:4.5.1}. We state and prove it here since we need the notation introduced here.
\begin{lemma}\label{prop5}
Let $E=a(f\transpose)^{D-1}+b(f\transpose)^D\in\mf g_{d-1}$. Assume that $f+E$ is not nilpotent and that its nilpotent part lies in $\mf g_{-2}$.
Then $ab\in\End(V[D])$ is a non-zero semisimple element.
\end{lemma}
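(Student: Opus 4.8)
The plan is to compute an explicit power of $f+E$ on the top graded piece $V[D]$, identify it with $ab$, and then transfer semisimplicity from the semisimple part of $f+E$; non-vanishing will come from the same computation run in reverse.

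First I would record the purely formal identities
\[
(f+E)^{2D+1}\id_{V[D]}=ab,\qquad (f+E)^{2D+1}\id_{V[D-1]}=ba.
\]
These are exactly the chains of computations already carried out in the proof of Proposition \ref{prop4} (the path $v\mapsto f^Dv\mapsto bv\mapsto f^{D-1}bv\mapsto ab\,v$), except that here $ab$ is not assumed to be a scalar; they use only $E\id_{V[-D]}=b(f\transpose)^D\id_{V[-D]}$, $E\id_{V[-D+1]}=a(f\transpose)^{D-1}\id_{V[-D+1]}$, the vanishing of $E$ on every other $V[k]$, and the identities $(f\transpose)^Df^D\id_{V[D]}=\id_{V[D]}$, $(f\transpose)^{D-1}f^{D-1}\id_{V[D-1]}=\id_{V[D-1]}$ coming from \eqref{20180219:eq3}. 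In particular $V[D]$ is invariant under $(f+E)^{2D+1}$.

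Next I would write the Jordan decomposition $f+E=A+N$ with $A=(f+E)_s$, $N=(f+E)_n$, $[A,N]=0$; by hypothesis $N\in\mf g_{-2}$. Then $A=f_1+E$ with $f_1:=f-N\in\mf g_{-2}$, so $A$ has homogeneous components of degrees $-2$ and $d-1$, and decomposing $0=[A,N]=[f_1,N]+[E,N]$ into its degree $-4$ and degree $d-3$ parts forces $[f_1,N]=0$ and $[E,N]=0$, hence also $[f,N]=0$. The main step is then to show that the word computation of the first paragraph, performed with $f_1$ in place of $f$, still produces $ab$, i.e.
\[
A^{2D+1}\id_{V[D]}=ab.
\]
By the grading, the only words in $f_1,E$ of length $2D+1$ carrying $V[D]$ into $V[D]$ contain exactly two factors $E$, and their positions are forced as before, leaving the single word $Ef_1^{D-1}Ef_1^{D}$. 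To evaluate it I would expand $f_1^{k}=(f-N)^{k}$, use $[f,N]=[E,N]=0$ to move every $N$ to the left of the relevant copy of $E$, and then invoke $Ef^{k-i}\id_{V[D]}=0$ for $i\ge1$ (because $E$ is nonzero only on $V[-D]$ and $V[-D+1]$). This collapses $Ef_1^Dv=Ef^Dv=bv$ and $Ef_1^{D-1}(bv)=Ef^{D-1}(bv)=ab\,v$, so that $A^{2D+1}v=ab\,v$, which simultaneously reproves invariance of $V[D]$ and gives the value.

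With this identity the conclusion is quick: $A$ is semisimple, hence so is $A^{2D+1}$, and the restriction of a semisimple operator to the invariant subspace $V[D]$ is semisimple, so $ab=A^{2D+1}\id_{V[D]}$ is semisimple. For non-vanishing I would suppose $ab=0$; the identical argument gives $A^{2D+1}\id_{V[D-1]}=ba$ semisimple, and since $ab,ba$ have the same nonzero spectrum, $ba$ is nilpotent and semisimple, so $ba=0$. Using $f^kv=(f+E)^kv$ for $0\le k\le D$ on $V[D]$ and for $0\le k\le D-1$ on $V[D-1]$, together with $(f+E)^{2D+1}\id_{V[D]}=ab=0$ and $(f+E)^{2D+1}\id_{V[D-1]}=ba=0$, one then gets $(f+E)^{2D+1}\id_{V_{\le2}}=0$; since $(f+E)\id_{V_{\ge3}}=f\id_{V_{\ge3}}$ is nilpotent, $f+E$ would be nilpotent, contradicting the hypothesis, so $ab\neq0$. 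The hard part is the penultimate paragraph — establishing $A^{2D+1}\id_{V[D]}=ab$ rather than merely its semisimple part, i.e.\ checking that the nilpotent corrections from replacing $f$ by $f-N$ cancel; the mechanism is precisely that $N$ commutes with both $f$ and $E$ and that $E$ acts nontrivially only at the two lowest degrees, so every term carrying a factor of $N$ is annihilated before the relevant copy of $E$ can act.
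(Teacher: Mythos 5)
Your proof is correct, and while it shares the paper's overall skeleton --- identify $ab$ as the restriction to $V[D]$ of the $(2D+1)$-st power of the semisimple part $A=(f+E)_s$, then transfer semisimplicity --- the way you carry out the central computation is genuinely different. The paper writes the nilpotent part explicitly as $\sum_{k=0}^{D-1}f^{k+1}(t+s)(f\transpose)^k$ using Lemma \ref{prop:linalg2}, extracts the relations $bt=0$ and $sb=0$ from the graded components of the commutator with $E$, and then pushes a vector of $V[D]$ through $A$ step by step, invoking those relations to kill the corrections at the two moments when $E$ acts. You never parametrize the nilpotent part at all: you observe that by pure degree counting the only word of length $2D+1$ in $f_1=f-N$ and $E$ that can carry $V[D]$ back to $V[D]$ is $Ef_1^{D-1}Ef_1^{D}$, and then use $[f,N]=[E,N]=0$ (correctly extracted from the graded decomposition of $[A,N]=0$, since the degrees $-4$ and $d-3$ never coincide) together with the fact that $E$ is supported only on $V[-D]\oplus V[-D+1]$ to collapse $f_1$ to $f$ exactly where it matters. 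This buys independence from Lemma \ref{prop:linalg2} and makes the cancellation mechanism transparent. Your treatment of the non-vanishing is also more complete than the paper's: the paper passes rather quickly from ``$A^{2D+1}$ is non-zero'' to ``$ab$ is non-zero,'' which by itself does not control the restriction to $V[D]$, whereas you close the loop by showing that $ab=0$ forces $ba$ to vanish as well, hence $(f+E)^{2D+1}$ to annihilate $V_{\le2}$ and $f+E$ to be nilpotent, contradicting the hypothesis.
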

\begin{proof}
Note that $f\id_{V_{\ge3}}$ is nilpotent and commutes with $f\id_{V_{\le2}}+E=(f+E)\id_{V_{\le2}}$. Hence,
for the Jordan decomposition of $f+E$ we have
$$
(f+E)_s=(f\id_{V_{\le2}}+E)_s\quad
\text{and}\quad
(f+E)_n=(f\id_{V_{\le2}}+E)_n+f\id_{V_{\ge3}}\,.
$$
By assumption, $\tilde f_2=(f\id_{V_{\le2}}+E)_n\in \mf g_{-2}$. In particular,
$\tilde f_2\in\mf g_{-2}^f\cap(\End(V_1)\oplus\End(V_2))$.
Hence, by Lemma \ref{prop:linalg2}, we have that
\begin{equation}\label{20200810:eq1b}
\tilde f_2=\sum_{k=0}^{D-1}f^{k+1}(t+s)(f\transpose)^k\,,
\end{equation}
for some $t\in\End(V[D])$ and $s\in\End(V[D-1])$.
Moreover, by Definition (I\ref{def:intrip3}) we have that $[E,\tilde f_2]=\left(a(f\transpose)^{D-1}+b(f\transpose)^D\right)\tilde f_2-\tilde f_2\left(a(f\transpose)^{D-1}+b(f\transpose)^D\right))=0$. Since
$a(f\transpose)^{D-1}\tilde f_2\in\Hom(V[-D+3],V[D])$, $b(f\transpose)^D\tilde f_2\in\Hom(V[-D+2],V[D-1])$, $\tilde f_2a\in\Hom(V[D-1],V[D-2])$
and $\tilde f_2b\in\Hom(V[D],V[D-3])$, we have $a(f\transpose)^{D-1}\tilde f_2=b(f\transpose)^D\tilde f_2=\tilde f_2a=\tilde f_2b=0$. Explicitly, using equation \eqref{20200810:eq1b} we get
$b(f\transpose)^D\tilde f_2=bt(f\transpose)^{D-1}=0$,
which implies
\begin{equation}\label{20200810:eq2b}
bt=0
\,.
\end{equation}
Moreover, using again equation \eqref{20200810:eq1b} we get
$\tilde f_2b=fsb=0$,
which implies
\begin{equation}\label{20200810:eq2c}
sb=0
\,.
\end{equation}
Next, let $A=(f+E)_s$ and let $v\in V[D]$. We have,
by equation \eqref{20200810:eq1b},
$$Av=(f+E-\tilde f_2)v=fv-\tilde f_2v=f(1-t)v\,.$$
Repeating the same computation $D$ times, we get (cf. equation \eqref{20200810:eq4}) $A^Dv=f^Dv$,
and applying $A$ one more time we get
\begin{equation}\label{20200810:eq5c}
A^{D+1}v=Ef^{D}(1-t)^{D}v=b(1-t)^{D}v=bv
\,.
\end{equation}
For the last equality we used \eqref{20200810:eq2b}.
By equations \eqref{20200810:eq1b} and \eqref{20200810:eq5c} we have
$$
A^{D+2}v=(f+E-\tilde f_2)bv=fbv-fsbv=f(1-s)bv=fbv
\,,
$$
where in the last equality we used \eqref{20200810:eq2c}.
Repeating the same computation $D-1$ times we get
$$
A^{2D}v=f^{D-1}bv
\,.
$$
Applying $A$ again, we finally get
\begin{equation}\label{20200810:eq7}
A^{2D+1}v=Ef^{D-1}bv=abv
\,.
\end{equation}
Since $A$ is non-zero semisimple, $A^{2D+1}$ is non-zero semisimple as well. Moreover, from equation
\eqref{20200810:eq7} we have that $A^{2D+1}V[D]\subset V[D]$ and $A^{2D+1}\id_{V[D]}=ab$ thus showing that
$ab$ is non-zero semisimple and concluding the proof.
\end{proof}
%

\subsubsection{No integrable elements for $f$ in $W$}
In this subsection we will use the decomposition
\begin{equation}\label{20201203:eq1}
V=V_2\oplus V_{\neq2}
\,,
\quad
\text{where }
V_{\neq2}=\bigoplus_{i\neq 2}V_i
\,.
\end{equation}
\begin{proposition}\label{cor:nointinW-b}
There are no integrable elements for $f$ in $W$.
\end{proposition}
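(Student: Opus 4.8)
The plan is to prove the stronger statement that \emph{every} $E\in W$ satisfying the coisotropy condition gives a nilpotent element $f+E$. Since a generalized cyclic element can never be nilpotent and an integrable element is in particular non-nilpotent, this rules out integrability. Write $E=c(f\transpose)^{D-1}$ with $c\in\End(V[D-1])$ as in \eqref{eq:c}. If $E$ were integrable, then $f+E$ would be a (non-nilpotent) generalized cyclic element, so $\mf g_1^E$ would be coisotropic, and Proposition \ref{prop4b} would force $c^2=0$. It therefore suffices to show that $c^2=0$ makes $f+E$ nilpotent, contradicting non-nilpotency. If $V[D-1]=0$ then $W=0$ and there is nothing to prove, so I may assume $V[D-1]\neq0$; this forces the second part to satisfy $p_2=p_1-1$ (the only rectangle whose $h$-eigenvalues reach $D-1$), whence $V[D-1]=V_{+,2}$ and $V[-D+1]=V_{-,2}$ both lie in $V_2$.

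Next I would observe that $f+E$ respects the decomposition \eqref{20201203:eq1}, $V=V_2\oplus V_{\neq2}$. Indeed $f$ preserves each rectangle $V_i$, while $E=c(f\transpose)^{D-1}$ is supported on $V[-D+1]=V_{-,2}\subseteq V_2$ and takes values in $V[D-1]=V_{+,2}\subseteq V_2$; hence $E(V_{\neq2})=0$ and $E(V_2)\subseteq V_2$. On the block $V_{\neq2}$ we have $(f+E)|_{V_{\neq2}}=f|_{V_{\neq2}}$, which is nilpotent, so it remains only to show that $A:=(f+E)|_{V_2}$ is nilpotent.

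For the key computation, fix $w\in V_{+,2}$. Since $E$ annihilates each $f^kw$ with $k<D-1$ (these lie outside $V[-D+1]$, as one checks from $f\transpose w=0$ and \eqref{20180219:eq3}), one gets $A^kw=f^kw$ for $0\le k\le D-1$; in particular $A^{D-1}w=f^{D-1}w\in V_{-,2}$. Applying $A$ once more and using $f^Dw=0$ together with the restriction identity $(f\transpose)^{D-1}f^{D-1}=\id_{V_{+,2}}$ (a consequence of \eqref{20180219:eq3}), I obtain $A^Dw=Ef^{D-1}w=c(f\transpose)^{D-1}f^{D-1}w=cw\in V_{+,2}$. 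Iterating the same computation with $cw$ in place of $w$ yields $A^{2D}w=c^2w=0$. Since $A$ commutes with its own powers and, by \eqref{eq:vi}, $V_2=\Span\{A^kw=f^kw:0\le k\le D-1,\ w\in V_{+,2}\}$, it follows that $A^{2D}$ annihilates all of $V_2$. Hence $A$ is nilpotent, $f+E$ is nilpotent, and no element of $W$ is integrable for $f$.

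The only delicate points, and thus the steps to carry out with care, are the two bookkeeping identities: that $E$ kills $f^kw$ for $k<D-1$, and that $(f\transpose)^{D-1}f^{D-1}$ restricts to the identity on the rightmost column $V_{+,2}$ of $V_2$. Both reduce to the relations \eqref{20180219:eq3} and to the fact that $f^{D-1}$ carries $V_{+,2}$ isomorphically onto $V_{-,2}$. I do not expect any genuine obstacle beyond this routine verification, since the coisotropy hypothesis $c^2=0$ enters exactly once, at the final step $A^{2D}w=c^2w=0$, which is precisely what collapses the apparent ``cyclic'' behaviour of $A$ into nilpotency.
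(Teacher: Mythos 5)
Your proof is correct and takes essentially the same route as the paper's: argue by contradiction, invoke Proposition \ref{prop4b} to get $c^2=0$ from coisotropy, split $V=V_2\oplus V_{\neq2}$, observe $(f+E)\id_{V_{\neq2}}=f\id_{V_{\neq2}}$ is nilpotent, and compute $(f+E)^{2D}v=c^2v=0$ on $V_2$. The paper leaves that last computation as ``not difficult to check''; your write-up simply supplies the bookkeeping details.
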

\begin{proof}
By contradiction, let $E=c(f\transpose)^{D-1}\in W$, where $c$ is as in \eqref{eq:c},
be integrable for $f$. Since $\mf g_1^E$ is coisotropic, by Proposition \ref{prop4b} we have $c^2=0$.
Clearly, $f+E=(f+E)\id_{V_2}+(f+E)\id_{V_{\neq2}}$, and $f+E$ preserves the direct sum decomposition \eqref{20201203:eq1}.
Note that $(f+E)\id_{V_{\neq2}}=f\id_{V_{\neq2}}$ which is nilpotent.
On the other hand, it is not difficult to check that if $v\in V_2$, then we have $(f+E)^{2D}v=c^2v=0$. Hence, $(f+E)\id_{V_2}$ is nilpotent as well.
This proves that $f+E$ is nilpotent, contradicting the fact that it is integrable.
\end{proof}
As a consequence of Propositions \ref{thm:glintiffss}, \ref{prop4} and \ref{cor:nointinW-b} we get the following.
\begin{corollary}\label{cor:nointinW}
If $E\in\mf g_k$, $k\ge1$, is an integrable element for $f\in\mf{gl}_N$ or $\mf{sl}_N$, then $k=d$ or $k=d-1$.
In other words, $f+E$ is integrable if and only if it is an integrable cyclic or quasicyclic element.
\end{corollary}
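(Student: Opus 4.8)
The plan is to assemble this corollary directly from the three preceding propositions together with the structural description of $\mf z(\mf g_{\ge2})$ in Proposition \ref{lemma2}. First I would recall that, by the definition of an integrable triple, an integrable element $E$ satisfies condition (I\ref{def:intrip2}); in particular $[E,\mf g_{\ge2}]=0$, so that $E\in\mf z(\mf g_{\ge2})$. By Proposition \ref{lemma2} we have the decomposition $\mf z(\mf g_{\ge2})=\mf g_d\oplus\mf g_{d-1}\oplus W\oplus\mb F\id_V$, and since $E$ is homogeneous of degree $k\ge1$ while $\id_V$ sits in degree $0$, the element $E$ must lie entirely in one of the homogeneous summands $\mf g_d$ (degree $d$), $\mf g_{d-1}$ (degree $d-1$), or $W\subset\mf g_{d-2}$ (degree $d-2$).

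Next I would eliminate the last possibility. By Proposition \ref{cor:nointinW-b} there are no integrable elements for $f$ in $W$. Therefore an integrable homogeneous $E$ of positive degree must lie in $\mf g_d$ or in $\mf g_{d-1}$, that is, $k=d$ or $k=d-1$, which is the first assertion. Here one should note the degenerate case $d=2$, where $W\subset\mf g_0$ is already excluded by the requirement $k\ge1$, so that only $\mf g_2$ and $\mf g_1$ occur; for $d\ge4$ the exclusion of $W$ genuinely uses Proposition \ref{cor:nointinW-b}.

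For the ``in other words'' reformulation, I would invoke Propositions \ref{thm:glintiffss} and \ref{prop4}, which not only confirm that $\mf g_d$ and $\mf g_{d-1}$ do contain integrable elements (respectively the $E=u(f\transpose)^D$ with $u$ semisimple, and the $E=a(f\transpose)^{D-1}+b(f\transpose)^D$ with $ab=\lambda\id_{V[D]}$, $\lambda\neq0$), but identify them precisely. By the terminology fixed in the Introduction, an integrable $E\in\mf g_d$ yields an integrable cyclic element $f+E$, and an integrable $E\in\mf g_{d-1}$ yields an integrable quasi-cyclic element, so the two formulations are equivalent.

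Since all the substantive work---the computation of the centralizer, the coisotropy criteria, and the Jordan-decomposition analysis---is already carried out in Propositions \ref{lemma2}, \ref{thm:glintiffss}, \ref{prop4}, and \ref{cor:nointinW-b}, the only care needed here is verifying that the positive-degree graded pieces of $\mf z(\mf g_{\ge2})$ are exactly $\mf g_d$, $\mf g_{d-1}$, and $W$, and accounting for the degenerate low-depth cases. The main conceptual obstacle, namely that $W$ carries no integrable elements at all, has already been dispatched in Proposition \ref{cor:nointinW-b}, so what remains is purely a matter of bookkeeping the degrees.
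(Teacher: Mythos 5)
Your proposal is correct and follows essentially the same route as the paper, which deduces the corollary from Propositions \ref{thm:glintiffss}, \ref{prop4} and \ref{cor:nointinW-b}; you merely make explicit the intermediate step (already recorded in the paper right after Proposition \ref{lemma2}) that condition (I\ref{def:intrip2}) forces $E\in\mf z(\mf g_{\ge2})\cap\mf g_{\ge1}=(\mf g_d\oplus\mf g_{d-1}\oplus W)\cap\mf g_{\ge1}$. Your bookkeeping of the degenerate case $p_1=2$ (where $W\subset\mf g_0$) is a sensible addition consistent with the paper's own remark that $W\subset\mf g_{\ge1}$ only for $p_1\ge3$.
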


\begin{remark}
For $\mf g=\mf{gl}_N$, the triple $(0,f,\id_V)$ satisfies Definition (I\ref{def:intrip1}),(I\ref{def:intrip2}) and (I\ref{def:intrip3}). However,
$\id_V\in\mf g_0$.
\end{remark}

\subsubsection{} Here we reformulate the results on quasi-cyclic elements in $\mf g=\mf{gl}_N$ or $\mf{sl}_N$ in terms of polar linear groups.
First, note that $\mf g_{d-1}$ is naturally identified with the space
\begin{equation}\label{eq:hom}
\Hom(V[D-1],V[D])\oplus\Hom(V[D],V[D-1])
\end{equation}
(recall equation \eqref{eq:aandb}) with the action of the group $Z(\mf s)$ defined by the natural action of $GL_{r_2}\times GL_{r_1}$
(recall that $\dim V[D]=r_1$ and $\dim V[D-1]=r_2$). This linear group is polar since it is a theta group.

Proposition \ref{prop4} says that an element $E=\ph\oplus\psi$ from \eqref{eq:hom} is integrable if and only if
\begin{equation}\label{eq:phipsi}
\ph\circ\psi=\lambda\id_{r_1},\quad\lambda\in\mb F,\quad\lambda\ne0.
\end{equation}
It follows that for existence of an integrable $E$ in $\mf g_{d-1}$ it is necessary and sufficient that
\begin{equation}\label{eq:r1ler2}
r_1\le r_2.
\end{equation}

Suppose that condition \eqref{eq:r1ler2} holds.
Then for the polar linear group $GL_{r_2}\times GL_{r_1}$ acting on \eqref{eq:hom} one can choose a Cartan subspace $C$,
consisting of the following matrices (in some bases of $V[D-1]$ and $V[D]$);
\[
C=\left\{(A\ 0)\oplus(A\ 0)\transpose\mid A=\diag(\lambda_1,...,\lambda_{r_1})\in\Mat_{r_1\times r_1}\right\}.
\]
Then condition \eqref{eq:phipsi} on an element from $C$ means that $\lambda_1^2=...=\lambda_{r_1}^2=\lambda^2$,
hence $\lambda_j=\pm\lambda$ for all $j$. Since the Weyl group of $C$ contains all sign changes of diagonal elements,
up to the action of the Weyl group and rescaling, $C$ contains a unique integrable element $(\id_{r_1}\ 0)\oplus(\id_{r_1}\ 0)\transpose$.
Due to Theorem \ref{thm:quasintiffss} \eqref{thm:quasintiffsseven}, any integrable $E\in\mf g_{d-1}$ is $Z(\mf s)$-conjugate to $C$.
Thus we obtain the following theorem.

\begin{theorem}
Let $\mf g=\mf{gl}_N$ or $\mf{sl}_N$ and let $f\in\mf g$ be a non-zero nilpotent element of depth $d$, corresponding to the partition \eqref{eq:partition}.
\begin{enumerate}[(a)]
\item If there exists an integrable $E\in\mf g_j$, $j\ge1$, for the nilpotent element $f$, then $j=d$ or $d-1$.
\item The linear group $Z(\mf s)|\mf g_{d-1}$ is polar.
\item The element $f$ admits an integrable element $E\in\mf g_{d-1}$ if and only if $r_1\le r_2$.
Provided that this condition holds, there exists a unique, up to equivalence, integrable element $E\in\mf g_{d-1}$ for $f$.
\end{enumerate}
\end{theorem}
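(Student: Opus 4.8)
The plan is to assemble results already established in the preceding subsections, organized around the three claims. Part (a) is immediate from Corollary \ref{cor:nointinW}, which asserts that an integrable $E\in\mf g_k$ with $k\ge1$ can occur only for $k=d$ or $k=d-1$. For part (b), I would recall the identification of $\mf g_{d-1}$ with the space \eqref{eq:hom} coming from the decomposition \eqref{eq:aandb}, writing $E=a(f\transpose)^{D-1}+b(f\transpose)^D$ with $a\in\Hom(V[D-1],V[D])$ and $b\in\Hom(V[D],V[D-1])$. Under this identification the induced $Z(\mf s)$-action is the tautological action of $GL_{r_2}\times GL_{r_1}$, where $r_1=\dim V[D]$ and $r_2=\dim V[D-1]$; this is a theta group, hence polar by \cite{DK85}.

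For the existence half of part (c), I would invoke Proposition \ref{prop4}: the element $E=\ph\oplus\psi$ is integrable precisely when $\ph\circ\psi=\lambda\id_{V[D]}$ with $\lambda\ne0$, as in \eqref{eq:phipsi}. Invertibility of $\ph\circ\psi$ forces $\psi$ to admit a left inverse, hence to be injective, so $r_1\le r_2$. Conversely, when \eqref{eq:r1ler2} holds one takes $\psi$ to be any injection $V[D]\hookrightarrow V[D-1]$ and $\ph$ a left inverse, producing an integrable element. This establishes the criterion $r_1\le r_2$.

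For uniqueness, assuming $r_1\le r_2$, I would use the explicit Cartan subspace $C$ of $GL_{r_2}\times GL_{r_1}$ consisting of the pairs $(A\ 0)\oplus(A\ 0)\transpose$ with $A=\diag(\lambda_1,\dots,\lambda_{r_1})$. On $C$ the integrability condition \eqref{eq:phipsi} reads $\lambda_1^2=\dots=\lambda_{r_1}^2=\lambda^2\ne0$, so each $\lambda_j=\pm\lambda$; since the Weyl group of $C$ contains all sign changes of the $\lambda_j$, up to $W$ and rescaling there is a single integrable element of $C$, namely $(\id_{r_1}\ 0)\oplus(\id_{r_1}\ 0)\transpose$. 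Finally, because $d=2p_1-2$ is even, Theorem \ref{thm:quasintiffss}\eqref{thm:quasintiffsseven} shows that any integrable $E\in\mf g_{d-1}$ is a semisimple element of $Z(\mf s)|\mf g_{d-1}$, whence by Theorem \ref{thm:Cartanstuff}\eqref{smallWeyl} its $Z(\mf s)$-orbit meets $C$; combined with the previous sentence this yields uniqueness up to equivalence.

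The main obstacle is precisely this uniqueness step, since it is the only place requiring structural input beyond the explicit linear algebra: one needs the even-depth ``only if'' direction, Theorem \ref{thm:quasintiffss}\eqref{thm:quasintiffsseven}, to push an arbitrary integrable element into the Cartan subspace, together with the observation that the Weyl group realizes all sign changes of the diagonal entries, which is what collapses the $2^{r_1}$ a priori sign choices to a single equivalence class.
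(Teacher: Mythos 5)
Your proposal is correct and follows essentially the same route as the paper: part (a) via Corollary \ref{cor:nointinW}, part (b) via the identification of $\mf g_{d-1}$ with \eqref{eq:hom} as a theta group, and part (c) via Proposition \ref{prop4}, the explicit Cartan subspace with its sign-change Weyl group, and Theorem \ref{thm:quasintiffss}\eqref{thm:quasintiffsseven} to reduce an arbitrary integrable element to the Cartan subspace. Your spelled-out justification of the necessity of $r_1\le r_2$ (injectivity of $\psi$) and of the reduction to $C$ (semisimplicity plus Theorem \ref{thm:Cartanstuff}\eqref{smallWeyl}) only makes explicit what the paper leaves implicit.
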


\begin{remark}
It follows from \cite{EKV13} and the above discussion that for a nilpotent element $f\in\mf{sl}_N$ there exists $E\in\mf g_j$,
such that the element $f+E$ is semisimple if and only if $j=d$ and $p_2=1$, or $j=d-1$ and $r_1=r_2$ and $p_3=1$.
This claim was stated in \cite{DSKV13}, where the associated integrable Hamiltonian systems were also discussed.
\end{remark}

\subsection{General setup for symplectic and orthogonal Lie algebras}\label{sec:setupBCD}
%

Recall from \cite{CMG93} that nilpotent orbits of $\mf{sp}_N$ (respectively $\mf{so}_N$)
are in one-to-one correspondence with partitions $\partition p$ of $N$
as in \eqref{eq:partition} with the property that if $p_a$ is odd (respectively even), then $r_a$ is even, $1\le a\le s$.

Let $V$ be the $N$-dimensional vector space over $\mb F$ with basis
$\{e_{\alpha}\}_{\alpha\in I}$, where
$I$ is an index set for the basis, which can be identified with the set of boxes in the pyramid associated to
$\partition p$ (cf. Fig. \ref{fig:pyramid}).
Given $\alpha\in I$ we let $\alpha'\in I$ correspond to the box in the same rectangle as $\alpha$ reflected
with respect to the center of the rectangle. For example, in Fig. \ref{fig:pyramid}, if $\alpha$ is the box
labelled by $17$, then $\alpha'$ is the box labelled by $24$, while if $\alpha$ is the box labelled by $23$,
then $\alpha'$ is the box labelled by $18$.

Clearly, $\alpha''=\alpha$. Let $\eta=\pm1$, and choose a map $\epsilon:I\to\{\pm1\}$
with the properties that
\begin{equation}\label{20201001:eq4}
\epsilon_\alpha\epsilon_{\alpha'}=\eta\,,
\qquad
\epsilon_{\alpha}\epsilon_{\beta}=-1\,,
\qquad
\text{for }\alpha,\beta \text{ adjacent boxes in the same row}\,.
\end{equation}
For example, let $\nu:I\to\{1,2,\dots,N\}$ be the ordering of the boxes of the pyramid going from right
to left and then from bottom to top. Then $\epsilon_\alpha=(-1)^{\nu(\alpha)}$ satisfies
the properties in \eqref{20201001:eq4} with $\eta=(-1)^{N+1}$. It is easy to see that for every choice of
$\eta$ such a map $\epsilon:I\to\{\pm1\}$ exists.

Let us define on $V$ a bilinear form $\langle \cdot\mid\cdot\rangle:V\otimes V\to\mb F$ letting
\begin{equation}\label{20200819:eq3}
\langle e_\alpha|e_\beta\rangle=\epsilon_{\alpha}\delta_{\alpha,\beta'}
\,,\qquad\alpha,\beta\in I
\,.
\end{equation}
\begin{lemma}\label{20200819:lem2}
The bilinear form \eqref{20200819:eq3} satisfies ($v,w\in V$)
$$
\langle v|w\rangle=\eta\langle w|v\rangle
\,.
$$
\end{lemma}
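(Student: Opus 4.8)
The plan is to verify the identity on basis vectors and then extend by bilinearity. Since $\langle\cdot|\cdot\rangle$ is bilinear in both arguments, it suffices to prove $\langle e_\alpha|e_\beta\rangle=\eta\langle e_\beta|e_\alpha\rangle$ for all $\alpha,\beta\in I$. Using the definition \eqref{20200819:eq3}, I would write the two sides as
\[
\langle e_\alpha|e_\beta\rangle=\epsilon_\alpha\delta_{\alpha,\beta'}
\qquad\text{and}\qquad
\langle e_\beta|e_\alpha\rangle=\epsilon_\beta\delta_{\beta,\alpha'},
\]
so that the whole claim reduces to comparing these two scalars.

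The key observation is that the two Kronecker deltas coincide. Indeed, since the reflection $\alpha\mapsto\alpha'$ is an involution, $\alpha''=\alpha$, the condition $\alpha=\beta'$ is equivalent to $\beta=\alpha'$, whence $\delta_{\alpha,\beta'}=\delta_{\beta,\alpha'}$. Thus if this common delta vanishes, both sides of the desired identity are zero and there is nothing to prove. It remains to treat the case $\alpha=\beta'$, equivalently $\beta=\alpha'$. Here I would invoke the first relation in \eqref{20201001:eq4}, namely $\epsilon_\alpha\epsilon_{\alpha'}=\eta$; substituting $\alpha'=\beta$ gives $\epsilon_\alpha\epsilon_\beta=\eta$, and since $\epsilon_\beta=\pm1$ (so $\epsilon_\beta^2=1$) this rearranges to $\epsilon_\alpha=\eta\epsilon_\beta$.

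Combining these two steps, in the nonvanishing case one gets
\[
\langle e_\alpha|e_\beta\rangle=\epsilon_\alpha\delta_{\alpha,\beta'}=\eta\epsilon_\beta\delta_{\beta,\alpha'}=\eta\langle e_\beta|e_\alpha\rangle,
\]
which is exactly the asserted symmetry; bilinearity then yields $\langle v|w\rangle=\eta\langle w|v\rangle$ for all $v,w\in V$. I do not expect any real obstacle here: the statement is a direct unwinding of the definition \eqref{20200819:eq3} together with the symmetry property $\epsilon_\alpha\epsilon_{\alpha'}=\eta$ imposed on the sign function $\epsilon$. The only point requiring a moment's care is recognizing that the two Kronecker deltas are equal, which follows from the involutivity $\alpha''=\alpha$; notably, the second and third relations in \eqref{20201001:eq4} (the sign rule for adjacent boxes) play no role in this particular lemma and will only be needed later when analyzing the action of $f$ and $f\transpose$ on the form.
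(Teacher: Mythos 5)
Your proof is correct and follows essentially the same route as the paper's: both reduce to basis vectors, identify $\delta_{\alpha,\beta'}=\delta_{\beta,\alpha'}$ via the involutivity $\alpha''=\alpha$, and convert $\epsilon_\alpha\epsilon_{\alpha'}=\eta$ into $\epsilon_{\alpha'}=\eta\epsilon_\alpha$ using $\epsilon^2=1$. The paper compresses this into a single chain of equalities, but the content is identical, including your (accurate) observation that the adjacency relation in \eqref{20201001:eq4} is not needed here.
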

\begin{proof}
From equations \eqref{20201001:eq4} and \eqref{20200819:eq3} we have
$$
\langle e_\beta|e_\alpha\rangle=\epsilon_\beta\delta_{\alpha',\beta}=\epsilon_{\alpha'}\delta_{\alpha',\beta}
=\eta\epsilon_{\alpha}\delta_{\alpha,\beta'}=\eta\langle e_\alpha|e_\beta\rangle\,,
\qquad \alpha,\beta\in I
\,.
$$
\end{proof}
Given $A\in\End V$, let us denote by $A^\dagger$ its adjoint with respect to the bilinear form \eqref{20200819:eq3}. By Lemma
\ref{20200819:lem2} we have that
\begin{equation}\label{20201212:eq1}
\mf g=\{A\in\End V\mid A=-A^{\dagger}\}\simeq\mf{sp}_N\,(\text{resp. }\mf{so}_N)\,, \eta=-1\,
\,(\text{resp. }\eta=1)
\,.
\end{equation}
We denote by $E_{\alpha,\beta}\in\End V$, $\alpha,\beta\in I$, the elementary matrices:
$E_{\alpha,\beta}e_\gamma=\delta_{\beta,\gamma}e_\alpha$.
\begin{lemma}\label{20200819:lem3}
We have that $(E_{\alpha,\beta})^\dagger=\epsilon_\alpha\epsilon_\beta E_{\beta',\alpha'}$, $\alpha,\beta\in I$.
\end{lemma}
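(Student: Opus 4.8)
The plan is to verify the defining property of the adjoint directly on the basis $\{e_\gamma\}_{\gamma\in I}$. By definition, $(E_{\alpha,\beta})^\dagger$ is the unique endomorphism of $V$ satisfying $\langle E_{\alpha,\beta}v\mid w\rangle=\langle v\mid(E_{\alpha,\beta})^\dagger w\rangle$ for all $v,w\in V$, so it suffices to check this identity when $v=e_\gamma$ and $w=e_\delta$ range over the basis. First I would compute the left-hand side: since $E_{\alpha,\beta}e_\gamma=\delta_{\beta,\gamma}e_\alpha$, formula \eqref{20200819:eq3} gives $\langle E_{\alpha,\beta}e_\gamma\mid e_\delta\rangle=\delta_{\beta,\gamma}\,\epsilon_\alpha\,\delta_{\alpha,\delta'}$, which, using the involutivity $\delta''=\delta$ established just before the lemma, I rewrite as $\delta_{\beta,\gamma}\,\epsilon_\alpha\,\delta_{\delta,\alpha'}$.

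Next I would evaluate the right-hand side against the claimed candidate $\epsilon_\alpha\epsilon_\beta E_{\beta',\alpha'}$. Using $E_{\beta',\alpha'}e_\delta=\delta_{\alpha',\delta}e_{\beta'}$ and then \eqref{20200819:eq3} together with $\langle e_\gamma\mid e_{\beta'}\rangle=\epsilon_\gamma\delta_{\gamma,\beta''}=\epsilon_\gamma\delta_{\gamma,\beta}$, I obtain $\epsilon_\alpha\epsilon_\beta\,\delta_{\alpha',\delta}\,\epsilon_\gamma\,\delta_{\gamma,\beta}$. The factor $\delta_{\gamma,\beta}$ forces $\epsilon_\gamma=\epsilon_\beta$, and since $\epsilon_\beta^2=1$ this simplifies to $\epsilon_\alpha\,\delta_{\gamma,\beta}\,\delta_{\delta,\alpha'}$, exactly matching the left-hand side. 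By the uniqueness of the adjoint (guaranteed by the nondegeneracy of the form, which follows from \eqref{20200819:eq3}), this proves $(E_{\alpha,\beta})^\dagger=\epsilon_\alpha\epsilon_\beta E_{\beta',\alpha'}$. The computation is entirely routine; the only points demanding care are the bookkeeping of the reflection involution $\gamma\mapsto\gamma'$ (remembering $\gamma''=\gamma$) and the cancellation of signs via $\epsilon_\gamma^2=1$, so I do not anticipate any genuine obstacle beyond index tracking.
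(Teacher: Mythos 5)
Your proof is correct and follows essentially the same route as the paper's: a direct evaluation of the pairing $\langle E_{\alpha,\beta}e_\gamma\mid e_\delta\rangle$ on basis vectors, using \eqref{20200819:eq3}, the involutivity of $\gamma\mapsto\gamma'$, and $\epsilon_\beta^2=1$. The only cosmetic difference is that the paper transforms the left-hand side step by step into $\epsilon_\alpha\epsilon_\beta\langle e_\gamma\mid E_{\beta',\alpha'}e_\eta\rangle$, whereas you compute both sides separately and compare; the content is identical.
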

\begin{proof}
By a straightforward computation we get ($\alpha,\beta,\gamma,\eta\in I$)
\begin{align*}
\langle E_{\alpha,\beta}e_\gamma|e_\eta\rangle &=\delta_{\beta,\gamma}\langle e_{\alpha}|e_{\eta}\rangle
=\epsilon_{\alpha}\delta_{\beta,\gamma}\delta_{\alpha,\eta'}
=\epsilon_{\alpha}\epsilon_\beta\epsilon_\gamma\delta_{\beta,\gamma}\delta_{\alpha,\eta'}
\\
&=\epsilon_{\alpha}\epsilon_\beta\delta_{\alpha,\eta'}\langle e_\gamma|e_{\beta'}\rangle
=\epsilon_{\alpha}\epsilon_\beta\langle e_\gamma|E_{\beta',\alpha'}e_{\eta}\rangle
\,.
\end{align*}
The claim follows.
\end{proof}
For every $\alpha,\beta\in I$, we let
\begin{equation}\label{20200819:eq4}
F_{\alpha,\beta}=E_{\alpha,\beta}-\epsilon_{\alpha}\epsilon_{\beta}E_{\beta',\alpha'}\in\mf g\,.
\end{equation}
Note also that
\begin{equation}\label{20200819:eq4b}
F_{\beta',\alpha'}=-\epsilon_{\alpha}\epsilon_{\beta}F_{\alpha,\beta}\,,
\qquad
\alpha,\beta\in I
\,.
\end{equation}
\begin{lemma}\label{20200819:lem4}
Let $A=\sum_{\alpha,\beta\in I}a_{\alpha,\beta}E_{\alpha,\beta}\in\End V$.
\begin{enumerate}[(a)]
\item $A\in\mf g$ if and only if $a_{\beta',\alpha'}=-\epsilon_{\alpha}\epsilon_{\beta}a_{\alpha,\beta}$, for every
$\alpha,\beta\in I$.
\item If $A\in\mf g$, then
$$
A=\frac12\sum_{\alpha,\beta}a_{\alpha,\beta}F_{\alpha,\beta}
\,.
$$
\end{enumerate}
\end{lemma}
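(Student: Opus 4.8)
The plan is to prove both parts by direct computation, using the formula for the adjoint of an elementary matrix from Lemma \ref{20200819:lem3} together with the bookkeeping identity $\epsilon_{\alpha'}=\eta\epsilon_\alpha$, which follows at once from $\epsilon_\alpha\epsilon_{\alpha'}=\eta$ in \eqref{20201001:eq4} since $\epsilon_\alpha=\pm1$, and from $\eta^2=1$.

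For part (a), I would first compute $A^\dagger$. By Lemma \ref{20200819:lem3},
\[
A^\dagger=\sum_{\alpha,\beta\in I}a_{\alpha,\beta}(E_{\alpha,\beta})^\dagger
=\sum_{\alpha,\beta\in I}a_{\alpha,\beta}\epsilon_\alpha\epsilon_\beta E_{\beta',\alpha'}.
\]
Then I would reindex the sum by the involution $\alpha\mapsto\alpha'$, setting $\gamma=\beta'$ and $\delta=\alpha'$ (so that $\beta=\gamma'$, $\alpha=\delta'$, and the substitution is a bijection of $I\times I$). Using $\epsilon_\alpha\epsilon_\beta=\epsilon_{\delta'}\epsilon_{\gamma'}=\eta^2\epsilon_\gamma\epsilon_\delta=\epsilon_\gamma\epsilon_\delta$, this yields
\[
A^\dagger=\sum_{\gamma,\delta\in I}a_{\delta',\gamma'}\epsilon_\gamma\epsilon_\delta E_{\gamma,\delta}.
\]
Since the elementary matrices $\{E_{\gamma,\delta}\}$ are linearly independent, the condition $A=-A^\dagger$ is equivalent, coefficient by coefficient, to $a_{\gamma,\delta}=-\epsilon_\gamma\epsilon_\delta a_{\delta',\gamma'}$ for all $\gamma,\delta$. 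Multiplying through by $\epsilon_\gamma\epsilon_\delta$ gives the asserted relation $a_{\delta',\gamma'}=-\epsilon_\gamma\epsilon_\delta a_{\gamma,\delta}$, which is exactly part (a) after renaming indices.

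For part (b), assuming $A\in\mf g$, I would expand using the definition \eqref{20200819:eq4} of $F_{\alpha,\beta}$:
\[
\frac12\sum_{\alpha,\beta}a_{\alpha,\beta}F_{\alpha,\beta}
=\frac12\sum_{\alpha,\beta}a_{\alpha,\beta}E_{\alpha,\beta}
-\frac12\sum_{\alpha,\beta}a_{\alpha,\beta}\epsilon_\alpha\epsilon_\beta E_{\beta',\alpha'}.
\]
The first sum is plainly $\frac12A$. For the second sum I would again reindex via $\gamma=\beta'$, $\delta=\alpha'$ and substitute the relation from part (a) in the form $a_{\delta',\gamma'}=-\epsilon_\gamma\epsilon_\delta a_{\gamma,\delta}$; since also $\epsilon_\alpha\epsilon_\beta=\epsilon_\gamma\epsilon_\delta$, the second sum becomes $-\frac12\sum_{\gamma,\delta}a_{\gamma,\delta}E_{\gamma,\delta}=-\frac12A$. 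Adding the two contributions gives $\frac12A+\frac12A=A$, proving the claim. The computation is entirely routine; the only point requiring care is the consistent use of the reindexing $\alpha\mapsto\alpha'$ together with the sign identities $\epsilon_{\alpha'}=\eta\epsilon_\alpha$ and $\eta^2=1$, so that the factors $\epsilon_\alpha\epsilon_\beta$ are correctly tracked through the change of summation variables. I do not expect any genuine obstacle beyond this bookkeeping.
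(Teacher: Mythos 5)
Your proof is correct, and since the paper's own proof of this lemma is just the word ``Straightforward,'' your computation is precisely the intended argument: apply Lemma \ref{20200819:lem3}, reindex by the involution $(\alpha,\beta)\mapsto(\beta',\alpha')$, and track the signs via $\epsilon_{\alpha'}=\eta\epsilon_\alpha$ and $\eta^2=1$. Nothing further is needed.
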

\begin{proof}
Straightforward.
\end{proof}
The following commutation relations hold ($\alpha,\beta,\gamma,\eta\in I$):
\begin{equation}\label{comm:BC}
[F_{\alpha,\beta},F_{\gamma,\eta}]=\delta_{\gamma,\beta}F_{\alpha,\eta}
-\delta_{\eta,\alpha}F_{\beta,\gamma}-\epsilon_\alpha\epsilon_\beta\delta_{\alpha',\gamma}F_{\beta',\eta}
+\epsilon_\alpha\epsilon_\beta\delta_{\eta,\beta'}F_{\gamma,\alpha'}
\,.
\end{equation}
If we depict, as in Section \ref{sec:setup}, the basis elements of $V$ as boxes of a symmetric pyramid
associated to the partition $\partition p$ (cf. Fig. \ref{fig:pyramid}), let, as usual,
$f$ be the endomorphism which corresponds to "shifting to the left". Then, $f\in\mf g$.
Indeed, using the second property in \eqref{20201001:eq4}, we have $f=\sum_{\alpha\leftarrow\beta} E_{\alpha,\beta}=\frac12\sum_{\alpha\leftarrow\beta} F_{\alpha,\beta}$.
Note that the "shift to the right" operator $f\transpose$ lies in $\mf g$ as well.

As in Section \ref{sec:setup}, let $h\in\End V$ be the diagonal endomorphism of $V$
whose eigenvalue on $e_{\alpha}$ is the $x$-coordinate of the center of the box labeled by $\alpha$ (see Figure \ref{fig:pyramid}) which we denote by $x_{\alpha}$.
We then have the $h$-eigenspace decomposition of $V$ \eqref{eq:dech} where $D=p_1-1$ is the maximal eigenvalue of $h$.

Note that $x_{\alpha'}=-x_{\alpha}$ and we have
$$h=\sum_{\alpha\in I}x_\alpha E_{\alpha,\alpha}=\frac12\sum_{\alpha\in I}x_\alpha F_{\alpha,\alpha}\in\mf g\,.$$
Clearly, the matrices $F_{\alpha,\beta}$ are eigenvectors with respect to the adjoint action of $h$:
\begin{equation}\label{eq:adx-sp}
(\ad h)F_{\alpha,\beta}=(x_\alpha-x_\beta)F_{\alpha,\beta}
\,.
\end{equation}
This defines a $\mb Z$-grading of $\mf g$,
given by the $\ad h$-eigenspaces as in \eqref{eq:grading}:
\begin{equation}\label{eq:adx2-sp}
\mf g_k=\Span_{\mb F}\{F_{\alpha,\beta}\mid x_{\alpha}-x_\beta=k\}\,.
\end{equation}

Recalling the definition \eqref{eq:vi} of the spaces $V_{\pm,i}$,
we have an isomorphism $f^{D_i}:V_{-,i}\to V_{+,i}$,
$i=1,\dots,s$.
Then the bilinear form \eqref{20200819:eq3} induces non-degenerate bilinear forms
$\beta_i(\cdot\,,\,\cdot):V_{+,i}\otimes V_{+,i}\to\mb F$, $i=1,\dots,s$, defined by
\begin{equation}\label{20200819:bilinear}
\beta_i(v,w):=\langle v|f^{D_i}w\rangle\,,\quad v,w\in V_{+,i}\,.
\end{equation}
Using Lemma \ref{20200819:lem2} and the fact that $f^\dagger=-f$, we have ($v,w\in V_{+,i}$)
\begin{equation}\label{20200819:bilinear2}
\beta_i(v,w)=(-1)^{D_i}\eta\beta_i(w,v)\,.
\end{equation}

\subsection{Integrable triples in \texorpdfstring{$\mf g=\mf{sp}_N$}{spN}}\label{sec:spN}
For $\eta=-1$, the Lie algebra \eqref{20201212:eq1} is $\mf g\simeq\mf{sp}_N$. In this case the depth of the grading \eqref{eq:adx2-sp} is $d=2D=2p_1-2$. To see this, we describe explicitly the space $\mf g_{2D}$ (clearly, $\mf g_k=0$, for $k>2D$).
Let $A\in\Hom(V[-D],V[D])$, and let $\bar A$ $=$ $Af^{D}$ $\in$ $\End(V[D])$.
We have
\begin{align*}
\beta_1(\bar A v,w)=\langle Af^{D}v|f^Dw\rangle=(-1)^{D}\langle v|f^{D}A^\dagger f^{D}w\rangle
=(-1)^{D}\beta_1(v|\overline{A^\dagger}w)
\,.
\end{align*}
Hence, the adjoint of $\bar A\in\End V[D]$ with respect to $\beta_1$ is
\begin{equation}\label{20201213:eq1}
\bar A^\star=(-1)^D\overline{A^\dagger}\,.
\end{equation}
As a consequence, we have a bijection
\begin{equation}\label{rem:gd}
\mf g_{2D}\simeq\{B\in\End(V[D])\mid B^\star=(-1)^{p_1}B\}\,,\quad A\mapsto\bar A
\,.
\end{equation}
Using equations \eqref{20200819:bilinear2} and \eqref{rem:gd} we have that, if $p_1$ is even, then $\mf g_d$ is identified with the space of selfadjont (with respect to the symmetric bilinear form $\beta_1$ in \eqref{20200819:bilinear}) endomorphsims of $V[D]$, while if $p_1$ is odd then $\mf g_d\simeq\mf{sp}(V[D],\beta_1)$.

\subsubsection{The centralizer $\mf z(\mf g_{\ge2})$}
From Lemma \ref{20200819:lem4} we have that $E\in\mf g_k$ can be decomposed as
\begin{equation}\label{eq:E-sp}
E=\sum_{x_\alpha-x_\beta=k}c_{\alpha\beta}F_{\alpha\beta}\,,\quad c_{\beta',\alpha'}=-\epsilon_{\alpha}\epsilon_{\beta}c_{\alpha,\beta}
\,.
\end{equation}
\begin{proposition}\label{lemma2-sp}
The centralizer of $\mf g_{\ge2}$ in $\mf g$ is $\mf z(\mf g_{\ge2})=W\oplus\mf g_{d-1}\oplus\mf g_d$, where
$$
W=\Span_{\mb F}\{F_{\alpha\beta}\mid x_{\alpha}=D-1,x_{\beta}=-D+1\}
\subset\mf g_{d-2}\,.
$$
\end{proposition}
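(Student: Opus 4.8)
My plan is to follow the template of the $\mf{gl}_N$ computation in Lemma \ref{lemma1} and Proposition \ref{lemma2}, the two new inputs being the skew-adjointness relation \eqref{eq:E-sp} of Lemma \ref{20200819:lem4}(a) and the sign $\eta=-1$ special to $\mf{sp}_N$. Since $\mf g_{\ge2}$ is $\ad h$-graded, so is its centralizer, and I may treat each homogeneous component separately. For the inclusion $\supseteq$ I would argue by degree: as $d=2D$ is the depth, $[\mf g_d\oplus\mf g_{d-1},\mf g_{\ge2}]\subseteq\bigoplus_{k>d}\mf g_k=0$, so $\mf g_d\oplus\mf g_{d-1}$ centralizes $\mf g_{\ge2}$; and $[W,\mf g_{\ge3}]\subseteq\bigoplus_{k>d}\mf g_k=0$, so for $W\subseteq\mf g_{d-2}$ it remains only to check $[W,\mf g_2]=0$. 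This I would read off \eqref{comm:BC}: for $F_{\alpha\beta}$ with $x_\alpha=D-1$, $x_\beta=-D+1$ and $F_{\gamma\eta}\in\mf g_2$, the four terms survive only if one of $\gamma=\beta$, $\eta=\alpha$, $\gamma=\alpha'$, $\eta=\beta'$ holds, and in every case the remaining box coordinate is forced to be $\pm(D+1)$, which does not occur in the pyramid; hence all four terms vanish.

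For the inclusion $\subseteq$, take a homogeneous $E\in\mf g_k\cap\mf z(\mf g_{\ge2})$. Because $e\in\mf g_2$ and $\ad e$ is injective on $\mf g_{<0}$ (equivalently $[E,e]=0$ makes $E$ an $\mf s$-highest weight vector, of weight $\ge0$), one has $k\ge0$. For $k\ge1$ I write $E=\sum c_{\alpha\beta}E_{\alpha\beta}$ subject to $c_{\beta'\alpha'}=-\epsilon_\alpha\epsilon_\beta c_{\alpha\beta}$, fix a support pair $(\alpha_0,\beta_0)$ with $x_{\alpha_0}\le D-2$, and bracket with $F_{\tilde\alpha\alpha_0}\in\mf g_{\ge2}$ where $x_{\tilde\alpha}=D$. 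The two sums indexed by boxes of coordinate $D+k>D$ and $-D-k<-D$ are empty, leaving $[F_{\tilde\alpha\alpha_0},E]=\sum_\beta c_{\alpha_0\beta}E_{\tilde\alpha\beta}+\epsilon_{\tilde\alpha}\epsilon_{\alpha_0}\sum_\gamma c_{\gamma\alpha_0'}E_{\gamma\tilde\alpha'}$. The coefficient of $E_{\tilde\alpha\beta_0}$ equals $c_{\alpha_0\beta_0}$ unless the two sums collide on it, which forces $x_{\alpha_0}=k-D$ and $\tilde\alpha=\beta_0'$; in that single case the coefficient equation combined with \eqref{eq:E-sp} reads $c_{\alpha_0\beta_0}=\epsilon_{\beta_0'}\epsilon_{\beta_0}\,c_{\alpha_0\beta_0}=\eta\,c_{\alpha_0\beta_0}$, and $\eta=-1$ gives $c_{\alpha_0\beta_0}=0$. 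Thus $c_{\alpha\beta}=0$ whenever $x_\alpha\le D-2$; feeding this back into \eqref{eq:E-sp} gives $c_{\alpha\beta}=0$ whenever $x_\beta\ge-D+2$, so $E$ is supported on $\{x_\alpha\ge D-1,\,x_\beta\le-D+1\}$, i.e. $E\in\mf g_d\oplus\mf g_{d-1}\oplus W$.

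The degree-zero case is where I expect the real difficulty, since the test-element bound above leaves the boundary coordinates $x_\alpha\in\{D-1,D\}$ uncontrolled and, more essentially, the scalar solution $\lambda\,\id_V$ that survives in the $\mf{gl}_N$ case must now be excluded. Here I would switch to the representation theory of $\mf s$: a weight-zero highest weight vector lies in $\mf z(\mf s)$, so $[E,\cdot]$ commutes with $\ad f$, and $[E,\mf g_{\ge2}]=0$ becomes equivalent to $E$ annihilating every $\mf s$-isotypic component of $\mf g$ of highest weight $\ge2$. Decomposing $V=\bigoplus_i\mb F^{r_i}\otimes V(p_i)$ as an $\mf s$-module, with $V(p_i)$ the irreducible of dimension $p_i$, and $E=\bigoplus_iE_i$ with $E_i\in\End(\mb F^{r_i})$, the block $\End(V(p_1))$ carries a string of highest weight $2p_1-2\ge2$, which forces $[E_1,\cdot]=0$, hence $E_1=0$ since $E_1$ is traceless; then for each $i$ the block $\Hom(V(p_1),V(p_i))$ carries a string of highest weight $p_1+p_i-2$, which is $\ge2$ exactly when $p_1\ge3$, and its annihilation forces $E_i\,\phi=0$ for all $\phi$, i.e. $E_i=0$. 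This gives $\mf z(\mf g_{\ge2})\cap\mf g_0=0$ for $p_1\ge3$, whereas for $p_1=2$ (so $D=1$, $W\subset\mf g_0$) the only unconstrained blocks are those with $p_i=1$, which recover exactly $W$. Assembling the three degree ranges yields $\mf z(\mf g_{\ge2})=W\oplus\mf g_{d-1}\oplus\mf g_d$, as claimed. The crux throughout is the sign $\eta=-1$ in the overlap step of the $k\ge1$ argument: for $\eta=+1$ that relation reads $c_{\alpha_0\beta_0}=c_{\alpha_0\beta_0}$ and becomes vacuous, which is precisely why the orthogonal case requires a separate treatment.
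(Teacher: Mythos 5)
Your proof is correct, and its core step coincides with the paper's: for a homogeneous centralizing element of degree $k\ge1$ one brackets against $F_{\tilde\alpha\tilde\beta}$ with $x_{\tilde\alpha}=D$, $x_{\tilde\beta}\le D-2$, notes that the terms indexed by boxes at $x=\pm(D+k)$ are empty, and reads off the vanishing of all $c_{\alpha\beta}$ with $x_\alpha\le D-2$, the relation of Lemma \ref{20200819:lem4}(a) then killing the coefficients with $x_\beta\ge-D+2$. The differences are in packaging and in the two auxiliary steps. You expand $E$ in elementary matrices and must treat the collision on $E_{\tilde\alpha\tilde\alpha'}$ by hand, where the factor $1-\eta=2$ appears; the paper works in the $F_{\alpha\beta}$ basis, where \eqref{20200819:eq4b} folds that collision into the overall factor of $2$ in \eqref{eq1-sp} and the role of $\eta=-1$ is hidden in the fact that $F_{\tilde\alpha\tilde\alpha'}\ne0$ for $\mf{sp}_N$ --- same content, different bookkeeping. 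For the inclusion $\supseteq$ the paper simply quotes Proposition \ref{lemma2}, via $W\oplus\mf g_{d-1}\oplus\mf g_d=\mf z((\mf{gl}_N)_{\ge2})\cap\mf g\subseteq\mf z(\mf g_{\ge2})$; your direct check of $[W,\mf g_2]=0$ from \eqref{comm:BC} is fine but reproves something already available. The genuinely different piece is your degree-zero argument: the paper just repeats the matrix computation of Lemma \ref{lemma1} and asserts $E=0$, whereas you pass through $\mf z(\mf s)$ and $\mf{sl}_2$-isotypic components. This is cleaner, and it correctly isolates the boundary case $p_1=2$, where $W\subset\mf g_0$ and the degree-zero centralizer is $W$ rather than $0$ --- a case the paper's one-line ``$E=0$'' glosses over. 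The only point to sharpen there is that the annihilation constraint lives on $\mf g\cong\sym^2V$, not on all of $\End V$; but the top $\mf s$-components $\sym^2(\mb F^{r_1})\otimes V(2p_1-1)$ and $\mb F^{r_1}\otimes\mb F^{r_i}\otimes V(p_1+p_i-1)$ still force first $E_1=0$ and then each $E_i=0$, so your conclusion stands.
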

\begin{proof}
As in Lemma \ref{lemma1} we have
$\mf z(\mf g_{\ge2})\subset\mf g_{\ge0}$.
Let $\tilde\alpha,\tilde\beta\in I$ be such that $x_{\tilde\alpha}=D$ and
$x_{\tilde\beta}\le D-2$ (the box $\tilde\beta$ is completely at the left of the box $\tilde\alpha$).
Then $F_{\tilde\alpha\tilde\beta}\in\mf g_{\ge2}$. Hence, letting $E$ as in \eqref{eq:E-sp},
and using the commutation relations \eqref{comm:BC} and equation \eqref{20200819:eq4b} we get
\begin{equation}\label{eq1-sp}
0=[F_{\tilde\alpha\tilde\beta},E]=\sum_{x_{\alpha}-x_{\beta}=k}c_{\alpha\beta}[F_{\tilde\alpha\tilde\beta},F_{\alpha\beta}]
=\sum_{x_{\tilde\beta}-x_{\beta}=k}2c_{\tilde\beta\beta}F_{\tilde\alpha\beta}
-\sum_{x_{\alpha}-D=k}2c_{\alpha\tilde\alpha}F_{\alpha\tilde\beta}\,.
\end{equation}
If $k\ge1$, then the condition $x_{\alpha}-D=k\ge1$ is empty.
Hence,
$$
\sum_{x_{\tilde\beta}-x_{\beta}=k}c_{\tilde\beta\beta}F_{\tilde\alpha\beta}=0\,,
$$
which implies that that $c_{\alpha\beta}=0$ if $x_{\alpha}\le D-2$.
Using the second equation in \eqref{eq:E-sp}, we have also that $c_{\alpha\beta}=0$ if $x_{\beta}\ge -D+2$.
If $k=0$, a similar argument to the one used in the proof of Lemma \ref{lemma1} shows that equation \eqref{eq1-sp} implies
$E=0$. As a consequence, $E\in W\oplus\mf g_{d-1}\oplus\mf g_d$. On the other hand, by Proposition
\ref{lemma2}, we have
$$
W\oplus\mf g_{d-1}\oplus\mf g_d=\mf z((\mf{gl}_N)_{\ge2})\cap\mf g\subset\mf z(\mf g_{\ge2})
\,.
$$
\end{proof}
Recalling the definition of integrable triples given in the Introduction, by Proposition \ref{lemma2-sp},
if $E$ is an integrable element for $f$, then
$E\in\left(W\oplus\mf g_{d-1}\oplus\mf g_d\right)\cap\mf g_{\ge1}$ (note that $\mf g_{d-1}\oplus\mf g_d\subset\mf g_{\ge1}$ for $p_1\ge2$, and $W\subset\mf g_{\ge1}$ for $p_1\ge3$).

\subsubsection{The coisotropy condition}\label{sec:5.4.2}

An element $E\in\mf g_{d-1}$ can be uniquely written as $E=A-A^{\dagger}$,
where $A\in\Hom(V[-D+1],V[D])$.
Let
\begin{equation}\label{eq:AfAf}
a=Af^{D-1}A^\dagger f^{D}\in\End(V[D]).
\end{equation}
\begin{proposition}\label{prop3-sp}
Let $E\in\mf g_{d-1}$. The subspace $\mf g_1^E$ is
coisotropic with respect to the bilinear form \eqref{eq:skewform} if and only if $a=0$.
\end{proposition}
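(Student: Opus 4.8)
The plan is to reduce the coisotropy condition to a commutation relation in $\End(V[D])$ by combining Proposition \ref{prop:linalg} with Lemma \ref{prop:20201029a}, and then to exploit the symmetry, with respect to $\beta_1$, of the two operators that appear. Since $E\in\mf g_{d-1}$, here $j=d-1$ and $\mf g_{-j-1}=\mf g_{-d}$, so by Proposition \ref{prop:linalg} coisotropy of $\mf g_1^E$ is equivalent to the vanishing of $\ad E\circ(\ad f)^{-1}\circ\ad E$ on $\mf g_{-d}$. Writing $E=A+B$ with $B=-A^\dagger\in\Hom(V[-D],V[D-1])$ (so the hypotheses of Lemma \ref{prop:20201029a} on $A$ and $B$ are exactly met), that lemma rewrites the condition as $[Af^{D-1}Bf^D,(f\transpose)^DU]=0$ for all $U\in\mf g_{-d}$. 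Since $Af^{D-1}Bf^D=-a$, this is
\[
[a,(f\transpose)^DU]=0\qquad\text{for all }U\in\mf g_{-d}.
\]

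Next I would analyze the two ingredients of this identity. For the set $\{(f\transpose)^DU\mid U\in\mf g_{-d}\}$: using $(f\transpose)^Df^D=\id_{V[D]}$ and $(f\transpose)^\dagger=-f\transpose$ together with $U^\dagger=-U$ (from \eqref{20201212:eq1}), a short adjoint computation gives $\bigl((f\transpose)^DU\bigr)^\star=(-1)^{p_1}(f\transpose)^DU$; since $U\mapsto(f\transpose)^DU$ is injective and $\dim\mf g_{-d}=\dim\mf g_d=\dim S$ by \eqref{rem:gd}, this set is exactly $S:=\{B\in\End(V[D])\mid B^\star=(-1)^{p_1}B\}$, where $\star$ is the adjoint for $\beta_1$. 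For the operator $a=Af^{D-1}A^\dagger f^D$ itself I would compute $a^\star$ directly: moving $A$, $f^{D-1}$, $A^\dagger$ and $f^D$ across $\langle\cdot\mid\cdot\rangle$ using $f^\dagger=-f$ and $(A^\dagger)^\dagger=A$ yields, after collecting the only two sign-bearing moves $(-1)^{D-1}(-1)^D=-1$, the identity $\beta_1(av,w)=-\beta_1(v,aw)$, i.e. $a$ is skew-adjoint: $a^\star=-a$.

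Finally I would combine these facts. The space $S$ acts irreducibly on $V[D]$: when $p_1$ is even, $S$ is the space of self-adjoint operators, which generates $\End(V[D])$ as an associative algebra; when $p_1$ is odd, $S=\mf{sp}(V[D],\beta_1)$ in its standard representation. In either case, since $a$ commutes with all of $S$, every eigenspace of $a$ is $S$-invariant, so by irreducibility $a=\lambda\id_{V[D]}$ for some $\lambda\in\mb F$. But a scalar operator is self-adjoint, so $a^\star=-a$ forces $\lambda=-\lambda$, whence $a=0$. Conversely, if $a=0$ then $Af^{D-1}Bf^D=0$, so the displayed commutation relation holds trivially and $\mf g_1^E$ is coisotropic by Lemma \ref{prop:20201029a} and Proposition \ref{prop:linalg}. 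This proves the equivalence.

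I expect the main obstacle to be the sign bookkeeping in the two adjoint computations, and especially the key structural input $a^\star=-a$; once the skew-adjointness of $a$ and the identification of $\{(f\transpose)^DU\}$ with the full space $S$ are in hand, the irreducibility/Schur argument is routine and uniform in the parity of $p_1$.
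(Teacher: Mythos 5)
Your proposal is correct and follows essentially the same route as the paper: reduce via Proposition \ref{prop:linalg} and Lemma \ref{prop:20201029a} (with $B=-A^\dagger$) to the relation $[a,(f\transpose)^DU]=0$ for all $U\in\mf g_{-d}$, identify $(f\transpose)^D\mf g_{-d}$ with $\{B\in\End(V[D])\mid B^\star=(-1)^{p_1}B\}$ using \eqref{rem:gd}, compute $a^\star=-a$ from \eqref{20201213:eq1}, and conclude $a=0$. The only cosmetic difference is that you package the final step as a uniform Schur/irreducibility argument, whereas the paper treats the parities of $p_1$ separately (center of $\mf{sp}(V[D],\beta_1)$ versus a skew-adjoint operator commuting with all self-adjoint ones); the content is the same.
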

\begin{proof}
By Lemma \ref{prop:20201029a} with $B=-A^\dagger$,
we have that
$\ad E\circ(\ad f)^{-1}\circ\ad E|_{\mf g_{-d}}=0$ if and only if
$$[a,(f\transpose)^{D}U]=0\,,
\quad\text{for every }U\in\mf g_{-d}\,.$$
Note that $\mf g_{-d}\cong\mf g_{d}$ and that $(f\transpose)^{D}\mf g_{-d}\subset\End(V[D])$.
By the description of $\mf g_d$ given in \eqref{rem:gd} we have that $(f\transpose)^{D}\mf g_{-d}=\{B\in\End(V[D])\mid B^\star=(-1)^{p_1}B\}$,
where the adjoint is with respect to the bilinear form $\beta_1$ defined in \eqref{20200819:bilinear}.
Recall that $a\in\End(V[D])$.
Using equation \eqref{20201213:eq1}, its adjoint with respect to the bilinear form $\beta_1$
is
$$
a^\star=(-1^D)(Af^{D-1}A^\dagger)^\dagger f^D=-a\,.
$$
Hence, by \eqref{rem:gd}, if $p_1$ is odd, then $a=0$ since it lies in the center of $\mf{sp}(V[D],\beta_1)$. On the other hand,
if $p_1$ is even, $a=0$ since it is a skewadjoint operator commuting with all selfadjoint operators.
\end{proof}
%

Now, let $E\in W=\Span_{\mb F}\{F_{\alpha\beta}\mid x_{\alpha}=D-1,x_{\beta}=-D+1\}
\subset\mf g_{d-2}$.
As in Section \ref{sec:5.2.3} we write $E=c(f\transpose)^{D-1}$, where $c\in\End(V[D-1])$.
\begin{proposition}\label{prop4-sp}
Let $E=c(f\transpose)^{D-1}\in W$. The subspace $\mf g_1^E$ is
coisotropic with respect to the bilinear form \eqref{eq:skewform} if and only if
$c^2=0$.
\end{proposition}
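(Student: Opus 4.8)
The plan is to reproduce, essentially verbatim, the argument used for the $\mf{gl}_N$-analogue Proposition \ref{prop4b}, after observing that all the required input lemmas are available in the symplectic setup: the box–pyramid decomposition of Section \ref{sec:setupBCD} satisfies the same relations \eqref{20180219:eq3}, the operators $f,f\transpose$ lie in $\mf g$, and Proposition \ref{prop3-sp} just above has already invoked Lemma \ref{prop:20201029a} in $\mf{sp}_N$, so Lemma \ref{prop:20201029b} applies here without change. First I would apply Proposition \ref{prop:linalg} with $j=d-2$ (so that $\mf g_{-j-1}=\mf g_{-d+1}$): the coisotropy of $\mf g_1^E$ with respect to \eqref{eq:skewform} is equivalent to the vanishing of $\ad E\circ(\ad f)^{-1}\circ\ad E$ on $\mf g_{-d+1}$.

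Next I would feed $E=c(f\transpose)^{D-1}$ into Lemma \ref{prop:20201029b}. By the description of $W$ in Proposition \ref{lemma2-sp}, $W$ is exactly the $\Hom(V[-D+1],V[D-1])$-summand of $\mf g_{d-2}$, so in the notation of that lemma we have $A=B=0$ and $C=c(f\transpose)^{D-1}$. The two conditions in \eqref{20201029:eq1} then collapse to $(f\transpose)^{D}\,Y\,Cf^{D-1}C=0$ for all $Y\in\Hom(V[D-1],V[-D])$ and $Cf^{D-1}C\,X\,(f\transpose)^{D}=0$ for all $X\in\Hom(V[D],V[-D+1])$.

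I would then simplify the common factor $Cf^{D-1}C$. Using that $(f\transpose)^{D-1}\colon V[-D+1]\to V[D-1]$ is an isomorphism with inverse $f^{D-1}$, so that $(f\transpose)^{D-1}f^{D-1}=\id_{V[D-1]}$, one obtains $Cf^{D-1}C=c^{2}(f\transpose)^{D-1}$. Since $(f\transpose)^{D}\colon V[-D]\to V[D]$ is also an isomorphism and $X,Y$ range over all homomorphisms of the indicated spaces, each of the two displayed equations is equivalent to $c^{2}=0$. The claim then follows from Proposition \ref{prop:linalg}. The converse implication ($c^2=0$ forces both equations) is immediate, so the equivalence is complete.

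The only step beyond routine bookkeeping is the justification of the identity $(f\transpose)^{D-1}f^{D-1}=\id_{V[D-1]}$ and of the isomorphism property of $(f\transpose)^{D}$, which rest on the column-shifting description of $f$ and $f\transpose$ in the pyramid (a box at level $-(D-1)$ lies in a rectangle $V_i$ with $D_i\ge D-1$, so shifting it right $D-1$ times lands it at level $D-1$ without overshooting, and this assignment is a bijection); these are the very same facts already used in the proof of Proposition \ref{prop4b}, so I would simply cite them. I want to emphasize that the symplectic constraint on $c$ coming from $E=-E^{\dagger}$ plays no role in this proposition: the criterion $c^{2}=0$ is insensitive to it, and the constraint only enters later when one classifies which such $E$ actually occur in $\mf{sp}_N$.
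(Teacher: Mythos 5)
Your proposal is correct and takes essentially the same route as the paper, whose proof of Proposition \ref{prop4-sp} simply declares it ``same as the proof of Proposition \ref{prop4b}'', i.e.\ Proposition \ref{prop:linalg} combined with Lemma \ref{prop:20201029b} applied with $A=B=0$ and $C=c(f\transpose)^{D-1}$, reducing everything to $c^2(f\transpose)^{D-1}X(f\transpose)^D=0$ for all $X$. Your extra elaborations --- tracking both equations of \eqref{20201029:eq1}, the identity $(f\transpose)^{D-1}f^{D-1}=\id_{V[D-1]}$, and the observation that the symplectic constraint $E=-E^\dagger$ plays no role in the criterion --- are accurate fillings-in of the same argument.
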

\begin{proof}
Same as the proof of Proposition \ref{prop4b}.
\end{proof}
%

\subsubsection{Integrable $E\in\mf g_d$}

Recall from Section \ref{sec:gd-gln} that $E=u(f\transpose)^D$, where $u\in\End(V[D])$.
The following result follows from Lemmas \ref{prop1} and \ref{prop2}.
\begin{proposition}\label{20201203:cor1}
Let $\mf g=\mf{sp}_N$ and $f$ its nonzero nilpotent element of depth $d$. Then $E=u(f\transpose)^D\in\mf g_{d}$ is integrable for $f$ if and only if $u$ is semisimple.
\end{proposition}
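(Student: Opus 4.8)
The plan is to transfer the $\mf{gl}_N$ argument of Lemmas \ref{prop1} and \ref{prop2} essentially verbatim, exploiting the faithful embedding $\mf{sp}_N=\mf g\hookrightarrow\mf{gl}(V)$ together with the fact that the grading \eqref{eq:adx2-sp} on $\mf{sp}_N$ is the restriction of the grading \eqref{eq:adx2} on $\mf{gl}(V)$. The crucial preliminary observation is that, because the depth is $d=2D$, for any $E\in\mf g_d$ one has $[E,\mf g_{\ge2}]\subseteq\mf g_{\ge 2D+2}=0$, while $\mf g_{-d-1}=\mf g_{-2D-1}=0$ forces, via Proposition \ref{prop:linalg}, the coisotropy condition on $\mf g_1^E$ to hold vacuously. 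Thus $f+E$ is automatically a generalized cyclic element, and $E$ is integrable for $f$ if and only if the nilpotent part of the Jordan decomposition of $f+E$ lies in $\mf g_{-2}$. Writing $E=u(f\transpose)^D$ with $u\in\End(V[D])$ as in Section \ref{sec:gd-gln}, the identification \eqref{rem:gd} shows that $u$ satisfies the constraint $u^\star=(-1)^{p_1}u$, but this is irrelevant to the property of $u$ being semisimple as a linear operator, which is all that enters the statement.

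For the ``if'' direction, suppose $u$ is semisimple. Every step in the proof of Lemma \ref{prop1} takes place inside $\End V$ and is insensitive to whether the ambient algebra is $\mf{gl}(V)$ or $\mf{sp}_N$: the splitting $V_1=U\oplus\tilde U$ induced by $V[D]=\Ker u\oplus\im u$, the commuting decomposition $f+E=A+f\id_{\tilde U\oplus V_{\ge2}}$, and the minimal-polynomial argument showing $A=(f+E)\id_U$ semisimple all go through unchanged. I would then invoke the standard fact that the additive Jordan decomposition of an element of a reductive Lie algebra agrees with its matrix Jordan decomposition in a faithful representation, so that the semisimple and nilpotent parts of $f+E\in\mf{sp}_N$, computed as endomorphisms of $V$, lie again in $\mf{sp}_N$. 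This identifies $(f+E)_n=f\id_{\tilde U\oplus V_{\ge2}}\in\mf g_{-2}$ (and $(f+E)_s=A\neq0$, so $f+E$ is non-nilpotent), whence $E$ is integrable.

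For the ``only if'' direction, assume $E$ is integrable, so the nilpotent part $\tilde f_2$ of $f+E$ lies in $\mf g_{-2}$. Since $\tilde f_2$ centralizes $f$, it lies in $(\mf{sp}_N)_{-2}^f\subseteq(\mf{gl}(V))_{-2}^f$, and therefore Lemma \ref{prop:linalg2} applies to it, giving $\tilde f_2=\sum_{k=0}^{D-1}f^{k+1}t(f\transpose)^k$ for some $t\in\End(V[D])$. From here the matrix computation of Lemma \ref{prop2} carries over: condition (I\ref{def:intrip3}) yields $E\tilde f_2=\tilde f_2E=0$, hence $ut=0$, and iterating $A=(f+E)_s$ on $V[D]$ produces $A^{p_1}\id_{V[D]}=u$. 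As $A$ is semisimple, so is $A^{p_1}$, and therefore $u$ is semisimple.

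In summary, the proposition is a near-immediate consequence of Lemmas \ref{prop1} and \ref{prop2}, and there is essentially no new obstacle to overcome. The only point requiring genuine care is the compatibility of Jordan decompositions, namely that restricting from $\mf{gl}(V)$ to the subalgebra $\mf{sp}_N$ neither alters the semisimple and nilpotent parts of $f+E$ nor moves the nilpotent part out of $\mf g_{-2}$; this is precisely the intrinsic character of the Jordan decomposition in a reductive Lie algebra, and it is what makes the purely matrix-theoretic arguments of the $\mf{gl}_N$ case legitimate here.
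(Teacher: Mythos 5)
Your proposal is correct and follows exactly the route the paper takes: the paper simply states that the proposition follows from Lemmas \ref{prop1} and \ref{prop2}, i.e.\ from the $\mf{gl}_N$ argument applied inside $\End V$. You merely make explicit the (valid) points the paper leaves implicit — that the coisotropy condition is vacuous for $E\in\mf g_d$ and that the Jordan decomposition of $f+E$ computed in $\End V$ has its components in $\mf{sp}_N$.
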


\subsubsection{No integrable elements for $f$ in $\mf g_{d-1}\oplus W$}
Recall the element $a$, defined by \eqref{eq:AfAf} for $E\in\mf g_{d-1}$, and the element $c$, defined by \eqref{eq:c} for $E\in W$.
\begin{proposition}\label{20201104:prop1}
There are no integrable elements for $f$ in $\mf g_{d-1}\oplus W$.
\end{proposition}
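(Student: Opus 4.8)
The plan is to reduce both cases to the $\mf{gl}_N$ results of Section~\ref{sec:glsl}, using that $\mf{sp}_N$ sits inside $\mf{gl}(V)=\End V$, that the $\mb Z$-grading is induced by the same semisimple $h$, and that the Jordan decomposition of an operator on $V$ is intrinsic and hence agrees whether computed in $\mf{sp}_N$ or in $\mf{gl}(V)$. Since an integrable element is by definition homogeneous and $W\subset\mf g_{d-2}$ while $\mf g_{d-1}$ lives in degree $d-1$, any integrable $E\in\mf g_{d-1}\oplus W$ lies entirely in $\mf g_{d-1}$ or entirely in $W$; I would dispose of these two cases separately.

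For $E=c(f\transpose)^{D-1}\in W$, with $c$ as in \eqref{eq:c}, I would first use the coisotropy condition, which by Proposition~\ref{prop4-sp} gives $c^2=0$. Then the argument of Proposition~\ref{cor:nointinW-b} applies verbatim, since it only concerns the operator $f+E$ acting on $V$: splitting $V=V_2\oplus V_{\neq2}$, one has $(f+E)|_{V_{\neq2}}=f|_{V_{\neq2}}$ nilpotent and $(f+E)^{2D}|_{V_2}=c^2=0$, so $f+E$ is nilpotent and therefore not integrable. This case is routine.

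The substantive case is $E\in\mf g_{d-1}$, written in the symplectic parametrization as $E=A-A^\dagger$ with $A\in\Hom(V[-D+1],V[D])$. Here I would set up the dictionary with the $\mf{gl}_N$ parametrization $E=a(f\transpose)^{D-1}+b(f\transpose)^D$ of Section~\ref{sec:5.2.3}: since $(f\transpose)^{D-1}$ and $(f\transpose)^{D}$ are isomorphisms of the extreme weight spaces with inverses $f^{D-1}$ and $f^{D}$, one reads off $a=Af^{D-1}$ and $b=-A^\dagger f^{D}$, whence the $\mf{gl}_N$ product $ab$ equals $-Af^{D-1}A^\dagger f^{D}$, the negative of the operator $a$ of \eqref{eq:AfAf}. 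The decisive step is then the clash of two requirements. On one side, for $f+E$ to be a quasi-cyclic element its centralizer $\mf g_1^E$ must be coisotropic, so Proposition~\ref{prop3-sp} forces $Af^{D-1}A^\dagger f^{D}=0$, i.e. $ab=0$. On the other side, if $f+E$ were integrable it would be non-nilpotent with nilpotent part in $\mf g_{-2}$, and Lemma~\ref{prop5}, applied to $E$ viewed as an element of $(\mf{gl}_N)_{d-1}$, would force $ab\in\End(V[D])$ to be non-zero semisimple. Since $ab$ cannot be simultaneously $0$ and non-zero, there is no integrable $E\in\mf g_{d-1}$.

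The step I expect to be the main obstacle is the clean transfer of Lemma~\ref{prop5} across the inclusion $\mf{sp}_N\subset\mf{gl}(V)$: I must argue that its two hypotheses are intrinsic to the operator $f+E$ and are preserved under $(\mf{sp}_N)_{-2}\subseteq(\mf{gl}_N)_{-2}$, and that the operator $ab$ appearing in its conclusion is precisely the $ab$ produced by the dictionary above. Once this identification is secured, the remainder is bookkeeping with the isomorphisms $(f\transpose)^{k}$ between extreme weight spaces and a direct appeal to Propositions~\ref{prop3-sp} and~\ref{prop4-sp}.
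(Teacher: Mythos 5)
Your argument is correct, and for the case $E\in W$ it coincides with the paper's (which likewise just cites the proof of Proposition \ref{cor:nointinW-b} after invoking Proposition \ref{prop4-sp}). For the main case $E\in\mf g_{d-1}$ you take a genuinely different route. The paper, after using Proposition \ref{prop3-sp} to get $Af^{D-1}A^\dagger f^D=0$ from coisotropy, finishes by a direct computation: $(f+E)$ restricted to $V_{\ge3}$ is $f$, and a power of $(f+E)$ on $V_{\le2}$ equals $Af^{D-1}A^\dagger f^D=0$, so $f+E$ is nilpotent outright, hence not integrable. You instead transfer the problem to $\mf{gl}_N$ via the dictionary $a=Af^{D-1}$, $b=-A^\dagger f^D$, $ab=-Af^{D-1}A^\dagger f^D$, and play the coisotropy constraint $ab=0$ against Lemma \ref{prop5}, which forces $ab$ to be non-zero semisimple for any integrable $E$; the transfer is legitimate because the grading on $\mf{sp}_N$ is the restriction of the one on $\mf{gl}(V)$ induced by the same $h$, and the Jordan decomposition of $f+E$ is the same computed in either algebra. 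The trade-off: the paper's computation yields the slightly stronger fact that \emph{every} coisotropic $E\in\mf g_{d-1}$ gives a nilpotent quasi-cyclic element (not merely a non-integrable one), while your version avoids any new computation by reusing the $\mf{gl}_N$ machinery already established, at the cost of the bookkeeping needed to match the two parametrizations. Both are sound.
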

\begin{proof}
By contradiction, let $E\in\mf g_{d-1}$ be an integrable element for $f$.
Since $\mf g_1^E$ is coisotropic, by Proposition \ref{prop3-sp}, we have $a=0$.
Clearly, $f+E=(f+E)\id_{V_{\le2}}+(f+E)\id_{V_{\ge3}}$, and $f+E$ preserves the direct sum decomposition
$V=V_{\le2}\oplus V_{\ge3}$. Note that $(f+E)\id_{V_{\ge3}}=f\id_{V_{\ge3}}$ is nilpotent.
On the other hand it is not difficult to check that $(f+E)^{2D-1}v=av=0$ for every $v\in V_{\le2}$.
Hence,
$(f+E)\id_{V_{\le2}}$ is nilpotent as well. This proves that $f+E$ is nilpotent, contradicting the fact that it is integrable.
The proof of the claim for $E\in W$ is the same as the proof of Proposition \ref{cor:nointinW-b}.
\end{proof}

As a consequence of Propositions \ref{20201203:cor1} and \ref{20201104:prop1} we get the following.
\begin{corollary}
Let $\mf g=\mf{sp}_N$ and $f$ its non-zero nilpotent element. The element $f+E$, where $E\in\mf g_k$, $k\ge1$, is integrable if and only if $f+E$ is an integrable cyclic element.
\end{corollary}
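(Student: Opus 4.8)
The plan is to assemble the two cited propositions together with the description of the centralizer $\mf z(\mf g_{\ge2})$ obtained in Proposition \ref{lemma2-sp}, reducing the statement to a short degree count. The substance of the corollary lies entirely in those propositions; beyond organizing the cases there is essentially nothing further to prove.

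The ``if'' direction is immediate by definition: an integrable cyclic element is a generalized cyclic element $f+E$ with $E\in\mf g_d$ nonzero, which is of the required form $f+E$ with $E\in\mf g_k$, $k=d\ge1$.

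For the converse, I would suppose $E\in\mf g_k$ with $k\ge1$ and $f+E$ integrable. Then $E$ is in particular an integrable element for $f$, so it satisfies $[E,\mf g_{\ge2}]=0$, i.e.\ $E\in\mf z(\mf g_{\ge2})$. By Proposition \ref{lemma2-sp} we have $\mf z(\mf g_{\ge2})=W\oplus\mf g_{d-1}\oplus\mf g_d$ with $W\subset\mf g_{d-2}$. Since $f\ne0$ forces $p_1\ge2$, the depth $d=2p_1-2$ satisfies $d\ge2$, so the three summands sit in the pairwise distinct degrees $d-2$, $d-1$, $d$. As $E$ is homogeneous of degree $k$, it must therefore lie entirely in one of $W$, $\mf g_{d-1}$, $\mf g_d$. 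First I would invoke Proposition \ref{20201104:prop1}, which states that there are no integrable elements for $f$ in $\mf g_{d-1}\oplus W$; this excludes the two lower cases, leaving only $E\in\mf g_d$, i.e.\ $k=d$. Hence $f+E$ is a cyclic element, and being integrable, it is an integrable cyclic element.

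The only step requiring any care is the degree bookkeeping that isolates the homogeneous $E$ inside a single summand of $\mf z(\mf g_{\ge2})$, and this is routine; I do not expect a genuine obstacle. Proposition \ref{20201203:cor1} is then what makes the statement non-vacuous, as it guarantees that integrable elements in $\mf g_d$ really occur, namely those $E=u(f\transpose)^D$ with $u\in\End(V[D])$ semisimple.
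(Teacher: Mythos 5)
Your proposal is correct and follows essentially the same route as the paper: the paper derives the corollary from Proposition \ref{lemma2-sp} (which confines any integrable $E$ to $\left(W\oplus\mf g_{d-1}\oplus\mf g_d\right)\cap\mf g_{\ge1}$), Proposition \ref{20201104:prop1} (excluding $W$ and $\mf g_{d-1}$), and Proposition \ref{20201203:cor1} (characterizing the surviving case $E\in\mf g_d$). Your explicit degree bookkeeping to place the homogeneous $E$ in a single summand is exactly the implicit step in the paper's one-line deduction.
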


\subsection{Integrable triples in \texorpdfstring{$\mf g=\mf{so}_N$}{soNe} for nilpotent elements of even depth}\label{subs:so}
Recall that for $\eta=1$ in \eqref{20201001:eq4}, the algebra \eqref{20201212:eq1} is $\mf g\simeq\mf{so}_N$.
\begin{lemma}[cf. \cite{EKV13}]\label{20201219:lem1}
The depth $d$ of the grading \eqref{eq:grading} for $\mf g=\mf{so}_N$ is
\begin{enumerate}[(i)]
\item $d=2D$ for $r_1\ge2$;
\item $d=2D-1$ for $p_1$ odd, $r_1=1$, $p_2=p_1-1$;
\item $d=2D-2$ for $p_1$ odd, $r_1=1$, $p_2\le p_1-2$.
\end{enumerate}
\end{lemma}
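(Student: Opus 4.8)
The plan is to read off the depth directly from the spanning description $\mf g_k=\Span_{\mb F}\{F_{\alpha,\beta}\mid x_\alpha-x_\beta=k\}$ in \eqref{eq:adx2-sp}, keeping careful track of which generators $F_{\alpha,\beta}$ vanish. The one input specific to the orthogonal case (as opposed to $\mf{sp}_N$) is the following vanishing: since $\eta=1$ for $\mf{so}_N$, taking $\beta=\alpha'$ in \eqref{20200819:eq4b} gives $F_{\alpha,\alpha'}=-\epsilon_\alpha\epsilon_{\alpha'}F_{\alpha,\alpha'}=-\eta F_{\alpha,\alpha'}=-F_{\alpha,\alpha'}$, hence $F_{\alpha,\alpha'}=0$. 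Conversely, if $\beta\neq\alpha'$ then $E_{\alpha,\beta}$ and $E_{\beta',\alpha'}$ are distinct elementary matrices, so $F_{\alpha,\beta}\neq0$. Thus a generator in $\mf g_k$ is nonzero precisely when it comes from a pair with $x_\alpha-x_\beta=k$ and $\beta\neq\alpha'$, and the depth is the largest such $k$.

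Next I would record the coordinate bookkeeping. A row of length $p$ occupies the $x$-coordinates $-(p-1),-(p-3),\dots,p-1$, all congruent to $p-1$ modulo $2$, and the maximal coordinate over all boxes is $D=p_1-1$. Hence $V[\pm D]$ is spanned by the right/left ends of the rows of length $p_1$, so $\dim V[\pm D]=r_1$, while a box at $x=\pm(D-1)$ can only be the corresponding end of a row of length $p_1-1$; therefore $V[\pm(D-1)]\neq0$ if and only if $p_2=p_1-1$. I would also note that for $\mf{so}_N$ the requirement that even parts have even multiplicity forces $p_1$ to be odd whenever $r_1=1$, which is why the hypothesis ``$p_1$ odd'' appears (automatically) in (ii) and (iii).

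The three cases then follow by exhibiting or excluding a nonzero generator of maximal degree. For (i), with $r_1\ge2$ I take $\alpha$ the right end of one length-$p_1$ row and $\beta$ the left end of a \emph{different} length-$p_1$ row; then $x_\alpha-x_\beta=2D$ and $\beta\neq\alpha'$, so $F_{\alpha,\beta}\neq0$ and $d=2D$. When $r_1=1$ there is a unique box $\alpha_0$ with $x=D$, and $\mf g_{2D}=\mb F\,F_{\alpha_0,\alpha_0'}=0$, so $d\le2D-1$. In case (ii) the hypothesis $p_2=p_1-1$ provides a box $\beta$ at $x=-D+1$ (the left end of a length-$(p_1-1)$ row); since $\beta\neq\alpha_0'$, the generator $F_{\alpha_0,\beta}\neq0$ lies in $\mf g_{2D-1}$, whence $d=2D-1$. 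In case (iii), $p_2\le p_1-2$ means there is no box at $x=\pm(D-1)$, so every pair with $x_\alpha-x_\beta=2D-1$ is empty and $\mf g_{2D-1}=0$; taking $\beta$ to be the second box from the left in the longest row (at $x=-D+2$, which exists as $p_1\ge3$) gives $\beta\neq\alpha_0'$ and $F_{\alpha_0,\beta}\neq0$ in $\mf g_{2D-2}$, whence $d=2D-2$.

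The only genuine content beyond this bookkeeping is the vanishing $F_{\alpha,\alpha'}=0$, and the point to watch is that in each nonzero generator the index $\beta$ is truly different from $\alpha'$ — automatic once the two boxes sit in different rows (case (i)) or at different heights (cases (ii), (iii)). The parity observation that boxes at $x=\pm(D-1)$ force the existence of a part equal to $p_1-1$ is what cleanly separates (ii) from (iii), and it is the step I would state most carefully.
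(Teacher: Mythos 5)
Your proof is correct, and for parts (i) and (ii) it takes a more elementary route than the paper's. The paper detects non-vanishing of the top graded pieces by identifying them with concrete operator spaces: it carries over the $\star$-adjoint computation from the symplectic case to get $\mf g_{2D}\simeq\{B\in\End(V[D])\mid B^\star=(-1)^{p_1}B\}$ as in \eqref{rem:gd} (the $\beta_1$-selfadjoint operators for $p_1$ even, and $\mf{so}(V[D],\beta_1)$ for $p_1$ odd, hence nonzero exactly when $r_1=\dim V[D]\ge2$), and in case (ii) it identifies $\mf g_{2D-1}$ with $\{A-A^\dagger\mid A\in\Hom(V[-D+1],V[D])\}\simeq V[-D+1]^*$. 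You instead isolate the single orthogonal-specific vanishing $F_{\alpha,\alpha'}=0$ (which is exactly \eqref{20200819:eq4b} with $\beta=\alpha'$ and $\eta=1$), note that $F_{\alpha,\beta}\ne0$ whenever $\beta\ne\alpha'$, and then exhibit explicit nonzero generators; the structure of the bilinear form enters only through that one identity. Your treatment of (iii) coincides with the paper's, which likewise exhibits $F_{\alpha_1,\alpha_{p_1-1}}$. What the paper's identification buys is the explicit operator-space description of $\mf g_d$, which it reuses later (e.g.\ in Proposition \ref{coiso-so1}); what your argument buys is uniformity --- all three cases reduce to the same box-counting in the pyramid plus $F_{\alpha,\alpha'}=0$ --- and it makes transparent why $r_1=1$ kills $\mf g_{2D}$ without any case split on the parity of $p_1$. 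The bookkeeping steps you flag (that $\dim V[\pm D]=r_1$ and that a box at $x=\pm(D-1)$ forces a part equal to $p_1-1$) are stated and verified correctly.
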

\begin{proof}
As in Section \ref{sec:spN}, let $A\in\Hom(V[-D],V[D])$, $D=p_1-1$, and consider $\bar A=Af^D\in\End(V[D])$.
Then, the adjoint of $\bar A$ with respect to the bilinear form $\beta_1$ defined in \eqref{20200819:bilinear} is given by equation \eqref{20201213:eq1} and the space $\mf g_{2D}$ is described in
\eqref{rem:gd}.
Using equation \eqref{20200819:bilinear2} we have that, if $p_1$ is even (hence $r_1$ is even), then $\mf g_{2D}$ is identified with the space of selfadjont (with respect to the skewsymmetric bilinear form $\beta_1$) endomorphisms of $V[D]$, while if $p_1$ is odd then $\mf g_{2D}\simeq\mf{so}(V[D],\beta_1)$.
It follows that the depth of the grading \eqref{eq:adx2-sp} is $d=2D$ when
$r_1=\dim V[D]\ge2$, proving case (i).

If $r_1=1$ (hence $p_1$ is odd), then $\mf g_{2D}=\{0\}$ and $d<2D$. Consider first the case when
$p_2=p_1-1$. In this case, $V[D]$ is one dimensional and $V[-D+1]$ is an $r_2$-dimensional vector space.
Furthermore, $\mf g_{2D-1}=\{A-A^\dagger \mid A\in\Hom(V[-D+1],V[D])\}\simeq V[-D+1]^*$. Hence,
in this case the depth of the grading \eqref{eq:adx2-sp} is $d=2D-1$, proving case (ii).

We are left to consider the case $r_1=1$ and $p_2\le p_1-2$. In this case $d=2D-2$.
Indeed, $F_{\alpha_1,\alpha_{p_1-1}}\in\mf g_{2D-2}$, where $\alpha_1$ is the rightmost box of the pyramid (which is in the bottom row) and $\alpha_{p_1-1}$ is the second leftmost box of the bottom row, is a non-zero element.
\end{proof}

\subsubsection{Even depth $d=2D$}\label{sec:4.5.1}
\begin{proposition}\label{centr-so1}
We have that $\mf z(\mf g_{\ge2})=W\oplus\mf g_{d-1}\oplus\mf g_d$,
where
$$W=\Span_{\mb F}\{F_{\alpha\beta}\mid x_{\alpha}=D-1,x_{\beta}=-D+1\}
\subset\mf g_{d-2}\,.
$$
\end{proposition}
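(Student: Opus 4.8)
The plan is to mimic the symplectic computation of Proposition \ref{lemma2-sp} almost verbatim, observing that neither the commutation relations \eqref{comm:BC} nor the symmetry \eqref{20200819:eq4b} of the structure constants depends on the sign $\eta$, so the argument transfers unchanged to the case $\eta=1$. Since we are in the even-depth regime, Lemma \ref{20201219:lem1}(i) gives $r_1=\dim V[D]\ge 2$, hence $\mf g_d=\mf g_{2D}\neq0$, and the three summands on the right-hand side are the expected ones.

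First I would reduce to homogeneous elements. As in Lemma \ref{lemma1}, the raising element $e\in\mf g_2$ acts injectively on $\mf g_{<0}$ by $\mf{sl}_2$-representation theory, so $\mf z(\mf g_{\ge2})\subseteq\mf g_{\ge0}$; since the centralizer is a graded subspace it suffices to treat a homogeneous $E\in\mf g_k$, $k\ge0$, written as in \eqref{eq:E-sp} with the constraint $c_{\beta',\alpha'}=-\epsilon_\alpha\epsilon_\beta c_{\alpha,\beta}$.

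Next, for any pair $\tilde\alpha,\tilde\beta$ with $x_{\tilde\alpha}=D$ and $x_{\tilde\beta}\le D-2$ we have $F_{\tilde\alpha\tilde\beta}\in\mf g_{\ge2}$, and expanding $[F_{\tilde\alpha\tilde\beta},E]=0$ via \eqref{comm:BC} and \eqref{20200819:eq4b} produces exactly the two-sum identity \eqref{eq1-sp}. For $k\ge1$ the index condition $x_\alpha-D=k\ge1$ of the second sum is empty because $D$ is the maximal $h$-eigenvalue, forcing the first sum to vanish and hence $c_{\alpha,\beta}=0$ whenever $x_\alpha\le D-2$. The symmetry relation then gives $c_{\alpha,\beta}=0$ whenever $x_\beta\ge-D+2$, so the only surviving coefficients have $x_\alpha\in\{D-1,D\}$ and $x_\beta\in\{-D,-D+1\}$, i.e. $E\in\mf g_d\oplus\mf g_{d-1}\oplus W$. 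For $k=0$ the same pairing argument forces $E=0$ (unlike $\mf{gl}_N$, there is no central identity inside $\mf{so}_N$).

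Finally, for the reverse inclusion I would invoke Proposition \ref{lemma2}: the subspace $W\oplus\mf g_{d-1}\oplus\mf g_d$ sits inside the $\mf{gl}_N$-centralizer $\mf z((\mf{gl}_N)_{\ge2})$, so intersecting with $\mf g=\mf{so}_N$ yields elements commuting with $(\mf{gl}_N)_{\ge2}\supseteq\mf g_{\ge2}$; that is, $W\oplus\mf g_{d-1}\oplus\mf g_d\subseteq\mf z(\mf g_{\ge2})$, and combining the two inclusions gives equality. I do not expect a serious obstacle here: the only point requiring care is that the symmetry relation \eqref{eq:E-sp}, which couples the coefficient at $(\alpha,\beta)$ with the one at $(\beta',\alpha')$ (reflecting $x_\alpha\mapsto-x_\alpha$), really does propagate the vanishing from the high-weight boxes to the low-weight boxes, exactly as in the $\mf{sp}_N$ proof.
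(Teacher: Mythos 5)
Your proposal is correct and follows the same route as the paper, whose proof of this proposition is literally the one-line remark that it is similar to the proof of Proposition \ref{lemma2-sp}; your write-up simply makes that transfer to $\eta=1$ explicit, including the reverse inclusion via Proposition \ref{lemma2}. The only point worth flagging (harmless here precisely because Lemma \ref{20201219:lem1}(i) guarantees $r_1\ge2$ in the even-depth case) is that $F_{\tilde\alpha,\tilde\alpha'}=(1-\eta)E_{\tilde\alpha,\tilde\alpha'}$ vanishes for $\mf{so}_N$, so to kill the coefficient $c_{\tilde\beta,\tilde\alpha'}$ one must use a second box $\tilde\alpha_2\ne\tilde\alpha$ with $x_{\tilde\alpha_2}=D$.
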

\begin{proof}
Similar to the proof of Proposition \ref{lemma2-sp}.
\end{proof}
\begin{proposition}\label{coiso-so1}
\hspace{2em}
\begin{enumerate}[(i)]
\item Let, as in Section \ref{sec:5.4.2}, $E=A-A^\dagger\in\mf g_{d-1}$, with
$A\in\Hom(V[-D],V[D-1])$, and let $a=A^\dagger f^{D-1}Af^{D}\in\End (V[D])$.
If $p_1$ is even and $r_1=2$, then the subspace $\mf g_1^E$ is
coisotropic with respect to the bilinear form \eqref{eq:skewform}. In the other cases,
$\mf g_1^E$ is
coisotropic if and only if $a=0$.
\item Let $E\in W$ and let, as in Section \ref{sec:5.4.2}, $c=Ef^{D-1}\in\End(V[D-1])$. The subspace $\mf g_1^E$ is
coisotropic with respect to the bilinear form \eqref{eq:skewform} if and only if $c^2=0$.
\end{enumerate}
\end{proposition}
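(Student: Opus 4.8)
The plan is to treat the two items exactly as their symplectic analogues (Proposition~\ref{prop3-sp} for item~(i) and Proposition~\ref{prop4-sp} for item~(ii)), reducing the coisotropy condition via Proposition~\ref{prop:linalg} to a single commutant question inside $\End(V[D])$ and then reading off the answer from the symmetry type of $\beta_1$.

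For item~(i), I would write $E=A-A^\dagger$ with $A\in\Hom(V[-D],V[D-1])$ and invoke Proposition~\ref{prop:linalg}: the space $\mf g_1^E$ is coisotropic with respect to \eqref{eq:skewform} if and only if $\ad E\circ(\ad f)^{-1}\circ\ad E|_{\mf g_{-d}}=0$. Applying Lemma~\ref{prop:20201029a} with its $A$ set to $-A^\dagger$ and its $B$ set to $A$ turns this into the requirement that $[a,(f\transpose)^DU]=0$ for every $U\in\mf g_{-d}$, where $a=A^\dagger f^{D-1}Af^D\in\End(V[D])$. By \eqref{rem:gd} the operators $(f\transpose)^DU$ sweep out the whole space $\{B\in\End(V[D])\mid B^\star=(-1)^{p_1}B\}$, so the condition is precisely that $a$ commute with all of this space. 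Writing $a=\overline{a_0}$ with $a_0=A^\dagger f^{D-1}A$ and using \eqref{20201213:eq1} together with $f^\dagger=-f$ gives $a_0^\dagger=(-1)^{D-1}a_0$, hence $a^\star=-a$; thus $a$ is always $\beta_1$-skewadjoint, exactly as in the symplectic case.

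The conclusion then depends only on the symmetry type of $\beta_1$, which by \eqref{20200819:bilinear2} with $\eta=1$ is opposite to the symplectic situation. When $p_1$ is odd, $\beta_1$ is symmetric, the test space is $\mf{so}(V[D],\beta_1)$, and $a$ itself lies in it; commuting with all of $\mf{so}(V[D],\beta_1)$ then forces $a=0$ as soon as this algebra has trivial center (i.e.\ for $r_1\ge3$). When $p_1$ is even, $\beta_1$ is symplectic and the test space consists of the $\beta_1$-selfadjoint operators; for $r_1\ge4$ these are numerous enough that commuting with all of them forces $a$ to be scalar, and a skewadjoint scalar vanishes, whereas for $r_1=2$ the $\beta_1$-selfadjoint operators reduce to the scalars $\mb F\id_{V[D]}$, the commutator condition becomes vacuous, and $\mf g_1^E$ is coisotropic for every $E$. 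This collapse at $r_1=2$, $p_1$ even is exactly the first assertion of~(i). Item~(ii) is the verbatim analogue of Propositions~\ref{prop4b} and~\ref{prop4-sp}: writing $E=c(f\transpose)^{D-1}$ and applying Lemma~\ref{prop:20201029b} with $A=B=0$ and $C=c(f\transpose)^{D-1}$ reduces coisotropy to $c^2(f\transpose)^{D-1}X(f\transpose)^D=0$ for all $X$, which is equivalent to $c^2=0$ since the relevant powers of $f\transpose$ are isomorphisms.

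I expect the main obstacle to be the commutant bookkeeping in item~(i): one must pin down, case by case in the dimension $r_1$ and the parity of $p_1$, exactly when a $\beta_1$-skewadjoint operator commuting with the entire space $\{B^\star=(-1)^{p_1}B\}$ is forced to vanish. The clean part is the uniform reduction to this single operator-theoretic question; the delicate part is isolating the low-rank degenerations where the test space shrinks to scalars and the condition becomes automatic, which is the source of the exceptional behavior singled out in the statement.
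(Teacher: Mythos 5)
Your argument is correct and follows essentially the same route as the paper's proof: reduce via Proposition \ref{prop:linalg} and Lemma \ref{prop:20201029a} to the commutation condition $[a,(f\transpose)^D U]=0$ for $U\in\mf g_{-d}$, observe that $a$ is $\beta_1$-skewadjoint, and then analyse the commutant according to the symmetry type of $\beta_1$ and the value of $r_1$, with part (ii) handled verbatim as in Proposition \ref{prop4b}. The only point worth noting is that your explicit restriction to $r_1\ge3$ in the $p_1$-odd branch leaves the case $p_1$ odd, $r_1=2$ open (there $\mf{so}(V[D],\beta_1)$ is one-dimensional abelian, $a$ lies in it, and the commutation condition is vacuous); the paper's own proof passes over exactly the same point with its one-line appeal to the center of $\mf{so}(V[D],\beta_1)$, so your write-up is, if anything, more candid about this shared gap.
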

\begin{proof}
As in the proof of Proposition \ref{prop3-sp} we have that
$\ad E\circ(\ad f)^{-1}\circ\ad E|_{\mf g_{-d}}=0$ if and only if $[a,(f\transpose)^{D}U]=0$,
for every $U\in\mf g_{-d}$ and that $a\in\End(V[D])$ is skewadjoint with respect to the bilinear form
\eqref{20200819:bilinear}. Note that $\mf g_{-d}\cong\mf g_{d}$ and that, by equation \eqref{rem:gd},
we have $(f\transpose)^{D}\mf g_{-d}=\{B\in\End(V[D])\mid B^\star=(-1)^{p_1}B\}\End(V[D])\subset\End(V[D])$.
If $p_1$ is odd, then $a=0$ since it lies in the center of $\mf{so}(V[D],\beta_1)$.
On the other hand, if $p_1$ is even (this implies $r_1$ even), then
$a$ is a skewadjoint operator (with respect to a skewsymmetric bilinear form) commuting with all selfadjoint operators with respect to the same bilinear form. This gives no condition on $a$ when $r_1=2$, but implies that $a=0$ for $r_1>2$. This proves part i). The proof of part ii) is similar to the proof of
Proposition \ref{prop4b}.
\end{proof}

Let $p_1$ be even and $r_1=2=\dim(V[D])$.
Let us choose a basis $\{u,v\}$ of $V[D]$ such that
$$
\beta_{1}(u,v)=\langle u|f^{D}v\rangle=1
\,,
$$
where the bilinear form $\beta_1$ on $V[D]$ defined in \eqref{20200819:bilinear} is skew-symmetric by \eqref{20200819:bilinear2}.
Then, clearly $\{f^ku,f^hv\mid 0\le h,k\le D\}$ is a basis of $V_1$. In particular $V[-D]=\mb Ff^Du\oplus\mb Ff^Dv$.

For every $w\in V$, let us denote by $\phi(w):V\to\mb F$ the linear functional $\phi(w)(w_1)=\langle w|w_1\rangle$, $w_1\in V$. Hence, we can write
\begin{equation}\label{20201119:eq1}
A=x\phi(v)+y\phi(u)
\,,
\end{equation}
for some $x,y\in V[D-1]$. Then
\begin{equation}\label{20201119:eq2}
A^\dagger=u\phi(y)+v\phi(x)
\,.
\end{equation}
Indeed, using the fact that $\beta_1(w_1,w_1)=0$, for every $w_1\in V[D]$ we have
$$
\langle Af^{D}u|w\rangle=-\langle x|w\rangle=\langle f^Du|A^\dagger w\rangle\,,
\quad
\langle Af^{D}v|w\rangle=\langle y|w\rangle=\langle f^Dv|A^\dagger w\rangle
\,.
$$
Hence,
in this case, $E$ can be uniquely written as
\begin{equation}\label{eq:E-so-pari2}
E=x\phi(v)+y\phi(u)-u\phi(y)-v\phi(x)\,,
\quad x,y\in V[D-1]\,.
\end{equation}
\begin{lemma}\label{20201119:lem1}
With respect to the basis $\{u,v\}$ of $V[D]$ we have that
\begin{equation}\label{20201123:eq1}
a=A^{\dagger}f^{D-1}Af^{D}=\begin{pmatrix}-\beta_2(x,y)&\beta_2(y,y)\\-\beta_2(x,x)&\beta_2(x,y)
\end{pmatrix}\,,
\end{equation}
where $\beta_2$ is defined by \eqref{20200819:bilinear}.
\end{lemma}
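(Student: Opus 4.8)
The plan is to compute the operator $a=A^\dagger f^{D-1}Af^D$ directly on the basis $\{u,v\}$ of $V[D]$ and read off its two columns. The only inputs needed are the explicit expressions \eqref{20201119:eq1} and \eqref{20201119:eq2} for $A$ and $A^\dagger$, the normalization $\beta_1(u,v)=\langle u|f^Dv\rangle=1$ together with the skew-symmetry of $\beta_1$, and the identification of the relevant instances of the pairing $\langle\cdot|f^{D-1}\cdot\rangle$ with $\beta_2$. Before starting I would record the standing arithmetic: under the hypotheses $p_1$ even, $r_1=2$, the space $V[D-1]$ is nonzero only when $p_2=p_1-1$ (otherwise $A=0$ and the identity is trivially $a=0$), in which case $D_2=p_2-1=D-1$ and, by the weight-count for the pyramid, $V[D-1]=V_{+,2}$ (see \eqref{eq:vpmi}). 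Hence $\langle w_1|f^{D-1}w_2\rangle=\beta_2(w_1,w_2)$ for all $w_1,w_2\in V[D-1]$.

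First I would evaluate $Af^D$ on $u$ and $v$. Since $f^Du,f^Dv\in V[-D]$ and $A=x\phi(v)+y\phi(u)$, we obtain $Af^Du=x\langle v|f^Du\rangle+y\langle u|f^Du\rangle$ and $Af^Dv=x\langle v|f^Dv\rangle+y\langle u|f^Dv\rangle$. By definition $\langle u|f^Dv\rangle=\beta_1(u,v)=1$, and skew-symmetry of $\beta_1$ (which holds by \eqref{20200819:bilinear2}, as $D$ is odd) gives $\beta_1(u,u)=\beta_1(v,v)=0$ and $\beta_1(v,u)=-1$. Therefore $Af^Du=-x$ and $Af^Dv=y$.

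Next I would apply $f^{D-1}$ and then $A^\dagger=u\phi(y)+v\phi(x)$, using $\langle\cdot|f^{D-1}\cdot\rangle=\beta_2$ on $V[D-1]=V_{+,2}$. Applying $A^\dagger$ to $-f^{D-1}x$ yields $a(u)=-\beta_2(y,x)\,u-\beta_2(x,x)\,v$, while applying it to $f^{D-1}y$ yields $a(v)=\beta_2(y,y)\,u+\beta_2(x,y)\,v$. These are precisely the two columns of the claimed matrix
\[
a=\begin{pmatrix}-\beta_2(x,y)&\beta_2(y,y)\\-\beta_2(x,x)&\beta_2(x,y)\end{pmatrix},
\]
once we invoke the symmetry $\beta_2(y,x)=\beta_2(x,y)$. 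That symmetry follows from \eqref{20200819:bilinear2}, since $\eta=1$ and $D_2=D-1=p_1-2$ is even, so that $(-1)^{D_2}\eta=1$. This establishes \eqref{20201123:eq1}.

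I do not expect a genuine obstacle here: the argument is a bookkeeping computation. The only delicate points are the two places where signs enter—the skew-symmetry of $\beta_1$, which produces the minus sign in $Af^Du=-x$ and the off-diagonal signs, and the symmetry of $\beta_2$—together with the structural fact that the restriction of $\langle\cdot|f^{D-1}\cdot\rangle$ to $V[D-1]$ coincides with $\beta_2$, i.e. that $V[D-1]=V_{+,2}$ and $D_2=D-1$ in the case at hand. Keeping straight which of the two forms $\beta_1,\beta_2$ is being used at each application is the main thing to verify carefully.
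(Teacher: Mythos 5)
Your computation is correct and is essentially identical to the paper's proof: both evaluate $a$ on the basis vectors $u,v$ using the explicit forms of $A$ and $A^\dagger$, the normalization $\beta_1(u,v)=1$ together with skew-symmetry of $\beta_1$, and the identification of $\langle\cdot|f^{D-1}\cdot\rangle$ on $V[D-1]=V_{+,2}$ with $\beta_2$. Your extra remarks on the symmetry of $\beta_2$ and the degenerate case $V[D-1]=0$ are accurate but only make explicit what the paper leaves implicit.
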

\begin{proof}
We have, by \eqref{20201119:eq1} and \eqref{20201119:eq2},
\begin{align*}
a(u)&=A^\dagger f^{D-1}A f^{D}u=A^\dagger f^{D-1}
\left(\langle v|f^Du\rangle x+\langle u|f^Du\rangle y\right)
\\
&=-A^\dagger f^{D-1}x=-\left(\langle y|f^{D-1}x\rangle u+\langle x f^{D-1}x\rangle v\right)
\\
&=-\beta_2(y,x)u-\beta_2(x,x)v\,,
\end{align*}
which gives the first coloumn in the matrix \eqref{20201123:eq1}. Similarly for the second column.
\end{proof}
\begin{proposition}\label{20201119:prop1}
Let $E\in\mf g_{d-1}$ be as in \eqref{eq:E-so-pari2} and let $a\in\End(V[D])$ be as in \eqref{20201123:eq1}. If $a$ is a non-zero semisimple element, then $(f+E)_n\in\mf g_{-2}$.
\end{proposition}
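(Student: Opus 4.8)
The plan is to exhibit the Jordan decomposition of $f+E$ explicitly, in the spirit of Lemma \ref{prop1} and Proposition \ref{prop4}, and to read off that its nilpotent part is the restriction of $f$ to a graded $f$-invariant subspace, hence lies in $\mf g_{-2}$. Since $(f+E)\id_{V_{\ge3}}=f\id_{V_{\ge3}}$ is already nilpotent of degree $-2$ and commutes with $(f+E)\id_{V_{\le2}}$, I would work on $V_{\le2}=V_1\oplus V_2$, where $V_1=\bigoplus_{k=0}^Df^kV[D]$ and, since $a\neq0$ forces $V[D-1]\neq0$ and hence $p_2=p_1-1$, one has $V_2=\bigoplus_{k=0}^{D-1}f^kV[D-1]$. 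The first step is a period computation: for $v\in V[D]$ one checks degree by degree that $E$ annihilates $f^kv$ until $f^Dv\in V[-D]$, that $A$ then sends it to $w:=Af^Dv\in V[D-1]$, that $E$ annihilates $f^jw$ until $f^{D-1}w\in V[-D+1]$, and that $A^\dagger$ finally returns it to $V[D]$. This gives
\[
(f+E)^kv=f^kv\ (0\le k\le D),\qquad (f+E)^{D+1+j}v=f^jAf^Dv\ (0\le j\le D-1),
\]
and in particular $(f+E)^{2D+1}v=-A^\dagger f^{D-1}Af^Dv=-av$.

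Next I would exploit that $a$ is traceless: by Lemma \ref{20201119:lem1} one has $\tr a=0$, so a non-zero semisimple $a\in\End(V[D])$ with $\dim V[D]=2$ has distinct non-zero eigenvalues $\pm\mu$ and is invertible. Writing $\bar A=Af^D\colon V[D]\to V[D-1]$ and $\bar B=A^\dagger f^{D-1}\colon V[D-1]\to V[D]$, invertibility of $a=\bar B\bar A$ makes $\bar A$ injective and $\bar B$ surjective, whence $V[D-1]=\im\bar A\oplus\Ker\bar B$ (a dimension count gives $2+(r_2-2)=r_2$, and the intersection is killed by $a$). I would then set
\[
U=V_1\oplus\bigoplus_{j=0}^{D-1}f^j\im\bar A,\qquad \tilde U=\bigoplus_{j=0}^{D-1}f^j\Ker\bar B,
\]
so that $V_{\le2}=U\oplus\tilde U$ is a decomposition into graded subspaces, and I would verify, using the period computation together with $\bar B\,\im\bar A=aV[D]=V[D]$ and $\bar B\,\Ker\bar B=0$, that both summands are $(f+E)$-invariant and that $E$ vanishes identically on $\tilde U$.

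On $U$ the operator $f+E$ is semisimple: $U$ is generated by $V[D]$ under $f+E$, and for $v\in V[D]$ the identity $(f+E)^{2D+1}v=-av$ iterates to $q\bigl(-(f+E)^{2D+1}\bigr)v=q(a)v=0$, where $q$ is the minimal polynomial of $a$; hence $\tilde q\bigl((f+E)|_U\bigr)=0$ for $\tilde q(t)=q(-t^{2D+1})$, and $\tilde q$ has distinct roots because the non-zero eigenvalues $\pm\mu$ of $a$ are distinct. On $\tilde U\oplus V_{\ge3}$ we have $f+E=f$, which is nilpotent. Thus the block-diagonal operators $(f+E)\id_U$ and $f\id_{\tilde U\oplus V_{\ge3}}$ commute, sum to $f+E$, and are respectively semisimple and nilpotent, so they are the semisimple and nilpotent parts of $f+E$. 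In particular $(f+E)_n=f\id_{\tilde U\oplus V_{\ge3}}$ is the restriction of $f$ to a graded $f$-invariant subspace, so it is homogeneous of degree $-2$; since $\mf{so}_N$ is algebraic, $(f+E)_n$ lies in $\mf g$, and therefore $(f+E)_n\in\mf g_{-2}$, as claimed.

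The main obstacle, and the point where this argument genuinely differs from the $\mf{gl}_N$ computation of Proposition \ref{prop4}, is arranging for the nilpotent part to land in $\mf g_{-2}$ rather than acquiring a component of degree $d-1$ coming from $E$. Two observations resolve this. First, the tracelessness of $a$ forces it to be invertible, so $V[D]$ has no kernel part and all of $V_1$ belongs to the semisimple block $U$. Second---and this is the key difference from the $\mf{gl}_N$ case, where the complementary block is $\bigoplus_kf^k\Ker a$---the complementary block in $V_2$ must be taken to be $\bigoplus_jf^j\Ker(A^\dagger f^{D-1})$, since it is precisely on this subspace that $E$, acting through $A^\dagger$ at the bottom of the $V_2$-chains, vanishes, guaranteeing $f+E=f$ there. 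Verifying the two invariance claims and the vanishing of $E$ on $\tilde U$ is the only calculational content that then remains.
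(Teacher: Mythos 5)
Your proof is correct and follows essentially the same route as the paper's: your $\Ker(A^\dagger f^{D-1})$ and $\bigoplus_{j}f^j\im(Af^D)$ coincide exactly with the paper's $\tilde U$-chains and $V_x\oplus V_y$ (since $Af^Du=-x$, $Af^Dv=y$ and $A^\dagger f^{D-1}w=\beta_2(y,w)u+\beta_2(x,w)v$), and the period computation together with the minimal-polynomial argument for semisimplicity on the complementary block is the same. The only discrepancy is the sign in $(f+E)^{2D+1}v=-av$ versus the paper's $(f+E)^{2D+1}w=aw$; your sign is the correct one, and the difference is immaterial since $a$ and $-a$ are simultaneously non-zero semisimple.
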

\begin{proof}
By assumption, $a$ is a non-zero semisimple element.
In \eqref{20201123:eq1} $a$ is represented by a $2\times2$ traceless matrix. Hence, it is non-zero semisimple if and only if $\det(a)\neq0$. This implies that $x$ and $y$ are linearly independent.
Let
$$
\tilde U=\{w\in V[D-1]\mid \beta_2(x,w)=\beta_2(y,w)=0\}\subset V[D-1]\,.
$$
Since $\beta_2$ is non-degenerate and $\det(a)\neq0$ it easily follows that
$$
V[D-1]=\mb Fx\oplus\mb Fy\oplus\tilde U\,.
$$
As a consequence, applying $f$ repeatedly, we get
$$V_2=V_x\oplus V_y\oplus U\,,$$
where
$$V_x=\oplus_{k=0}^{D-1}\mb Ff^kx\,,\quad V_y=\oplus_{k=0}^{D-1}\mb Ff^ky\,,\quad
U=\oplus_{k=0}^{D-1}f^k\tilde U\,.
$$
Note that $(f+E)\id_{U\oplus V_{\ge3}}\!\!=\!\!f\id_{U\oplus V_{\ge3}}$ is nilpotent
and commutes with $(f+E)\id_{V_1\oplus V_x\oplus V_y}$.
Moreover, it is not difficult to check that
\begin{equation}\label{20201123:eq2}
(f+E)^{2D+1}w=a w\,,\qquad w\in V_1\oplus V_x\oplus V_y\,.
\end{equation}
Let $q(t)$ denote the minimal polynomial of $a$, which has non-zero distinct roots since $a$ is non-zero semisimple.
Equation \eqref{20201123:eq2} implies that
the minimal polynomial of $(f+E)\id_{V_1\oplus V_x\oplus V_y}$ divides $q(t^{2D+1})$ which obviously has also
distinct roots. Then $(f+E)\id_{V_1\oplus V_x\oplus V_y}$ is semisimple.
In conclusion, the Jordan decomposition of $f+E$ has $(f+E)_s=(f+E)\id_{V_1\oplus V_x\oplus V_y}$
and $(f+E)_n=f\id_{U\oplus V_{\ge3}}\in\mf g_{-2}$.
\end{proof}
The next result characterizes integrable elements for $\mf{so}_N$ when the depth of the grading is
$d=2D$.
\begin{theorem}\label{integrable-so1}
Let $\mf g=\mf{so}_N$ and let $f\in\mf g$ be a non-zero nilpotent element of depth $d=2D$, $D=p_1-1$.
\begin{enumerate}[(i)]
\item\label{it:intiffuss}
Let $E\in\mf g_{d}$ and let, as in Section \ref{sec:gd-gln}, $u=Ef^{D}\in\End(V[D])$. Then $E$ is integrable for $f$ if and only if $u$ is semisimple.
\item\label{it:intiffano0}
Let $E\in\mf g_{d-1}$ and let, as above $E=A-A^{\dagger}$, where $A\in\Hom(V[-D],V[D-1])$,
and $a=A^{\dagger} f^{D-1} A f^{D}\in\End(V[D])$. Then $E$ is integrable for $f$ if and only if $p_1$ is even, $r_1=2$ and $\det a\neq 0$.
\item
If $E\in W$, then $E$ cannot be integrable for $f$.
\end{enumerate}
\end{theorem}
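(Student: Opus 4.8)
The plan is to dispatch the three parts by reducing each to the coisotropy criterion of Proposition~\ref{coiso-so1} together with the structural results already proved for $\mf g_d$, $\mf g_{d-1}$ and $W$. Part (\ref{it:intiffuss}) I would deduce exactly as Proposition~\ref{20201203:cor1} was deduced in the symplectic case: Lemmas~\ref{prop1} and~\ref{prop2} are statements about the action of $f+E$ on the standard module $V$ and their proofs never use the ambient Lie algebra, so they apply unchanged. Moreover, for $E\in\mf g_d$ the coisotropy condition is vacuous, since $\mf g_{-d-1}=0$ makes the map in Proposition~\ref{prop:linalg} trivially zero; hence whenever $E\neq0$ the element $f+E$ is a generalized cyclic element, and it is integrable precisely when its nilpotent part lies in $\mf g_{-2}$. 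By Lemmas~\ref{prop1} and~\ref{prop2} this happens if and only if $u=Ef^D$ is semisimple.

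For the ``if'' direction of (\ref{it:intiffano0}), assume $p_1$ even, $r_1=2$ and $\det a\neq0$. Then coisotropy holds automatically by Proposition~\ref{coiso-so1}(i). By Lemma~\ref{20201119:lem1}, $a$ is a $2\times2$ traceless endomorphism of $V[D]$, and $\det a\neq0$ is equivalent to $a$ being non-zero semisimple; Proposition~\ref{20201119:prop1} then gives $(f+E)_n\in\mf g_{-2}$. Since $\mf g_{d-1}\subseteq\mf z(\mf g_{\ge2})$ by Proposition~\ref{centr-so1} and $a\neq0$ forces $E\neq0$, the element $f+E$ is an integrable generalized cyclic element.

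The ``only if'' direction of (\ref{it:intiffano0}) is the substantive point, and the efficient route I would take is to borrow the already-established $\mf{gl}_N$ result rather than redo the computation of high powers of $(f+E)_s$ under the skew-adjointness constraints. Viewing $E=A-A^\dagger$ inside $\mf{gl}_N$ and writing its unique decomposition $E=a_{\mathrm{gl}}(f\transpose)^{D-1}+b_{\mathrm{gl}}(f\transpose)^D$ as in Section~\ref{sec:5.2.3}, a short computation with \eqref{20180219:eq3} identifies $b_{\mathrm{gl}}=Af^D$ and $a_{\mathrm{gl}}=-A^\dagger f^{D-1}$, so that $a_{\mathrm{gl}}b_{\mathrm{gl}}=-a$. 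Because the Jordan decomposition of $f+E$ is intrinsic to $\End V$ and the degree $-2$ component of $\mf{so}_N$ sits inside that of $\mf{gl}_N$, integrability of $f+E$ means that $f+E$ is non-nilpotent with nilpotent part in $\mf g_{-2}$ also as an element of $\mf{gl}_N$; hence Lemma~\ref{prop5} applies and shows $a_{\mathrm{gl}}b_{\mathrm{gl}}=-a$, and therefore $a$, is non-zero semisimple. Finally, integrability forces coisotropy, and since $a\neq0$, Proposition~\ref{coiso-so1}(i) leaves only the case $p_1$ even, $r_1=2$; in this case $a$ non-zero semisimple is the same as $\det a\neq0$.

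For the last part, suppose for contradiction that $E=c(f\transpose)^{D-1}\in W$ is integrable. Coisotropy and Proposition~\ref{coiso-so1}(ii) give $c^2=0$. The decomposition $V=V_2\oplus V_{\neq2}$ is preserved by $f+E$, with $(f+E)\id_{V_{\neq2}}=f\id_{V_{\neq2}}$ nilpotent and, for $v\in V_2$, $(f+E)^{2D}v=c^2v=0$, so $f+E$ is nilpotent, contradicting that integrable elements are never nilpotent; this is verbatim the argument of Proposition~\ref{cor:nointinW-b}. I expect the only real obstacle to be the bridge in the ``only if'' half of (\ref{it:intiffano0}): getting the identity $a_{\mathrm{gl}}b_{\mathrm{gl}}=-a$ right and checking carefully that the non-nilpotence and the ``nilpotent part in $\mf g_{-2}$'' hypotheses transfer correctly along $\mf{so}_N\hookrightarrow\mf{gl}_N$ so that Lemma~\ref{prop5} can be invoked.
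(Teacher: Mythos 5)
Your proof is correct and follows essentially the same route as the paper's, which likewise deduces (i) from Proposition \ref{thm:glintiffss}, proves (ii) from Proposition \ref{coiso-so1}(i), Lemma \ref{prop5} and Proposition \ref{20201119:prop1}, and handles (iii) and the remaining cases of (ii) by the argument of Proposition \ref{20201104:prop1}. The only cosmetic difference is that for $p_1$ odd or $r_1\neq2$ in (ii) you derive $a\neq0$ from Lemma \ref{prop5} and contradict the coisotropy constraint $a=0$, whereas the paper shows directly that $a=0$ forces $f+E$ to be nilpotent; your identification $a_{\mathrm{gl}}b_{\mathrm{gl}}=-a$ and the transfer of the Jordan decomposition and of the gradings along $\mf{so}_N\hookrightarrow\mf{gl}_N$ are exactly the (implicit) bridge the paper's citation of Lemma \ref{prop5} also relies on.
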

\begin{proof}
Part i) follows by Proposition \ref{thm:glintiffss}. If $p_1$ is even and $r_1=2$ part ii) follows from
Proposition \ref{coiso-so1}i), Lemma \ref{prop5} and Proposition \ref{20201119:prop1}.
The remaining claim in part ii) and iii) can be proved in the same way as for the proof of Proposition \ref{20201104:prop1} using Proposition \ref{coiso-so1}.
\end{proof}
\begin{remark}
We have a non-zero $E\in\mf g_{d-1}$ if and only if $p_2=p_1-1$ (otherwise $\dim\mf g_{d-1}=0$).
In this case, if $r_2=1$, there are no elements $E\in\mf g_{d-1}$ satisfying the assumption of Theorem
\ref{integrable-so1}ii). Indeed, since $\dim V[D-1]=r_2=1$, $x$ and $y$ are linearly dependent thus
$a$ defined in \eqref{20201123:eq1} is nilpotent.
\end{remark}

\subsubsection{Even depth $d=2D-2$}\label{sec:5.5.2}
\begin{proposition}\label{20201221:prop1}
The centralizer of $\mf g_{\ge2}$ in $\mf g$ is $\mf z(\mf g_{\ge2})=\mf g_{d-1}\oplus\mf g_d$.
\end{proposition}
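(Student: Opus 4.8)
The plan is to establish the two inclusions separately, the inclusion $\mf g_{d-1}\oplus\mf g_d\subseteq\mf z(\mf g_{\ge2})$ being routine: since $d=2D-2$ is the depth of the grading, for $j\in\{d-1,d\}$ we have $[\mf g_j,\mf g_{\ge2}]\subseteq\mf g_{\ge d+1}=0$, so $\mf g_{d-1}\oplus\mf g_d$ centralizes $\mf g_{\ge2}$. For the reverse inclusion I would argue, as at the start of the proof of Proposition~\ref{lemma2-sp}, that $\mf z(\mf g_{\ge2})\subseteq\mf g_{\ge0}$ because $\ad e$ is injective on $\mf g_{<0}$; as $\mf g_{\ge2}$ is graded, so is its centralizer, and it suffices to fix a homogeneous $E=\sum_{x_\alpha-x_\beta=k}c_{\alpha\beta}F_{\alpha\beta}\in\mf g_k\cap\mf z(\mf g_{\ge2})$ with $0\le k\le d-2=2D-4$ and show $E=0$.

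Write $\alpha_0$ for the unique box with $x_{\alpha_0}=D$ (unique because $r_1=1$); then $x_{\alpha_0'}=-D$, and, since we are in case~(iii) of Lemma~\ref{20201219:lem1} with $p_2\le p_1-2$, there are no boxes at levels $D-1$ or $-D+1$, i.e.\ $V[D-1]=V[-D+1]=0$ and $\dim V[D]=1$. The first step mimics the top-box computation of Proposition~\ref{lemma2-sp}: commuting $E$ with the elements $F_{\alpha_0,\gamma}\in\mf g_{\ge2}$ (those with $x_\gamma\le D-2$) and simplifying via the commutation rule~\eqref{comm:BC} and the reflection identity~\eqref{20200819:eq4b}, one gets $c_{\alpha\beta}=0$ whenever $x_\alpha\le D-2$ and $x_\beta>-D$; the symmetry relation in~\eqref{eq:E-sp} gives the reflected statement as well. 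Because $V[D-1]=V[-D+1]=0$, these constraints force $E$ to be supported on the vectors $F_{\alpha_0,\beta}$; writing $E=\sum_\beta c_\beta F_{\alpha_0,\beta}$, the sum runs over boxes $\beta$ with $x_\beta=D-k$, and in particular $E=0$ already for $k=1$ (there is no box at level $D-1$).

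The genuinely new second step uses the remaining generators of $\mf g_{\ge2}$. For $F_{\delta\gamma}\in\mf g_{\ge2}$ with $\delta\ne\alpha_0$ (so $x_\delta\le D-2$), a direct computation with~\eqref{comm:BC}, again using $V[D-1]=V[-D+1]=0$, collapses the four terms to
\[
[F_{\delta\gamma},E]=-c_\delta\,F_{\alpha_0,\gamma}+\epsilon_\gamma\epsilon_\delta\,c_{\gamma'}\,F_{\alpha_0,\delta'}.
\]
Now $F_{\alpha_0,\gamma}\ne0$ precisely when $x_\gamma>-D$, and the second term is present only for the resonant value $m:=x_\delta-x_\gamma=2D-2k$ (which makes $x_{\gamma'}=D-k$). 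Hence, for a fixed $\delta$ with $x_\delta=D-k$, if one can choose a box $\gamma$ with $x_\delta-x_\gamma\ge2$, $x_\gamma>-D$, and $x_\delta-x_\gamma\ne 2D-2k$, then $c_\delta=0$. For $k\le 2D-4$ we have $x_\delta=D-k\ge-D+4$, and such a $\gamma$ exists — typically at level $x_\delta-2$ inside the row of $\delta$ or inside the long row $p_1$, with a larger $m$ used in the few resonant or wrong-parity situations. By contrast, for $k\in\{d-1,d\}$ one has $x_\delta\in\{-D+3,-D+2\}$, and since $V[-D+1]=0$ and there are no boxes below level $-D$, no admissible $\gamma$ exists; this is exactly why $\mf g_{d-1}\oplus\mf g_d$ survives. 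The degree $k=0$ is treated separately as in Proposition~\ref{lemma2-sp}, using $\id_V\notin\mf{so}_N$.

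The hard part is this last step: one must check, from the shape of the pyramid attached to $\partition p$, that for every admissible $k$ with $2\le k\le 2D-4$ and every box $\delta$ at level $D-k$ there is a box $\gamma$ of the correct level and parity with a non-resonant $m$, treating the resonant value $m=2D-2k$ by playing the equation for $c_\delta$ against the one for $c_{\gamma'}$. I expect this finite bookkeeping, rather than any conceptual point, to be the only obstacle; otherwise the argument runs parallel to Propositions~\ref{lemma2-sp} and~\ref{centr-so1}.
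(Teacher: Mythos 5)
Your argument is correct and follows essentially the same route as the paper's proof, which likewise first uses the generators $F_{1,\tilde\beta}$ (your $F_{\alpha_0,\gamma}$) to reduce to $E=\sum_{x_\beta=D-k}c_{1\beta}F_{1\beta}$, then kills the remaining coefficients by commuting with the other generators $F_{\tilde\alpha,\tilde\beta}\in\mf g_{\ge2}$, and treats $k=0$ separately. The bookkeeping you defer is in fact routine --- for every $\delta$ at level $D-k$ with $2\le k\le d-2$ the bottom row of the pyramid already contains a box $\gamma\ne p_1$ with $x_\gamma\le x_\delta-2$ avoiding the resonant level $-x_\delta$ --- and on the one point where you are more careful than the paper (the second term $\epsilon_\gamma\epsilon_\delta c_{\gamma'}F_{\alpha_0,\delta'}$, which the displayed equation \eqref{20201116:eq1-bis} silently drops), your resonance analysis is exactly what is needed to make the step airtight.
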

\begin{proof}
Clearly, $\mf z(\mf g_{\ge2})\subset\mf g_{\ge0}$. By degree considerations, $\mf z(\mf g_{\ge2})\supset\mf g_{d-1}\oplus\mf g_d$.
On the other hand, let $1\in I$ be the label of the rightmost box of the pyramid associated to $\partition p$ (note that $x_{1}=D$) and $p_1\in I$ be the label of the leftmost box (note that $x_{p_1}=-D$).
Let also $\tilde\beta\in I$ be such that $x_{\tilde\beta}\le D-2$ (the box $\tilde\beta$ is completely at the left of the box $1$).
Then $F_{1\tilde\beta}\in\mf g_{\ge2}$. Hence, letting $E$ as in equation \eqref{eq:E-sp},
using the commutation relations \eqref{comm:BC}, the second equation in \eqref{eq:E-sp} and \eqref{20200819:eq4b}, we have that
\begin{equation}\label{eq1-so-bis}
0=[F_{1\tilde\beta},E]=\sum_{x_{\alpha}-x_{\beta}=k}c_{\alpha\beta}[F_{1\tilde\beta},F_{\alpha\beta}]
=\sum_{x_{\tilde\beta}-x_{\beta}=k}2c_{\tilde\beta\beta}F_{1\beta}
-\sum_{x_{\alpha}-D=k}2c_{\alpha1}F_{\alpha\tilde\beta}\,.
\end{equation}
If $k\ge1$, then the condition $x_{\alpha}-D=k\ge1$ implies that $x_{\alpha}\ge1+D$, which is empty.
Hence, $c_{\alpha\beta}=0$ if $x_{\alpha}\le D-2$ and $\beta\neq p_1$ (since
$F_{1,p_1}=0$ by \eqref{20200819:eq4b} and \eqref{20201001:eq4}).
Using the second equation in \eqref{eq:E-sp}, we have also that $c_{\alpha\beta}=0$ if $\alpha\neq 1$ and $x_{\beta}\ge-D+2$.
Since $V[D-1]=V[-D+1]=0$, we have
$$
E=\sum_{x_\beta=D-k}c_{1\beta}F_{1\beta}\,.
$$
Let then assume that
$\tilde\alpha,\tilde\beta\in I$ are such that $x_{\tilde\alpha}-x_{\tilde\beta}\ge2$. Then
\begin{equation}\label{20201116:eq1-bis}
0=[F_{\tilde\alpha\tilde\beta},E]=2\delta_{x_{\tilde\alpha},D-k}c_{1,\tilde\alpha}F_{1,\tilde\beta}\,.
\end{equation}
If $x_{\tilde\alpha}=D-k$, then $-D\le x_{\tilde\beta}\le D-2-k$. Hence, equation \eqref{20201116:eq1-bis}
implies that $c_{1\beta}=0$ for $x_\beta>-D+3$, thus showing that $E\in\mf{g}_d\oplus\mf{g}_{d-1}$.

If $k=0$, a similar argument to the one used in the proof of Lemma \ref{lemma1} shows that $E=0$.
\end{proof}
In the sequel, we are going to use the following basis of $V_1$. Let $v_+\in V_{+,1}$ be such that
$\beta_1(v_+,v_+)=1$. Then we consider the basis $\{f^kv_+\}_{k=0}^{D}$ of
$V_{1}$. We also denote $v_-=f^Dv_+$.

An element $E\in\mf g_{d}$ can be uniquely written as $E=A-A^{\dagger}$, for some
$A$ $\in$ $\Hom(V[-D],V[D-2])$.
Note that $V[D]=\mb Fv_+$ and $V[D-2]=\mb Ffv_+\oplus V_{+,2}$. Hence,
\begin{equation}\label{eq:E2}
E=(a+\lambda fv_+)\phi(v_+)-v_+(\phi(a)+\lambda\phi(fv_+))
\,,
\end{equation}
for some $\lambda\in\mb F$ and $a\in V_{+,2}$.

Let now $E\in\mf g_{d-1}$ instead. It can be uniquely written as $E=B-B^{\dagger}$, where $B\in\Hom(V[-D],V_{+,3})$. Hence, $B=b\phi(v_+)$,
for some $b\in V_{+,3}$, and we can write
\begin{equation}\label{eq:E-so-pari}
E=b \phi(v_+)-v_+\phi(b)
\,.
\end{equation}
We also set
\begin{equation}\label{eq:beta}
\beta=\langle b|f^{D-3}b\rangle
\,.
\end{equation}
\begin{proposition}\label{coiso-so3}
Let $E\in\mf g_{d-1}$ be as in \eqref{eq:E-so-pari} and let $\beta$ be as in \eqref{eq:beta}. The subspace $\mf g_1^E$ is
coisotropic with respect to the bilinear form \eqref{eq:skewform} if and only if $\beta=0$.
\end{proposition}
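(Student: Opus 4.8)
The plan is to reduce the coisotropy condition to a single concrete identity via Proposition~\ref{prop:linalg} and then read that identity off from Lemma~\ref{prop:20201029c}. Since $E\in\mf g_{d-1}$, Proposition~\ref{prop:linalg} (with $j=d-1$, so $\mf g_{-j-1}=\mf g_{-d}$) tells us that $\mf g_1^E$ is coisotropic if and only if $\ad E\circ(\ad f)^{-1}\circ\ad E$ vanishes on all of $\mf g_{-d}$. Because $V[D-1]=V[-D+1]=0$ in the present case, every $U\in\mf g_{-d}$ splits as $U=X+Y$ with $X\in\Hom(V[D],V[-D+2])$ and $Y\in\Hom(V[D-2],V[-D])$, which is exactly the shape of the input required by Lemma~\ref{prop:20201029c}.

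First I would match the present $E$ to the hypotheses of that lemma. Writing $E=b\phi(v_+)-v_+\phi(b)$ as $A+B$ with $A=b\phi(v_+)\in\Hom(V[-D],V[D-3])$ and $B=-v_+\phi(b)\in\Hom(V[-D+3],V[D])$, the two standing assumptions $\id_{fV}A=0$ and $B\id_{f\transpose V}=0$ of Lemma~\ref{prop:20201029c} both follow from $b\in V_+$: one has $\im A\subseteq\mb F b\subseteq V_+=\ker\id_{fV}$, while $b\in V_+=(f\transpose V)^\perp$ (using $(f\transpose)^\dagger=-f\transpose$) forces $\phi(b)\id_{f\transpose V}=0$. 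A direct computation then gives
$$
Bf^{D-3}A=-\langle b\mid f^{D-3}b\rangle\,v_+\phi(v_+)=-\beta\,v_+\phi(v_+),
$$
so that the single scalar $\beta$ of \eqref{eq:beta} is the only surviving datum.

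With this, Lemma~\ref{prop:20201029c} rephrases coisotropy as the requirement that, for all $U=X+Y\in\mf g_{-d}$,
$$
\beta\,(f\transpose)^{D-2}Xv_+\,\phi(v_+)=0
\quad\text{and}\quad
\beta\,v_+\phi(v_+)\,Y(f\transpose)^{D-2}=0.
$$
If $\beta=0$ these hold identically, so $\mf g_1^E$ is coisotropic. For the converse I would exhibit one test element of $\mf g_{-d}$ that breaks the first identity when $\beta\neq0$: take $U_0=F_{\beta_0\alpha_0}$, where $\alpha_0$ is the rightmost box of the top row (so $e_{\alpha_0}$ is proportional to $v_+$) and $\beta_0$ is the box of the top row at $x=-D+2$. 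Its $\Hom(V[D],V[-D+2])$-component sends $v_+$ to a nonzero multiple of $f^{D-1}v_+$; since the whole top row lies in $f\transpose V$, one has $(f\transpose)^{D-2}f^{D-1}v_+=fv_+\neq0$, whence $(f\transpose)^{D-2}Xv_+\neq0$ and the first identity fails. By Lemma~\ref{prop:20201029c} this gives $\ad E\circ(\ad f)^{-1}\circ\ad E(U_0)\neq0$, so $\mf g_1^E$ is not coisotropic. The two directions together yield the stated equivalence with $\beta=0$.

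The identification of $A,B$, the verification of the two hypotheses, and the evaluation of $Bf^{D-3}A$ are routine bookkeeping with the adjoint $\dagger$ and the projections $\id_{fV},\id_{f\transpose V}$. The step demanding genuine care — and the one I expect to be the main obstacle — is the last: establishing that the first identity is \emph{nontrivial} when $\beta\neq0$, i.e. producing an explicit $U_0\in\mf g_{-d}$ with $(f\transpose)^{D-2}Xv_+\neq0$. This requires tracking how $f\transpose$ acts on the top row and confirming that $U_0$ is a genuine homogeneous element of $\mf g_{-d}$ (via the skew-adjointness relation in \eqref{eq:E-sp}), not merely of $\mf{gl}(V)$.
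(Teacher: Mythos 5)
Your proposal is correct and follows essentially the same route as the paper: Proposition \ref{prop:linalg} combined with Lemma \ref{prop:20201029c}, the verification of its two hypotheses from $b\in V_+$, and the reduction to the rank-one operator $Bf^{D-3}A=-\beta\,v_+\phi(v_+)$. The only (harmless) difference is that for the ``only if'' direction you exhibit an explicit test element $F_{\beta_0\alpha_0}$, whereas the paper simply evaluates $(f\transpose)^{D-2}XB^\dagger f^{D-3}B$ at $v_-$ and invokes the arbitrariness of $X$.
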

\begin{proof}
Let $E=B-B^{\dagger}\in\mf g_{d-1}$ and let $U=X-X^\dagger\in\mf g_{d}$, for some $X$ $\in$ \allowbreak $\Hom(V[D],V[-D+2])$.
Recall that $B=b\phi(v_+)$. Hence, $\id_{fV}B=0$. Moreover, $B^{\dagger}=v_+\phi(b)$, Hence, $B^{\dagger}\id_{f\transpose V}=0$.
By Proposition \ref{prop:linalg} and Lemma \ref{prop:20201029c} we have that $\mf g_1^E$ is coisotropic if and only if
\begin{equation}\label{2021117:eq4}
(f\transpose)^{D-2}XB^{\dagger}f^{D-3}B=B^{\dagger}f^{D-3}AX^\dagger(f\transpose)^{D-2}=0\,,
\end{equation}
for every $X\in\Hom(V[D],V[-D+2])$. Note that the middle term in equation in \eqref{2021117:eq4} is the adjoint of the first term.
Hence, $\mf g_1^E$ is coisotropic if and only if $(f\transpose)^{D-2}XB^{\dagger}f^{D-3}B=0$ for every $X\in\Hom(V[D],V[-D+2])$.
Then, we have
$$
0=\left((f\transpose)^{D-2}XB^{\dagger}f^{D-3}B\right)(v_-)=\beta (f\transpose)^{D-2}X(v_+)
\,.
$$
Since $X$ is arbitrary we get $\beta=0$.
\end{proof}
The next result characterizes integrable elements for $\mf{so}_N$ when the depth of the grading is
$d=2D-2$.
\begin{theorem}\label{integrable-so3}
Let $\mf g=\mf{so}_N$ and let $f\in\mf g$ be a nilpotent element of even depth $d=2D-2$, $D=p_1-1$.
\begin{enumerate}[(i)]
\item\label{it:intiffla}
Let $E\in\mf g_{d}$ be as in \eqref{eq:E2}, where $\lambda\in\mb F$ and $a\in V_{+,2}$, and let also
$\alpha=\langle a|f^{D-2}a\rangle$. $E$ is integrable for $f$ if and only if
either $\alpha\neq0$, or $\alpha=a=0$ and $\lambda\neq0$.
\item\label{it:nointso}
There are no integrable elements for $f$ in $\mf g_{d-1}$.
\end{enumerate}
\end{theorem}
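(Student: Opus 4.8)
The plan is to handle the two parts separately, working throughout in the standard representation with the explicit models \eqref{eq:E2} and \eqref{eq:E-so-pari} for $\mf g_d$ and $\mf g_{d-1}$.

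For part \eqref{it:intiffla} I would first record that, by the definition \eqref{20200819:bilinear} of $\beta_2$, the relevant invariant is $\alpha=\langle a|f^{D-2}a\rangle=\beta_2(a,a)$, and that as a $Z(\mf s)$-module $\mf g_d$ is $\st(SO_{r_2})\oplus\id$, where $SO_{r_2}=SO(V_{+,2},\beta_2)$ acts standardly on the $a$-coordinate and trivially on the $\lambda$-coordinate (Section \ref{subs:socases}, Table \ref{tab:zsd}); when $p_2<p_1-2$ the $a$-summand is absent and $a=0$ is forced. By Theorem \ref{thm:intiffss}, $E$ is integrable if and only if it is a nonzero semisimple element of this module, and the closed nonzero $SO_{r_2}$-orbits on the standard representation are exactly the non-isotropic vectors, while the nonzero isotropic vectors form the null cone of nilpotent elements. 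This already yields the stated criterion: $\alpha\ne0$ (which forces $a\ne0$), or $a=0$ with $\lambda\ne0$.

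To exhibit the integrable triple and to verify this independently, I would fix $v_+\in V[D]$ with $\beta_1(v_+,v_+)=1$ (possible since $r_1=1$) and compute the Jordan decomposition of $f+E$ directly. For $\alpha\ne0$, split $V_{+,2}=\mb F a\oplus a^\perp$ orthogonally for $\beta_2$ and set $W_0=V_1\oplus\bigoplus_{k=0}^{D-2}\mb F f^k a$. Using that distinct rectangles are $\langle\cdot|\cdot\rangle$-orthogonal and that positive-degree vectors are isotropic, one checks that $f+E$ preserves $V=W_0\oplus\tilde U\oplus V_{\ge3}$ (with $\tilde U=\bigoplus_k f^k a^\perp$), acts as the nilpotent $f\in\mf g_{-2}$ off $W_0$, and on the cyclic space $W_0$ (generated by $v_+$) satisfies
\[
(f+E)^{2D}v_+=\lambda(f+E)^Dv_+-\alpha v_+,
\]
so its minimal polynomial there is $t^{2D}-\lambda t^D+\alpha$. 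When this has distinct (necessarily nonzero) roots, i.e. $\lambda^2\ne4\alpha$, the restriction is semisimple and $(f+E)_n=f\id_{\tilde U\oplus V_{\ge3}}\in\mf g_{-2}$; the case $a=0$, $\lambda\ne0$ is analogous with minimal polynomial $t(t^D-\lambda)$ on $V_1$. I expect the one genuinely delicate point to be the degenerate case $\lambda^2=4\alpha$, where $(t^{2D}-\lambda t^D+\alpha)=(t^D-\tfrac{\lambda}{2})^2$ and $f+E$ is not semisimple: here one must still confirm that its nilpotent part lands in $\mf g_{-2}$, either by extracting $A_s$ from the idempotent of this factorization and checking $A-A_s\in\mf g_{-2}$, or simply by appealing to Theorem \ref{thm:intiffss}, since $\alpha\ne0$ already guarantees a closed orbit.

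For part \eqref{it:nointso} the plan is the contradiction scheme of Propositions \ref{20201104:prop1} and \ref{cor:nointinW-b}. Suppose $E=b\phi(v_+)-v_+\phi(b)\in\mf g_{d-1}$ is integrable; then $\mf g_1^E$ is coisotropic, so Proposition \ref{coiso-so3} forces $\beta=\langle b|f^{D-3}b\rangle=0$, and I would show that $f+E$ is then nilpotent, contradicting the non-nilpotency of integrable elements. Tracing the action on the basis, $f+E$ sends $f^k v_+\mapsto f^{k+1}v_+$ for $0\le k<D$, sends $v_-=f^Dv_+\mapsto b$, runs $b\mapsto fb\mapsto\cdots\mapsto f^{D-3}b$, and sends the bottom $f^{D-3}b'$ of each $V_3$-string ($b'\in V_{+,3}$) to $-\beta_3(b,b')v_+$; since $\beta=\beta_3(b,b)=0$, the distinguished chain $v_+\to\cdots\to v_-\to b\to\cdots\to f^{D-3}b\to0$ terminates. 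The key point is that the digraph of $f+E$ on the basis is then acyclic: assigning a rank function that places the $V_3$-strings above $v_+$ and the $V_1$- and $b$-chains below makes it strictly decreasing along every edge, so no directed cycle survives once the only candidate cycle is broken by $\beta=0$ (and on the $V_i$ with $i\ne1,3$ one has $f+E=f$). Hence $f+E$ is nilpotent. The main obstacle here is the bookkeeping needed to rule out every cycle, in particular those threading through $V_3$-strings other than the $b$-string, which couple back into $V_1$ through the terms $-\beta_3(b,b')v_+$.
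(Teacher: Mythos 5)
Your proposal is correct, but for part \eqref{it:intiffla} it takes a genuinely different route from the paper. The paper's own proof of \eqref{it:intiffla} is a direct Jordan-decomposition computation in the standard representation: it observes that this case is the specialization $b=0$ of Theorem \ref{integrable-so2} \eqref{it:intiffano0}, so that Lemmas \ref{20201116:lem3} and \ref{20201116:lem4} (which show $(f+E)_n\notin\mf g_{-2}$ when $a\ne0$, $\alpha=0$, and $(f+E)_n\in\mf g_{-2}$ when $\alpha\ne0$, including the repeated-root case $\alpha=\lambda^2$ where the nilpotent part is written out explicitly) do all the work. Your primary argument instead deduces \eqref{it:intiffla} from Theorem \ref{thm:intiffss} together with the identification $Z(\mf s)|\mf g_d=\st(SO_{r_2})\oplus\id$ and the description of closed $SO_{r_2}$-orbits on the standard representation; this is precisely the derivation the paper records only in the remark following the theorem, and it is legitimate (Theorem \ref{thm:intiffss} is proved by reduction to irreducible nilpotents, so there is no circularity). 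It buys a short conceptual proof at the price of resting on the heavier general theorem, whereas the paper's computation is self-contained and exhibits the integrable triple explicitly. For part \eqref{it:nointso} your argument coincides with the paper's (coisotropy forces $\beta=0$ by Proposition \ref{coiso-so3}, and then $f+E$ is nilpotent), and you supply the details of the nilpotency check that the paper leaves to the reader; the worry you raise about cycles threading through other $V_3$-strings is resolved by noting that the only vector of $V_3$ reached from $V_1$ is $b$ (via $v_-\mapsto b$) and the $b$-string terminates because $\beta=0$.

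One small correction to your secondary verification of \eqref{it:intiffla}: on the cyclic subspace generated by $v_+$ the recursion gives $(f+E)^{2D}v_+=2\lambda(f+E)^Dv_+-\alpha v_+$, since the term $E(f^{D-1}v_+)=\lambda v_+$ contributes both at step $D$ and again at step $2D$; hence the minimal polynomial divides $t^{2D}-2\lambda t^D+\alpha$ and the degenerate case is $\lambda^2=\alpha$, not $\lambda^2=4\alpha$. This matches the characteristic polynomial $x^{2D}-2\lambda x^D+\alpha$ computed in Lemma \ref{20201116:lem4}. The slip does not affect your conclusion, because the case structure is unchanged and your fallback to Theorem \ref{thm:intiffss} covers the repeated-root case.
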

\begin{proof}
In order to prove part (i) one can use the same arguments that will be used for the proof Theorem \ref{integrable-so2} \eqref{it:intiffano0} (in fact, part (i) of the present Theorem corresponds to the special case $b=0$ of Theorem \ref{integrable-so2} \eqref{it:intiffano0}).
Let us then prove part (ii).
Let $E\in\mf g_{d-1}$ be as in \eqref{eq:E-so-pari} and let $\beta$ be as in \eqref{eq:beta}. If $E$ is integrable for $f$, then $\mf g_1^E$ must be coisotropic. By Proposition \ref{coiso-so3}, then $\beta=0$, and, as one can easily check, $f+E$ is nilpotent. This contradicts the fact that $E$ is integrable and proves part (ii).
\end{proof}

\begin{remark}
Theorem \ref{integrable-so3} \eqref{it:intiffla} follows from Theorem \ref{thm:intiffss} since in this case $Z(\mf s)|\mf g_d=\st(SO_{r_2})\oplus\id$ (see Table \ref{tab:zsd}).
\end{remark}

\subsubsection{} Let us reformulate the results, obtained in Subsection \ref{subs:so} about integrable quasicyclic elements in terms of polar linear groups.

First, actions of centralizers of the $\mf{sl}_2$-triples for nilpotent elements in simple Lie algebras of classical types on $\mf g_{d-1}$ are given in the following table:

\begin{table}[H]
\resizebox{.99\textwidth}{!}{
\renewcommand{\arraystretch}{1.75}
\begin{tabular}{c|l|c|c|c}
$\mf g$&\qquad\qquad partition&$d$&rank&$\mf z(\mf s)|\mf g_{d-1}$\\
\hline\hline
$\mf{sl}_N$\\
\hline
&$(p_1^{(r_1)},(p_1-1)^{(r_2)},...)$&$2p_1-2$&$\min(r_1,r_2)$&$D\left(\st(\mf{sl}_{r_1})^*\otimes\st(\mf{sl}_{r_2})\otimes\mb F\right)$\\
\hline\hline
$\mf{sp}_N$\\
\hline
&$(p_1^{(r_1)},(p_1-1)^{(r_2)},...)$, \hfill$p_1$ even&$2p_1-2$&$\min(\left[\frac{r_1}2\right],\frac{r_2}2)$&$\st(\mf{so}_{r_1})\otimes\st(\mf{sp}_{r_2})$\\
&$(p_1^{(r_1)},(p_1-1)^{(r_2)},...)$, \hfill$p_1$ odd&$2p_1-2$&$\min(\frac{r_1}2,\left[\frac{r_2}2\right])$&$\st(\mf{sp}_{r_1})
\otimes\st(\mf{so}_{r_2})$\\
\hline\hline
$\mf{so}_N$\\
\hline
&$(p_1^{(r_1)},(p_1-1)^{(r_2)},...)$, \hfill$p_1$ even&$2p_1-2$&$\min(\frac{r_1}2,\left[\frac{r_2}2\right])$&$\st(\mf{sp}_{r_1})\otimes\st(\mf{so}_{r_2})$\\
&$(p_1^{(r_1)},(p_1-1)^{(r_2)},...)$, \hfill$r_1>1$, $p_1$ odd&$2p_1-2$&$\min(\left[\frac{r_1}2\right],\frac{r_2}2)$&$\st(\mf{so}_{r_1})\otimes\st(\mf{sp}_{r_2})$\\
&$(p_1,(p_1-1)^{(r_2)},(p_1-2)^{(r_3)},...)$, \hfill$p_1$ odd&$2p_1-3$&$\frac{r_2}2+2$&$\ext^2\st(\mf{sp}_{r_2})\oplus\st(\mf{so}_{r_3})\oplus\id$\\
&$(p_1,(p_1-2)^{(r_2)},(p_1-3)^{(r_3)},...)$, \hfill$p_1$ odd&$2p_1-4$&$0$&$\st(\mf{sp}_{r_3})$
\end{tabular}
}
\caption{Actions of centralizers of the $\mf{sl}_2$-triples for nilpotent elements in simple Lie algebras of classical types on $\mf g_{d-1}$}\label{tab:zsdm1}
\end{table}

In Tables \ref{tab:zsdm1} and \ref{tab:withinv}, for a $G$-module $V$, $D(V)$ stands for the $G$-module $V\oplus V^*$.

\

Cartan subspaces for the entries in Table \ref{tab:zsdm1} are as follows.

\begin{itemize}
\item
For $D\left(\st(\mf{sl}_{r_1})^*\otimes\st(\mf{sl}_{r_2})\otimes\mb F\right)$: let us identify the representation space with  $\st(\mf{sl}_{r_1})\otimes\st(\mf{sl}_{r_2})^*+ \st(\mf{sl}_{r_1})^*\otimes\st(\mf{sl}_{r_2})$; then, a Cartan subspace is spanned by $u_i\otimes v_i^*+u_i^*\otimes v_i$, $i=1,...,\min(r_1,r_2)$,
where $u_i$, $u_i^*$, resp. $v_i$, $v_i^*$ are dual bases of $\st(\mf{sl}_{r_1})$ and $\st(\mf{sl}_{r_2})$ respectively. \item
For $\st(\mf{so}_{2m+\text{possibly $1$}})\otimes\st(\mf{sp}_{2n})$: let $u_1$, ..., $u_m$, (possibly $u_0$,) $u_{-m}$, ..., $u_{-1}$ be a basis of $\mb F^{2m+\text{possibly $1$}}$ and $v_1$, ..., $v_n$, $v_{-n}$, ..., $v_1$ be a basis of $\mb F^{2n}$; then, the subspace $C$ spanned by $\{u_i\otimes v_i+u_{-i}\otimes v_{-i}\mid i=1,...,\min(m,n)\}$ is a Cartan subspace.
\item For $\ext^2\st(\mf{sp}_{r_2})$: as in subsection \ref{subs:socases}, case \eqref{en:soee}.
\item For $\st(\mf{so}_{r_3})$: as in subsection \ref{subs:socases}, case \eqref{en:so31}.
\end{itemize}

\begin{theorem}
Let $\mf g=\mf{so}_N$, let $f\in\mf g$ be a non-zero nilpotent element of even depth $d$ and let $\partition p=(p_1^{(r_1)},p_2^{(r_2)},...)$ be the corresponding partition. Then
\begin{enumerate}[(a)]
\item\label{it:sopolar} All the linear groups $Z(\mf s)|\mf g_{d-1}$ are polar, and described in Table \ref{tab:zsdm1}.
\item\label{it:soint} There exists an integrable quasi-cyclic element for $f$ if and only if $p_1$ is even, $r_1=2$, and $p_2=p_1-1$, $r_2\ge2$. Such an element is the unique, up to equivalence, element $f+E$, where $E\in\mf g_{d-1}$ is semisimple with respect to $Z(\mf s)$.
\end{enumerate}
\end{theorem}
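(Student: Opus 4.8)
The plan is to organize everything by depth, using Lemma \ref{20201219:lem1}. Since $2D-1 = 2p_1-3$ is odd (and occurs only for $p_1$ odd), even depth arises only in the two regimes $d=2D$ (when $r_1\ge2$) and $d=2D-2$ (when $p_1$ is odd and $r_1=1$), and $\mf g_{d-1}$ has already been described in each: in Section \ref{sec:4.5.1} it is $\{A-A^\dagger\mid A\in\Hom(V[-D],V[D-1])\}$, which is nonzero exactly when $p_2=p_1-1$, and in Section \ref{sec:5.5.2} it is $\{B-B^\dagger\mid B\in\Hom(V[-D],V_{+,3})\}$. I would read off the $Z(\mf s)$-module structure directly from these models: the action is the natural one of the classical group preserving the nondegenerate forms $\beta_i$ of \eqref{20200819:bilinear} on $V[D]$, $V[D-1]$, $V_{+,3}$, whose symmetry type is governed by \eqref{20200819:bilinear2}, i.e. by the parity of $D_i=p_i-1$. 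A short parity bookkeeping then yields $\st(\mf{sp}_{r_1})\otimes\st(\mf{so}_{r_2})$ when $p_1$ is even, $\st(\mf{so}_{r_1})\otimes\st(\mf{sp}_{r_2})$ when $p_1$ is odd and $r_1>1$, and $\st(\mf{sp}_{r_3})$ in the last regime (here $V[D]$ is one-dimensional), matching Table \ref{tab:zsdm1}.

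For part \eqref{it:sopolar} it remains to establish polarity. This already follows from Theorem \ref{thm:allpolar}; to keep the classical argument self-contained I would instead invoke stability (Theorem \ref{thm:Zgj}\eqref{thm:Zgjstable}) together with the criterion that a stable reductive linear group admitting a subspace $C$ with \eqref{eq:C+gC} and \eqref{eq:fulldimC} is polar with Cartan subspace $C$. Concretely, for the explicit Cartan subspaces listed just before the theorem one checks that $\dim C$ equals the rank recorded in Table \ref{tab:zsdm1} and that $\mf z(\mf s)(C)\oplus C=\mf g_{d-1}$, the latter being exactly the stability check $[\mf z(\mf s),C]=\mf g_{d-1}$ already used in Theorem \ref{thm:Zgj}.

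Part \eqref{it:soint} is then assembly of the integrability criteria. In the regime $d=2D-2$, Theorem \ref{integrable-so3}\eqref{it:nointso} shows $\mf g_{d-1}$ carries no integrable element, so any integrable quasi-cyclic element forces $d=2D$, hence $r_1\ge2$. In that regime Theorem \ref{integrable-so1}\eqref{it:intiffano0} says $E=A-A^\dagger$ is integrable if and only if $p_1$ is even, $r_1=2$ and $\det a\neq0$, with $a=A^\dagger f^{D-1}Af^{D}$; the Remark following that theorem shows $\det a\neq0$ is impossible when $r_2=1$ (then $a$ is nilpotent), so $r_2\ge2$ is necessary, while $p_2=p_1-1$ is needed only for $\mf g_{d-1}\neq0$. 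This is the ``only if'' direction, with precisely the asserted conditions. For ``if'' I would exhibit one explicit $E$: writing $E$ as in \eqref{eq:E-so-pari2} with $x,y\in V[D-1]$ a hyperbolic pair for $\beta_2$ (available exactly because $r_2\ge2$), Lemma \ref{20201119:lem1} gives $\det a\neq0$, whence $f+E$ is integrable.

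For uniqueness up to equivalence I would use that the rank of $\st(\mf{sp}_2)\otimes\st(\mf{so}_{r_2})$ is $\min(1,[r_2/2])=1$, so $C$ is a line through a generator $E_0$. Any integrable $E$ is semisimple by Lemma \ref{lem:int=>ss}, hence its orbit meets $C$ by Theorem \ref{thm:Cartanstuff}\eqref{smallWeyl}, so $E$ is $Z(\mf s)$-conjugate to a scalar multiple of $E_0$; conversely every nonzero element of $C$ is integrable since $\det a$ scales quadratically along the line and is nonzero at $E_0$. Thus all integrable elements form the single equivalence class of the semisimple $E_0$. The hard part will be the correct form-parity bookkeeping in the module identification, including clean exclusion of the degenerate boundary cases ($r_2=1$, $p_2<p_1-1$, and the rank-zero regime $d=2D-2$), and confirming that a nonzero Cartan element is genuinely integrable rather than merely semisimple — the latter resting on Proposition \ref{20201119:prop1}, which identifies the nilpotent part of $f+E$ with $f\id_{U\oplus V_{\ge3}}\in\mf g_{-2}$.
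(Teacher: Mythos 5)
Your part (\ref{it:soint}) is essentially the paper's proof, spelled out in more detail: the paper likewise reduces to Theorem \ref{integrable-so1} \eqref{it:intiffano0} for existence and to the observation that $\st(Sp_2)\otimes\st(SO_{r_2})$ has rank $1$ for uniqueness up to equivalence. Your explicit hyperbolic-pair witness $x,y$ with $\beta_2(x,x)=\beta_2(y,y)=0$, $\beta_2(x,y)=1$, giving $\det a\ne0$ via Lemma \ref{20201119:lem1}, is a useful addition the paper leaves implicit (it only says existence ``follows from'' Theorem \ref{integrable-so1} \eqref{it:intiffano0}), and your organization by depth via Lemma \ref{20201219:lem1}, with Theorem \ref{integrable-so3} \eqref{it:nointso} killing the $d=2D-2$ regime and the Remark after Theorem \ref{integrable-so1} killing $r_2=1$, is exactly how the necessity of the four conditions is meant to be assembled.

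For part (\ref{it:sopolar}), however, your ``self-contained'' route has a gap. The paper's one-line argument is that every entry of Table \ref{tab:zsdm1} is a theta group, hence polar; you propose instead to combine stability (Theorem \ref{thm:Zgj} \eqref{thm:Zgjstable}) with the criterion \eqref{eq:C+gC}--\eqref{eq:fulldimC}. But Theorem \ref{thm:Zgj} \eqref{thm:Zgjstable} explicitly \emph{excludes} from stability the cases where $j=d-1$ is odd and $Z(\mf s)|\mf g_{d-1}\cong\st(\mf{so}_n)\otimes\st(\mf{sp}_m)$ with $n$ odd and $n<m$, or where there are finitely many orbits --- and these occur precisely among the $\mf{so}_N$ rows of Table \ref{tab:zsdm1} (e.g.\ $p_1$ odd, $r_1>1$ odd, $r_1<r_2$, and the rank-$0$ row $\st(\mf{sp}_{r_3})$). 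In those cases the Dadok--Kac stability criterion does not apply and your argument does not establish polarity; you would need either the theta-group observation or the degenerate fallback $\dim V\git G\le1$ (which does handle $\st(\mf{sp}_{r_3})$ but not the unstable tensor products of positive rank). Since you also note that (\ref{it:sopolar}) follows from Theorem \ref{thm:allpolar}, the conclusion is safe, but the alternative argument as written would fail on those rows and should either be restricted to the stable cases or replaced by the theta-group remark.
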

\begin{proof}
\eqref{it:sopolar} follows from Table \ref{tab:zsdm1}, since all these linear groups are theta groups.

In order to prove \eqref{it:soint}, note that in the case in question, $Z(\mf s)|\mf g_{d-1}=\st(Sp_2)\otimes\st(SO_{r_2})$, for which the rank equals $1$. This shows that, up to equivalence, there is at most one integrable quasi-cyclic element. Its existence follows from Theorem \ref{integrable-so1} \eqref{it:intiffano0}.
\end{proof}

\subsection{Integrable quasi-cyclic elements in exceptional Lie algebras for nilpotent elements of even depth}

\

\ 

\begin{table}[H]
\begin{center}

\ 

\resizebox{.99\textwidth}{!}{ 
\begin{tabular}{l|rr|c|l|c|c|c}
$\mf g$&nilpotent $f$&&$d$&$\mf z(\mf s)\mid{\mf g}_{d-\half}$&rank&quasi-cyclic $f+E$, $E\in\mf g_{d-1}$&quasi type\bigstrut[b]\\
\hline
\hline
$\rG_2$&$\rA_1$&\GDynkin{0,\half}&$2$&$\sym^3\st(\mf{sl}_2)$&1&$\exists$ semisimple&semisimple\bigstrut[t]\bigstrut[b]\\
\hline
$\rF_4$&$\rA_1$&\FDynkin{0,\half,0,0}&$2$&$\ext^3_0\st(\mf{sp}_6)$&1&$\exists$ semisimple&semisimple\bigstrut[t]\\
&$\widetilde\rA_1$&\FDynkin{\half,0,0,0}&$2$&$D(\st(\mf{sl}_4))$&1&nilpotent only&semisimple\bigstrut[b]\\
\hline
$\rE_6$&$\rA_1$&\EDynkin{\half,0,0,0,0,0}{1}&$2$&$\ext^3\st(\mf{sl}_6)$&1&$\exists$ semisimple&semisimple\bigstrut[t]\\
&$2\rA_1$&\EDynkin{0,\half,0,0,0,\half}{1}&$2$&$\spin_7\otimes\st(\mf{so}_2)$&2&$\exists$ semisimple&semisimple\\
&$\rA_2+\rA_1$&\EDynkin{\half,\half,0,0,0,\half}{1}&$4$&$D(\st(\mf{gl}_3))$&1&$\exists$ semisimple&semisimple\\
&$\rA_2+2\rA_1$&\EDynkin{0,0,\half,0,\half,0}{1}&$4$&$\st(\mf{so}_2)\otimes\st(\mf{sl}_2)$&1&nilpotent only&mixed\\
&$\rA_4+\rA_1$&\EDynkin{\half,\half,\half,0,\half,\half}{1}&$8$&$\st(\mf{so}_2)$&1&$\exists$ semisimple&semisimple\bigstrut[b]\\
\hline
$\rE_7$&$\rA_1$&\EDynkin{0,\half,0,0,0,0,0}{1}&$2$&$\spin_{12}$&1&$\exists$ semisimple&semisimple\bigstrut[t]\\
&$2\rA_1$&\EDynkin{0,0,0,0,0,\half,0}{1}&$2$&$\spin_9\otimes\st(\mf{sl}_2)$&2&nilpotent only&semisimple\\
&$\rA_2+\rA_1$&\EDynkin{0,\half,0,0,0,\half,0}{1}&$4$&$D(\st(\mf{gl}_2))$&1&$\exists$ semisimple&semisimple\\
&$\rA_2+2\rA_1$&\EDynkin{0,0,0,\half,0,0,0}{1}&$4$&$\st(\mf{so}_4)\otimes\st(\mf{sl}_2)$&1&nilpotent only&mixed\\
&$\rA_3+\rA_2$&\EDynkin{0,0,0,\half,0,\half,0}{1}&$6$&$D(\st(\mf{gl}_2))$&1&nilpotent only&semisimple\\
&$\rA_4+\rA_1$&\EDynkin{0,\half,0,\half,0,\half,0}{1}&$8$&$\st(\mf{so}_2)$&1&$\exists$ semisimple&semisimple\\
\hline
$\rE_8$&$\rA_1$&\EDynkin{0,0,0,0,0,0,0,\half}{1}&$2$&$\mathbf{56}$&1&$\exists$ semisimple&semisimple\bigstrut[t]\\
&$2\rA_1$&\EDynkin{0,\half,0,0,0,0,0,0}{1}&$2$&$\spin_{13}$&2&nilpotent only&semisimple\\
&$\rA_2+\rA_1$&\EDynkin{0,\half,0,0,0,0,0,\half}{1}&$4$&$D(\st(\mf{sl}_6))$&1&$\exists$ semisimple&semisimple\\
&$\rA_2+2\rA_1$&\EDynkin{0,0,0,0,0,\half,0,0}{1}&$4$&$\spin_7\otimes\st(\mf{sl}_2)$&1&nilpotent only&mixed\\
&$\rA_2+3\rA_1$&\EDynkin{0,0,\half,0,0,0,0,0}{1}&$4$&$\st(\rG_2)\otimes\st(\mf{sl}_2)$&1&nilpotent only&mixed\\
&$\rA_3+\rA_2$&\EDynkin{0,\half,0,0,0,\half,0,0}{1}&$6$&$\st(\mf{sp}_4)\otimes\st(\mf{so}_2)$&1&nilpotent only&semisimple\\
&$\rA_3+\rA_2+\rA_1$&\EDynkin{0,0,0,\half,0,0,0,0}{1}&$6$&$\st(\mf{sl}_2)\otimes\st(\mf{so}_3)$&1&nilpotent only&mixed\\
&$\rA_4+\rA_1$&\EDynkin{0,\half,0,0,0,\half,0,\half}{1}&$8$&$\st(\mf{so}_2)$&1&$\exists$ semisimple&semisimple\\
&$\rA_4+2\rA_1$&\EDynkin{0,0,0,\half,0,0,0,\half}{1}&$8$&$D(\st(\mf{gl}_2))$&1&$\exists$ integrable&mixed\\
&$\rA_4+\rA_2+\rA_1$&\EDynkin{0,0,\half,0,0,\half,0,0}{1}&$8$&$\sym^3\st(\mf{sl}_2)$&1&nilpotent only&mixed\\
&$\rD_7(a_2)$&\EDynkin{0,\half,0,\half,0,\half,0,\half}{1}&$14$&$\st(\mf{so}_2)$&1&$\exists$ semisimple&semisimple
\end{tabular}
}

\ 

\ 

\end{center}
\caption{Quasi-cyclic elements, attached to odd nilpotent elements $f$ of even depth in exceptional $\mf g$ with $Z(\mf s)|\mf g_{d-1}$ having non-trivial invariants}\label{tab:withinv}
\end{table}

By the quasi type in the last column we mean whether $f+E$ for a generic $E\in\mf g_{d-1}$ is semisimple or mixed
(meaning neither semisimple nor nilpotent).

\begin{table}[H]
\setlength\tabcolsep{2pt}
\begin{tabular}{p{.44\textwidth}p{.54\textwidth}}
\vspace{0pt}
\resizebox{.42\textwidth}{!}
{
\begin{tabular}{l|rr|c|l}
$\mf g$&nilpotent $f$&&$d$&$\mf z(\mf s)\mid{\mf g}_{d-\half}$\bigstrut[b]\\
\hline
$\rF_4$&$\rA_2+\widetilde\rA_1$&\FDynkin{0,0,\half,0}&$4$&$\st(\mf{sl}_2)$\bigstrut[t]\\
&$\rB_2$&\FDynkin{\half,2,0,0}&$6$&$0$\\
&$\rC_3(a_1)$&\FDynkin{0,\half,\half,0}&$6$&$\st(\mf{sl}_2)$\\
&$\rC_3$&\FDynkin{2,\half,\half,0}&$10$&$\st(\mf{sl}_2)$\bigstrut[b]\\
\hline
$\rE_6$&$\rA_3$&\EDynkin{2,\half,0,0,0,\half}{1}&$6$&$0$\bigstrut[t]\\
&$\rA_3+\rA_1$&\EDynkin{\half,0,\half,0,\half,0}{1}&$6$&$\st(\mf{sl}_2)$\\
&$\rA_5$&\EDynkin{\half,2,\half,0,\half,2}{1}&$10$&$\st(\mf{sl}_2)$\\
&$\rD_5(a_1)$&\EDynkin{2,\half,\half,0,\half,\half}{1}&$10$&$0$\bigstrut[b]\\
\hline
$\rE_7$&$\rA_3$&\EDynkin{0,2,0,0,0,\half,0}{1}&$6$&$0$\bigstrut[t]\\
&$\rA_3+\rA_1$&\EDynkin{0,\half,0,\half,0,0,0}{1}&$6$&$\st(\mf{sl}_2)$\\
&$\rA_3+2\rA_1$&\EDynkin{0,\half,0,0,\half,0,\half}{1}&$6$&$\st(\mf{sl}_2)\oplus\st(\mf{sl}_2)$\\
&$\rD_4(a_1)+\rA_1$&\EDynkin{\half,0,\half,0,0,0,\half}{1}&$6$&$\st(\mf{sl}_2)\oplus\st(\mf{sl}_2)$\\
&$\rD_4+\rA_1$&\EDynkin{\half,2,\half,0,0,0,\half}{1}&$10$&$0$\\
&$\rD_5(a_1)$&\EDynkin{0,2,0,\half,0,\half,0}{1}&$10$&$0$\\
&$\rA_5$&\EDynkin{0,\half,0,\half,0,2,0}{1}&$10$&$\st(\mf{sl}_2)$\\
&$\rA_5+\rA_1$&\EDynkin{0,\half,0,\half,0,\half,2}{1}&$10$&$\st(\mf{sl}_2)$\\
&$\rD_6(a_2)$&\EDynkin{\half,0,\half,0,\half,0,2}{1}&$10$&$\st(\mf{sl}_2)$\\
&$\rD_5+\rA_1$&\EDynkin{\half,2,\half,0,\half,\half,0}{1}&$14$&$0$\\
&$\rD_6(a_1)$&\EDynkin{\half,2,\half,0,\half,0,2}{1}&$14$&$0$\\
&$\rD_6$&\EDynkin{\half,2,\half,0,\half,2,2}{1}&$14$&$0$
\end{tabular}
}&\vspace{0pt}
\resizebox{.5\textwidth}{!}
{
\begin{tabular}{l|rr|c|l}
$\mf g$&nilpotent $f$&&$d$&$\mf z(\mf s)\mid{\mf g}_{d-\half}$\bigstrut[b]\\
\hline
$\rE_8$&$\rA_3$&\EDynkin{0,\half,0,0,0,0,0,2}{1}&$6$&$0$\bigstrut[t]\\
&$\rA_3+\rA_1$&\EDynkin{0,0,0,0,0,\half,0,\half}{1}&$6$&$\st(\mf{sl}_2)$\\
&$\rA_3+2\rA_1$&\EDynkin{0,0,\half,0,0,0,0,\half}{1}&$6$&$\st(\mf{sl}_2)\oplus\st(\mf{sp}_4)$\\
&$\rD_4(a_1)+\rA_1$&\EDynkin{\half,0,0,0,0,0,\half,0}{1}&$6$&$\st(\mf{sl}_2)\oplus\st(\mf{sl}_2)\oplus\st(\mf{sl}_2)$\\
&$\rD_4+\rA_1$&\EDynkin{\half,0,0,0,0,0,\half,2}{1}&$10$&$0$\\
&$\rD_5(a_1)$&\EDynkin{0,\half,0,0,0,0,\half,2}{1}&$10$&$0$\\
&$\rA_5$&\EDynkin{0,2,0,0,0,\half,0,\half}{1}&$10$&$\st(\mf{sl}_2)$\\
&$\rD_5(a_1)+\rA_1$&\EDynkin{0,0,0,\half,0,0,0,2}{1}&$10$&$0$\\
&$\rA_5+\rA_1$&\EDynkin{0,\half,0,\half,0,0,0,\half}{1}&$10$&$\st(\mf{sl}_2)\oplus\st(\mf{sl}_2)$\\
&$\rD_5(a_1)+\rA_2$&\EDynkin{0,0,\half,0,0,\half,0,\half}{1}&$10$&$\st(\mf{sl}_2)$\\
&$\rD_6(a_2)$&\EDynkin{\half,0,\half,0,0,0,\half,0}{1}&$10$&$\st(\mf{sl}_2)\oplus\st(\mf{sl}_2)$\\
&$\rE_6(a_3)+\rA_1$&\EDynkin{0,\half,0,0,\half,0,\half,0}{1}&$10$&$\st(\mf{sl}_2)$\\
&$\rE_7(a_5)$&\EDynkin{0,0,0,\half,0,\half,0,0}{1}&$10$&$\st(\mf{sl}_2)$\\
&$\rD_5+\rA_1$&\EDynkin{0,\half,0,0,\half,0,\half,2}{1}&$14$&$0$\\
&$\rD_6(a_1)$&\EDynkin{\half,0,\half,0,0,0,\half,2}{1}&$14$&$0$\\
&$\rA_6+\rA_1$&\EDynkin{0,\half,0,\half,0,\half,0,0}{1}&$14$&$\st(\mf{sl}_2)$\\
&$\rE_7(a_4)$&\EDynkin{0,0,0,\half,0,\half,0,2}{1}&$14$&$0$\\
&$\rD_6$&\EDynkin{\half,2,\half,0,0,0,\half,2}{1}&$18$&$0$\\
&$\rE_6(a_1)+\rA_1$&\EDynkin{0,\half,0,\half,0,\half,0,2}{1}&$18$&$0$\\
&$\rE_7(a_3)$&\EDynkin{0,2,0,\half,0,\half,0,2}{1}&$18$&$0$\\
&$\rE_6+\rA_1$&\EDynkin{0,\half,0,\half,0,\half,2,2}{1}&$22$&$0$\\
&$\rE_7(a_2)$&\EDynkin{\half,0,\half,0,\half,0,2,2}{1}&$22$&$0$\\
&$\rD_7$&\EDynkin{\half,2,\half,0,\half,\half,0,\half}{1}&$22$&$\st(\mf{sl}_2)$\\
&$\rE_7(a_1)$&\EDynkin{\half,2,\half,0,\half,\half,0,\half}{1}&$26$&$0$\\
&$\rE_7$&\EDynkin{\half,2,\half,0,\half,2,2,2}{1}&$34$&$0$
\end{tabular}
}
\end{tabular}
\caption{Odd nilpotent elements $f$ of even depth in exceptional $\mf g$, for which $Z(\mf s)|\mf g_{d-1}$ has only trivial invariants}\label{tab:noinv}
\end{table}

\subsubsection{Nilpotent elements with label $\rA_1$}\label{subsubs:A1}
A representative for the orbit with label $\rA_1$ is given by a negative root vector $e_{-\alpha}$ --- arbitrary for type $\rE$ and a long one for $\rF_4$ and $\rG_2$.
Depth is $2$, and a $1$-dimensional Cartan subspace for $\mf z(\mf s)|\mf g_1$ is spanned by the vector $E=v^*+v_*$,
where $v^*$, resp. $v_*$ is the highest, resp. lowest weight vector.
Moreover $E$ satisfies the coisotropy condition, and the quasi-cyclic element $f+E$ is semisimple.

\subsubsection{Nilpotent element with label $\widetilde\rA_1$ in $\rF_4$}
A representative $f$ for $\widetilde\rA_1$ is given by a short root vector.
Depth is $2$, with $\mf g_1$ of dimension $8$. The representation $\mf z(\mf s)|\mf g_1$ is the direct sum $V=V_1\oplus V_2$ of two standard representations of $\mf{sl}_4$,
and a Cartan subspace is spanned by $E=v^1+v_2$, where $v^1$ is a highest weight vector for $V_1$ and $v_2$ a lowest weight vector for $V_2$.
This $E$ does not satisfy the coisotropy condition, which implies that all quasi-cyclic elements are nilpotent.

\subsubsection{Nilpotent element with label $2\rA_1$ in $\rE_6$}\label{subsubs:2A1inE6}
A representative $f$ for $2\rA_1$ is given by the sum of any two commuting root vectors.
Depth is $2$, and $\mf z(\mf s)|\mf g_1$ is as in Example \ref{ex:spin7stso2},
so we can choose a basis $\{E_1,E_2\}$ of a Cartan subspace as there, with $E_1=v^++v_-$ and $E_2=v^-+v_+$.
Coisotropy condition for $E=x_1E_1+x_2E_2$ turns out to be $x_1^2=x_2^2$.
We then check that both $f+x(E_1+E_2)$ and $f+x(E_1-E_2)$ are semisimple for $x\ne0$.
These are all integrable quasi-cyclic elements $f+E$ for $E$ from the Cartan subspace.

In fact these two solutions $E_1+E_2$ and $E_1-E_2$ are equivalent under the action of $Z(\mf s)$. This can be seen as follows. Take
\[
f=f_{\foresix{.1}{.5}111111}+f_{\foresix{.1}{.5}112321},
\]
then
\[
E_1=e_{\foresix{.1}{.5}100000}+e_{\foresix{.1}{.5}011111},\qquad E_2=e_{\foresix{.1}{.5}000111}+e_{\foresix{.1}{.5}112210}.
\]

Consider the element $H$ of the Cartan subalgebra determined by the following values of simple roots on it:
\[
\begin{array}{ccccc}
0&2&0&\!\!\!-2&0\\
&&0
\end{array}
\]
Its eigenvalues on $\mf g$ are $-2,0,2$, so it defines a $\mb Z$-grading $\mf g=\mf g_{-2}\oplus\mf g_0\oplus\mf g_2$ of $\mf g$. Moreover $f,E_1\in\mf g_0$, $e_{\foresix{.1}{.5}000111}\in\mf g_2$ and $e_{\foresix{.1}{.5}112210}\in\mf g_{-2}$, i.~e.
\[
[H,f]=[H,E_1]=0,\quad[H,e_{\foresix{.1}{.5}000111}]=2e_{\foresix{.1}{.5}000111},\quad[H,e_{\foresix{.1}{.5}112210}]=-2e_{\foresix{.1}{.5}112210}.
\]
Hence in the corresponding $\mb Z/2\mb Z$-grading $\mf g=\mf g^0\oplus\mf g^1$, where $\mf g^0=\mf g_0$ and $\mf g^1=\mf g_2\oplus\mf g_{-2}$, one has $f,E_1\in\mf g^0$ and $E_2\in\mf g^1$. Let $\alpha_H$ be the inner automorphism corresponding to this $\mb Z/2\mb Z$-grading, i.~e. $\alpha_H(x)=x$ for $x\in\mf g^0$ and $\alpha_H(x)=-x$ for $x\in\mf g^1$. Then $\alpha_H(f)=f$, $\alpha_H(E_1)=E_1$ and $\alpha_H(E_2)=-E_2$. Hence $\alpha_H(E_1+E_2)=E_1-E_2$ and $\alpha_H(E_1-E_2)=E_1+E_2$, so $\alpha_H$ interchanges the above solutions.

\subsubsection{Nilpotent element with label $2\rA_1$ in $\rE_7$}
Also here, a representative $f$ for $2\rA_1$ is given by the sum of any two commuting root vectors.
Depth is $2$, and $\mf z(\mf s)|\mf g_1$ is as in Example \ref{ex:spin9stsl2},
so we can choose a basis $\{v_1,v_2\}$ of a Cartan subspace as there, with $v_1=v_\Lambda+v_{-\Lambda}$ and $v_2=e_{-\eps_1}e_{-\eps_2-\eps_3}v_\Lambda+e_{\eps_1}e_{\eps_2+\eps_3}v_{-\Lambda}$.
Coisotropy condition for $E=x_1v_1+x_2v_2$ turns out to be $x_1^2+x_2^2=0$ and $x_1^2-x_2^2=0$,
which implies that all quasi-cyclic elements are nilpotent.

\subsubsection{Nilpotent element with label $2\rA_1$ in $\rE_8$}\label{subs:spin13}
As in two previous cases, a representative $f$ for $2\rA_1$ is given by the sum of any two commuting root vectors.
Depth is $2$, and $\mf z(\mf s)|\mf g_1$ is $\spin_{13}$, so, as in \cite[Proposition 10]{GV78},
we can choose a basis $\{v_1,v_2\}$ of a Cartan subspace as there, with $v_1=v_\Lambda+v_{-\Lambda}$ and $v_2=e_{-\eps_1}e_{-\eps_2}e_{-\eps_3}v_\Lambda+e_{\eps_1}e_{\eps_2}e_{\eps_3}v_{-\Lambda}$, where $\Lambda$ is the highest weight.
Coisotropy condition for $E=x_1v_1+x_2v_2$ turns out to be, as in the previous case, $x_1^2+x_2^2=0$ and $x_1^2-x_2^2=0$,
which implies that all quasi-cyclic elements are nilpotent.

\subsubsection{Nilpotent elements with label $\rA_2+\rA_1$ in $\rE_6$, $\rE_7$ and $\rE_8$}
Here depth is $4$, the algebra $\mf z(\mf s)$ is $\mf{gl}_3$ for $\rE_6$, $\mf{gl}_2$ for $\rE_7$ and $\mf{sl}_6$ for $\rE_8$.
The representation $\mf z(\mf s)|\mf g_3$ is a direct sum $V_1\oplus V_2$ of two copies of a standard representation
of $\mf{sl}_3$ for $\rE_6$, of $\mf{sl}_2$ for $\rE_7$ and of $\mf{sl}_6$ for $\rE_8$.
A Cartan subspace of $\mf g_3$ is spanned by $E=v^1+v_2$, where $v^1$ is a highest weight vector for $V_1$ and $v_2$ is a lowest weight vector for $V_2$.
Each of these $E$ satisfies the coisotropy condition, and the quasi-cyclic element $f+E$ is semisimple.

\subsubsection{Nilpotent element with label $\rA_2+2\rA_1$ in $\rE_6$, $\rE_7$ and $\rE_8$}
Depth is $4$. Here $\mf z(\mf s)|\mf g_3$ is $V\otimes\st(\mf{sl}_2)$,
where $V$ is $\st(\mf{so}_2)$ for $\rE_6$, $\st(\mf{so}_4)$ for $\rE_7$ and $\spin_7$ for $\rE_8$.
In all three cases a Cartan subspace is spanned by $E=v^*\otimes v_*+v_*\otimes v^*$, where $v^*$ denote highest weight vectors and $v_*$ the lowest weight vectors,
both for $V$ and for $\st(\mf{sl}_2)$.
This $E$ does not satisfy the coisotropy condition, so that all quasi-cyclic elements are nilpotent.

\subsubsection{Nilpotent elements with label $\rA_4+\rA_1$ in $\rE_6$, $\rE_7$ and $\rE_8$}\label{subsubs:A4+A1}
Depth is $4$. The representation $\mf z(\mf s)|\mf g_3$ is $2$-dimensional, it is $\st(\mf{so}_2)$
with a $1$-dimensional Cartan subspace spanned by $E=v^*+v_*$, where $v^*$, resp. $v_*$ is the highest, resp. lowest weight vector.
This $E$ satisfies the coisotropy condition, and $f+E$ is semisimple.

\subsubsection{Nilpotent elements with label $\rA_3+\rA_2$ in $\rE_7$ and $\rE_8$}
Depth is $6$. The algebra $\mf z(\mf s)$ is $\mf{sl}_2$ plus a $1$-torus for $\rE_7$ and $\mf{sp}_4$ plus a $1$-torus for $\rE_8$.
The representation $\mf z(\mf s)|\mf g_5$ is direct sum $V_+\oplus V_-$ of two copies of a standard representation,
with the torus acting as $\pm1$ on $V_\pm$. It has a $1$-dimensional Cartan subspace spanned by $E=v^++v_-$,
the sum of the highest weight vector of $V_+$ and the lowest weight vector of $V_-$.
This $E$ does not satisfy the coisotropy condition, so that all quasi-cyclic elements are nilpotent.

\subsubsection{Nilpotent element with label $\rA_2+3\rA_1$ in $\rE_8$}
Depth is $4$, and $\mf z(\mf s)|\mf g_3$ is $\st(\rG_2)\otimes\st(\mf{sl}_2)$,
with $1$-dimensional Cartan subspace spanned by $E=v^*+v_*$, the sum of the highest and the lowest weight vectors.
This $E$ does not satisfy the coisotropy condition, which means that all quasi-cyclic elements are nilpotent.

\subsubsection{Nilpotent element with label $\rA_3+\rA_2+\rA_1$ in $\rE_8$}
Depth is $6$, and $\mf z(\mf s)|\mf g_5$ is $\st(\mf{sl}_2)\otimes\st(\mf{so}_3)$,
with $1$-dimensional Cartan subspace spanned by $E=v^*+v_*$, the sum of the highest and the lowest weight vectors.
This $E$ does not satisfy the coisotropy condition, which means that all quasi-cyclic elements are nilpotent.

\subsubsection{Nilpotent element with label $\rA_4+2\rA_1$ in $\rE_8$}\label{subsubs:A4+2A1}
Depth is $8$, the algebra $\mf z(\mf s)$ is $\mf{sl}_2$ plus a $1$-torus,
and the representation $\mf z(\mf s)|\mf g_7$ is the direct sum $V_+\oplus V_-$ of two copies of $\st(\mf{sl}_2)$, with the torus acting by $\pm1$ on $V_\pm$.
It has a $1$-dimensional Cartan subspace spanned by $E=v^++v_-$, the sum of the highest and the lowest weight vectors of $V_+$, resp. $V_-$.
This $E$ satisfies the coisotropy condition, and the quasi-cyclic element $f+E$ has Jordan decomposition $(\fs+E)+\fn$ where $\fs,\fn\in\mf g_{-2}$ are nilpotent elements with labels $\rA_4+\rA_1$ and $\rA_1$ respectively. This gives an integrable triple for this case.

\subsubsection{Nilpotent element with label $\rA_4+\rA_2+\rA_1$ in $\rE_8$}
Depth is $8$, the algebra $\mf z(\mf s)$ is $\mf{sl}_2$,
and $\mf z(\mf s)|\mf g_7$ is its $4$-dimensional irreducible representation.
It has a $1$-dimensional Cartan subspace spanned by $E=v^*+v_*$, the sum of the highest and the lowest weight vectors.
This $E$ does not satisfy the coisotropy condition, so that all quasi-cyclic elements are nilpotent.

\subsubsection{Nilpotent element with label $\rD_7(a_2)$ in $\rE_8$}\label{subsubs:D7a2}
Depth is $14$, and $\mf z(\mf s)|\mf g_{13}$ is the standard representation of $\mf{so}_2$.
It has a $1$-dimensional Cartan subspace spanned by $E=v^*+v_*$, the sum of the highest and the lowest weight vectors.
This $E$ satisfies the coisotropy condition, and the quasi-cyclic element $f+E$ is semisimple.

\begin{remark}
It was proved in \cite{DSKV13} that for a long root vector $f\in\mf g\ne\mf{sp}_N$ there exists a unique, up to equivalence, integrable quasi-cyclic element. This covers $f$ of type $\rA_1$ in all exceptional $\mf g$ (see Table \ref{tab:withinv}).
\end{remark}

\addtocontents{toc}{\SkipTocEntry}

\subsection*{Conclusion} Due to Theorem \ref{thm:quasintiffss} \eqref{thm:quasintiffsseven},
Subsections \ref{subsubs:A1}, \ref{subsubs:2A1inE6}, \ref{subsubs:A4+A1}, \ref{subsubs:A4+2A1}, and \ref{subsubs:D7a2}
describe all integrable quasi-cyclic elements $f+E$ for nilpotent elements $f$ of even depth for all examples from Table \ref{tab:withinv},
up to conjugation by $Z(\mf s)$.
Obviously even nilpotent elements and the nilpotent elements from Table \ref{tab:noinv} have no integrable quasi-cyclic elements.
As a result, we see that for each nilpotent element $f$ of even depth in an exceptional simple Lie algebra
either there are no integrable quasi-cyclic elements $f+E$, or, up to equivalence, there is exactly one.

\section{Integrable quasi-cyclic elements associated to nilpotent elements of odd depth}

Recall that if $f\in\mf g$ is a nilpotent element of odd depth $d$,
then all elements of the $Z(\mf s)$-module $\mf g_d$ are nilpotent \cite[Theorem 1]{EKV13}.
Actually the linear group $Z(\mf s)|\mf g_d$ is the full symplectic group, \cite[Remark 2]{EJK20}, hence it is polar.
Thus, by Lemma \ref{lem:int=>ss}, if $f$ has odd depth, there are no integrable cyclic elements
$f+E$, with $E\in\mf g_d$.

By \cite{EKV13}, if $\mf g$ is a classical Lie algebra, nilpotent elements $f$ of odd depth exist only in $\mf{so}_n$,
and for $\mf g$ exceptional, such $f$ are listed in \cite[Table 1]{EKV13}.
These two cases are treated in Subsections \ref{subs:sodd} and \ref{subs:excodd} respectively.

\subsection{Integrable quasi-cyclic elements in \texorpdfstring{$\mf{so}_N$}{soNo} for nilpotent elements of odd depth}\label{subs:sodd}

Let $\mf g=\mf{so}_N$ and let $f\in\mf g$ be a nilpotent element associated to the partition $\partition p$
as in \eqref{eq:partition}. Assume that it has odd depth $d=2D-1$, $D=p_1-1$.
By Lemma \ref{20201219:lem1}, this happens when $p_1$ is odd, $r_1=1$ and $p_2=p_1-1$.

We realize the Lie algebra $\mf g$ ad in \eqref{20201212:eq1}, with $\eta=1$, and we let
$\{F_{\alpha,\beta},\alpha,\beta\in I\}$ be the set of generators of $\mf g$ defined in \eqref{20200819:eq4}. The first result describes $\mf z(\mf g_{\ge2})$, the centralizer of $\mf g_{\ge2}$ in $\mf g$.
\begin{proposition}\label{centr-so2}
We have that $\mf z(\mf g_{\ge2})=\mf g_{d-1}\oplus\mf g_d$.
\end{proposition}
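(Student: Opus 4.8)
The plan is to mimic the coefficient-elimination arguments of Propositions \ref{lemma2-sp} and \ref{20201221:prop1}, adapted to the present grading where, by Lemma \ref{20201219:lem1} (the odd-depth case), $d=2D-1$ and $\mf g_{2D}=0$. Two containments are needed. The inclusion $\mf g_{d-1}\oplus\mf g_d\subseteq\mf z(\mf g_{\ge2})$ is immediate from the grading, since $[\mf g_{d-1}\oplus\mf g_d,\mf g_{\ge2}]\subseteq\mf g_{\ge d+1}=\mf g_{\ge2D}=0$. For the reverse inclusion, I would first note that $\mf z(\mf g_{\ge2})\subseteq\mf g_{\ge0}$, because $\ad e$ (with $e\in\mf g_2$) is injective on each $\mf g_j$ with $j<0$. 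It then remains to show that a homogeneous $E\in\mf g_k\cap\mf z(\mf g_{\ge2})$ with $0\le k\le d-2$ must vanish.

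So fix such an $E=\sum_{x_\alpha-x_\beta=k}c_{\alpha\beta}F_{\alpha\beta}$, written as in \eqref{eq:E-sp}, and first assume $1\le k\le d-2$. Since $r_1=1$, there is a unique box $1\in I$ with $x_1=D$, and its reflection $1'$ is the unique box with $x_{1'}=-D$; moreover $F_{1,1'}=0$ by \eqref{20200819:eq4b} and \eqref{20201001:eq4}. For any $\tilde\beta$ with $x_{\tilde\beta}\le D-2$ we have $F_{1\tilde\beta}\in\mf g_{\ge2}$, and expanding $0=[F_{1\tilde\beta},E]$ by \eqref{comm:BC}, exactly as in the proof of Proposition \ref{lemma2-sp}, the ``raising'' part of the bracket is empty (no box has $x>D$), so the ``lowering'' part must vanish termwise. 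Reading off the coefficients of the $F_{1\gamma}$ gives $c_{\alpha\beta}=0$ whenever $x_\alpha\le D-2$, with the sole exception of the terms with $\beta=1'$, whose generator $F_{1,1'}$ is zero and hence yields no constraint; by the skew-symmetry \eqref{eq:E-sp} this also forces $c_{\alpha\beta}=0$ whenever $x_\beta\ge-D+2$, except for terms with $\alpha=1$. Thus the only surviving coefficients are the ``generic'' ones ($x_\alpha\ge D-1$ and $x_\beta\le-D+1$) and those with $\alpha=1$ or $\beta=1'$. Because $k\le d-2=2D-3$, no generic term can occur (its degree would be $\ge2D-2$), and by \eqref{20200819:eq4b} the families $\alpha=1$ and $\beta=1'$ coincide, leaving
\[
E=\sum_{x_\beta=D-k}c_{1\beta}F_{1\beta}\,.
\]

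The main work is then to eliminate these remaining terms by a second bracket, following the last step of the proof of Proposition \ref{20201221:prop1}. For $\tilde\alpha,\tilde\beta\in I$ with $x_{\tilde\alpha}-x_{\tilde\beta}\ge2$ one has $F_{\tilde\alpha\tilde\beta}\in\mf g_{\ge2}$, and computing $0=[F_{\tilde\alpha\tilde\beta},E]$ via \eqref{comm:BC} collapses, on an $E$ of the above shape, to a sum of terms proportional to $c_{1\tilde\alpha}F_{1\tilde\beta}$ with $x_{\tilde\alpha}=D-k$. Letting $\tilde\beta$ range over all admissible boxes (with $-D\le x_{\tilde\beta}\le x_{\tilde\alpha}-2$) then forces $c_{1\beta}=0$ for every $\beta$ in degree $k\le d-2$, so $E=0$. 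The degree-zero case is handled separately: the bracket relations force $E\in\mf g_0\cap\mf z(\mf g_{\ge2})$ to be a scalar multiple of $\id_V$, exactly as in Lemma \ref{lemma1}, but $\id_V\notin\mf g=\mf{so}_N$, whence $E=0$. Combining this with the first containment yields $\mf z(\mf g_{\ge2})=\mf g_{d-1}\oplus\mf g_d$. I expect the second bracket computation to be the delicate point: one must track all four terms of \eqref{comm:BC} and their behaviour at the extremal boxes $1,1'$ (where $F_{1,1'}=0$), and establish the requisite linear independence of the generators $F_{1\tilde\beta}$ after quotienting by the relation \eqref{20200819:eq4b}. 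Everything else is routine degree bookkeeping.
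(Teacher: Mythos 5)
Your argument is correct and follows essentially the same route as the paper's proof: the same degree considerations for the easy containment and for $\mf z(\mf g_{\ge2})\subseteq\mf g_{\ge0}$, the same first bracket with $F_{1\tilde\beta}$ reducing a homogeneous $E\in\mf g_k$, $1\le k\le d-2$, to $\sum_{x_\beta=D-k}c_{1\beta}F_{1\beta}$, the same second bracket with general $F_{\tilde\alpha\tilde\beta}\in\mf g_{\ge2}$ to kill the remaining coefficients, and the same reduction of the $k=0$ case to the $\mf{gl}_N$ computation of Lemma \ref{lemma1} together with $\id_V\notin\mf{so}_N$. The only cosmetic difference is that you dispose of the ``generic'' terms ($x_\alpha\ge D-1$, $x_\beta\le -D+1$) by the degree bound $k\le d-2$ up front, whereas the paper keeps them and observes they land in $\mf g_{d-1}$; both are fine.
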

\begin{proof}
The proof is similar to the proof of Proposition \ref{20201221:prop1}. For completeness, we replicate the argument.
Clearly, $\mf z(\mf g_{\ge2})\subset\mf g_{\ge0}$. By degree considerations, $\mf z(\mf g_{\ge2})\supset\mf g_{d-1}\oplus\mf g_d$.
On the other hand, let $1\in I$ be the label of the rightmost box of the pyramid associated to $\partition p$ (note that $x_{1}=D$) and $p_1\in I$ be the label of the leftmost box (note that $x_{p_1}=-D$).
Let also $\tilde\beta\in I$ be such that $x_{\tilde\beta}\le D-2$ (the box $\tilde\beta$ is completely at the left of the box $1$).
Then $F_{1\tilde\beta}\in\mf g_{\ge2}$. Hence, letting $E$ as in equation \eqref{eq:E-sp},
using the commutation relations \eqref{comm:BC}, the second equation in \eqref{eq:E-sp} and \eqref{20200819:eq4b}, we have that
\begin{equation}\label{eq1-so}
0=[F_{1\tilde\beta},E]=\sum_{x_{\alpha}-x_{\beta}=k}c_{\alpha\beta}[F_{1\tilde\beta},F_{\alpha\beta}]
=\sum_{x_{\tilde\beta}-x_{\beta}=k}2c_{\tilde\beta\beta}F_{1\beta}
-\sum_{x_{\alpha}-D=k}2c_{\alpha1}F_{\alpha\tilde\beta}\,.
\end{equation}
If $k\ge1$, then the condition $x_{\alpha}-D=k\ge1$ implies that $x_{\alpha}\ge1+D$, which is empty.
Hence, $c_{\alpha\beta}=0$ if $x_{\alpha}\le D-2$ and $\beta\neq p_1$ (since
$F_{1,p_1}=0$ by \eqref{20200819:eq4b} and \eqref{20201001:eq4}).
Using the second equation in \eqref{eq:E-sp}, we have also that $c_{\alpha\beta}=0$ if $\alpha\neq 1$ and $x_{\beta}\ge-D+2$.
Hence, we can write
$$
E=\sum_{\substack{x_\alpha=D-1\\x_\beta=-D+1}}c_{\alpha,\beta}F_{\alpha\beta}+\sum_{x_\beta=D-k}c_{1\beta}F_{1\beta}\,.
$$
Note that the first sum lies in $\mf{g}_{d-1}$. Let then assume that
$E=\sum_{x_\beta=D-k}c_{1\beta}F_{1\beta}$ and that $\tilde\alpha,\tilde\beta\in I$ are such that $x_{\tilde\alpha}-x_{\tilde\beta}\ge2$. Then
\begin{equation}\label{20201116:eq1}
0=[E,F_{\tilde\alpha\tilde\beta}]=2\delta_{x_{\tilde\alpha},D-k}c_{1,\tilde\alpha}F_{1,\tilde\beta}\,.
\end{equation}
If $x_{\tilde\alpha}=D-k$, then $-D\le x_{\tilde\beta}\le D-2-k$.
Hence, equation \eqref{20201116:eq1}
implies that $c_{1\beta}=0$ for $x_\beta>-D+2$, thus showing that $E\in\mf{g}_d\oplus\mf{g}_{d-1}$.

If $k=0$, a similar argument to the one used in the proof of Lemma \ref{lemma1} shows that $E=0$.
\end{proof}

By the discussion at the beginning of this section and Proposition \ref{centr-so2}, if $(f_1,f_2,E)$ is an integrable triple, then $E\in\mf g_{d-1}$.

Recall that $\dim V[D]=r_1=1$, and let $v_+\in V[D]$ be such that $\langle f^{D}v_+|v_+\rangle=1$.
We consider the basis $\{f^kv_+\}_{k=0}^{D}$ of $V_1$ (cf. Section \ref{sec:5.5.2}).
Let also $v_-=f^Dv_+$.

\begin{lemma}\label{20201116:lem2}
There is a bijective correspondence between the triples
$(\lambda,a,b)$, where
$\lambda\in\mb F$, $a\in V_{+,3}$ and $b\in\End(V_{+,2})$ selfadjoint with respect to the bilinear form
$\beta_2$ on $V_{+,2}$ defined by \eqref{20200819:bilinear}, and the elements $E\in\mf g_{d-1}$, given by
\begin{equation}\label{eq:E-so-dispari}
(\lambda,a,b)\mapsto
E=(a+\lambda fv_+)\phi(v_+)-v_+(\phi(a)+\lambda\phi(fv_+))+b(f\transpose)^{D-1}
\,,
\end{equation}
where $\phi(v)\in V^*$ is given by $\phi(v)(u)=\langle v|u\rangle$ (cf. Section \ref{sec:4.5.1}).
\end{lemma}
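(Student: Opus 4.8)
The plan is to read off the structure of $\mf g_{d-1}=\mf g_{2D-2}$ block by block in the grading of $V$ and match each block with the corresponding piece of the datum $(\lambda,a,b)$. An element $E\in\mf g_{2D-2}$ raises the $\ad h$-degree by $2D-2$, and the degrees occurring in $V$ lie in $[-D,D]$; hence the only possibly nonzero blocks of $E$ are the three maps $V[-D]\to V[D-2]$, $V[-D+1]\to V[D-1]$ and $V[-D+2]\to V[D]$. With respect to $\langle\cdot\mid\cdot\rangle$ the adjoint interchanges the first and third blocks and preserves the middle one, so the defining condition $E=-E^\dagger$ of $\mf g$ says precisely that the third block equals minus the adjoint of the first, while the middle block is itself skew-adjoint. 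Thus $E$ is determined by a free map $A\colon V[-D]\to V[D-2]$, contributing $A-A^\dagger$, together with a skew-adjoint map $C\colon V[-D+1]\to V[D-1]$; since these summands occupy complementary blocks, writing $\mf g_{2D-2}=\{A-A^\dagger\}\oplus\{C\mid C=-C^\dagger\}$ is a direct sum.

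For the outer blocks I would use that $V[-D]=\mb F v_-$ and $V[D-2]=\mb F fv_+\oplus V_{+,3}$ (note $fv_+$ comes from the $p_1$-row and, $p_1$ being odd, the degree-$(D-2)$ highest vectors are $V_{+,3}$). Hence $A$ is determined by $A(v_-)=a+\lambda fv_+$ with $\lambda\in\mb F$, $a\in V_{+,3}$, giving a linear isomorphism $(\lambda,a)\leftrightarrow A$. Using $\phi(v_+)(v_-)=\langle v_+\mid f^Dv_+\rangle=\eta=1$ and that $\phi(v_+)$ annihilates every degree other than $V[-D]$, one gets $A=(a+\lambda fv_+)\phi(v_+)$; and the rank-one adjoint identity $(w\,\phi(u))^\dagger=u\,\phi(w)$ yields $A^\dagger=v_+\bigl(\phi(a)+\lambda\phi(fv_+)\bigr)$. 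Therefore $A-A^\dagger$ is exactly the first two summands in the claimed formula for $E$.

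The heart of the argument is the middle block, and this is the step I expect to require the most care. Since $(f\transpose)^{D-1}\colon V[-D+1]\to V[D-1]=V_{+,2}$ is an isomorphism (it is the full shift across the height-$p_2$ rectangle, with inverse $f^{D-1}$ by the relations \eqref{20180219:eq3}), every $C$ is uniquely $C=b(f\transpose)^{D-1}$ for some $b\in\End(V_{+,2})$, and $b\leftrightarrow C$ is a linear isomorphism. It remains to translate $C=-C^\dagger$ into a condition on $b$. I would take $u,z\in V[-D+1]$, set $p=(f\transpose)^{D-1}u$ and $q=(f\transpose)^{D-1}z$ in $V_{+,2}$ so that $u=f^{D-1}p$, $z=f^{D-1}q$, and compute $\langle Cu\mid z\rangle=\langle bp\mid f^{D-1}q\rangle=\beta_2(bp,q)$ and, using $\eta=1$, $\langle u\mid Cz\rangle=\langle f^{D-1}p\mid bq\rangle=\beta_2(bq,p)$. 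Then $C=-C^\dagger$ reads $\beta_2(bp,q)=-\beta_2(bq,p)$; since $p_1$ is odd we have $(-1)^{D_2}\eta=(-1)^{p_1-2}=-1$, so $\beta_2$ is skew-symmetric by \eqref{20200819:bilinear2}, and the right-hand side equals $\beta_2(p,bq)$. Hence $C$ is skew-adjoint exactly when $\beta_2(bp,q)=\beta_2(p,bq)$, i.e. when $b$ is selfadjoint with respect to $\beta_2$.

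Assembling the three steps, the map $(\lambda,a,b)\mapsto E$ of \eqref{eq:E-so-dispari} is the sum of the isomorphism $(\lambda,a)\leftrightarrow A$ onto the outer-block summand and the isomorphism $b\leftrightarrow C$ (restricting to selfadjoint $b$ and skew-adjoint $C$) onto the middle-block summand. As the two summands are complementary in $\mf g_{d-1}$, the combined map is a linear bijection between the set of triples $(\lambda,a,b)$ with $b$ selfadjoint for $\beta_2$ and $\mf g_{d-1}$, which is the assertion of the lemma.
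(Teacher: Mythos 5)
Your proposal is correct and follows essentially the same route as the paper: decompose $E$ by degree into the outer blocks, which give $A-A^\dagger$ with $A$ determined by $Av_-=\lambda fv_++a$, and the middle block $B=b(f\transpose)^{D-1}$, whose skew-adjointness is shown to be equivalent to selfadjointness of $b$ with respect to the skew-symmetric form $\beta_2$. The only difference is presentational — you justify the block decomposition of $\mf g_{d-1}$ from the degree bounds and the action of $\dagger$ on blocks, which the paper simply asserts.
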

\begin{proof}
An element $E\in\mf g_{d-1}$ can be uniquely written as
\begin{equation}\label{20201221:eq1}
E=A-A^\dagger+B\in\mf g_{d-1}\,,
\end{equation}
where $A\in\Hom(V[-D],V[D-2])$, $B=-B^\dagger\in\Hom(V[-D+1],V[D-1])$.
Recall that $V[-D]=\mb Fv_-$ and that we have the decomposition
$V[D-2]=\mb F fv_+\oplus V_{+,3}$. Hence, we can uniquely write $Av_-=\lambda fv_++a$,
for $\lambda\in\mb F$ and $a\in V_{+,3}$.
Hence,
\begin{equation}\label{20201117:eq1}
A=(a+\lambda fv_+) \phi(v_+)
\end{equation}
and
\begin{equation}\label{20201116:eq5}
A^\dagger=v_+(\phi(a)+\lambda \phi(f v_+))\,.
\end{equation}
Furthermore, let us define $b=Bf^{D-1}\in\End(V_{+,2})$. Since $D_2=p_2-1=p_1-2=D-1$ and $p_1$ is odd,
then the bilinear form \eqref{20200819:bilinear} $\beta_2$ on $V_{+,2}$ is skewsymmetric. Note that,
for $v,w\in V_{+,2}$ we have
\begin{align*}
&\beta_2(bv,w)=\langle bv|f^{D-1}w\rangle
=\langle v|(-f)^{D-1}(-B)f^{D-1}w\rangle
\\
&=\langle B f^{D-1}v|f^{D-1}w\rangle
=\langle v|f^{D-1}bw\rangle=\beta_2(v,bw)\,.
\end{align*}
Hence, $b$ is self-adjoint with respect to $\beta_2$. A similar computation shows that, if $b$
is selfadjoint with respect to $\beta_2$, then
$B=b(f\transpose)^{D-1}\in\mf g_{d-1}$.
\end{proof}
\begin{proposition}\label{coiso-so2}
Let $E\in\mf g_{d-1}$ be as in \eqref{eq:E-so-dispari}. The subspace $\mf g_1^E$ is
coisotropic with respect to the bilinear form \eqref{eq:skewform} if and only if
\begin{equation}\label{20201116:eq2}
b^2-2\lambda b+\alpha=0\,,
\end{equation}
where
\begin{equation}\label{eq:alpha}
\alpha=\langle a|f^{D-2}a\rangle\,.
\end{equation}
\end{proposition}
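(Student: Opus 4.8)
The plan is to translate the coisotropy condition via Proposition \ref{prop:linalg} and then apply Lemma \ref{prop:20201029b}. Since $E\in\mf g_{d-1}$, Proposition \ref{prop:linalg} says that $\mf g_1^E$ is coisotropic if and only if $\ad E\circ(\ad f)^{-1}\circ\ad E$ vanishes on $\mf g_{-d}$. The first step is to record the homogeneous blocks of $E$: as $\mf g_{d-1}=\mf g_{2D-2}$ raises degrees by $2D-2$, the only possibly nonzero components of $E$ are $V[-D]\to V[D-2]$, $V[-D+2]\to V[D]$ and $V[-D+1]\to V[D-1]$, which by \eqref{eq:E-so-dispari} are exactly $A=(a+\lambda fv_+)\phi(v_+)$, $B=-A^\dagger=-v_+(\phi(a)+\lambda\phi(fv_+))$ and $C=b(f\transpose)^{D-1}$, matching the hypotheses of Lemma \ref{prop:20201029b}. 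Dually, every $U\in\mf g_{-d}$ has only the components $X\in\Hom(V[D],V[-D+1])$ and $Y\in\Hom(V[D-1],V[-D])$, tied by $Y=-X^\dagger$ since $U\in\mf{so}_N$. Thus Lemma \ref{prop:20201029b} applies directly and yields the two operator identities \eqref{20201029:eq1}; because $E$ and $U$ are skew-adjoint and $b$ is $\beta_2$-self-adjoint, these two identities are mutually adjoint, so it is enough to analyse the first.

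The heart of the proof is then the explicit evaluation of the three terms of that identity, exploiting that $A$ and $B$ are rank one (they factor through $v_\pm$) and the isomorphisms induced by $f,f\transpose$ between the spaces $V_{\pm,i}$. I would compute: first, $Cf^{D-1}C=b^2(f\transpose)^{D-1}$, using that $(f\transpose)^{D-1}f^{D-1}$ is the identity on $V_{+,2}=V[D-1]$; second, that the two cross-terms coincide, $f\transpose\id_{fV}A=\lambda\,v_+\phi(v_+)=B\id_{f\transpose V}f\transpose$, using $f\transpose fv_+=v_+$, the pairing $\langle fv_+\mid f^{D-1}v_+\rangle=-1$, and the vanishing of $\langle a\mid f^{D-1}v_+\rangle$ because $a$ and $v_+$ sit in different rectangles, so that together they contribute $2\lambda\,v_+\phi(v_+)$; and third, $B\id_{V_-}f^{D-2}\id_{V_+}A=-\alpha\,v_+\phi(v_+)$, where the only surviving pairing is $\langle a\mid f^{D-2}a\rangle=\alpha$ and the mixed row-$1$/row-$3$ pairings vanish by \eqref{20200819:eq3}. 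Substituting these and identifying all maps as $v_+\otimes(\text{functional on }V_{+,2})$ through the isomorphism $(f\transpose)^{D-1}\colon V[-D+1]\xrightarrow{\sim}V_{+,2}$ and $(f\transpose)^Df^D=\id$ on $V[D]$, the identity reduces, after reading off the coefficient of $v_+$, to $\beta_2\big((b^2-2\lambda b+\alpha)y_0,z\big)=0$ for all $y_0,z\in V_{+,2}$. Nondegeneracy of $\beta_2$ then yields \eqref{20201116:eq2}.

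I expect the main obstacle to be precisely this bookkeeping: keeping track of the projections $\id_{fV},\id_{f\transpose V},\id_{V_\pm}$ together with the various powers of $f$ and $f\transpose$, verifying that the rank-one maps $A,B$ annihilate everything outside the relevant rectangles, and reconciling the nominal domains appearing in \eqref{20201029:eq1} by routing everything through the $f^{D-1}$-isomorphisms onto $V_{+,2}$. Once the three terms are in hand, the passage to the scalar quadratic is routine: self-adjointness of $b$ lets one move $b^2$ and $b$ onto a single argument, and nondegeneracy of $\beta_2$ strips off the arbitrary $X$ (equivalently $Y$), leaving the operator identity $b^2-2\lambda b+\alpha=0$ on $V_{+,2}$.
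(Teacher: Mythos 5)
Your proposal is correct and follows essentially the same route as the paper: translate coisotropy via Proposition \ref{prop:linalg}, apply Lemma \ref{prop:20201029b} with $A=(a+\lambda fv_+)\phi(v_+)$, $B=-A^\dagger$ and the $b$-block, observe that the two identities in \eqref{20201029:eq1} are adjoint to each other, and then evaluate the three terms using the rank-one structure of $A$, $A^\dagger$ to get $b^2$, $2\lambda b$ and $-\alpha$, concluding by nondegeneracy of $\beta_2$ and arbitrariness of $X$. The only (immaterial) difference is cosmetic: the paper strips off the arbitrary $X$ by applying the resulting operator identity to $v_+$ and letting $(f\transpose)^{D-1}X(v_+)$ range over $V_{+,2}$, rather than phrasing the last step through $\beta_2$.
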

\begin{proof}
Let $E=A-A^\dagger+B\in\mf g_{d-1}$ as in \eqref{20201221:eq1} and let $U=X-X^\dagger\in\mf g_{d}$, where $X\in\Hom(V[D],V[-D+1])$.
Note that the two equations in \eqref{20201029:eq1} are equivalent in the present setting.
Hence, by Proposition \ref{prop:linalg} and Lemma \ref{prop:20201029b} we have that
$\mf g_1^E$ is coisotropic if and only if
\begin{equation}\label{20201116:eq3}
Bf^{D-1}BX(f\transpose)^D=BX(f\transpose\id_{fV}A-A^{\dagger}\id_{f\transpose V}f\transpose)-(f\transpose)^{D-1}XA^{\dagger}\id_{V_-}f^{D-2}\id_{V_+}A\,,
\end{equation}
for every $X$. From equations \eqref{20201117:eq1} and \eqref{20201116:eq5} we immediately have
\begin{equation}\label{20201116:eq4}
\begin{split}
&\id_{V_+}A=a\phi(v_+)\,,\quad \id_{fV}A=\lambda fv_+\phi(v_+)\,,
\\
&
A^{\dagger}\id_{V_-}=v_+\phi(a)\,,\quad A^\dagger\id_{f\transpose V}=\lambda v_+\phi(fv_+)\,.
\end{split}
\end{equation}
Using equation \eqref{20201116:eq4} we rewrite equation \eqref{20201116:eq3} as
\begin{equation}\label{20201116:eq7}
b^2(f\transpose)^{D-1}X
=(2\lambda b-\alpha)(f\transpose)^{D-1}X(v_+)\phi(v_+)\,.
\end{equation}
Applying both sides of \eqref{20201116:eq7}
to $v_+$ we get
\begin{equation}\label{20201116:eq8}
b^2(f\transpose)^{D-1}X(v_+)=(2\lambda b-\alpha)(f\transpose)^{D-1}X(v_+)
\,.
\end{equation}
Equation \eqref{20201116:eq2} follows from \eqref{20201116:eq8} since $X$ is arbitrary.
\end{proof}
In order to prove the main result of this section we need the following.
\begin{lemma}\label{20201116:lem3}
If $a\neq 0$ and $\alpha=0$, then $(f+E)_n\not\in\mf g_{-2}$.
\end{lemma}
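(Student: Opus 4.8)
The plan is to argue by contradiction using the explicit action of $T:=f+E$ on the cyclic subspace it generates from $v_+$. Since $a\neq0$, the vector $a$ is the $V_{+,3}$-component of $Av_-\in V[D-2]$, so $D_3=D-2$ and $a$ generates an $f$-string $a,fa,\dots,f^{D-2}a$ of length $p_3=D-1$ occupying the even degrees $D-2,D-4,\dots,-D+2$. First I would record the action of $T$ on the subspace $\mc C$ spanned by the $V_1$-string $v_+,fv_+,\dots,f^Dv_+=v_-$ together with this $a$-string; it is $2D$-dimensional and $v_+$ is a cyclic vector for $T|_{\mc C}$. Using \eqref{eq:E-so-dispari} and the fact that the trajectory of $v_+$ meets the support of $E$ (namely $V[-D]$, $V[-D+1]$, $V[-D+2]$) only in even degrees, the only nontrivial contributions come from the three relations
\[
Tf^{D-1}v_+=v_-+\lambda v_+,\qquad Tv_-=a+\lambda fv_+,\qquad Tf^{D-2}a=-\alpha v_+,
\]
the middle one because $Ev_-=a+\lambda fv_+$ and $fv_-=f^{D+1}v_+=0$, and the last because $\langle a|f^{D-2}a\rangle=\alpha$ while $\langle fv_+|f^{D-2}a\rangle=0$ (the two vectors lie in different rectangles).

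From these one computes $T^Dv_+=v_-+\lambda v_+$ and $T^Dv_-=(\lambda^2-\alpha)v_++\lambda v_-$, whence the annihilating relation $(T^{2D}-2\lambda T^D+\alpha)v_+=0$. By cyclicity of $v_+$ and $\dim\mc C=2D$, this is the minimal (and characteristic) polynomial of $T|_{\mc C}$. Imposing the hypothesis $\alpha=0$, it becomes $\mu^{D}(\mu^{D}-2\lambda)$, so $0$ is an eigenvalue of $T|_{\mc C}$ of multiplicity $\ge D\ge 2$ and $T|_{\mc C}$ is not semisimple. Since $\mc C$ is $T$-invariant and the semisimple and nilpotent parts of $T$ are polynomials in $T$, the subspace $\mc C$ is $(f+E)_n$-invariant with $(f+E)_n|_{\mc C}=(T|_{\mc C})_n$.

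The decisive step is to evaluate $(f+E)_n$ on $v_-$. When $\lambda\neq0$ the factors $\mu^D$ and $\mu^D-2\lambda$ are coprime, so the projection onto the generalized $0$-eigenspace of $T|_{\mc C}$ is the idempotent $\pi=\id-\tfrac1{2\lambda}T^{D}$; then $\pi v_-=\tfrac12(v_--\lambda v_+)$ and hence
\[
(f+E)_n v_-=T\pi v_-=\tfrac12 T(v_--\lambda v_+)=\tfrac12\bigl((a+\lambda fv_+)-\lambda fv_+\bigr)=\tfrac12 a.
\]
When $\lambda=0$ the operator $T|_{\mc C}$ is already nilpotent, so $(f+E)_n v_-=Tv_-=a$ directly. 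In either case $(f+E)_n v_-$ is a nonzero multiple of $a$, since $a\neq0$.

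Finally I would read off the contradiction from the grading: the vector $a$ lies in $V[D-2]$ while $v_-\in V[-D]$, and any element of $\mf g_{-2}$ maps $V[-D]$ into $V[-D-2]=0$. Thus if $(f+E)_n$ belonged to $\mf g_{-2}$ we would have $(f+E)_n v_-=0$, contradicting the computation above. Hence $(f+E)_n\notin\mf g_{-2}$, as claimed. The main obstacle is the bookkeeping that justifies the three displayed $T$-relations and the cyclicity of $v_+$ (so that the minimal polynomial really has degree $2D$); once these are established, the projection formula exhibits the nilpotent part explicitly and the degree obstruction finishes the argument.
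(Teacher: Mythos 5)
Your proof is correct, and it shares its skeleton with the paper's argument: both work inside the $(f+E)$-invariant subspace spanned by the $f$-string of $v_+$ and the $f$-string of $a$, both rest on exactly the three nontrivial relations $(f+E)f^{D-1}v_+=v_-+\lambda v_+$, $(f+E)v_-=a+\lambda fv_+$, $(f+E)f^{D-2}a=-\alpha v_+$, and both conclude by exhibiting a vector that $(f+E)_n$ moves by the wrong degree. The execution differs, though. The paper never computes a minimal polynomial: it directly produces the vector $u=2\lambda w+f^{D-1}v_+$ (with $w\in V_{-,3}$ chosen so that $\langle a|w\rangle=1$), checks $(f+E)^2u=a$ and hence $(f+E)^{D+1}u=0$ when $\alpha=0$, so that $u$ lies in the generalized $0$-eigenspace, giving $(f+E)_nu=(f+E)u=v_--\lambda v_+$ at once. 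You instead verify cyclicity of $v_+$ on $\mc C$, identify the minimal polynomial $\mu^{2D}-2\lambda\mu^D+\alpha$ (consistent with the characteristic polynomial recorded in Lemma \ref{20201116:lem4}), and use the spectral idempotent $\pi=\id-\frac1{2\lambda}T^D$ to evaluate $(f+E)_n$ at $v_-$, obtaining a nonzero multiple of $a$. Your choice of test vector buys something concrete: since $v_-\in V[-D]$ and $a\in V[D-2]$, while any element of $\mf g_{-2}$ annihilates $V[-D]$, the contradiction is uniform in $\lambda$, including $\lambda=0$ (where $T|_{\mc C}$ is nilpotent and no projection is needed). The paper's test vector $u\in V[-D+2]$ yields $v_--\lambda v_+$, whose degree obstruction is carried entirely by the component $-\lambda v_+\in V[D]$ and hence is only visible when $\lambda\ne0$; your version closes that corner cleanly. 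The individual steps all check out: the $T$-invariance and $2D$-dimensionality of $\mc C$, the identification $(f+E)_n|_{\mc C}=(T|_{\mc C})_n$ via uniqueness of the Jordan decomposition on an invariant subspace, the computation $\pi v_-=\frac12(v_--\lambda v_+)$, and the identity $T_n=T\pi$, which holds because $T$ is semisimple on $\Ker(T^D-2\lambda)$ when $\lambda\ne0$.
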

\begin{proof}
Note that, since $a\neq0$, $V_{+,3}=V[D-2]\cap V_+\neq 0$.
From equation \eqref{eq:E-so-dispari}, we have
$$
(f+E)^{D-1}a=Ef^{D-2}a=-\langle a|f^{D-2}a\rangle v_+=-\alpha v_+\,.
$$
Hence, if $\alpha=0$, we have $(f+E)^{D-1}(a)=0$.
Recall that the bilinear form \eqref{20200819:eq3} is non-degenerate. Hence, there exists $w\in V_{-,3}$ such that
$\langle a|w\rangle=1$. Let $u=2\lambda w+f^{D-1}v_+$. By equation \eqref{eq:E-so-dispari} we have
$$
(f+E)(u)=v_--\lambda v_+\,,\qquad (f+E)^2(u)=a\,.
$$
Thus we get that $u$ lies in the generalized eigenspace of $f+E$ of eigenvalue $0$ and $(f+E)_s(u)=0$. This implies that
$(f+E)_n(u)=(f+E)(u)=v_--\lambda v_+$.
On the other hand, $u\in V[-D+2]$, while $v_-\in V[-D]$ and $v_+\in V[D]$.
As a consequence, $(f+E)_n\not\in\mf g_{-2}$.
\end{proof}
\begin{lemma}\label{20201116:lem4}
Let $X=(f+E)\id_{V_1\oplus V_3}$ and let us assume that $\alpha\neq0$. Then $X_n\in\mf g_{-2}$.
\end{lemma}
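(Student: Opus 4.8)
The plan is to reduce the statement to a single two-string calculation and then read off the Jordan decomposition from the minimal polynomial. Throughout I use that $E$ is given by \eqref{eq:E-so-dispari} and that $\alpha=\langle a\mid f^{D-2}a\rangle\neq0$, so in particular $a\neq0$ and $a$ is non-isotropic for the form $\beta_3$ on $V_{+,3}$.

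\textbf{Step 1 (reduction to one $f$-string).} Since $\langle\cdot\mid\cdot\rangle$ is block-diagonal with respect to $V=\bigoplus_iV_i$, the projection $\id_{V_1\oplus V_3}$ is orthogonal and $V_1\oplus V_3$ is non-degenerate; as $f+E\in\mf g$ preserves this subspace, $X$ is skew-adjoint, hence $X\in\mf g$ and $X_n\in\mf g$, so it only remains to show that $X_n$ lowers the $h$-degree by $2$. I would write $V_{+,3}=\mb Fa\oplus a^{\perp}$ (orthogonal for $\beta_3$, legitimate since $\beta_3(a,a)=\alpha\neq0$) and set $S_a=\bigoplus_{k=0}^{D-2}\mb Ff^ka$, $V_3'=\bigoplus_kf^k(a^{\perp})$, so $V_3=S_a\oplus V_3'$. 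From \eqref{eq:E-so-dispari} the operator $E$ involves only $a$ and the functional $\phi(a)$, which pairs non-trivially only with $f^{D-2}a\in S_a$; hence $E$ annihilates $V_3'$ and has image in $P:=V_1\oplus S_a$. Consequently $X$ preserves $P$ and $V_3'$, equals $f\in\mf g_{-2}$ on $V_3'$ (nilpotent of degree $-2$) and vanishes on $V_2\oplus V_{\ge4}$, reducing everything to the Jordan decomposition of $X|_P$.

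\textbf{Step 2 (cyclic vector and minimal polynomial).} In the basis $w_k=f^kv_+$ $(0\le k\le D)$ and $s_k=f^ka$ $(0\le k\le D-2)$ of the $2D$-dimensional space $P$, equation \eqref{eq:E-so-dispari} together with $\langle fv_+\mid f^{D-1}v_+\rangle=-1$ and $\langle a\mid f^{D-2}a\rangle=\alpha$ yields the actions $Xw_k=w_{k+1}$ $(k\le D-2)$, $Xw_{D-1}=w_D+\lambda w_0$, $Xw_D=\lambda w_1+s_0$, $Xs_k=s_{k+1}$ $(k\le D-3)$ and $Xs_{D-2}=-\alpha w_0$. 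Iterating, $\{X^kw_0\}_{k=0}^{2D-1}$ turns out to be a basis of $P$, so $w_0$ is a cyclic vector, and one finds the single relation $X^{2D}w_0=2\lambda X^Dw_0-\alpha w_0$. Hence the minimal, equivalently characteristic, polynomial of $X|_P$ is $m(t)=t^{2D}-2\lambda t^D+\alpha=q(t^D)$, where $q(s)=s^2-2\lambda s+\alpha$ is precisely the quadratic of the coisotropy condition \eqref{20201116:eq2} (although, since $X|_P$ does not involve $b$, that condition itself is not needed here).

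\textbf{Step 3 (Jordan decomposition and the degree).} I would factor $q(s)=(s-s_+)(s-s_-)$ with $s_\pm=\lambda\pm\sqrt{\lambda^2-\alpha}$; as $s_+s_-=\alpha\neq0$ both roots are non-zero. If $\lambda^2\neq\alpha$ then $s_+\neq s_-$, so $m(t)$ has $2D$ distinct roots, $X|_P$ is semisimple and $(X|_P)_n=0$. If $\lambda^2=\alpha$ (forcing $\lambda\neq0$) then $m(t)=(t^D-\lambda)^2$; identifying $P\cong\mb F[t]/(m)$ via $w_0\leftrightarrow1$, $X\leftrightarrow t$, an elementary computation in this ring shows that the nilpotent part is multiplication by $n(t)=\frac1{D\lambda}t(t^D-\lambda)$, i.e.\ $(X|_P)_n=\frac1{D\lambda}(X^{D+1}-\lambda X)$. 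The final, decisive step is to evaluate $X^{D+1}-\lambda X$ on each $w_k,s_k$ and check it maps $V[j]\cap P$ into $V[j-2]$, so that $(X|_P)_n\in\mf g_{-2}$. Together with Step 1 this makes $X_n$ homogeneous of degree $-2$, giving $X_n\in\mf g_{-2}$.

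The hard part will be that last verification. Even though $X=f+E$ mixes the two degrees $-2$ and $d-1=2D-2$, and the formula $\frac1{D\lambda}(X^{D+1}-\lambda X)$ for $X_n$ is manifestly inhomogeneous, the cancellations must leave a pure degree-$(-2)$ operator; I do not expect a way around checking the ``wrap-around'' vectors $w_{D-1},w_D,s_{D-2}$, where the terms $\lambda w_0$ and $-\alpha w_0$ enter, by hand.
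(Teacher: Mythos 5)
Your proposal is correct and follows essentially the same route as the paper: the same splitting $V_1\oplus V_3=(V_1\oplus V_a)\oplus U$ with $V_a$ the $f$-string on $a$ and $U$ the strings on the $\beta_3$-orthocomplement of $a$, the same characteristic polynomial $t^{2D}-2\lambda t^D+\alpha$ on the $2D$-dimensional invariant block, and the same case split on $\lambda^2\ne\alpha$ versus $\lambda^2=\alpha$. The only cosmetic difference is that in the degenerate case you express the nilpotent part as $\frac1{D\lambda}(X^{D+1}-\lambda X)$ via the cyclic-vector identification $P\cong\mb F[t]/\bigl((t^D-\lambda)^2\bigr)$, while the paper writes the same degree $-2$ operator out explicitly and likewise omits the routine verification.
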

\begin{proof}
Since $\alpha\neq 0$, we have $a\neq 0$ and the direct sum decomposition
$$V_{+,3}=V[D-2]\cap V_+=\mb Fa\oplus\ker\phi(f^{D-2}a)\id_{V_{+,3}}\,.
$$
Let $V_a=\oplus_{k=0}^{D-2}\mb Ff^ka$ and $U=\oplus_{k=0}^{D-2}\mb Ff^k\ker\phi(a)\id_{V_{+,3}}$. Then, we have the
direct sum decomposition $V_1\oplus V_3=V_1\oplus V_a\oplus U$. Note that $X(V_1\oplus V_a)\subset V_1\oplus V_a$,
$X(U)\subset U$ and $X\id_U$ is nilpotent. Fixing, for example, the basis $\{f^kv_+,f^ha|0\le k\le D,0\le h\le D-2\}$
of $V_1\oplus V_a$ it is straightforward to
check that the characteristic polynomial of $X\id_{V_1\oplus V_a}$ is $x^{2D}-2\lambda x^D+\alpha$. Hence, if $\alpha\neq \lambda^2$, then $X\id_{V_1\oplus V_a}$ is semisimple.
In this case the Jordan decomposition of $X=X_s+X_n$ has
$X_s=X\id_{V_1\oplus V_a}$ and $X_n=X\id_U\in\mf g_{-2}$, as claimed. If instead $\alpha=\lambda^2$, then the nilpotent part of $X\id_{V_1\oplus V_a}$
is the following element of $\mf g_{-2}$
$$
\frac{1}{D}(f\id_{V_1}-f\id_{V_a})
+\frac{1}{D\lambda}\sum_{k=0}^{D-2}\left((-f)^ka\phi((f\transpose)^kv_-)
-\frac{\lambda^2}{\alpha} f^{k+2}v_+\phi(f^{D-2-k}a)\right)\,.
$$
(The proof of this fact is straightforward and is omitted).
\end{proof}
The following main result of this subsection characterizes integrable quasi-cyclic elements for $\mf{so}_N$ associated to nilpotent elements of odd depth.
\begin{theorem}\label{integrable-so2}
Let $\mf g=\mf{so}_N$ and let $f\in\mf g$ be a nilpotent element of odd depth $d=2D-1$, where $D=p_1-1$. Let $E\in\mf g_{d-1}$ be decomposed as in \eqref{eq:E-so-dispari} and let $\alpha$ be as in \eqref{eq:alpha}. Then $E$ is integrable for $f$ if and only if the following two conditions hold:
(i) $b$ is semisimple with minimal polynomial dividing $x^2-2\lambda x+\alpha$,
(ii) if $a=0$ then $\lambda\ne0$, while if $a\neq 0$ then $\alpha\ne0$.
\end{theorem}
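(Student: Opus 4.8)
The plan is to use the characterization that $E$ is integrable for $f$ exactly when $f+E$ is a generalized cyclic element (so $E\ne0$ and $\mf g_1^E$ is coisotropic, the condition $[E,\mf g_{\ge2}]=0$ being automatic by Proposition \ref{centr-so2}) whose Jordan nilpotent part $(f+E)_n$ lies in $\mf g_{-2}$. The coisotropy half is already done: by Proposition \ref{coiso-so2} it is equivalent to $b^2-2\lambda b+\alpha=0$, i.e.\ to the divisibility statement in (i). So the real content is to show that, granting coisotropy, the condition $(f+E)_n\in\mf g_{-2}$ is equivalent to ``$b$ semisimple, together with (ii)''. First I would exhibit an $f+E$-invariant graded decomposition of $V$. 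Writing $E=A-A^\dagger+B$ as in the proof of Lemma \ref{20201116:lem2}, one reads off from the formulas that $A=(a+\lambda fv_+)\phi(v_+)$ sends $V_1$ into $V_1\oplus V_3$, that $A^\dagger$ sends $V_1\oplus V_3$ into $V_1$, and that $B=b(f\transpose)^{D-1}$ sends $V_2$ into $V_2$; since $f$ preserves every $V_i$, the operator $f+E$ preserves $V_1\oplus V_3$, preserves $V_2$, and restricts to the plain $f$ (hence is nilpotent and lies in $\mf g_{-2}$) on $V_{\ge4}=\bigoplus_{i\ge4}V_i$. Because the nilpotent part of a block-diagonal operator is the direct sum of the nilpotent parts, and each block sits on a graded (so $\ad h$-invariant) subspace, the requirement $(f+E)_n\in\mf g_{-2}$ splits into the two independent requirements that the nilpotent parts of $(f+E)|_{V_2}$ and of $(f+E)|_{V_1\oplus V_3}$ lie in $\mf g_{-2}$.

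The block on $V_2$ is precisely of the type already settled in the $\mf{gl}$ analysis: there $f+E$ acts as $f|_{V_2}+b(f\transpose)^{D_2}$ with $D_2=p_2-1=D-1$, which is the situation of Lemmas \ref{prop1} and \ref{prop2} with $V_2$ in place of $V$ and $b$ in place of $u$ (the proofs of those lemmas are purely linear-algebraic in $f|_{V_2}$). I would therefore quote them to conclude that the nilpotent part of $(f+E)|_{V_2}$ lies in $\mf g_{-2}$ if and only if $b$ is semisimple; together with the coisotropy relation $b^2-2\lambda b+\alpha=0$ this is exactly (i). For the block on $V_1\oplus V_3$ when $a\ne0$, I would appeal directly to the preparatory lemmas: Lemma \ref{20201116:lem4} gives that $\alpha\ne0$ forces the nilpotent part of $(f+E)|_{V_1\oplus V_3}$ into $\mf g_{-2}$, while Lemma \ref{20201116:lem3} gives that $a\ne0$, $\alpha=0$ pushes $(f+E)_n$ out of $\mf g_{-2}$. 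Hence for $a\ne0$ the block condition is equivalent to $\alpha\ne0$, which is (ii) in this case.

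The case $a=0$ (whence $\alpha=0$ automatically, and $V_3$ is acted on by $f$ alone) escapes Lemmas \ref{20201116:lem3}--\ref{20201116:lem4}, and I would dispose of it by a short computation on $V_1$. In the cyclic basis $f^kv_+$, $0\le k\le D$, the operator $M=(f+E)|_{V_1}$ satisfies $Mf^kv_+=f^{k+1}v_+$ for $0\le k\le D-2$, $Mf^{D-1}v_+=f^Dv_++\lambda v_+$, and $Mf^Dv_+=\lambda fv_+$; thus $v_+$ is cyclic and the characteristic (= minimal) polynomial of $M$ is $x(x^D-2\lambda)$. For $\lambda\ne0$ this has $D+1$ distinct roots, so $M$ is semisimple and its nilpotent part is $0\in\mf g_{-2}$; for $\lambda=0$ it is $x^{D+1}$, so $M=f|_{V_1}$ is nilpotent, but then $A=0$ and coisotropy forces $b^2=0$, which with $b$ semisimple yields $b=0$ and hence the excluded value $E=0$. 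So for $a=0$ the block requirement, in the presence of $E\ne0$, is equivalent to $\lambda\ne0$, again matching (ii). Assembling the three blocks, integrability is equivalent to coisotropy, semisimplicity of $b$, and (ii), i.e.\ to (i) and (ii). The one genuinely delicate point will be the bookkeeping of the invariant decomposition --- checking that $A-A^\dagger$ really stays inside $V_1\oplus V_3$ and that the degenerate branch $a=0,\lambda=0$ is ruled out only through $E\ne0$; once that is in place the theorem follows by invoking the earlier lemmas.
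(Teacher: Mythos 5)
Your proposal is correct and follows essentially the same route as the paper's proof: the same $f+E$-invariant splitting into $V_2$, $V_1\oplus V_3$ and $V_{\ge4}$, the reduction of the $V_2$ block to the $\mf{gl}$ result (Proposition \ref{thm:glintiffss}, i.e.\ Lemmas \ref{prop1} and \ref{prop2}), the use of Lemmas \ref{20201116:lem3} and \ref{20201116:lem4} for $a\ne0$, and the minimal polynomial $x(x^D-2\lambda)$ for $a=0$. Your handling of the degenerate branch $a=\alpha=\lambda=0$ (using semisimplicity of $b$ to pass from $b^2=0$ to $b=0$, hence $E=0$) is in fact slightly more explicit than the paper's parenthetical remark.
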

\begin{proof}
First note that $(f+E)\id_{V_2}=f\id_{V_2}+b(f\transpose)^{D-1}$. By Propositions \ref{coiso-so2} and \ref{thm:glintiffss}, $(f+E)\id_{V_2}$ is integrable if and only if
$b$ is semisimple and its minimal polynomial divides $x^2-2\lambda x+\alpha$, i.e. condition (i) holds.
Furthermore, $(f+E)\id_{V_{\ge4}}=f\id_{V_{\ge4}}$ and $(f+E)(V_1\oplus V_3)\subset V_1\oplus V_3$. Since
$f\id_{V_{\ge4}}$ is nilpotent and commutes with both $(f+E)\id_{V_2}$ and $(f+E)\id_{V_1\oplus V_3}$, we are left to understand when
$X=(f+E)\id_{V_1\oplus V_3}$ is integrable. By Lemma, \ref{20201116:lem3} $X$ is not integrable if $a\neq0$ and $\alpha=0$.
If $a=\alpha=0$, then $\lambda\neq0$, otherwise $E=0$ (indeed if $\lambda=\alpha=0$, then $b=0$). When $\lambda\neq 0$, then $X$ is semisimple since its minimal polynomial is $x(x^D-2\lambda)$ which has
distinct roots (see Example 2.12 in \cite{DSJKV20}). If $\alpha\neq0$, then the result follows from Proposition \ref{coiso-so2} and Lemma \ref{20201116:lem4}.
\end{proof}

\begin{remark}
The integrable element $E\in\mf g_{d-1}$ constructed in Example 2.12 in \cite{DSJKV20} corresponds to the choice
$\lambda=1$ and $a=b=0$ in Theorem \ref{integrable-so2}.
\end{remark}

Let us reformulate the results obtained in this subsection in terms of polar linear groups.
\begin{theorem}\label{thm:polarsodd}
Let $\mf g=\mf{so}_N$ and let $f\in\mf g$ be a nilpotent element of odd depth $d$, so that the corresponding partition has the form $\partition p=(p_1,(p_1-1)^{(r_2)},(p_1-2)^{(r_3)},...)$, and $d=2p_1-3$. Then
\begin{enumerate}[(a)]
\item\label{item:l2stsp+stso+1}
The linear group $Z(\mf s)|\mf g_{d-1}$ (rather its unity component) is isomorphic to the direct sum of polar linear groups
\begin{equation}\label{eq:l2stsp+stso+1}
    \ext^2\st(Sp_{r_2})\oplus\st(SO_{r_3})\oplus\id,
\end{equation}
hence is polar. This linear group leaves invariant the non-degenerate symmetric bilinear form, defined by
\begin{equation}
    (a,b)=\left((\ad f)^{\frac{d-1}2}a\mid b\right),
\end{equation}
where $(\cdot\mid\cdot)$ is the trace form on $\mf g$. Consequently, we may identify $Z(\mf s)|\mf g_{d-1}$ with the space of triples $(b,a,\lambda)$, where $b$ is a selfadjoint operator on $\mb F^{r_2}$ with a skewsymmetric non-degenerate bilinear form, $a\in\mb F^{r_3}$, and $\lambda\in\mb F$.
\item\label{item:ssminpoletc} A triple $E=(b,a,\lambda)\in\mf g_{d-1}$ is an integrable element for $f$ if and only if the following holds:
\begin{enumerate}[(i)]
\item $b$ is semisimple and its minimal polynomial divides $x^2-2\lambda x+(a,a)$,
\item $\lambda\ne0$ if $a\ne0$; $(a,a)\ne0$ if $a\ne0$.
\end{enumerate}
\end{enumerate}
\end{theorem}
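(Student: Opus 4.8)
The plan is to obtain both parts by repackaging the results already proved in this subsection: part~(a) is the representation-theoretic translation of Lemma~\ref{20201116:lem2}, and part~(b) is a verbatim restatement of Theorem~\ref{integrable-so2}.

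For part~(a), Lemma~\ref{20201116:lem2} already gives a bijection between $\mf g_{d-1}$ and triples $(\lambda,a,b)$ with $\lambda\in\mb F$, $a\in V_{+,3}$ and $b\in\End(V_{+,2})$ selfadjoint for the form $\beta_2$ of \eqref{20200819:bilinear}. Since $D_2=p_2-1=p_1-2=D-1$ is odd (as $p_1$ is odd), $\beta_2$ is skew-symmetric by \eqref{20200819:bilinear2}, so the selfadjoint $b$ form the $Sp(V_{+,2},\beta_2)\cong Sp_{r_2}$-module $\ext^2\st(Sp_{r_2})$; likewise $D_3=p_3-1=D-2$ is even, so $\beta_3$ is symmetric and the $a$-part is the $SO(V_{+,3},\beta_3)\cong SO_{r_3}$-module $\st(SO_{r_3})$, while $\lambda$ gives the trivial summand. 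This is precisely \eqref{eq:l2stsp+stso+1}. Polarity is then immediate from Theorem~\ref{thm:allpolar}; alternatively, each summand is a theta group, hence polar by \cite{DK85}, \cite[Theorem~6(c)]{EJK20}, and a direct sum of polar linear groups is polar.

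For the invariant form, Theorem~\ref{thm:Zgj}\eqref{thm:Zgjforms} guarantees a non-degenerate symmetric invariant form on $\mf g_{d-1}$ because $d-1$ is even. I would exhibit it as $(a,b)=\bigl((\ad f)^{(d-1)/2}a\bigm|(\ad f)^{(d-1)/2}b\bigr)$, which is manifestly symmetric and, by invariance of the trace form $(\cdot\mid\cdot)$, equals $((\ad f)^{d-1}a\mid b)$ up to the sign $(-1)^{(d-1)/2}$; non-degeneracy follows from the fact that $(\ad f)^{d-1}\colon\mf g_{d-1}\xrightarrow{\sim}\mf g_{1-d}$ is an isomorphism by the Lefschetz property of the $\mf s$-action. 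The crucial compatibility to verify is that this form is block-diagonal for the decomposition of Lemma~\ref{20201116:lem2} and that its restriction to the $\st(SO_{r_3})$-summand agrees, up to a non-zero scalar, with $\beta_3$; granting this, the scalar $(a,a)$ computed with $(\cdot,\cdot)$ equals $\beta_3(a,a)=\langle a\mid f^{D-2}a\rangle=\alpha$ of \eqref{eq:alpha}, using $D_3=D-2$.

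Part~(b) then requires no further computation: under the identification $(a,a)=\alpha$, conditions (i) and (ii) of Theorem~\ref{integrable-so2} become verbatim the conditions claimed here. All the genuine content — the centralizer description (Proposition~\ref{centr-so2}), the coisotropy criterion \eqref{20201116:eq2} (Proposition~\ref{coiso-so2}), and the Jordan-decomposition analysis culminating in Theorem~\ref{integrable-so2} — is upstream. The only non-formal step remaining is the form computation of the previous paragraph: fixing the power of $\ad f$, checking mutual orthogonality of the three summands, and matching the $a$-block with $\beta_3$ so that $(a,a)=\alpha$. This orthogonal-decomposition check is the main, though modest, obstacle.
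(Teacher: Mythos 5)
Your proposal is correct, and part (b) is handled exactly as in the paper: both reduce it to Theorem~\ref{integrable-so2} via the identification $(a,a)=\alpha$ from \eqref{eq:alpha}. For part (a), however, you take a genuinely different (and arguably more self-contained) route. The paper identifies $Z(\mf s)|\mf g_{d-1}$ by splitting the even subalgebra $\mf g_{\mathrm{even}}=\bigoplus_j\mf g_{2j}$ into orthogonal summands $\mf g'\oplus\mf g''$ according to the odd part $(p_1,(p_1-2)^{(r_3)},\dots)$ and even part $((p_1-1)^{(r_2)},\dots)$ of the partition, and cites \cite{EKV13} for the resulting contributions $\st(SO_{r_3})\oplus\id$ and $\ext^2\st(Sp_{r_2})$; you instead read the module structure directly off the explicit triple parametrization of Lemma~\ref{20201116:lem2}, using the parity of $D_2$ and $D_3$ to see that $\beta_2$-selfadjoint operators form $\ext^2\st(Sp_{r_2})$ and that $V_{+,3}$ carries $\st(SO_{r_3})$, with polarity coming from the theta-group observation (as the paper itself does for Table~\ref{tab:zsdm1}) rather than from the case-wise Theorem~\ref{thm:allpolar}, which is preferable since that theorem's proof partly rests on results of this kind. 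Your reading of the invariant form as $\bigl((\ad f)^{(d-1)/2}a\bigm|(\ad f)^{(d-1)/2}b\bigr)$ is the right interpretation: as literally printed, $(\ad f)^{(d-1)/2}a$ lands in $\mf g_0$ and would pair to zero with $b\in\mf g_{d-1}$, so the exponent must be read as $d-1$ (equivalently, your symmetrized version). The one step you leave unexecuted --- block-orthogonality of this form for the decomposition of Lemma~\ref{20201116:lem2} and the normalization making $(a,a)=\beta_3(a,a)=\alpha$ --- is indeed the only computation needed to splice (a) into (b); the paper's own proof is silent on it as well, and over the algebraically closed field $\mb F$ the normalization can always be arranged, so this is a routine verification rather than a gap.
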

\begin{proof}
In order to compute $Z(\mf s)|\mf g_{d-1}$, consider the subalgebra $\mf g_{\mb Z}=\bigoplus_{j\in\mb Z}\mf g_{2j}$ of $\mf g$, which contains $\mf s$ and $\mf g_{d-1}$. It is a direct sum of orthogonal subalgebras $\mf g'$ and $\mf g''$, so that $f=f'+f''$, $f'\in\mf g'$, $f''\in\mf g''$, where $f'$ (resp. $f''$) corresponds to the odd part  $(p_1,(p_1-2)^{(r_3)},...)$ (resp. even part $((p_1-1)^{(r_2)},...)$) of the partition $\partition p$, which give contributions $\st(SO_{r_3})\oplus\id$ and $\ext^2\st(Sp_{r_2})$, respectively, to \eqref{eq:l2stsp+stso+1} (cf. \cite{EKV13}). This proves \eqref{item:l2stsp+stso+1}. Claim \eqref{item:ssminpoletc} follows from Theorem \ref{thm:polarsodd} and claim \eqref{item:l2stsp+stso+1}.
\end{proof}
\begin{remark}
Another simple example of an integrable $E$ is $(\lambda I_{r_2},a,\lambda^2)$, where $\lambda\ne0$ and $(a,a)=\lambda^2$.
\end{remark}

\subsection{Integrable quasi-cyclic elements in exceptional Lie algebras for nilpotent elements of odd depth}\label{subs:excodd}

We begin by explaining details of calculations that were used to produce Table 1 in \cite{DSJKV20} (see Table \ref{tab:odd} below).

\begin{table}[H]
\resizebox{.96\textwidth}{!}{
\begin{tabular}{c|c|c|c|c|c|c}
$\mf g$&nilpotent $f$&$d$&$\mf z(\mf s)|{\mf g}_{d-1}$&rank&quasi-cyclic $f+E$, $E\in\mf g_{d-1}$&quasi type\bigstrut[b]\\
\hline
$E_6$&$3A_1$\hfill\EDynkin{0,0,0,1,0,0}{1}&$3$&$\ad(\mf{sl}_3)\oplus\id$&$3$&$\exists$ semisimple&semisimple\bigstrut[t]\\
&$2A_2+A_1$\hfill\EDynkin{0,1,0,1,0,1}{1}&$5$&$\st(\mf{so}_3)\oplus{\id}$&$2$&$\exists$ integrable&mixed\\
\hline
$E_7$&$3A_1'$\hfill\EDynkin{0,0,1,0,0,0,0}{1}&$3$&$\ext^2\st(\mf{sp}_6)$&$3$&$\exists$ semisimple&semisimple\bigstrut[t]\\
&$4A_1$\hfill\EDynkin{1,0,0,0,0,0,1}{1}&$3$&$\ext^2\st(\mf{sp}_6)\oplus\id$&$4$&$\exists$ semisimple&semisimple\\
&$2A_2+A_1$\hfill\EDynkin{0,0,1,0,0,1,0}{1}&$5$&$\st(\mf{so}_3)\oplus\st(\mf{so}_3)$&$2$&$\exists$ integrable&mixed\\
\hline
$E_8$&$3A_1$\hfill\EDynkin{0,0,0,0,0,0,1,0}{1}&$3$&$\st(F_4)\oplus\id$&$3$&$\exists$ semisimple&semisimple\bigstrut[t]\\
&$4A_1$\hfill\EDynkin{1,0,0,0,0,0,0,0}{1}&$3$&$\ext^2\st(\mf{sp}_8)$&$4$&nilpotent only&semisimple\\
&$2A_2+A_1$\hfill\EDynkin{0,1,0,0,0,0,1,0}{1}&$5$&$\st(G_2)\oplus\st(\mf{so}_3)$&$2$&$\exists$ integrable&mixed\\
&$2A_2+2A_1$\hfill\EDynkin{0,0,0,0,1,0,0,0}{1}&$5$&$\ad(\mf{so}_5)$&$2$&nilpotent only&mixed\\
&$2A_3$\hfill\EDynkin{0,1,0,0,1,0,0,0}{1}&$7$&$\st(\mf{so}_5)\oplus\id$&$2$&nilpotent only&semisimple\\
&$A_4+A_3$\hfill\EDynkin{0,0,0,1,0,0,1,0}{1}&$9$&$\st(\mf{so}_3)$&$1$&nilpotent only&mixed\\
&$A_7$\hfill\EDynkin{0,1,0,1,0,1,1,0}{1}&$15$&$\id$&$1$&none&semisimple\\
\hline
$F_4$&$A_1+\widetilde A_1$\hfill\FDynkin{0,0,0,1}&$3$&$\sym^2\st(\mf{so}_3)$&$3$&$\exists$ semisimple&semisimple\bigstrut[t]\\[1em]
&$\widetilde A_2+ A_1$\hfill\FDynkin{1,0,0,1}&$5$&$\st(\mf{so}_3)$&$1$&nilpotent only&mixed\bigstrut[b]\\
\hline
$G_2$&$\widetilde A_1$\hfill\GDynkin{1,0}&$3$&$\id$&$1$&none&semisimple\bigstrut[t]
\end{tabular}
}
\caption{Quasi-cyclic elements attached to nilpotent elements of odd depth in exceptional simple Lie algebras}\label{tab:odd}
\end{table}

As in \cite{DSJKV20}, $\st(\mf a)$ denotes the standard representation of the Lie algebra $\mf a$
(which is the 26-dimensional for $\mf a=\rF_4$). In this Table $\operatorname{rank}=\dim\mf g_{d-1}\git Z(\mf s)$.
We call $f$ to be of semisimple (resp. mixed) quasi type
if there exist $E\in\mf g_{d-1}$, such that $f+E$ is semisimple (resp. not nilpotent).

Notation for the nilpotent elements describes them as principal nilpotent elements in the corresponding Levi subalgebras.
For example, in $\rE_8$ there is a unique, up to conjugacy, Levi subalgebra of type $2\rA_2+\rA_1$;
then $f$ is the sum of the corresponding negative simple root vectors.
In $\rE_7$ there are two, up to conjugacy, Levi subalgebras of type $3\rA_1$;
$3\rA_1'$ stands for the one whose principal nilpotent has odd depth in $\rE_7$ (the principal nilpotent for the other one has even depth).
Finally, in $\rF_4$ and $\rG_2$ tilde means that we take the negative short simple root vector.

Except for the nilpotent with label $A_1+\widetilde A_1$ in $\rF_4$, Cartan subspaces in $\mf g_1$ with respect to the $\mf z(\mf s)$-module structure
are given by the zero weight spaces of these modules.

\subsubsection{Nilpotent elements with label $3\rA_1$ in $\rE_6$, $\rE_8$ and $3\rA_1'$ in $\rE_7$}\label{subsubs:3A1}

All these conjugacy classes have representatives of the form $f=e_{-\alpha_1}+e_{-\alpha_2}+e_{-\alpha_3}$,
sums of three pairwise commuting negative simple root vectors.
For $\rE_6$ and $\rE_8$ these can be arbitrary three commuting root vectors, while in $\rE_7$ arbitrary under the restriction that $f$ has odd depth.
One checks that in this case the subspace $C\subseteq\mf g_2$ spanned by $e_{\alpha_1},e_{\alpha_2},e_{\alpha_3}$ is a Cartan subspace with respect to the action of $Z(\mf s)$.
The coisotropy condition on $E=x_1e_{\alpha_1}+x_2e_{\alpha_2}+x_3e_{\alpha_3}$ is
\[
x_1^2+x_2^2+x_3^2-2x_1x_2-2x_1x_3-2x_2x_3=0.
\]
The subalgebra generated by $e_{\pm\alpha_i}$, $i=1,2,3$, is the direct sum of three copies of $\mf{sl}_2$, and it is straightforward to check that $f+E$ is semisimple when $x_1,x_2,x_3$ are arbitrary nonzero numbers satisfying the coisotropy condition. When one of them is zero, say, $x_1=0$, then the coisotropy condition forces $x_2=x_3=x\ne0$, in which case the Jordan decomposition of $f+x(e_{\alpha_2}+e_{\alpha_3})$ is $(e_{-\alpha_2}+e_{-\alpha_3}+x(e_{\alpha_2}+e_{\alpha_3}))+e_{-\alpha_1}$ and we get an integrable triple $(f_1,f_2,E)$, where $f_1$ has label $2\rA_1$ and $f_2$ has label $\rA_1$.

\subsubsection{Nilpotent elements with label $2\rA_2+\rA_1$ in $\rE_6$, $\rE_7$ and $\rE_8$}\label{subsubs:2A2+A1}

This case is described in \cite[Example 2.14]{DSJKV20}. Here $d=5$. We can take
\[
f=e_{-\alpha_1}+e_{-\alpha_2}+e_{-\beta_1}+e_{-\beta_2}+e_{-\gamma},
\]
where $\alpha_1+\alpha_2$ and $\beta_1+\beta_2$ are roots, while no other pairwise sum of the $\alpha_i$, $\beta_j$ and $\gamma$ is a root. One then checks that the subspace of $\mf g_{d-1}$, spanned by $e_{\alpha_1+\alpha_2}$ and $e_{\beta_1+\beta_2}$, is a Cartan subspace. The coisotropy condition for $E=xe_{\alpha_1+\alpha_2}+ye_{\beta_1+\beta_2}$ is then $x=y$, and for $E=x(e_{\alpha_1+\alpha_2}+e_{\beta_1+\beta_2})$ the Jordan decomposition of the quasi-cyclic element $f+E$ is $(e_{-\alpha_1}+e_{-\alpha_2}+e_{-\beta_1}+e_{-\beta_2}+x(e_{\alpha_1+\alpha_2}+e_{\beta_1+\beta_2}))+e_{-\gamma}$. This is straightforward to check after restricting considerations to the subalgebra of type $\rA_2+\rA_2+\rA_1$ containing both $f$ and $E$. We thus obtain integrable triple $(f_1,f_2,E)$, where $f_1$ has label $2\rA_2$ and $f_2$ has label $\rA_1$.

Clearly in such way we obtain all possible integrable triples that might occur in this case: if all three of the $x_1$, $x_2$, $x_3$ are nonzero, we obtain the integrable triple $(f,0,E)$, while if some of them are zero, we necessarily get the integrable triples $(f_1,f_2,E)$ as above.

\subsubsection{Nilpotent elements with label $4\rA_1$ in $\rE_7$ and $\rE_8$}\label{subsubs:4A1}
These conjugacy classes have representatives of the form $f=e_{-\alpha_1}+e_{-\alpha_2}+e_{-\alpha_3}+e_{-\alpha_4}$, sums of four pairwise commuting root vectors.
One checks that in this case the subspace $C\subseteq\mf g_2$ spanned by $e_{\alpha_1},e_{\alpha_2},e_{\alpha_3},e_{\alpha_4}$ is a Cartan subspace with respect to the action of $Z(\mf s)$.
The coisotropy condition on $E=x_1e_{\alpha_1}+x_2e_{\alpha_2}+x_3e_{\alpha_3}+x_4e_{\alpha_4}$ is
\[
x_i^2+x_j^2+x_k^2-2x_ix_j-2x_ix_k-2x_jx_k=0,
\]
where $\{i,j,k\}$ is any three-element subset of $\{1,2,3,4\}$ for $\rE_8$, while for $\rE_7$ it can be any three-element subset except one of them.

For $\rE_8$ the resulting system of quadratic equations has only zero solution, which means, by Lemma \ref{lem:nononnil}, that in this case there are no non-nilpotent quasi-cyclic elements.

For $\rE_7$, choose some numbering, say, such, that the equations correspond to $\{i,j,k\}$ equal to $\{1,2,3\}$, $\{1,2,4\}$ or $\{1,3,4\}$; then, the system has five solutions, corresponding to
\begin{align*}
E&=x(e_{\alpha_2}+e_{\alpha_3}+e_{\alpha_4}),\\
E&=x(4e_{\alpha_1}+e_{\alpha_2}+e_{\alpha_3}+e_{\alpha_4}),\\
E&=x(4e_{\alpha_1}+9e_{\alpha_2}+e_{\alpha_3}+e_{\alpha_4}),\\
E&=x(4e_{\alpha_1}+e_{\alpha_2}+9e_{\alpha_3}+e_{\alpha_4}),\\
E&=x(4e_{\alpha_1}+e_{\alpha_2}+e_{\alpha_3}+9e_{\alpha_4}).
\end{align*}
The subalgebra generated by $e_{\pm\alpha_i}$, $i=1,2,3,4$, is a direct sum of 4 copies of $\mf{sl}_2$, which easily implies that the last four solutions give semisimple quasi-cyclic elements, while the first solution gives a quasi-cyclic element with the Jordan decomposition $(e_{-\alpha_2}+e_{-\alpha_3}+e_{-\alpha_4}+x(e_{\alpha_2}+e_{\alpha_3}+e_{\alpha_4}))+e_{-\alpha_1}$, which gives an integrable triple $(f_1,f_2,E)$ where $f_1$ has label $3\rA_1$ and $f_2$ has label $\rA_1$.

This exhausts all possible integrable triples in this case, since the coisotropy equations do not have any other solutions.

\subsubsection{Nilpotent element with label $2\rA_2+2\rA_1$ in $\rE_8$}

We can take
\[
f=e_{-\alpha_1}+e_{-\alpha_2}+e_{-\beta_1}+e_{-\beta_2}+e_{-\gamma}+e_{-\delta},
\]
where $\alpha_1+\alpha_2$ and $\beta_1+\beta_2$ are roots, while no other pairwise sum of the $\alpha_i$, $\beta_j$, $\gamma$ and $\delta$ is a root. A Cartan subspace in $\mf g_{d-1}$ is spanned by $e_{\alpha_1+\alpha_2}$ and $e_{\beta_1+\beta_2}$. We then find that the coisotropy condition on $E=xe_{\alpha_1+\alpha_2}+ye_{\beta_1+\beta_2}$ are $x-y=0$ and $x+y=0$, hence, by Lemma \ref{lem:nononnil}, all quasi-cyclic elements are nilpotent.

\subsubsection{Nilpotent element with label $2\rA_3$ in $\rE_8$}

We can take
\[
f=e_{-\alpha_1}+e_{-\alpha_2}+e_{-\alpha_3}+e_{-\beta_1}+e_{-\beta_2}+e_{-\beta_3},
\]
where $\alpha_1+\alpha_2$, $\alpha_2+\alpha_3$, $\alpha_1+\alpha_2+\alpha_3$, $\beta_1+\beta_2$, $\beta_2+\beta_3$ and $\beta_1+\beta_2+\beta_3$ are roots, while no other sum of the $\alpha_i$ and $\beta_j$ is a root. A Cartan subspace in $\mf g_{d-1}$ is spanned by $e_{\alpha_1+\alpha_2+\alpha_3}$ and $e_{\beta_1+\beta_2+\beta_3}$. We then find that the coisotropy condition on $E=xe_{\alpha_1+\alpha_2+\alpha_3}+ye_{\beta_1+\beta_2+\beta_3}$ are $x(x+4y)=0$ and $y(4x+y)=0$, hence, by Lemma \ref{lem:nononnil}, all quasi-cyclic elements are nilpotent.

\subsubsection{Nilpotent element with label $\rA_4+\rA_3$ in $\rE_8$}

We can take
\[
f=e_{-\alpha_1}+e_{-\alpha_2}+e_{-\alpha_3}+e_{-\alpha_4}+e_{-\beta_1}+e_{-\beta_2}+e_{-\beta_3},
\]
where $\alpha_1+\alpha_2$, $\alpha_2+\alpha_3$, $\alpha_3+\alpha_4$, $\alpha_1+\alpha_2+\alpha_3$, $\alpha_2+\alpha_3+\alpha_4$, $\alpha_1+\alpha_2+\alpha_3+\alpha_4$, $\beta_1+\beta_2$, $\beta_2+\beta_3$ and $\beta_1+\beta_2+\beta_3$ are roots, while no other sum of the $\alpha_i$ and $\beta_j$ is a root. A Cartan subspace in $\mf g_{d-1}$ is spanned by $e_{\alpha_1+\alpha_2+\alpha_3+\alpha_4}$. We then find that the coisotropy condition on $E=xe_{\alpha_1+\alpha_2+\alpha_3+\alpha_4}$ is $x=0$, hence, by Lemma \ref{lem:nononnil}, all quasi-cyclic elements are nilpotent.

\subsubsection{Nilpotent element with label $\rA_7$ in $\rE_8$}

We can take
\[
f=e_{-\alpha_1}+e_{-\alpha_2}+e_{-\alpha_3}+e_{-\alpha_4}+e_{-\alpha_5}+e_{-\alpha_6}+e_{-\alpha_7},
\]
where $\alpha_1$, ..., $\alpha_7$ form simple roots for a root subsystem of type $\rA_7$. Here $\mf g_{d-1}$ is one-dimensional, spanned by $e_{\alpha_1+...+\alpha_7}$,
and for $E=xe_{\alpha_1+...+\alpha_7}$ the coisotropy condition fails unless $x=0$, so that there are no quasi-cyclic elements whatsoever.

\subsubsection{Nilpotent element with label $\rA_1+\widetilde\rA_1$ in $\rF_4$}\label{subsubs:A1tildeA1}
Take $f=f_{1220}+f_{1232}$.

Here $\mf z(\mf s)|\mf g_{d-1}$ is the direct sum of a $5$-dimensional irreducible and $1$-dimensional trivial representation of $\mf{sl}_2$.
The subspace of $\mf g_{d-1}$ spanned by $E_0=e_{1220}$, $E_1=e_{1222}+e_{1232}+e_{1242}$ and $E_2=e_{1222}-e_{1232}+e_{1242}$ is a Cartan subspace.
Coisotropy condition on $E=x_0E_0+x_1E_1+x_2E_2$ is
\[
x_0^2+4x_1^2+4x_2^2-4x_0x_1+4x_0x_2+8x_1x_2=0.
\]
Subalgebra generated by $f$ and the Cartan subspace is a direct sum of three copies of $\mf{sl}_2$, and the matrix of $f+E$ in the standard representation of this subalgebra is
\[
\begin{pmatrix}
0&x_0&0&0&0&0\\
1&0&0&0&0&0\\
0&0&0&x_1&0&0\\
0&0&2&0&0&0\\
0&0&0&0&0&x_2\\
0&0&0&0&-2&0
\end{pmatrix}
\]
It follows that the quasi-cyclic element $f+E$ with $E$ as above satisfying the coisotropy condition, is semisimple except for the cases
\begin{align*}
&x_0=0,x_1=-x_2;\\
&x_1=0,x_0=-2x_2;\\
&x_2=0,x_0=2x_1.
\end{align*}
The Jordan decomposition of $f+E$ in these cases is, respectively,
\begin{align*}
(f_{1232}+E)&+f_{1220};\\
(f_{1220}+\frac12(-f_{1222}+f_{1232}-f_{1242})+E)&+\frac12(f_{1222}+f_{1232}+f_{1242});\\
(f_{1220}+\frac12(f_{1222}+f_{1232}+f_{1242})+E)&+\frac12(-f_{1222}+f_{1232}-f_{1242}).
\end{align*}
In all three cases we thus get an integrable triple $(f_1,f_2,E)$, where $f_1$ has label $\widetilde\rA_1$ and $f_2$ has label $\rA_1$.

These three cases, together with the case when the quasi-cyclic element is semisimple, give all possible integrable triples for this case.

\subsubsection{Nilpotent element with label $\widetilde\rA_2+\rA_1$ in $\rF_4$}

A Cartan subspace is given by the zero weight space of the adjoint representation of $\mf{sl}_2$,
and none of its nonzero vectors satisfies the coisotropy condition.
It follows from Lemma \ref{lem:nononnil} that all quasi-cyclic elements are nilpotent.

\subsubsection{Nilpotent element with label $\widetilde\rA_1$ in $\rG_2$}\label{subsubs:tildeA1}

This nilpotent does not produce any quasi-cyclic elements, as Example 2.8 in \cite{DSJKV20} shows.
Namely, the depth is $3$, and $\mf g_2$ is $1$-dimensional, spanned by $e_{12}$;
its centralizer has zero intersection with $\mf g_1$, and the zero subspace is not coisotropic.

\addtocontents{toc}{\SkipTocEntry}
\subsection*{Conclusion} Due to Theorem \ref{thm:quasintiffss} \eqref{thm:quasintiffssodd},
Subsections \ref{subsubs:3A1}, \ref{subsubs:2A2+A1}, \ref{subsubs:4A1}, and \ref{subsubs:A1tildeA1}
describe all integrable quasi-cyclic elements $f+E$ for nilpotent elements $f$ of odd depth,
for all exceptional simple Lie algebras, up to conjugation by $Z(\mf s)$.
In particular, an integrable quasi-cyclic element exists for such $f$, except for seven cases, described in the Introduction.

\end{document}